\renewcommand*{\p@subsection}{\S\,}
\renewcommand*{\p@subsubsection}{\S\,}
\newtheorem{thm}{Theorem}[section]
\newtheorem{cor}[thm]{Corollary}
\newtheorem{lem}[thm]{Lemma}
\newtheorem{prop}[thm]{Proposition}
\newtheorem{rem}[thm]{Remark}
\theoremstyle{definition}
\numberwithin{equation}{section}
\newcommand{\CC}{\ensuremath{\mathbb{C}}}
\newcommand{\N}{\ensuremath{\mathbb{N}}}
\newcommand{\Z}{\ensuremath{\mathbb{Z}}}
\newcommand{\PP}{\operatorname{P}}
\newcommand{\tr}{\operatorname{tr}}
\newcommand{\diag}{\operatorname{diag}}
\newcommand{\Hom}{\operatorname{Hom}}
\newcommand{\Der}{\operatorname{Der}}
\newcommand{\Mat}{\operatorname{Mat}}
\newcommand{\Id}{\operatorname{Id}}
\newcommand{\id}{\operatorname{id}}
\newcommand{\Gl}{\operatorname{GL}}
\newcommand{\Rep}{\operatorname{Rep}}
\newcommand\dgal[1]{  \left\{\!\!\left\{#1\right\}\!\!\right\} }
\newcommand\dSN[1]{\left\{\!\!\left\{#1\right\}\!\!\right\}_{\operatorname{SN}}}
\newcommand\br[1]{\{ #1 \}} 
\newcommand\brSN[1]{\{ #1 \}_{\operatorname{SN}}} 
\newcommand\brap[1]{\{ #1 \}_{\PP}} 
\def\aalpha{{\widetilde{\alpha}}}
\def\qq{{\widetilde{q}}}
\def\h{\mathfrak{h}}
\def\Creg{\mathfrak{C}_{\mathrm{reg}}}
\def\VV{\mathcal V}
\def \Maq{\widetilde{\mathcal M}_{\alpha, q}(Q)}
\def\Cnm{\mathcal{C}_{n,\qq,d}(m)}
\def\Cnmo{\mathcal{C}_{n,\qq,d}^\circ(m)}
\def\Cnmp{\mathcal{C}_{n,\qq,d}'(m)}
\def\Cntd{\mathcal{C}_{n,t,d}} 
\def\CnmG{\mathcal{C}_{n,\qq,d}^{\mathcal H}(m)}
\def\Cnmdual{\mathcal{C}_{n,\hat{q},d}^{\mathcal H}(m)}
\newcommand{\aaa}{\ensuremath{\mathbf{a}}}
\newcommand{\ccc}{\ensuremath{\mathbf{c}}}
\newcommand{\As}{\ensuremath{\mathbf{A}}}
\newcommand{\Cs}{\ensuremath{\mathbf{C}}}
\newcommand{\Asm}{\ensuremath{\mathbf{A}^{(m)}}}
\newcommand{\Csm}{\ensuremath{\mathbf{C}^{(m)}}}
\newcommand{\QU}{\ensuremath{\mathcal{Q}_U}}
\begin{document}

\title[Spin versions of the RS model from cyclic quivers] 
{Spin versions of the complex trigonometric Ruijsenaars-Schneider model from cyclic quivers}


\author{Maxime Fairon}
 \address[Maxime Fairon]{School of Mathematics\\
         University of Leeds\\
         Leeds, LS2 9JT, UK\\}
 \email{mmmfai@leeds.ac.uk}


\begin{abstract}
 We study multiplicative quiver varieties associated to specific extensions of cyclic quivers with $m\geq 2$ vertices. Their global Poisson structure is characterised by quasi-Hamiltonian algebras related to these quivers, which were studied by Van den Bergh for an arbitrary quiver. We show that the spaces are generically isomorphic to the case $m=1$ corresponding to an extended Jordan quiver. This provides a set of local coordinates, which we use to interpret integrable systems as spin variants of the trigonometric Ruijsenaars-Schneider system. 
This generalises to new spin cases recent works on classical integrable systems in the Ruijsenaars-Schneider family. 
\end{abstract}

\maketitle

\section{Introduction}  \label{intro}

In this paper, we continue a recent attempt initiated in \cite{CF} to interpret the phase spaces of classical complex integrable systems in the Ruijsenaars-Schneider (or RS) family as moduli spaces constructed from particular quivers. Before focusing on this problem, let's recall the well understood interpretation in the non-relativistic  case of the Calogero-Moser (or CM) system, and  its spin variant. In the  pioneering work \cite{W}, Wilson unveils several structures related to the phase space for the complex CM system, one of which is the  hyperk\"{a}hler structure it possesses. The latter is naturally defined in the context of Nakajima quiver varieties \cite{Nak94}, which considers Hamiltonian reduction of representation spaces of quivers. Forgetting all but the topological structure, the phase space is nothing else than the reduced representation space of a deformed preprojective algebra associated to a Jordan quiver extended by one arrow. Therefore, it is natural to ask if one could obtain the symplectic structure also at the level of the algebra. This is indeed the case, if we consider non-commutative symplectic geometry \cite{G,CBEG}, or the analogue for non-commutative Poisson geometry \cite{VdB1}. We refer to the review \cite{T17} for some details. Going a step further, we can understand the spin generalisation of this model discovered by Gibbons and Hermsen \cite{GH}, by looking at a Jordan quiver consisting of a single loop-arrow, which we extend by several arrows coming from an additional vertex \cite{BP,T15}. We obtain in this way the model in type $A_n$, 
whose Weyl group $W=S_n$ determines the symmetry of the obtained system. A study of various extensions of cyclic quivers generalises the result to different complex reflection groups \cite{CS}, in particular the case $W=S_n \ltimes\Z^n_m$ to which we shall come back.

We postpone the quiver interpretation in the relativistic case, as we first need to focus on the geometric side of these systems. In a way similar to the CM case, it is possible to understand the trigonometric RS system geometrically, either as a symplectic leaf on a space defined by Poisson reduction \cite{FockRosly}, or  directly using quasi-Hamiltonian reduction \cite{Oblomkov,CF}. While the process of Hamiltonian reduction brings down a Poisson manifold to one of smaller dimension by considering  the action of a Lie group, in the quasi-Hamiltonian setting we begin with a space which has some failure to have a Poisson bracket, but we end up with a genuine Poisson manifold. These spaces that are called quasi-Poisson manifolds \cite{QuasiP}, which are first introduced in \cite{AMM} for the \textquoteleft quasi-symplectic\textquoteright~case, find their origin in the need to get Lie group valued moment maps. In some cases, it also provides a finite-dimensional framework for infinite-dimensional symplectic reductions introduced by Atiyah and Bott \cite{AB}. Therefore, the reduction described above provides an alternative formalism to understand  earlier works of Gorsky and Nekrasov \cite{GN,N}. 
However, if we leave the type $A_n$, we can observe that until recent works by Feh\'er and collaborators (see e.g. \cite{FG,FK09b,FK12,FKlu,FM} mostly in the real case), integrable systems in the trigonometric RS family are generally devised  using only a suitable Lax matrix, as they originally appear in \cite{RS86}, without  geometric perspectives.

The lack of a specific geometrical framework to derive these models is even more apparent for spin versions. To understand what is known at the moment, let us recall how the spin RS system is introduced in the first place. 
In a celebrated attempt to generalise the relation between the matrix KP equation and the spin CM system \cite{KBBT}, Krichever and Zabrodin investigate solutions of the non-abelian 2D Toda chain  and discover the Lax matrix for the real spin RS system, already in its elliptic form \cite{KrZ}. 
This system is parametrised by $n$ particles with positions $q_i$, each endowed with $d$ additional degrees of freedom $a_i^\alpha$, for which we have $d$ conjugate variables $c_i^\alpha$ that are function of the momentum $\dot{q}_i$. There are additional $n$ relations, so that we have $2nd$ independent coordinates, which appear in the Lax matrix $L$ (that we consider without spectral parameter) such that the Hamiltonian $H_1=\tr L$ defines the equations of motion for the spin RS system. 
A striking feature of this space is the existence of a natural action of a Lie group of dimension $d(d-1)$, such that on the corresponding orbit space  we can pick coordinates from the functions\footnote{Note that after this reduction, the spin variables $(f_{ij})$ are not attached to a specific particle any more, but they represent collective degrees of freedom.} $(q_i, f_{ij}=\sum_\alpha a_i^\alpha c_j^\alpha)$. Moreover,  the Hamiltonian $H_1$ descends to this reduced space where it becomes integrable in Liouville sense, and solutions to the equations of motion defined by $H_1$ can be found in terms of theta functions.   In the rational and trigonometric case, a simpler form for the equations of motions corresponding to the Hamiltonians $\tr L^k$  can be found \cite{AF,RaS}. However, except in the rational case \cite{AF} and for two particles in the elliptic case \cite{S09}, the Poisson structure of the space is only known in a universal form \cite{Kr} that is not easy to manipulate. It is the existence of a geometric formalism that allows to completely determine the Poisson brackets between the coordinates $(q_i,a_i^\alpha,c_i^\alpha)$ in the rational case for the type $A_n$ \cite{AF}. 
Similarly, it is the existence of a geometric interpretation that enables to prove the integrability of such system outside the type $A_n$ in the rational \cite{Re} or trigonometric \cite{Fe} cases. (Nevertheless, in those cases we work on the phase space where the individual spins $(a_i^\alpha, c_i^\alpha)$  are not naturally defined and we only know the collective spins $(f_{ij})$.) Thereupon, our understanding of the Hamiltonian formalism for spin RS systems  remains quite limited outside the rational case, but it is natural to expect to solve this issue if we find a  correct geometric framework.

\medskip
  
Now, let's come back to the algebraic interpretation with quivers. 
A key aspect of the work of Van den Bergh  \cite{VdB1} is that it also introduces non-commutative quasi-Poisson geometry. To an arbitrary quiver, considering its double, one can associate  a multiplicative preprojective algebra, and construct the corresponding multiplicative quiver varieties of Crawley-Boevey and Shaw \cite{CBShaw}. The latter spaces are Poisson varieties, obtained after quasi-Hamiltonian reduction from the representation spaces of the quiver path algebra, and all the geometric structure can be realised at the level of the path algebra  \cite{VdB1}. 
Therefore, it is a reasonable guess to investigate this theory applied to the quivers studied in \cite{CS}, and it is done by Chalykh and the author for a cyclic quiver with an extra arrow in \cite{CF}. In that case, it is shown that any multiplicative quiver variety contains the phase space for the (non-spin) trigonometric RS system.  For the simplest cyclic quiver consisting of one loop (i.e. a Jordan quiver) with $d\geq 2$ arrows coming from a new vertex, this is also done by Chalykh and the author, in the companion paper \cite{CF2}. The natural generalisation that these other multiplicative quiver varieties carry on an open subset the phase space for the spin trigonometric RS system of type $A_n$ is obtained. Henceforth it provides a crucial step to develop the geometric theory of spin RS models farther than the rational case.    
The next step is to turn to the application of this method on the cyclic quiver with $m \geq 2$ vertices and $d \geq 2$ new arrows pointing towards a chosen vertex in the cycle. This is the purpose of this work, and our most important result is that any such representation space can also be seen as the natural phase space of what we suggest to be the complex trigonometric spin RS system with $W=S_n \ltimes\Z^n_m$. This provides a natural generalisation of the case $m=1$ corresponding to the Jordan quiver  \cite{CF2}. Interestingly, there is a Poisson isomorphism between dense open subsets of the spaces corresponding to the cases $m\geq 2$ and $m=1$, hence we also obtain that the representation spaces that we construct carry the system with $W=S_n$. 
In particular, the flows defined by the symmetric functions of the corresponding Lax matrices can be explicitly integrated. 
We also study Liouville integrability in line with the original approach of Krichever and Zabrodin \cite{KrZ}, which  is based on the existence of a second reduction for $d\leq n$. The final step dealing with arbitrary extensions of cyclic quivers will be discussed in forthcoming works. 

\medskip

The paper is organised as follows. In Section \ref{prel}, we recall the foundations of the theory of double quasi-Poisson brackets, how we can define such brackets from quivers, and what is the counterpart to that theory on the  corresponding representation spaces, based on the work of Van den Bergh \cite{VdB1}. Our presentation of this work relies on \cite[Section 2]{CF}, but we reproduce these results and add useful remarks to provide a self-contained exposition of this non-standard subject. In Section \ref{ss:qHcyclic}, we apply the algebraic part of the theory to the so-called spin cyclic quivers. We obtain their structure of quasi-Hamiltonian algebras and gather several results based on computations with the double quasi-Poisson bracket. All the proofs for that section are collected in Appendix \ref{Ann:cyclic}. The formalism employed being quite new, we suggest to 
the reader interested in the integrable systems side of this work to skip the first part of this paper, and go directly to Section \ref{tadpole}, where we overview the multiplicative quiver varieties associated to the spin Jordan quiver (or spin one-loop quiver) recently introduced in \cite{CF2}. 
In Section \ref{cyclic}, we follow the method from \cite{CF2} applied to the cyclic quivers on $m\geq 2$ vertices, to get new multiplicative quiver varieties. They are generically isomorphic as complex Poisson manifolds to the space obtained in the case of a Jordan quiver reviewed in Section \ref{tadpole}, which we prove in Appendix \ref{Ann:iso}. We get three families of functions in involution on such a space for each $m \geq 2$, that we write in the set of local coordinates that exists in the Jordan quiver case. In particular, one of them contains the spin RS Hamiltonian $H_1$. We count the number of independent elements in each family and perform an additional reduction to obtain integrability in Liouville sense. We can also get an explicit description of some flows, based on computations in Appendix \ref{Ann:Dyn}.  We finish by explaining why we believe that this extra reduction should be avoided, and how the other two families corresponds to what should be seen as the spin RS system for $W=S_n \ltimes\Z^n_m$, or a modification of it. 

\medskip

{\bf Acknowledgement.} The author is grateful to O. Chalykh for suggesting the problem and for stimulating  conversations, as well as for his collaboration while working on \cite{CF2} which inspired the present paper. 
The author also thanks L. Feh\'er and V. Rubtsov for interesting discussions. 
Some of the results in this paper appear in the University of Leeds PhD thesis of
the author, supported by a University of Leeds 110 Anniversary Research Scholarship.

\subsection{Notations} \label{SNot}  The sets $\N,\Z,\CC$ denote the non-negative integers, integers and complex numbers. We write $\N^\times,\Z^\times,\CC^\times$ when we omit the zero element in those sets. 
Consider a finite set $J$, $|J|=k$, and totally ordered elements $(a_j)_{j\in J}$ such that $a_{j_1} <  \ldots < a_{j_k}$ for some $j_{(-)}:\{1,\ldots,k\}\to J$. Then  the corresponding right and left products are defined as $\prod \limits^{\longrightarrow} a_j=a_{j_1} \ldots a_{j_k}$, while $\prod \limits^{\longleftarrow} a_j=a_{j_k} \ldots a_{j_1}$. 

We write $\delta_{ij}$ or $\delta_{(i,j)}$ for Kronecker delta function. We extend this definition for a general proposition $P$ by setting $\delta_P=+1$ if $P$ is true and $\delta_P=0$ if $P$ is false. For example, $\delta_{(i\neq j)}=1-\delta_{ij}$.  

Fix  a positive integer $d\geq 2$. 
The \emph{ordering function} $o:\{1,\ldots,d\}^{\times 2}\to \{0,\pm1\}$ is defined by 
$o(\alpha,\beta)=+1$ if $\alpha<\beta$, $o(\alpha,\beta)=-1$ if $\alpha>\beta$, and 
$o(\alpha,\beta)=0$ if $\alpha=\beta$. In other words, it takes the value $+1$ if the first argument is strictly less than the second, the value $0$ if they are equal, and is $-1$ otherwise. In terms of Kronecker delta function, we have for example 
$o(\alpha,\beta)=\delta_{(\alpha< \beta)}-\delta_{(\alpha > \beta)}$.

\section{Preliminaries}  \label{prel}

We recall the necessary constructions needed in this paper as they are introduced in \cite{CF}, with some additional remarks. Details regarding double brackets and representation spaces can be found in  \cite{VdB1, VdB2}, while we refer to  \cite{CBShaw,Y} for generalities on multiplicative preprojective algebras and corresponding multiplicative quiver varieties.

\subsection{Double brackets} \label{ss:dAS} We review some results of \cite[Sections 2-4]{VdB1}. We take all tensor products over $\CC$, and  fix an associative unital $\CC$-algebra $A$. For an element $a\in {A}\otimes {A}$, we use Sweedler's notation $a'\otimes a''$ to denote $\sum_i a_i'\otimes a_i''$. We set $a^\circ=a''\otimes a'$. More generally,  
for any $s\in S_n$ we define $\tau_s:\,A^{\otimes n}\to A^{\otimes n}$ by 
$\tau_s(a_1\otimes \ldots \otimes a_n)=a_{s^{-1}(1)}\otimes \ldots \otimes a_{s^{-1}(n)}$, so we can write $a^\circ=\tau_{(12)}a$. 

We view $A^{\otimes n}$ as an $A$-bimodule via the \emph{outer} bimodule structure 
$b(a_1\otimes \ldots \otimes a_n)c=ba_1\otimes \ldots \otimes a_nc$. 
An \emph{$n$-bracket} is a linear map 
$\dgal{-,\ldots,-} : A^{\otimes n}\to A^{\otimes n}$ which is a derivation in its last 
argument for the outer bimodule structure on $A^{\otimes n}$, 
and which is cyclically anti-symmetric: 
\begin{equation*}
\tau_{(1\ldots n)}\circ \dgal{-,\ldots,-}\circ \tau^{-1}_{(1\ldots n)}
=(-1)^{n+1}\dgal{-,\ldots,-}\,. 
\end{equation*} 
In the cases of interest, there exists a $\CC$-algebra $B$ and a $\CC$-algebra map $B\to A$ turning $A$ into a $B$-algebra, and we identify $B$ with its image in $A$. Then  we assume that the bracket is $B$-linear, i.e. it vanishes if one argument is an element of $B$.

We focus on $2$- and  $3$-brackets, which we call \emph{double} and  \emph{triple brackets} respectively. In the particular case of a double bracket, the defining relations take the form $\dgal{a,b}=-\dgal{b,a}^\circ$ and 
$\dgal{a,bc}=b\dgal{a,c}+\dgal{a,b}c$ for any $a,b,c\in A$. This implies that $\dgal{bc,a}=\dgal{b,a}\ast c+b \ast \dgal{c,a}$, i.e. it is a derivation in the first argument for the  \emph{inner} $A$-bimodule structure on $A \otimes A$ given by $b \ast (a' \otimes a'') \ast c= a' c \otimes b a''$. 
 Also, any double bracket $\dgal{-,-}$  defines an induced triple bracket $\dgal{-,-,-}$ given by  
\begin{equation}
 \label{Eq:TripBr}
\begin{aligned}
   \dgal{a,b,c}=&\dgal{a,\dgal{b,c}'}\otimes \dgal{b,c}''+\tau_{(123)}\dgal{b,\dgal{c,a}'}\otimes \dgal{c,a}''
+\tau_{(132)}\dgal{c,\dgal{a,b}'}\otimes \dgal{a,b}'' \,.
\end{aligned}
\end{equation}

In a similar way to the commutative case, $n$-brackets can be defined from analogues of $n$-vector fields. 
Following Crawley-Boevey \cite{CB}, we assume from now on that $A$ is a $B$-algebra and  we call the elements of $D_{A/B}:=\Der_B(A,A\otimes A)$ \emph{double derivations}. We see $D_{A/B}$ as an $A$-bimodule by using the inner bimodule structure on $A\otimes A$. That is, if $\delta\in D_{A/B}$, then $(b\,\delta\, c) (a)=b \ast \delta(a)\ast c$ for any $a,b,c\in A$. Let $D_BA:=T_AD_{A/B}$ be the tensor algebra of this bimodule. 
\begin{prop}   \emph{(\cite[Proposition 4.1.1]{VdB1})}
\label{Prop:BrQ}
There is a well-defined linear map $\mu:(D_BA)_n\to \{B$-linear $n$-brackets on $A\}$, 
$Q\mapsto \dgal{-,\ldots,-}_Q$ which on $Q=\delta_1 \ldots \delta_n$ is given by 
\begin{equation*}
  \begin{aligned}
\label{Eq:BrQ}
\dgal{-,\ldots,-}_Q&=\sum_{i=0}^{n-1}(-1)^{(n-1)i}\tau^i_{(1\ldots n)}\circ\dgal{-,\ldots,-}_Q^{\widetilde{ }}
\circ\tau^{-i}_{(1\ldots n)}\,\,, \\
\dgal{a_1,\ldots,a_n}_Q^{\widetilde{ }}&=\delta_n(a_n)'\delta_1(a_1)''\otimes \delta_1(a_1)'\delta_2(a_2)''\otimes \ldots \otimes 
\delta_{n-1}(a_{n-1})'\delta_n(a_n)''\,.
 \end{aligned}
\end{equation*}
The map $\mu$ factors through $D_BA/[D_BA,D_BA]$ (for the graded commutator).
\end{prop}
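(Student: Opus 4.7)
The plan is to verify the four defining properties of a $B$-linear $n$-bracket for $\dgal{-,\ldots,-}_Q$ when $Q=\delta_1\cdots\delta_n$ is a simple tensor in $(D_BA)_n$, to extend by linearity, and finally to establish invariance under the graded commutator relation. I would begin with the well-definedness of $\dgal{-,\ldots,-}_Q^{\widetilde{ }}$ on $(D_BA)_n = D_{A/B}^{\otimes_A n}$. The only nontrivial relations to check are those of the form $(\delta_i \cdot a) \otimes_A \delta_{i+1} = \delta_i \otimes_A (a\cdot\delta_{i+1})$ for $a\in A$; unpacking the inner-bimodule actions $(\delta_i \cdot a)(x) = \delta_i(x)'a \otimes \delta_i(x)''$ and $(a\cdot\delta_{i+1})(x) = \delta_{i+1}(x)' \otimes a\delta_{i+1}(x)''$, both substitutions insert the same factor $a$ between $\delta_i(a_i)'$ and $\delta_{i+1}(a_{i+1})''$ in the $(i+1)$-th slot of the formula while leaving all other slots unchanged.

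The $B$-linearity is immediate: if any $a_j\in B$ then $\delta_j(a_j)=0$, so the whole expression vanishes, and this property survives the cyclic symmetrization. Cyclic anti-symmetry of $\dgal{-,\ldots,-}_Q$ is built into the definition: re-indexing the sum by $i\mapsto i+1$ reproduces the same terms multiplied by the single overall sign $(-1)^{n-1}$, which equals $(-1)^{n+1}$.

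The main content is the derivation property in the last argument $a_n$, and I would verify it summand by summand, that is, for each $T_i = \tau_{(1\ldots n)}^i \circ \dgal{-,\ldots,-}_Q^{\widetilde{ }} \circ \tau_{(1\ldots n)}^{-i}$. After applying $\tau^{-i}$, the variable $a_n$ lands in a specific input slot of the tilde-bracket, where it is acted upon by exactly one double derivation $\delta_j$. The outer Leibniz rule $\delta_j(a_n b) = \delta_j(a_n)b + a_n\delta_j(b)$ produces two Sweedler contributions that occupy the two consecutive slots of the tilde-bracket adjacent to the position of $a_n$. Applying $\tau^i$ then transports these contributions into slot $n$ (right-multiplied by $b$) and slot $1$ (left-multiplied by $a_n$) of the output, matching the outer bimodule action required by the derivation axiom. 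I expect this step to be the main obstacle: no idea beyond the Leibniz rule is needed, but it demands careful tracking of Sweedler indices through the cyclic shifts, and the fact that what is manifestly a derivation only in the last entry of $T_0$ reassembles into a derivation in the last entry of the full sum is the one nontrivial combinatorial coincidence.

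Finally, to prove that $\mu$ factors through $D_BA/[D_BA,D_BA]$ for the graded commutator, I would reduce the general relation $\mu(PQ) = (-1)^{\lvert P\rvert \lvert Q\rvert}\mu(QP)$ to the case of single-letter cyclic shifts by iterating, so it suffices to check that $\mu(\delta_1\delta_2\cdots\delta_n) = (-1)^{n-1}\mu(\delta_2\cdots\delta_n\delta_1)$. A direct computation from the explicit formula shows that $\dgal{-,\ldots,-}^{\widetilde{ }}_{\delta_2\cdots\delta_n\delta_1}$ coincides with the summand $T_{n-1}$ of the symmetrization associated with $Q = \delta_1\cdots\delta_n$; the shift of the summation index $i\mapsto i-1$ in the cyclic sum then yields exactly the factor $(-1)^{n-1}$, as required.
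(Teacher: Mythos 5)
The paper does not prove this proposition; it imports it verbatim from Van den Bergh [VdB1, Proposition 4.1.1], so there is no internal proof here to compare against. Judged on its own, your plan is correct and matches the strategy one would naturally pursue (and essentially what Van den Bergh does): verify well-definedness over $\otimes_A$, $B$-linearity, cyclic antisymmetry, the derivation property in the last slot, and finally the cyclic invariance giving the factorization through $[D_BA,D_BA]$.

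Two small remarks. First, for the derivation property your instinct is right that it does hold summand by summand: after $\tau^{-i}$ the input $a_n$ occupies slot $n-i$, where it is hit by $\delta_{n-i}$; in the tilde-bracket $\delta_{n-i}(a_n)'$ sits at the left of slot $n-i+1$ and $\delta_{n-i}(a_n)''$ at the right of slot $n-i$, and $\tau^i$ carries those slots to $1$ and $n$ respectively, which is precisely the outer bimodule structure. So each $T_i$ individually is a derivation in the last argument, and the sum trivially inherits this. Your closing sentence speaks of the property "reassembling" across the sum, which slightly obscures the fact that no cancellation or reassembly between summands is needed. Second, for the cyclic invariance of $\mu$ it is worth recording the sign bookkeeping explicitly: shifting the summation index by $n-1$ and using $(-1)^{(n-1)^2}=(-1)^{n-1}$ gives $\mu(\delta_2\cdots\delta_n\delta_1)=(-1)^{n-1}\mu(\delta_1\cdots\delta_n)$, and iterating $p$ times yields the general relation $\mu(PQ)=(-1)^{p(p+q-1)}\mu(QP)=(-1)^{pq}\mu(QP)$ since $p(p-1)$ is always even. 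That is the only place where the parity of $n$ could conceivably trip one up, and it works out.
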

In the particular case of $\delta_1\delta_2\in (D_BA)_2$ we have for any $b,c\in A$
\begin{equation} \label{Eq:BrQ2}
\dgal{b,c}_{\delta_1\delta_2}=
 \delta_2(c)'\delta_1(b)'' \otimes \delta_1(b)'\delta_2(c)''
-  \delta_1(c)'\delta_2(b)'' \otimes \delta_2(b)'\delta_1(c)''\,.
\end{equation} 
Note also that $D_BA$ admits a canonical \emph{double Schouten--Nijenhuis bracket}, which makes $D_BA$ into a double Gerstenhaber algebra \cite[\S 2.7,3.2]{VdB1}. This is a (graded) double bracket, that we denote by $\dSN{-,-}$. 

\medskip

For any $n \geq2$, the multiplication map $m:A^{\otimes n} \to A$ is defined by  concatenation of the factors, 
$m(a_1\otimes \ldots \otimes a_n)=a_1\ldots a_n$. 
For any $n$-bracket $\dgal{-,\ldots,-}$, this induces an associated bracket $\{-,\ldots,-\}:=m \circ \dgal{-,\ldots,-}$. In the case of a double bracket, 
\begin{equation}\label{sbra}
\{a,b\}=m\circ \dgal{a, b}=\dgal{a,b}'\dgal{a,b}''\,.  
\end{equation}
Assume that the double bracket $\dgal{-,-}$ is such that the bracket associated to the induced triple bracket \eqref{Eq:TripBr} satisfies $\br{-,-,-}=0$. Then the bracket $\br{-,-}$ associated to the double bracket is a \emph{left Loday bracket} (also called left Leibniz bracket), and it also satisfies Leibniz's rule in its second argument, i.e. $\br{-,-}:A\times A \to A$ is a bilinear map such that 
\begin{equation*}
  \br{a,\br{b,c}}=\br{\br{a,b},c}+\br{b,\br{a,c}}\,, \quad \br{a,bc}=\br{a,b}c + b \br{a,c}\,.
\end{equation*}
It descends to a map $A/[A,A]\times A \to A$, then to an antisymmetric map on the vector space $A/[A,A]$, so that $(A/[A,A],\br{-,-})$ is a Lie algebra \cite[\S 2.4]{VdB1}. Since each map $\br{a,-}$ is a derivation on $A$, thus $\br{-,-}$ endows $A$ with a non-commutative Poisson structure in the sense of \cite{CB2}, called an $H_0$-Poisson structure. 

 We can extend the procedure to graded setting, and denote the  bracket associated to $\dSN{-,-}$ as  $\brSN{-,-}:=m\circ\dSN{-,-}$.  Hence, the pair $(D_BA,\brSN{-,-})$ may be viewed as a non-commutative version of the algebra of polyvector fields on a manifold with the Schouten-Nijenhuis bracket.

\subsection{Double quasi-Poisson algebras} \label{ss:doubleqP}
Consider $B$ of the form $B={\CC} e_1 \oplus \ldots \oplus {\CC} e_K$ with $e_re_s=\delta_{rs}e_s$, so that $\sum_s e_s =1\in A$. We define for all $s$ a double derivation $E_s\in D_{A/B}$ such that 
$ E_s(a)=ae_s\otimes e_s - e_s\otimes e_s a$, and we call them the \emph{gauge elements}. 
As in \cite[Section 5]{VdB1}, a \emph{double quasi-Poisson bracket} on $A$ is a ($B$-linear) double bracket $\dgal{-,-}$, 
such that the induced triple bracket satisfies $\dgal{-,-,-}=\frac{1}{12}\sum_s \dgal{-,-,-}_{E_s^3}$, where the triple  brackets in the right-hand side are defined in Proposition \ref{Prop:BrQ}. 
In this case, we say that $A$ is a \emph{double quasi-Poisson algebra}. Note that the associated brackets  
$\br{-,-,-}_{E_s^3}$ are identically zero, so that the double quasi-Poisson bracket $\dgal{-,-}$ on $A$ defines a
left Loday bracket $\br{-,-}$  by \eqref{sbra}, which descends to a Lie bracket on $A/[A,A]$. 
Assume that there is an element $P\in (D_BA)_2$ such that 
$\{P,P\}_{\operatorname{SN}}=\frac{1}{6}\sum_sE_s^3$ mod $[D_BA,D_BA]$ (for the graded commutator). Then 
we say that $A$ is a \emph{differential double quasi-Poisson algebra} with the 
\emph{differential double quasi-Poisson bracket} $\dgal{-,-}_P$. This implies that $\dgal{-,-}_P$ is a double quasi-Poisson bracket using \cite[Theorem 4.2.3]{VdB1}.

A \emph{multiplicative moment map} for a double quasi-Poisson algebra $(A,\dgal{-,-})$ is an element 
$\Phi=\sum_{s=1}^K \Phi_s$ with $\Phi_s\in e_sAe_s$ such that we have   
$\dgal{\Phi_s,-}=\frac{1}{2}(\Phi_sE_s+E_s\Phi_s)\in D_{A/B}$ for any $s$. This condition may be written explicitly as requiring for all $a\in A$ 
\begin{equation} \label{Phim}
 \dgal{\Phi_s,a}=\frac{1}{2}(\Phi_sE_s+E_s\Phi_s)(a)
=\frac12 (ae_s\otimes \Phi_s-e_s \otimes \Phi_s a +  a \Phi_s \otimes e_s-\Phi_s \otimes e_s a)\,.
\end{equation}
When a double quasi-Poisson algebra 
is equipped with a multiplicative moment map, we say that it is a \emph{quasi-Hamiltonian algebra}. 

Combining \eqref{sbra} and \eqref{Phim}, we obtain $\br{\Phi,a}=a\Phi-\Phi a$ and $\br{a,\Phi}=0$ for any $a\in A$. Hence, if $q_0\in \CC$ and $\br{-,-}$ is the left Loday bracket obtained from the double bracket on $A$, we get that $\br{J_0,A}\subset J_0$, $\br{A,J_0}\subset J_0$ for $J_0$ the ideal generated by $\Phi-q_0$. Therefore $A/J_0$ is  a left Loday algebra. If we consider $q=\sum_s q_se_s \in B$ and write $J$ for the ideal generated by $\Phi-q$, we only have $\br{A,J}\subset J$ in general, so that $A_q:=A/J$ is not necessarily a left Loday algebra. Nevertheless, since $\br{J,A}\subset J$ modulo commutators, the vector space   $A_q/[A_q,A_q]$ is a Lie algebra for the Lie bracket obtained from $\br{-,-}$ through $A\to  A_q/[A_q,A_q]$. 
This endows $A_q$ with an $H_0$-Poisson structure \cite[Proposition 5.1.5]{VdB1}. 

Finally, assume that $A$ is (formally) smooth. If $A$ is a double quasi-Poisson algebra with double bracket defined by $P\in (D_BA)_2$, we say that the element $P$ is \emph{non-degenerate} if the map of $A$-bimodules $\Omega_B^1A \oplus (\oplus_s A E_s A) \to D_{A/B}$ given by $(b .da.c,\delta) \mapsto b \dgal{a,-}_P c+ \delta$ is surjective. Here, $\dgal{-,-}_P$ is the double bracket defined by Proposition \ref{Prop:BrQ}, and $\Omega_B^1A$ refer to the bimodule of non-commutative relative $1$-forms \cite[Section 2]{CQ95}. We refer to the brilliant work of Van den Bergh \cite{VdB2} for details and the relation to a \textquoteleft double version\textquoteright~of  \cite{AMM}.

\subsection{Multiplicative preprojective algebras} \label{ss:MultqPrep}

Let $Q=(Q, I)$ be a quiver with vertex set $I$ and arrow set $Q$, and consider the maps $t,h:Q\to I$ that associate to every arrow $a$ its tail and head, $t(a)$ and $h(a)$. We construct the double $\bar{Q}$ of $Q$ by adjoining to every $a\in Q$ an opposite arrow, denoted $a^{*}$. We naturally extend $t$ and $h$, so that $t(a)=h(a^*)$ and $h(a)=t(a^*)$.  
We define $\epsilon : \bar{Q}\to \{\pm 1\}$ the function that takes value $+1$ on arrows of $Q$, and $-1$ on each arrow of $\bar{Q} \setminus Q$. We write $\CC\bar{Q}$ for the path algebra of $\bar{Q}$, whose underlying vector space is spanned by all possible paths formed on $\bar{Q}$ (including each trivial path $e_s$ associated to $s\in I$). The multiplication is given by concatenation of paths, and in particular the $(e_s)_s$ form a complete set of orthogonal idempotents. We view $\CC\bar{Q}$ as a $B$-algebra, with $B=\oplus_{s\in I} \CC e_s$. Finally, we extend $*$ to an involution on $\CC\bar{Q}$ by setting $(a^*)^*=a$ for all $a\in Q$.

\begin{rem}\label{rem2.1}
As in \cite{CF,VdB1}, we write paths in  $\CC\bar{Q}$ from left to right. Hence, $ab$ means 
\textquoteleft$a$ followed by $b$\textquoteright, and the path $ab$ is trivially zero  if $h(a)\ne t(b)$. 
\end{rem}
Let $A$ be obtained from $\CC \bar{Q}$ by inverting all elements $(1+aa^*)_{a\in \bar{Q}}$.  For all $a\in \bar{Q}$, 
define the element $\frac{\partial}{\partial a}$ 
of $D_BA$ which on $b\in \bar{Q}$ acts as 
\begin{equation}
\label{Eq:Dba}
 \frac{\partial b}{\partial a}=\left\{\begin{array}{ll} e_{t(a)}\otimes e_{h(a)}&\text{if }a=b \\ 
0&\text{otherwise} \end{array}\right. 
\end{equation}
We consider a minor generalisation of the construction of multiplicative preprojective algebra, without the use of a total ordering on the arrows of the quiver, see the first remark at the end of \cite[\S 2.5]{CF}. 
For each $s\in I$, we fix a total ordering $<_s$ on the arrows meeting at $s$, that is on all $a\in \bar{Q}$ with $h(a)=s$ or $t(a)=s$. We also assume that if two arrows $a,b$ meet at $s$ and $r$, then either we have both $a <_s b$ and $a<_{r} b$ or we have both  $b <_s a$ and $b<_{r} a$. We denote such a relation by $<$ and refer to is as an \emph{ordering}, though it is not necessarily a partial order\footnote{Strictly speaking, what we only use to get Theorem \ref{Thm:QStruct} is the total ordering $<_s$ for each $s$ on the arrows $a\in \bar{Q}$ with $t(a)=s$. However, defining an ordering as we do is easier to write the assumption in Proposition \ref{Pr:dbr}.}. We define the element $\Phi=(\Phi_s)_s\in \oplus_se_sAe_s $ by 
\begin{equation}\label{pphi}
\Phi_s=\prod^{\longrightarrow}_{\substack{a\in \bar{Q}\\t(a)=s}} e_s(1+aa^*)^{\epsilon(a)}e_s\,,
\end{equation}
where the product is taken with respect to the ordering, see \ref{SNot}. Following \cite{CBShaw}, given $q=\sum_{s\in I} q_se_s$ with $q_s\in \CC^\times$, we define the (deformed) \emph{multiplicative preprojective algebra} as the quotient $\Lambda^q=A/(\Phi-q)$. Up to isomorphism, the algebra $\Lambda^q$ is independent of the ordering \cite[Theorem 1.4]{CBShaw}. 

The next result states that $\Phi$ is a moment map for a specific quasi-Hamiltonian algebra structure on $A$.  
In particular, we have an $H_0$-Poisson structure on  $\Lambda^q$ by \ref{ss:doubleqP}.  

\begin{thm} \emph{(\cite[Theorem 6.7.1]{VdB1})} 
\label{Thm:QStruct}
 The algebra $A$ is quasi-Hamiltonian for the differential double quasi-Poisson bracket defined by 
\begin{equation}\label{Eq:PP}
 \PP=\frac{1}{2}\sum_{a\in \bar{Q}} \epsilon(a) (1+a^*a)\frac{\partial}{\partial a} 
\frac{\partial}{\partial a^*} \, - \,\frac12 \sum_{\substack{a,b\in \bar{Q}\\ t(a)=t(b),\, a<b}} \left(\frac{\partial}{\partial a^*}a^*-a
\frac{\partial}{\partial a}\right)\left(\frac{\partial}{\partial b^*}b^*-b\frac{\partial}{\partial b}\right) 
\,,
\end{equation}
and the multiplicative moment map given by $\Phi=(\Phi_s)_s$, where $\Phi_s$ is defined in \eqref{pphi}. 
\end{thm}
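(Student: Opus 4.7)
The plan is to follow Van den Bergh's strategy of reducing the theorem to elementary building blocks via the fusion procedure, rather than attempting a brute-force verification of the full Schouten--Nijenhuis identity $\brSN{\PP,\PP}\equiv\tfrac{1}{6}\sum_sE_s^3$ modulo graded commutators. The claim splits into two pieces: (i) $\dgal{-,-}_\PP$ is a differential double quasi-Poisson bracket, which by Theorem~4.2.3 of \cite{VdB1} is equivalent to the identity just mentioned on the Gerstenhaber square of $\PP$; and (ii) $\Phi=(\Phi_s)_s$ is a multiplicative moment map, i.e.\ satisfies \eqref{Phim}. Both pieces behave well under fusion of idempotents, so it suffices to check them on the simplest local model and to verify that fusion produces the combinatorial formulas \eqref{Eq:PP} and \eqref{pphi}.

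First I would treat the base case of a single arrow $a\colon s\to t$ with $s\ne t$, where $\bar Q=\{a,a^*\}$. Here the off-diagonal sum in \eqref{Eq:PP} is empty, so $\PP$ reduces to the single \emph{diagonal} bivector $\tfrac12(1+a^*a)\tfrac{\partial}{\partial a}\tfrac{\partial}{\partial a^*}-\tfrac12(1+aa^*)\tfrac{\partial}{\partial a^*}\tfrac{\partial}{\partial a}$ in the localisation $A$, and $\Phi_s=(1+aa^*)$, $\Phi_t=(1+a^*a)^{-1}$. Using \eqref{Eq:BrQ2} one computes $\dgal{a,a}$, $\dgal{a,a^*}$, $\dgal{a^*,a^*}$ explicitly, and then verifies by hand both the quasi-Poisson identity and the moment map condition on the two generators $a,a^*$. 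The disjoint union of such one-arrow quivers then inherits a quasi-Hamiltonian structure additively. Next I would apply Van den Bergh's fusion procedure (\cite[Section 5.3]{VdB1}) to identify idempotents iteratively until the disjoint union is transformed into the quiver $Q$ of interest: fusion of $e_s$ into $e_t$ produces both a correction of the bracket by exactly a term of the form $-\tfrac12\bigl(\tfrac{\partial}{\partial a^*}a^*-a\tfrac{\partial}{\partial a}\bigr)\bigl(\tfrac{\partial}{\partial b^*}b^*-b\tfrac{\partial}{\partial b}\bigr)$ for each newly coincident pair of tails, and a fused moment map which multiplies $\Phi_s$ and $\Phi_t$. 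Keeping track of the fusion order so as to match the chosen ordering $<$ on arrows at each vertex, one recovers \eqref{Eq:PP} and \eqref{pphi} on the nose; independence of the final isomorphism class on the chosen ordering is guaranteed by \cite[Theorem~1.4]{CBShaw}.

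The main obstacle, and the one that makes this approach preferable to a direct computation, is the bookkeeping needed for the quasi-Poisson identity once several arrows share a tail: the induced triple bracket $\dgal{-,-,-}_\PP$ receives contributions from the diagonal terms for each arrow, from the off-diagonal pairings, and from cross terms between the two kinds of summands in \eqref{Eq:PP}, and these must conspire to produce precisely $\tfrac{1}{12}\sum_sE_s^3$. Fusion handles this by guaranteeing, in one stroke, that at every stage the corrected bracket together with the corrected moment map remain quasi-Hamiltonian. What still requires care is verifying the moment map axiom \eqref{Phim} directly on the generators $b\in\bar Q$ using the fused formula for $\Phi_s$: because $\Phi_s$ is an ordered product of factors $(1+aa^*)^{\epsilon(a)}$, one expands $\dgal{\Phi_s,b}_\PP$ using the derivation property in the first argument for the inner bimodule structure, together with the rule $\dgal{a,c^{-1}}=-c^{-1}\dgal{a,c}c^{-1}$ (read componentwise) to handle the inverses, and then collapses the resulting telescoping sum against the expected right-hand side $\tfrac12(\Phi_sE_s+E_s\Phi_s)(b)$. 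Once this is done on the generators $a,a^*\in\bar Q$, the derivation property of $\dgal{\Phi_s,-}$ extends the identity to all of $A$, completing the proof.
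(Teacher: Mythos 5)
The paper does not give a proof of this theorem: it is quoted verbatim as Theorem~6.7.1 of Van den Bergh's \emph{Double Poisson algebras} and is used as a black box. Your proposal faithfully reconstructs Van den Bergh's actual proof strategy — start from the quasi-Hamiltonian structure on the localised path algebra of a single arrow, where the off-diagonal pairing in \eqref{Eq:PP} is vacuous, then apply the fusion of idempotents to glue these one-arrow pieces into $\bar Q$, tracking the $-\tfrac12 E_1E_2$ corrections to the bivector (which unwind exactly into the terms $\bigl(\tfrac{\partial}{\partial a^*}a^*-a\tfrac{\partial}{\partial a}\bigr)\bigl(\tfrac{\partial}{\partial b^*}b^*-b\tfrac{\partial}{\partial b}\bigr)$ once the gauge elements of the one-arrow pieces are computed) and the multiplicative fusion of the moment maps (which produces the ordered product \eqref{pphi}). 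This is the same decomposition and the same sequence of lemmas as in \cite[\S\,5.3 and \S\,6.5--6.7]{VdB1}, so the approach is correct and essentially identical to the one in the original source cited by the paper.

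One small imprecision worth noting for the write-up: the correction term introduced by fusion is $-\tfrac12 E_{s'}E_{s''}$ in terms of gauge elements of the two pieces being glued, and one must still compute the gauge element of a localised one-arrow algebra to identify it with the bracketed factors in \eqref{Eq:PP}. You gesture at this but the identification deserves a sentence, since the sign conventions and the role of the ordering $<$ enter precisely here; the fact that the final algebra is independent of the ordering up to isomorphism is what \cite[Theorem~1.4]{CBShaw} gives, but the explicit bivector $\PP$ does depend on it.
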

In fact, $\PP$ is non-degenerate by \cite[Section 8]{VdB2}. 
The following result gives an explicit form to the double quasi-Poisson bracket which we denote as  $\dgal{-,-}$, and that is defined by $\PP$ using Proposition \ref{Prop:BrQ}. 
\begin{prop}\cite[Proposition 2.6]{CF} \label{Pr:dbr} Take an ordering in $\bar{Q}$  so that the arrows of $\bar{Q}$ are ordered in such a way that $a<a^*<b<b^*$ for any $a,b\in Q$ with $a < b$. 
Then one has 
 \begin{subequations}
       \begin{align}
\dgal{a,a}\,=\,&\frac{1}{2}\epsilon(a)\left( a^2\otimes e_{t(a)}- e_{h(a)}\otimes a^2 \right)\delta_{h(a),t(a)}\quad  \qquad (a\in\bar{Q})\,, \label{loop}\\
\dgal{a,a^*}\,=\,&e_{h(a)}\otimes e_{t(a)}
+\frac{1}{2} a^*a\otimes e_{t(a)} 
+\frac{1}{2} e_{h(a)}\otimes aa^* \nonumber\\
&+\frac{1}{2} (a^*\otimes a-a\otimes a^*)\delta_{h(a),t(a)}\qquad \qquad \qquad (a\in Q)\,, \label{aast}\\
\dgal{a,b}\,=\,&\frac{1}{2}(e_{h(a)}\otimes ab) \delta_{h(a),t(b)} 
+\frac{1}{2} (ba\otimes e_{t(a)}) \delta_{h(b),t(a)} \nonumber\\\label{a<b}
&-\frac{1}{2}(b\otimes a) \delta_{h(a),h(b)}-\frac{1}{2}(a\otimes b) \delta_{t(a),t(b)}\qquad (a,b\in\bar{Q}\,, \ a<b\,,\ b\ne a^*)\,.
\end{align}
  \end{subequations}
\end{prop}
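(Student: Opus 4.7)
The plan is to compute $\dgal{x,y}_\PP$ directly from \eqref{Eq:BrQ2} for pairs $x,y\in\bar{Q}$ of generators. Since any $B$-linear double bracket is a derivation for the outer bimodule structure in its second argument and for the inner one in its first (see \S\ref{ss:dAS}), verifying \eqref{loop}--\eqref{a<b} on generators determines $\dgal{-,-}_\PP$ on all of $A$. I would split $\PP=\PP_1+\PP_2$ where $\PP_1=\tfrac12\sum_{a\in\bar{Q}}\epsilon(a)(1+a^*a)\tfrac{\partial}{\partial a}\tfrac{\partial}{\partial a^*}$ and $\PP_2$ is the second sum, and rewrite $\PP_2$ in terms of the double derivations $D_a:=\tfrac{\partial}{\partial a^*}a^*-a\tfrac{\partial}{\partial a}$, so that \eqref{Eq:BrQ2} can be applied summand by summand.

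First I would record the elementary evaluations: $\tfrac{\partial x}{\partial a}=\delta_{x,a}\,e_{t(a)}\otimes e_{h(a)}$ by \eqref{Eq:Dba}, and consequently, using the inner bimodule rule $b\ast(u\otimes v)\ast c=uc\otimes bv$, one has $((1+a^*a)\tfrac{\partial}{\partial a})(x)=\delta_{x,a}\,e_{t(a)}\otimes(1+a^*a)e_{h(a)}$ and $D_a(x)=\delta_{x,a^*}\,a^*\otimes e_{t(a)}-\delta_{x,a}\,e_{h(a)}\otimes a$. These formulas isolate which summands of $\PP_1$ and $\PP_2$ can possibly contribute to a given $\dgal{x,y}$: only the term indexed by $a$ contributes to $\dgal{x,y}$ via $\PP_1$ when $\{x,y\}\subseteq\{a,a^*\}$, and the term indexed by $(a,b)$ in $\PP_2$ can contribute only when $\{x,y\}\subseteq\{a,a^*,b,b^*\}$.

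Next I would treat the three cases of the proposition separately. For \eqref{aast}, the dominant contribution comes from $\PP_1$: feeding $\delta_1=(1+a^*a)\tfrac{\partial}{\partial a}$ and $\delta_2=\tfrac{\partial}{\partial a^*}$ into \eqref{Eq:BrQ2} with $(b,c)=(a,a^*)$ produces exactly $e_{h(a)}\otimes e_{t(a)}+\tfrac12 a^*a\otimes e_{t(a)}+\tfrac12 e_{h(a)}\otimes aa^*$; adding the companion summand indexed by $a^*$ (with $\epsilon(a^*)=-\epsilon(a)$) and the cyclic antisymmetry from the $\tau_{(12)}$ piece of \eqref{Eq:BrQ2} yields the right-hand side of \eqref{aast} up to the $\delta_{h(a),t(a)}$-correction, which comes from the $\PP_2$ summand with $\alpha=a,\beta=a^*$ under the assumption $a<a^*$. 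For \eqref{a<b}, only $\PP_2$ contributes (since $b\ne a^*$): the four Kronecker-delta terms correspond, in order, to the four pairings between the two components of $D_a$ and the two components of $D_b$ that survive evaluation at $(x,y)=(a,b)$, and the fact that the pair $(a,b)$ appears in $\PP_2$ exactly once (thanks to $a<b$) prevents double-counting. Finally \eqref{loop} follows from the self-pairing $\dgal{a,a}_\PP$, where only a loop $a$ (i.e.\ $h(a)=t(a)$) can yield non-zero contributions and these reduce to the stated expression after using \eqref{Eq:BrQ2} again.

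The main obstacle is the bookkeeping for loops and for the ordering: when $t(a)=h(a)$, both components of a given $D_\alpha$ may hit the same generator simultaneously, and several summands of $\PP_2$ (indexed by pairs involving $a$, $a^*$, or a second loop at the same vertex) combine additively. The ordering hypothesis $a<a^*<b<b^*$ is used in an essential way at this point to ensure that each unordered pair of arrows meeting at a common tail contributes exactly once and with the sign dictated by the cyclic antisymmetrisation in \eqref{Eq:BrQ2}; the same hypothesis lines up the inner-bimodule actions of the prefactors $a^*$ and $a$ in $D_a$ so that the corrective term $\tfrac12(a^*\otimes a-a\otimes a^*)\delta_{h(a),t(a)}$ in \eqref{aast} has the stated form. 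Once these sign and ordering conventions are fixed, the remainder is a routine but length evaluation of \eqref{Eq:BrQ2} in each of the cases above.
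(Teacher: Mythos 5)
Your overall strategy is exactly the one behind \cite[Proposition 2.6]{CF}: split $\PP$ into the two sums, evaluate \eqref{Eq:BrQ2} on pairs of generators, and let the biderivation property propagate the formulas to all of $A$. Two details in your sketch, however, are off and would derail the computation if carried through. First, the evaluation of $D_a$ on $a$ is wrong. Using the inner action $b\ast(a'\otimes a'')\ast c = a'c\otimes ba''$ and \eqref{Eq:Dba}, one gets $(a\,\partial/\partial a)(a)=a\ast(e_{t(a)}\otimes e_{h(a)})=e_{t(a)}\otimes a$, so $D_a(a)=-e_{t(a)}\otimes a$, not $-e_{h(a)}\otimes a$. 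The difference is harmless for loops but fatal otherwise: the $\PP_2$-summand indexed by $(a,b)$, $a<b$, $t(a)=t(b)$, contributes via \eqref{Eq:BrQ2} only the tensor $D_b(b)'D_a(a)''\otimes D_a(a)'D_b(b)''=e_{t(b)}a\otimes e_{t(a)}b=a\otimes b$, producing the term $-\tfrac12(a\otimes b)\delta_{t(a),t(b)}$ of \eqref{a<b}; with the $e_{h(\,\cdot\,)}$ prefactors this tensor would vanish unless $a$ and $b$ are loops.

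Second, your account of where the four Kronecker-delta terms in \eqref{a<b} come from is not quite the right mechanism. For a single $\PP_2$-summand $D_\alpha D_\beta$ evaluated on $(a,b)$ with $b\ne a^*$, the underlying $Q$-arrows $\bar a,\bar b$ of $a,b$ are distinct, so the "cross'' tensor of \eqref{Eq:BrQ2} (involving $\delta_1(b)$ and $\delta_2(a)$) vanishes and each summand yields exactly one tensor, not four. The four terms of \eqref{a<b} come from the (up to) four distinct $\PP_2$-summands $(\bar a,\bar b)$, $(\bar a,\bar b^*)$, $(\bar a^*,\bar b)$, $(\bar a^*,\bar b^*)$, each contributing precisely when its tail condition $t(\alpha)=t(\beta)$ holds; the ordering hypothesis $\bar a<\bar a^*<\bar b<\bar b^*$ guarantees these and no others appear, with the displayed signs. (Your analogous claim about the $\PP_1$ contribution to \eqref{aast} is also imprecise: the $c=a$ summand supplies $\tfrac12 e_{h(a)}\otimes e_{t(a)}+\tfrac12 a^*a\otimes e_{t(a)}$ and the $c=a^*$ summand supplies $\tfrac12 e_{h(a)}\otimes e_{t(a)}+\tfrac12 e_{h(a)}\otimes aa^*$; only their sum gives the first three terms of \eqref{aast}.) With these corrections your sketch becomes sound and matches the computation in \cite{CF}.
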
 
\noindent This defines all double brackets since when $a>b$, $\dgal{a,b}=-\dgal{b,a}^\circ$. 

\medskip

We finish by a remark on the structure of the moment map of a subquiver. assume that $\bar{Q}'$ is a quiver with vertex set  $I'\subset I$ and $\bar{Q}'=\{a\in \bar{Q}\,|\,t(a)\in I'\text{ and }h(a)\in I'\}$. This means that if we look at the subset of vertices $I'$ of $\bar{Q}$ and erase all the arrows of $\bar{Q}$ 
which are not both starting and ending at an element of $I'$, we get $\bar{Q}'$. Moreover, we 
require that $\bar{Q}$ and $\bar{Q}'$ are endowed respectively with  orderings $<,<'$ such that, whenever $a,b \in \bar{Q}'$, $a<'b$ if $a<b$ in the initial quiver $\bar{Q}$, 
and whenever $a\in \bar{Q}'$ but $c\in \bar{Q} \smallsetminus \bar{Q}'$ we have  $a<c$. 

We construct $A'$ as $A$ above, and we see $A'$ as a subalgebra of $A$ (after adding the removed idempotents $e_s$ for $s\in I \setminus I'$). Define elements $\Phi'$, $\PP'$ by replacing $\bar{Q}$ with $\bar{Q}'$ in \eqref{pphi} and \eqref{Eq:PP}. Remark that we can write $\PP=\PP'+\PP_{out}$ and $\Phi=(\Phi+\sum_{s \notin I'}e_s) \Phi_{out}$  for some $\PP_{out}\in (D_BA)_2$ and $\Phi_{out}=(\Phi_{out,s})_{s\in I}$. Last statement is, in fact, a consequence of the fusion process which is used to endow a quiver with a quasi-Hamiltonian structure \cite[\S 6.5-6.7]{VdB1}.

\begin{lem} \label{Lem1}
 For all $b,c\in A'\subset A$, we have $\dgal{b,c}_{\PP}=\dgal{b,c}_{\PP'}$. 
In particular, for all $s\in I'$, we have $\dgal{\Phi'_s,c}_{\PP}=\frac{1}{2}(\Phi'_sE_s+E_s\Phi'_s)(c)$. 
\end{lem}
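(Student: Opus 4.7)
The strategy is to exploit the author's decomposition $\PP=\PP'+\PP_{\mathrm{out}}$ and to show that $\PP_{\mathrm{out}}$ contributes trivially to any double bracket whose two arguments lie in the subalgebra $A'$. Since the map $Q\mapsto\dgal{-,-}_Q$ of Proposition \ref{Prop:BrQ} is linear, this immediately forces $\dgal{b,c}_\PP=\dgal{b,c}_{\PP'}$ on $A'$, which is the first claim.

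First, I would identify explicitly which summands of \eqref{Eq:PP} belong to $\PP_{\mathrm{out}}$. The single sum over $a\in\bar{Q}$ splits according to whether $a$ lies in $\bar{Q}'$ or not, and since $\bar{Q}'$ is closed under $a\mapsto a^*$ the $\bar{Q}'$-part is exactly the first half of $\PP'$. For the double sum over pairs $(a,b)$ with $t(a)=t(b)$ and $a<b$, the compatibility hypothesis on the orderings---in particular the fact that every arrow of $\bar{Q}'$ precedes every arrow of $\bar{Q}\setminus\bar{Q}'$---guarantees that the pairs with both $a,b\in\bar{Q}'$ contribute precisely the second half of $\PP'$, while every remaining pair satisfies $b\in\bar{Q}\setminus\bar{Q}'$. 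Consequently, each summand $\delta_1\delta_2$ of $\PP_{\mathrm{out}}$ has its right factor $\delta_2$ built from $\partial/\partial b$ and $\partial/\partial b^*$ for some $b\notin\bar{Q}'$.

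Next, I would observe that for $b\notin\bar{Q}'$ the double derivation $\partial/\partial b$ vanishes identically on $A'$. By \eqref{Eq:Dba} it is zero on every arrow generator of $A'$ and on the idempotents, the Leibniz rule for double derivations propagates this to all products, and it extends further to the inverses $(1+aa^*)^{-1}$ with $a\in\bar{Q}'$ because $\partial/\partial b$ kills both $a$ and $a^*$. Applying formula \eqref{Eq:BrQ2} to a typical summand $\delta_1\delta_2$ of $\PP_{\mathrm{out}}$ evaluated on $x,y\in A'$, each of the two resulting tensors contains a factor of the form $\delta_2(x)$ or $\delta_2(y)$ which vanishes by the preceding observation; hence $\dgal{x,y}_{\PP_{\mathrm{out}}}=0$, establishing the first statement.

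For the second assertion, applying Theorem \ref{Thm:QStruct} to $\bar{Q}'$ endowed with its inherited ordering shows that $(A',\dgal{-,-}_{\PP'})$ is quasi-Hamiltonian with multiplicative moment map $(\Phi'_s)_{s\in I'}$, so that $\dgal{\Phi'_s,c}_{\PP'}=\tfrac12(\Phi'_sE_s+E_s\Phi'_s)(c)$ for every $c\in A'$. Combined with the first part this gives the identity on $A'$; both sides being double derivations in $c$, the identity extends to the remaining generators (arrows in $\bar{Q}\setminus\bar{Q}'$ and idempotents $e_s$ with $s\notin I'$) by a short direct check using $\Phi'_s\in e_sA'e_s$ and the explicit formula for $E_s$, and thence to all of $A$ by the Leibniz rule. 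The main technical obstacle is the bookkeeping step isolating the mixed pairs $(a,b)$ with $a\in\bar{Q}'$ and $b\notin\bar{Q}'$: the ordering compatibility assumption in the preamble is precisely what prevents such pairs from being absorbed into $\PP'$ and is what makes the argument go through.
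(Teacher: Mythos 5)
Your treatment of the first claim is essentially the paper's: decompose $\PP=\PP'+\PP_{\mathrm{out}}$, use the linearity of $\mu$ from Proposition \ref{Prop:BrQ}, and observe that every biderivation in $\PP_{\mathrm{out}}$ carries a factor $\partial/\partial d$ with $d\notin\bar{Q}'$, hence vanishes on $A'\times A'$ via \eqref{Eq:BrQ2}. Your more explicit bookkeeping (single sum splits by $a\in\bar{Q}'$ or not; mixed pairs in the double sum must have $a\in\bar{Q}'$, $b\notin\bar{Q}'$ by the ordering hypothesis) is correct and slightly finer-grained than the paper, which only asserts ``at least one factor'' is outside $\bar{Q}'$; either version suffices because \eqref{Eq:BrQ2} dies as soon as $\delta_1$ or $\delta_2$ kills $A'$. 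The derivation of the second claim for $c\in A'$ via Theorem \ref{Thm:QStruct} applied to $\bar{Q}'$, combined with the first claim, is also exactly the paper's argument and is correct.

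The final step of your argument, however, is wrong: you assert that the identity $\dgal{\Phi'_s,c}_{\PP}=\tfrac12(\Phi'_sE_s+E_s\Phi'_s)(c)$ extends from $c\in A'$ to arbitrary $c\in A$ ``by a short direct check'' on the remaining generators. That check does not pass. The paper explicitly warns about this in Appendix \ref{Ann:Dyn}: ``the equations involving $v_\beta$ or $w_\beta$ need to be computed and do not follow from Lemma \ref{Lem1}, because such arrows appear from the framing which is not in the initial cyclic quiver for which $\phi$ is a moment map.'' One can see the failure concretely. Taking $\Phi'=\phi$, $s=0$, $c=v_\beta$, the right-hand side $\tfrac12(\phi_0 E_0+E_0\phi_0)(v_\beta)$ evaluates to $\tfrac12(v_\beta\otimes\phi_0+v_\beta\phi_0\otimes e_0)$, whereas the correct double bracket computed in Appendix \ref{Ann:Dyn} is $\dgal{\phi,v_\beta}=\tfrac12(v_\beta\phi_0\otimes e_0 - v_\beta\otimes\phi_0)$; the two differ by the nonzero element $v_\beta\otimes\phi_0$. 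So the ``in particular'' clause of the lemma must be read with $c$ still ranging over $A'$, as the first sentence sets up, and the proof should stop there; the extension to all of $A$ is both unnecessary for the lemma and false.
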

\begin{proof}
 By linearity of the map in Proposition \ref{Prop:BrQ}, we can write 
$\dgal{-,-}_{\PP'}=\dgal{-,-}_{\PP}+\dgal{-,-}_{\PP_{out}}$. From \eqref{Eq:BrQ2}, we get that 
$\dgal{b,c}_{\PP_{out}}$ is a sum of terms of the form 
\begin{equation} \label{Eq:Lem1}
 \delta_2(c)'\delta_1(b)'' \otimes \delta_1(b)'\delta_2(c)''
-\delta_1(c)'\delta_2(b)'' \otimes \delta_2(b)'\delta_1(c)'' 
\end{equation} 
for any $b,c\in A$. 
By construction, $\PP_{out}$ is a sum of 
(double) biderivations, and each biderivation carries at least one factor $\partial/\partial d$ 
for $d\in \bar{Q}\smallsetminus \bar{Q}'$. Therefore,  if both $b,c \in A'$, 
all terms in \eqref{Eq:Lem1} must vanish, and $\dgal{b,c}_{\PP_{out}}=0$. 

Applying this to $\Phi_s'$ and $c\in A'$,  
$\dgal{\Phi_s',c}_{\PP}=\dgal{\Phi_s',c}_{\PP'}$. By construction, $\Phi'$ is a multiplicative moment map 
for $\dgal{-,-}_{\PP'}$, so it satisfies \eqref{Phim}.  
\end{proof}

\subsection{Geometric counterpart to the definitions} \label{ss:RepSp}
Fix a $\CC$-algebra $A$ and  $N\in\N$. The \emph{representation space} $\Rep(A, N)$ is the affine scheme whose coordinate ring $\mathcal{O}(\Rep(A, N))$ is generated by symbols $a_{ij}$ for $a\in A$ and $i,j=1,\ldots,N$, such that they are $\CC$-linear in $a$, they satisfy $(ab)_{ij}=\sum_k a_{ik}b_{kj}$ for any $a,b\in A$, and $1_{ij}=\delta_{ij}$. Alternatively, we can see $\Rep(A,N)$ as parametrising  algebra homomorphisms $\varrho: A \to \Mat_N(\CC)$, and we get  $a_{ij}(\varrho)=\varrho(a)_{ij}$ at any point $\varrho\in \Rep(A, N)$. 
Following \cite[Section 7]{VdB1}, to any $a\in A$ we associate a matrix-valued function $\mathcal{X}(a):=(a_{ij})_{ij}$ on $\Rep(A,N)$. Similarly, any double derivation $\delta\in\Der (A,A\otimes A)$ gives rise to a matrix-valued vector field $\mathcal{X}(\delta)=(\delta_{ij})_{ij}$ on $\Rep(A, N)$, where $\delta_{ij}$ is a derivation of $\mathcal{O}(\Rep(A, N))$ defined by the rule $\delta_{ij}(a_{uv})=\delta'(a)_{uj} \delta''(a)_{iv}$. In particular, if $\dgal{-,-}$ is a double bracket on $A$, we have for any $a\in A$ that the double derivation $\dgal{a,-}$ defines a matrix-valued vector field $X_a$ such that 
$(X_a)_{ij}(b_{uv})=\dgal{a,b}'_{uj}\dgal{a,b}''_{iv}$.  

\medskip

We can generalise the definition in a relative setting for a $B$-algebra $A$, where $B$ is of the form $B= {\CC} e_1 \oplus \ldots \oplus \CC e_K$ with $e_r e_s=\delta_{rs} e_s$. Representation spaces are now indexed by $K$-tuples $\alpha=(\alpha_1, \ldots, \alpha_K)\in \N^K$. Given $\alpha$ with $\alpha_1+\ldots +\alpha_K=N$, we embed $B$ diagonally into $\Mat_N(\CC)$ so that $\Id_N$ is split into a sum of $K$ diagonal blocks of respective sizes  $\alpha_1, \ldots, \alpha_K$, representing the idempotents $(e_s)_s$. By definition, $\Rep_B(A, \alpha)=\Hom_B(A,\Mat_{N}(\CC))$, and it can be viewed as an affine scheme in the same way as $\Rep(A, N)$. Note in particular that for any $\Phi\in\oplus_s \, e_s A e_s$, the matrix-valued function $\mathcal{X}(\Phi)$ on $\Rep_B(A,\alpha)$ is a block matrix $\mathcal{X}(\Phi) \in \prod_s\Mat_{\alpha_s}({\CC})$.

Assume that $A$ is equipped with a $B$-linear double bracket $\dgal{-,-}$. Then the representation spaces are endowed with an anti-symmetric biderivation as follows.
\begin{prop} \label{Prop:BrInduced} \emph{(\cite[Proposition 7.5.1, \S 7.8]{VdB1})} 
There is a unique anti-symmetric biderivation 
$\br{-,-}:\mathcal{O}(\Rep_B(A,\alpha))\times \mathcal{O}(\Rep_B(A,\alpha)) \to \mathcal{O}(\Rep_B(A,\alpha))$
such that for all $a,b\in A$, 
\begin{equation}\label{derr}
 \br{a_{ij},b_{uv}}=\dgal{a,b}'_{uj}\, \dgal{a,b}''_{iv}\,\, . 
\end{equation}
Moreover, if $\dgal{-,-}=\dgal{-,-}_P$ for some $P\in (D_B A)_2$, then $\br{-,-}$ is defined by the bivector field  $\tr(\mathcal{X}(P))$ and we denote it by $\br{-,-}_P$. 
\end{prop}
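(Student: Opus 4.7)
\medskip

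\textbf{Proof plan.} The strategy is to build the biderivation $\br{-,-}$ directly on generators via the formula \eqref{derr}, then extend by Leibniz, and finally verify well-definedness against the defining relations of $\mathcal{O}(\Rep_B(A,\alpha))$. Uniqueness is automatic: the functions $a_{ij}$ with $a\in A$ generate the coordinate ring, and a biderivation is determined by its values on a generating set.

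First I would fix $c\in A$ and $u,v$, and define a derivation $D_{c,uv}$ on $\mathcal{O}(\Rep_B(A,\alpha))$ by prescribing $D_{c,uv}(a_{ij}):=\dgal{a,c}'_{ij}\,\dgal{a,c}''^{\,*}$ --- actually, to match \eqref{derr}, the cleaner route is to define, for each pair of generators, the scalar $\br{a_{ij},c_{uv}}:=\dgal{a,c}'_{uj}\,\dgal{a,c}''_{iv}$ and then check three compatibilities as we extend by Leibniz in each slot. The compatibilities needed are $(ab)_{ij}=\sum_k a_{ik}b_{kj}$, $1_{ij}=\delta_{ij}$, and the $B$-linearity relations $e_s\cdot e_r=\delta_{sr}e_s$ (already built into the embedding $B\hookrightarrow\Mat_N(\CC)$). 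The first of these, applied in the left slot, reduces to
\begin{equation*}
\dgal{ab,c}'_{uj}\dgal{ab,c}''_{iv}
=\sum_k\Big(b_{kj}\,\dgal{a,c}'_{uk}\dgal{a,c}''_{iv}+a_{ik}\,\dgal{b,c}'_{uj}\dgal{b,c}''_{kv}\Big),
\end{equation*}
which is exactly the inner derivation identity $\dgal{ab,c}=\dgal{a,c}\ast b+a\ast\dgal{b,c}$ of \ref{ss:dAS} read off entry by entry. In the right slot, the outer derivation rule $\dgal{a,bc}=b\,\dgal{a,c}+\dgal{a,b}\,c$ yields by the same calculation the equality $\br{a_{ij},(bc)_{uv}}=\sum_w \br{a_{ij},b_{uw}c_{wv}}$. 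Compatibility with $1_{ij}=\delta_{ij}$ is immediate from $\dgal{a,1}=0=\dgal{1,a}$ (derivation and antisymmetry). The $B$-linearity of $\dgal{-,-}$ ensures that the block-diagonal embedding of $B$ is respected.

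Antisymmetry is read directly from the cyclic antisymmetry $\dgal{a,b}=-\dgal{b,a}^\circ$: writing this identity componentwise one gets $\dgal{a,b}'_{uj}\dgal{a,b}''_{iv}=-\dgal{b,a}''_{uj}\dgal{b,a}'_{iv}$, which is precisely $-\br{b_{uv},a_{ij}}$. So $\br{-,-}$ is an antisymmetric biderivation, and uniqueness plus the above checks show it is well-defined.

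For the final assertion, assume $\dgal{-,-}=\dgal{-,-}_P$ for some $P\in(D_BA)_2$. By linearity in $P$ it suffices to treat the case $P=\delta_1\delta_2$, where the double bracket is given by \eqref{Eq:BrQ2}. The associated bivector field $\tr(\mathcal{X}(P))$ is defined by contracting the two matrix-valued derivations $\mathcal{X}(\delta_1)$ and $\mathcal{X}(\delta_2)$: at generators,
\begin{equation*}
\tr(\mathcal{X}(P))(a_{ij},b_{uv})
=\sum_{k}\Big((\delta_1)_{kj}(a_{ij}\text{-slot})\,(\delta_2)_{ik}(b_{uv}\text{-slot})-(\delta_1\leftrightarrow\delta_2)\Big),
\end{equation*}
and unfolding $(\delta_r)_{pq}(x_{st})=\delta_r(x)'_{sq}\delta_r(x)''_{pt}$ gives exactly the entries of \eqref{Eq:BrQ2}. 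By the uniqueness already established, $\tr(\mathcal{X}(P))$ and $\br{-,-}$ agree. The main obstacle throughout is the bookkeeping of left/right tensor factors and inner versus outer bimodule structures when passing from the formal identities on $A$ to their matrix entry versions on $\Rep_B(A,\alpha)$; once the indices are aligned correctly, each verification is a direct consequence of the corresponding identity on $A$.
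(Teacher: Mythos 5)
The paper does not supply its own proof of this proposition; it is imported wholesale from Van den Bergh \cite[Proposition~7.5.1, \S\,7.8]{VdB1}, so the only comparison possible is with that source. Your proposal is correct and reproduces Van den Bergh's argument in its essentials: well-definedness on the relation $(ab)_{ij}=\sum_k a_{ik}b_{kj}$ is exactly the matrix-entry shadow of the inner/outer derivation rules, antisymmetry is the componentwise form of $\dgal{a,b}=-\dgal{b,a}^\circ$, and the identification of $\br{-,-}_P$ with $\tr(\mathcal{X}(P))$ reduces by linearity to $P=\delta_1\delta_2$ and then to unwinding \eqref{Eq:BrQ2} against $(\delta_r)_{pq}(a_{st})=\delta_r(a)'_{sq}\delta_r(a)''_{pt}$; your index computations check out.
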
 
The identity \ref{derr} extends to relate the double bracket $\dSN{-,-}$ on $D_BA$ with the Schouten-Nijenhuis bracket $[-,-]$ between polyvector fields on $\Rep_B(A,\alpha)$ \cite[Proposition 7.6.1]{VdB1}. 

On $\Rep(A, N)$ we have a natural action of $\Gl_N$, induced by conjugation action on $\Mat_N(\CC)$. Similarly, we have an action of  $\Gl_\alpha=\prod_s \Gl_{\alpha_s}$ on $\Rep_B(A, \alpha)$. Provided that $A$ is quasi-Hamiltonian,   $\Rep_B(A, \alpha)$ is a quasi-Hamiltonian manifold \cite{QuasiP}, as defined now in the smooth case (see \cite[\S 7.11-7.13]{VdB1} for the algebraic case). 

Let  $G$ be a Lie group with Lie algebra $\mathfrak{g}$. Moreover, assume that $\mathfrak{g}$ admits a non-degenerate $G$-invariant bilinear form $(-,-)$. If $(e_a),(e^a)$ are dual bases of $\mathfrak{g}$  with respect to $(-,-)$, we define the Cartan $3$-tensor  $\phi=\frac{1}{12}C^{abc}e_a\wedge e_b \wedge e_c$, for 
$C^{abc}=(e^a,[e^b,e^c])$ the tensor of structure constants. For all $\xi \in \mathfrak{g}$, write $\xi^L$ and $\xi^R$  to denote the left and right invariant vector fields on $G$ respectively. 

Given a $G$-manifold $M$, the $G$-action gives rise to a Lie algebra homomorphism 
$(-)_M:\mathfrak{g}\to \Der \mathcal O(M)$. This can be extended to polyvector fields and we can define the $3$-tensor $\phi_M$.  
We say that $M$ is a quasi-Poisson manifold if there exists an invariant bivector field $P$ on $M$ such that  $[P,P]=\phi_M$ under the Schouten-Nijenhuis bracket. We can use $P$ to define a bracket $\br{-,-}$ on $\mathcal O(M)$ in the obvious way.

A \emph{multiplicative moment map} is an $\operatorname{Ad}$-equivariant 
map $\Phi:M\to G$  satisfying 
\begin{equation} \label{qmom}
\{g\circ \Phi, -\}=\frac{1}{2}\left((e_a^L+e_a^R)(g)\circ \Phi\right)\, (e^a)_M\,, 
\end{equation}
for all functions $g\in\mathcal O(G)$, and we say that the triple $(M,P,\Phi)$ is a  \emph{Hamiltonian quasi-Poisson manifold}. In the case where the action of $G$ on $M$ is free and proper, for each conjugacy class $\mathcal{C}_g$ 
of $g\in G$ we can form the Poisson manifold $\Phi^{-1}(\mathcal{C}_g)/G$. This process is called 
\emph{quasi-Hamiltonian reduction}.

\begin{thm}
\label{Thm:Poiss}  \cite[\S 7.8, 7.13]{VdB1}
Assume that $(A,P)$ is a differential double quasi-Poisson algebra, which is quasi-Hamiltonian for the 
multiplicative moment map $\Phi\in \oplus_se_sAe_s$. We have that $\Rep_B(A, \alpha)$ is a $\Gl_\alpha$-space with a quasi-Poisson bracket $\br{-,-}_P$ determined from $\dgal{-,-}_P$ by \eqref{derr}.
Then the matrix-valued function $\mathcal{X}(\Phi):\Rep_B(A,\alpha)\to\prod_s\Mat_{\alpha_s}(\CC)$ is a  multiplicative moment map for $\Rep_B(A,\alpha)$. Therefore, if it smooth, $\Rep_B(A, \alpha)$  is  a Hamiltonian quasi-Poisson manifold.  
\end{thm}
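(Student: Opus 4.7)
The plan is to transfer the algebraic quasi-Hamiltonian data on $A$ to geometric data on $M:=\Rep_B(A,\alpha)$ via the functor $\mathcal X$, using Proposition \ref{Prop:BrInduced} as the dictionary between double brackets and brackets of representations. There are three things to verify: (i) $P_M:=\tr(\mathcal X(P))$ is a $\Gl_\alpha$-invariant bivector field, (ii) it satisfies the quasi-Poisson identity $[P_M,P_M]=\phi_M$, and (iii) $\mathcal X(\Phi)$ is a multiplicative moment map in the sense of \eqref{qmom}.

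\textbf{Step 1.} I would first note that $\mathcal X(\Phi)$ is well-defined as a map $M\to \prod_s\Gl_{\alpha_s}\subset\Gl_\alpha$: each $\Phi_s\in e_sAe_s$ is a product of factors $(1+aa^*)^{\pm1}$, which were inverted in the construction of $A$, so $\mathcal X(\Phi_s)$ lands in $\Gl_{\alpha_s}$; the block structure coming from $\Phi_s\in e_sAe_s$ makes $\mathcal X(\Phi)$ block diagonal and thereby $\Gl_\alpha$-equivariant for the conjugation action. Similarly, $B$-linearity of $P\in(D_BA)_2$ implies that $\tr(\mathcal X(P))$ is $\Gl_\alpha$-invariant (the contraction by $\tr$ absorbs the inner $B$-bimodule action, which corresponds to the fundamental vector fields of the maximal diagonal torus, and extending to the full $\Gl_\alpha$ uses that conjugation preserves the algebra structure defining $a_{ij}$). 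That $\br{-,-}_P$ is precisely the bracket associated with $P_M$ is exactly the content of the second part of Proposition \ref{Prop:BrInduced}.

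\textbf{Step 2.} For the quasi-Poisson identity, I would invoke the extension of \eqref{derr} to $n$-brackets and the compatibility between $\dSN{-,-}$ on $D_BA$ and the Schouten--Nijenhuis bracket on polyvector fields on $M$ (\cite[Proposition 7.6.1]{VdB1}). The differential double quasi-Poisson hypothesis $\{P,P\}_{\operatorname{SN}}=\tfrac16\sum_s E_s^3\bmod[D_BA,D_BA]$ then translates, after applying $\tr\circ\mathcal X$ (which kills commutators), into $[P_M,P_M]=\tfrac16\sum_s\tr(\mathcal X(E_s^3))$. It remains to identify the right-hand side with the Cartan trivector $\phi_M$ for $\Gl_\alpha$: choosing the basis of elementary matrices $E^{s}_{ij}$ in the $s$-block of $\mathfrak{gl}_\alpha$ (which is self-dual with respect to the trace pairing up to the diagonal embedding of $B$), a direct combinatorial evaluation matches $\tfrac16\tr(\mathcal X(E_s^3))$ with the $\Gl_{\alpha_s}$-factor of $\phi_M$, and the sum over $s$ reconstructs $\phi_M$ on the product group.

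\textbf{Step 3.} Finally, the moment map identity is obtained by applying $\mathcal X$ to \eqref{Phim}. Using \eqref{derr} with $a=\Phi_s$ and any generator $b=a\in A$, the four tensor terms in \eqref{Phim} convert respectively into the matrix-entry-wise expressions $\tfrac12\bigl(a\cdot\mathcal X(\Phi_s)\bigr)_{uj}\delta_{i\in s}$, etc., which are precisely the coefficients obtained when differentiating a function $g\circ\mathcal X(\Phi)\in\mathcal O(\Gl_\alpha)$ by the left- and right-invariant vector fields $e_a^L,e_a^R$ associated with the elementary-matrix basis $e_a=E^s_{ij}$, paired with the fundamental vector field $(e^a)_M$. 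Equivariance of $\mathcal X(\Phi)$ has already been noted in Step 1. The main obstacle is the bookkeeping in Step 2: the algebraic equation only holds modulo the commutator ideal in $D_BA$, so one must be careful that $\tr\circ\mathcal X$ both kills these commutators and produces the correct normalisation of the Cartan tensor for the whole product group $\Gl_\alpha$, not just for a single factor; once the combinatorial identification $\tfrac{1}{12}\sum_s\tr(\mathcal X(E_s^3))=\phi_M$ is established, the rest of the argument follows mechanically from the definitions and Proposition \ref{Prop:BrInduced}.
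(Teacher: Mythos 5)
The paper offers no proof of this theorem: it is imported verbatim from Van den Bergh, \cite[\S\,7.8, 7.11--7.13]{VdB1}, so there is no internal argument to compare against. Your proposal correctly reconstructs the structure of Van den Bergh's proof: (i) use \eqref{derr} and \cite[Proposition 7.5.1]{VdB1} to transport $P$ to the $\Gl_\alpha$-invariant bivector $\tr(\mathcal X(P))$; (ii) use the compatibility of $\dSN{-,-}$ with the geometric Schouten--Nijenhuis bracket (\cite[Proposition 7.6.1]{VdB1}) to turn the relation $\brSN{P,P}=\frac16\sum_sE_s^3$ mod commutators into the quasi-Poisson identity, once one identifies $\tr\circ\mathcal X$ applied to the gauge elements with the fundamental vector fields and Cartan trivector for $\Gl_\alpha$; (iii) push the algebraic moment-map identity \eqref{Phim} through $\mathcal X$ to obtain \eqref{qmom}. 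This is exactly Van den Bergh's route.

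One concrete slip to flag: in Step~2 you first write $[P_M,P_M]=\frac16\sum_s\tr(\mathcal X(E_s^3))$ (correct, since $\brSN{P,P}=\frac16\sum_sE_s^3$ and $\tr\circ\mathcal X$ annihilates commutators), but at the end you state the needed identification as $\frac{1}{12}\sum_s\tr(\mathcal X(E_s^3))=\phi_M$. These are incompatible; the consistent claim is $\frac16\sum_s\tr(\mathcal X(E_s^3))=\phi_M$, where the implicit mismatch of $\frac16$ versus the $\frac{1}{12}$ in the definition of the Cartan tensor $\phi=\frac{1}{12}C^{abc}e_a\wedge e_b\wedge e_c$ is absorbed by the combinatorics of evaluating $\tr(\mathcal X(E_s^3))$ against the trace pairing. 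This is precisely the bookkeeping Van den Bergh carries out in \cite[\S\,7.12--7.13]{VdB1}; citing it directly would tighten your outline and avoid the ambiguity.
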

We can note that any multiple of the identity $\Id_\alpha=\prod_s \Id_{\alpha_s}$ acts trivially on $\Rep_B(A,\alpha)$. This leads us to define 
\begin{align}\label{galpha}
G(\alpha)=\bigg(\prod_{s\in I} \Gl_{\alpha_s}\bigg)/\,{\CC}^\times\,,
\end{align}
where   $\CC^\times \subset \Gl_{\alpha}$ is the subgroup $\{\lambda \Id_\alpha \mid \lambda \in \CC^\times\}$. 
Combining Theorem \ref{Thm:Poiss} with \cite[Proposition 5.2]{VdB2} and \cite[Theorem 10.3]{QuasiP}, we obtain 

\begin{cor}
  \label{Cor:Sympl}  
Assume that $(A,P,\Phi)$ is a quasi-Hamiltonian algebra, with  $P$ non-degenerate. 
If $\mathcal{C}_g \subset \Gl_\alpha$ is a conjugacy class such that $Y:=\mathcal{X}(\Phi)^{-1}(\mathcal{C}_g)$ is smooth, and if the action of $G(\alpha)$  on $Y$ is free and the affine GIT quotient $Y/\! / G$ is a geometric quotient, then it is a Poisson manifold  with non-degenerate Poisson bracket defined by $\tr(\mathcal{X}(\PP))$, that we denote  $Y / G$. 
\end{cor}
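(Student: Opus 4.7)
The plan is to chain together the three ingredients already flagged in the statement: Theorem \ref{Thm:Poiss} to transport the algebraic structure to $\Rep_B(A,\alpha)$, \cite[Proposition 5.2]{VdB2} to promote the resulting quasi-Poisson structure to a genuine quasi-Hamiltonian one in the sense of Alekseev-Malkin-Meinrenken, and \cite[Theorem 10.3]{QuasiP} to perform the symplectic reduction. First I would invoke Theorem \ref{Thm:Poiss}: it produces a $\Gl_\alpha$-invariant quasi-Poisson bracket $\br{-,-}_P$ on $\Rep_B(A,\alpha)$, defined by the bivector $\tr(\mathcal{X}(\PP))$, together with the multiplicative moment map $\mathcal{X}(\Phi)$ taking values in $\prod_s\Gl_{\alpha_s}$. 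Since every scalar matrix $\lambda\,\Id_\alpha$ acts trivially by conjugation on $\Mat_N(\CC)$, the $\Gl_\alpha$-action factors through $G(\alpha)=\Gl_\alpha/\CC^\times$, and both the $\operatorname{Ad}$-equivariance and the defining relation \eqref{qmom} survive this quotient. So we obtain a Hamiltonian quasi-Poisson $G(\alpha)$-manifold with moment map $\mathcal{X}(\Phi)$.

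Next I would use the non-degeneracy of $P$ in the algebraic sense of \S\ref{ss:doubleqP}. By \cite[Proposition 5.2]{VdB2}, this condition is equivalent to the existence of a compatible $2$-form $\omega$ on $\Rep_B(A,\alpha)$ satisfying the AMM quasi-symplectic axioms with respect to the same moment map $\mathcal{X}(\Phi)$. In other words, $\Rep_B(A,\alpha)$ is upgraded from a mere quasi-Poisson manifold to a bona fide quasi-Hamiltonian $G(\alpha)$-manifold. This is the step where the hypothesis on $P$ is used essentially, and it is also the main obstacle of the argument: without it one only obtains a reduced Poisson bracket with no control on its rank, whereas non-degeneracy of the reduced bracket requires the AMM $2$-form.

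Finally, with the strong quasi-Hamiltonian $G(\alpha)$-structure in place, I would apply the quasi-Hamiltonian reduction theorem \cite[Theorem 10.3]{QuasiP}: given that $Y=\mathcal{X}(\Phi)^{-1}(\mathcal{C}_g)$ is smooth, that the $G(\alpha)$-action on $Y$ is free, and that the affine GIT quotient $Y/\!/G(\alpha)$ is geometric, the quotient $Y/G(\alpha)$ inherits a symplectic form obtained from the restriction of $\omega$. Equivalently, it carries a non-degenerate Poisson bracket, and by construction this bracket is the one descending from $\br{-,-}_P=\tr(\mathcal{X}(\PP))$ to the $G(\alpha)$-invariants on $Y$. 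The descent is well posed because, as observed after \eqref{Phim}, the ideal $J$ cut out by the moment map condition satisfies $\br{A,J}\subset J$ modulo commutators, so the bracket on invariants restricts coherently to $Y$. This yields exactly the statement of the corollary.

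The main obstacle is therefore the middle step, namely turning the algebraic non-degeneracy of $P\in (D_BA)_2$ into the existence of the AMM $2$-form on the representation space; everything else is bookkeeping, most notably the replacement of $\Gl_\alpha$ by $G(\alpha)$ so that the group acting effectively coincides with the one for which the reduction theorem is formulated.
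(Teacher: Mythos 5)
Your proposal follows exactly the chain the paper points to: Theorem \ref{Thm:Poiss} to put a $G(\alpha)$-invariant quasi-Poisson bivector $\tr(\mathcal{X}(\PP))$ and multiplicative moment map $\mathcal{X}(\Phi)$ on $\Rep_B(A,\alpha)$, then \cite[Proposition 5.2]{VdB2} to convert non-degeneracy of $P$ into non-degeneracy of the induced structure on the representation space, then \cite[Theorem 10.3]{QuasiP} to carry out the reduction along the conjugacy class and conclude that $Y/G$ is symplectic, i.e.\ carries a non-degenerate Poisson bracket. The remark that the conjugation action factors through $G(\alpha)=\Gl_\alpha/\CC^\times$ is a useful piece of bookkeeping that the paper leaves implicit. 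So this is essentially the paper's argument, fleshed out.

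One small correction to your final paragraph: the algebraic-ideal justification for the descent, using $J=(\Phi-q)$ and $\br{A,J}\subset J$ modulo commutators from \S\ref{ss:doubleqP}, is only available when $\mathcal{C}_g$ is a single scalar $\prod_s q_s\Id_{\alpha_s}$. The paper explicitly flags this in the remark immediately after the corollary: the $H_0$-Poisson argument works in that scalar case, but the corollary is stated for an arbitrary conjugacy class, for which the moment map condition does not cut out an ideal in $A$. In that generality the well-posedness of the descent is supplied entirely by the geometric reduction theorem you already invoked, so the aside is not needed and, as stated, not applicable. The rest of the argument is unaffected.
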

Note that in the case where the conjugacy class is given by $\prod_s q_s \Id_{\alpha_s}$ with all $q_s\in\CC^\times$, we know that $Y/\! / G$ has a Poisson bracket because $A/(\Phi-\sum_s q_s e_s)$ has an $H_0$-Poisson structure by \cite[Section 4]{CB2}. However, we need the quasi-Poisson formalism to conclude that the Poisson bracket is non-degenerate in Corollary \ref{Cor:Sympl}.

Now, fix a conjugacy class $\mathcal{C}_g$ in $\operatorname{Lie}(\Gl_\alpha)$ and  assume that $F,G \in \mathcal O (\mathcal{X}(\Phi)^{-1} (\mathcal{C}_g))$ are invariant under the $\Gl_\alpha$ action. We can write $F=\tr (\mathcal{X}(a))$ and $G=\tr(\mathcal{X}(b))$ for some $a,b\in A$. Assuming that all spaces involved are smooth, we get from Proposition \ref{Prop:BrInduced}, Theorem \ref{Thm:Poiss} and \eqref{sbra} that 
\begin{equation} \label{relInv}
  \br{F,G}_P=\tr \mathcal X\left(\dgal{a,b}'\dgal{a,b}'' \right) =\tr \mathcal{X}(\br{a,b}),
\end{equation}
 where the bracket on the right-hand side is the associated bracket  $\br{-,-}=m \circ \dgal{-,-}$. 
In particular, we only need to compute $\br{a,b}$ modulo commutators in $A/[A,A]$ to get the Poisson bracket between $\tr(\mathcal{X}(a))$ and $\tr(\mathcal{X}(b))$. In slightly more general setting, given arbitrary $a,b\in A$ we find in the same way 
\begin{equation} \label{relBrDyn}
  \br{\tr(\mathcal{X}(a)),\mathcal{X}(b)}_P=\mathcal{X}(\br{a,b})\,,
\end{equation}
where this time we have the associated bracket  $\br{-,-}: A/[A,A] \times A \to A$.

\subsection{Multiplicative quiver varieties}\label{Sec:mqv}
From now on, fix $B=\oplus_{s \in I} \CC e_s$ as in \ref{ss:MultqPrep}. We always work in a relative setting and omit the subscript $B$ from the notation. 
The matrix $\mathcal X(a)$ representing an element $a\in A$ is an $|I| \times |I|$ block matrix. In the case of an arrow $a\in\bar{Q}$, we can use the idempotents to write $a=e_{t(a)}ae_{h(a)}$, so $a$ is represented by a matrix with at most one non-zero block of size $\alpha_{t(a)}\times \alpha_{h(a)}$ placed in the $t(a)$-th block row and $h(a)$-th block column. Therefore, this can be viewed as a \emph{quiver representation}, consisting of vector spaces $\VV_s=\CC^{\alpha_s}$, $s\in I$ and linear maps $X_a: \VV_{h(a)}\to \VV_{t(a)}$ for each $a\in \bar{Q}$.  With this interpretation, we have
\begin{equation}\label{xa}
X_a\in \Mat_{\alpha_{t(a)}, \alpha_{h(a)}} (\CC)\,, \qquad \Rep(\CC\bar{Q}, \alpha)\cong \prod_{a\in \bar{Q}} \Mat_{\alpha_{t(a)}, \alpha_{h(a)}} (\CC).
\end{equation}
Next, $\Rep(A, \alpha)$ is an affine open subset of $\Rep(\CC\bar{Q}, \alpha)$, so it is also smooth. It is naturally acted on by $\prod_{i\in I} \Gl_{\alpha_s}$ through conjugation. This induces an action of $G(\alpha)$ as defined in \eqref{galpha}. 
 By Theorems \ref{Thm:QStruct} and \ref{Thm:Poiss},   and Proposition \ref{Prop:BrInduced}, $\Rep(A, \alpha)$ is a quasi-Hamiltonian $G(\alpha)$-manifold, with  quasi-Poisson bracket defined by the bivector $\tr(\mathcal{X}(\PP))$ and with multiplicative moment map $\mathcal{X}(\Phi)$. The representation space $\Rep(\Lambda^q, \alpha)$ corresponds to the subset such that $\mathcal X (\Phi)=\prod_s q_s \Id_{\alpha_s}$, so it is a closed affine subvariety in $\Rep(A, \alpha)$. We set $q^{\alpha}=\prod_{s\in I} q^{\alpha_s}$, and note that $\Rep(\Lambda^q, \alpha)$ is empty when $q^\alpha\neq 1$ by  \cite[Lemma 1.5]{CBShaw}.

The points in the affine variety $\mathcal{S}_{\alpha, q}:=\Rep(\Lambda^q, \alpha)/\!/G(\alpha)$ are closed $G(\alpha)$ orbits of $\Rep(\Lambda^q, \alpha)$, so  correspond to semi-simple representations of $\Lambda^{q}$ of dimension $\alpha$. 
In the case where all representations in $\Rep(\Lambda^q, \alpha)$ are simple, we have the following description of the space.  
\begin{thm}  
\label{dim} \cite[Theorem 2.8]{CF}
Let $p(\alpha)=1+\sum_{a\in Q}\alpha_{t(a)}\alpha_{h(a)}-\alpha\cdot\alpha$, where $\alpha\cdot\alpha=\sum_{s\in I}\alpha_s^2$. Suppose that $\Rep(\Lambda^q, \alpha)$ is non-empty and all representations in $\Rep(\Lambda^q, \alpha)$ are simple. Then $\alpha$ is a positive root of $Q$ and $\Rep(\Lambda^q, \alpha)$ is a smooth affine variety of dimension $g+2p(\alpha)$, with $g=\dim G(\alpha)=\alpha\cdot\alpha-1$. The group $G(\alpha)$ acts freely on $\Rep(\Lambda^q, \alpha)$, so $\mathcal S_{\alpha, q}=\Rep(\Lambda^q, \alpha)/G(\alpha)$ is a Poisson manifold of dimension $2p(\alpha)$, obtained by quasi-Hamiltonian reduction.
\end{thm}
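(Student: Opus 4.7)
The plan is to establish the four conclusions in the following order: freeness of the $G(\alpha)$-action, smoothness of $\Rep(\Lambda^q,\alpha)$, the dimension formula (which also forces $\alpha$ to be a positive root), and finally the Poisson structure on the quotient via Corollary~\ref{Cor:Sympl}.

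First, I would show that the stabiliser of any simple representation $\varrho\in \Rep(\Lambda^q,\alpha)$ in $\Gl_\alpha$ reduces to the scalars. Indeed, any $g\in\Gl_\alpha$ fixing $\varrho$ commutes with $\varrho(a)$ for every $a\in A$; since $\varrho$ is simple, Schur's lemma forces $g\in\CC^\times\cdot\Id_\alpha$. Passing to the quotient $G(\alpha)=\Gl_\alpha/\CC^\times$ from \eqref{galpha} then yields a trivial stabiliser, so the action of $G(\alpha)$ on the locus of simple representations is free.

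Next I would handle smoothness and dimension simultaneously by analysing the differential of the moment map $\mathcal{X}(\Phi):\Rep(A,\alpha)\to\prod_s\Gl_{\alpha_s}$ at a point $\varrho$ lying over the central conjugacy class $\{q\cdot\Id\}$. On the ambient smooth affine open subset $\Rep(A,\alpha)\subset\Rep(\CC\bar Q,\alpha)$, whose dimension equals $2\sum_{a\in Q}\alpha_{t(a)}\alpha_{h(a)}$ by \eqref{xa}, I would identify $\ker(d\mathcal{X}(\Phi))_\varrho$ with the annihilator of the infinitesimal action of $\mathfrak{gl}_\alpha$. Using the moment map condition \eqref{qmom} together with the fact that $\varrho$ is simple (so Schur forces the infinitesimal stabiliser to be the scalars), the image of $(d\mathcal{X}(\Phi))_\varrho$ turns out to be exactly the codimension-$1$ subspace of $\prod_s \mathfrak{gl}_{\alpha_s}$ cut out by the relation $q^\alpha=1$ — the same codimension that reflects the identity $\det\prod_s\Phi_s=1$ coming from the cyclic cancellation in \eqref{pphi}. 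This gives surjectivity onto the tangent space of the conjugacy class of $q\cdot\Id$, so $\Rep(\Lambda^q,\alpha)$ is smooth near every simple point, and its dimension there equals
\[
2\sum_{a\in Q}\alpha_{t(a)}\alpha_{h(a)}-(\alpha\cdot\alpha-1)=g+2p(\alpha),
\]
matching the claim. Since all representations in $\Rep(\Lambda^q,\alpha)$ are assumed simple, smoothness holds globally. The fact that $\alpha$ must then be a positive root of $Q$ follows from the Crawley–Boevey–Shaw theory of multiplicative preprojective algebras, since the existence of a simple representation of dimension vector $\alpha$ for $\Lambda^q$ forces $\alpha$ to be a positive root (this is the multiplicative analogue of Crawley-Boevey's characterisation for deformed preprojective algebras, invoked through \cite{CBShaw}).

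Finally I would assemble the Poisson statement. The first two steps show the hypotheses of Corollary~\ref{Cor:Sympl} are fulfilled with conjugacy class $\mathcal{C}_g=\{q\cdot\Id\}$: the preimage $Y=\mathcal{X}(\Phi)^{-1}(\mathcal{C}_g)=\Rep(\Lambda^q,\alpha)$ is smooth, the action of $G(\alpha)$ is free, and the non-degeneracy of $\PP$ recalled after Theorem~\ref{Thm:QStruct} applies. Therefore $Y/\!/G(\alpha)$ is a geometric quotient and inherits a non-degenerate Poisson bracket determined by $\tr(\mathcal{X}(\PP))$; its dimension is $g+2p(\alpha)-g=2p(\alpha)$, establishing the last assertion. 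The most delicate point in this plan is the precise determination of $\mathrm{im}(d\mathcal{X}(\Phi))_\varrho$ at a simple representation — ensuring that the $-1$ in the codimension count arises exactly from the determinantal relation satisfied by $\Phi$ and not from some further degeneration — and this is where one relies most heavily on \cite{VdB1,CBShaw}.
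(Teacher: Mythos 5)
Your proof plan is correct and takes the approach one would expect for this kind of statement. The four steps are well chosen and logically ordered, and you correctly identify the most delicate point — that the image of $(d\mathcal{X}(\Phi))_\varrho$ at a simple point is precisely the codimension-one hyperplane tangent to the constraint locus $\prod_s\det(g_s)=1$, equivalently the annihilator of the scalar stabiliser $\CC\cdot\Id_\alpha$ under the trace form. Note that when $\Phi(\varrho)$ is central the quasi-Hamiltonian identity on $\ker\omega$ collapses to the ordinary Hamiltonian statement $\mathrm{im}(d\Phi_\varrho)^\perp = \mathfrak{g}_\varrho$, so your Schur argument feeds directly into the codimension count, but making this fully rigorous does rely on the non-degeneracy of $\PP$ established in \cite[Section 8]{VdB2} — which you only invoke at the final step for the Poisson structure, whereas it is already needed for the submersion property that underlies smoothness and the dimension formula. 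The one genuinely different (and somewhat shorter) route taken in \cite{CF} is to get the positive-root statement, the smoothness, \emph{and} the dimension formula all at once from the Crawley--Boevey--Shaw theory of multiplicative preprojective algebras, rather than re-deriving smoothness and dimension from the moment map differential; your direct moment-map argument is valid and geometrically illuminating, but it effectively reproves part of what \cite{CBShaw} already gives off the shelf. Either way, the conclusion follows correctly from Corollary~\ref{Cor:Sympl} and the free $G(\alpha)$-action.
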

As we explained in \ref{ss:RepSp}, the Poisson bracket on $\mathcal O(\mathcal S_{\alpha, q})=\mathcal O(\Rep(\Lambda^q, \alpha))^{G(\alpha)}$ is obtained from the $G(\alpha)$-invariant bivector field $\tr(\mathcal{X}(\PP))$. Moreover, since $\PP$ is non-degenerate by \cite[Section 8]{VdB2}, Corollary \ref{Cor:Sympl} yields that   $\mathcal S_{\alpha, q}$ is a symplectic manifold when any representation in $\Rep(\Lambda^q, \alpha)$ is simple.

\medskip
  
Let $Q$ be an arbitrary quiver with vertex set $I$. A \emph{framing} of $Q$ is a quiver $\widetilde{Q}$ 
with set of vertices $\widetilde{I}=I \cup \{\infty\}$ and whose arrows are the ones of $Q$ together with  additional arrows $ \infty \to s$ to the vertices of $Q$. We allow multiple arrows to a single vertex. Given arbitrary $\alpha\in \N^I$ and $q\in({\CC^\times})^I$, we extend them from $I$ to $\widetilde{I}$ by putting $\alpha_\infty=1$ and $q_\infty=q^{-\alpha}$, i.e. $\aalpha=(1, \alpha)$ and $\qq=q^{-\alpha}e_{\infty}+\sum_{s\in I} q_se_s$. By construction $\qq^{\aalpha}=1$.
 We can consider the multiplicative preprojective algebra of $\widetilde{Q}$ with parameter $\qq$, and consider the representation space $\Rep(\Lambda^{{\qq}}, \aalpha)$.  We refer to the quotients 
 \begin{align}
 \label{qvar}
 \Maq:=\Rep(\Lambda^{\qq}, \aalpha)/\!/G(\aalpha)\,, \quad \text{where }
 G(\aalpha)\cong \prod_{s\in I} \Gl_{\alpha_s}=\Gl_{\alpha}\,,
 \end{align} 
as \emph{multiplicative quiver varieties}, that we abbreviate MQV. 

We say that $q=\sum_{s\in I} q_se_s$ is \emph{regular} if $q^\alpha\ne 1$ for any root $\alpha$ of the quiver $Q$. We have the following result, which is a multiplicative analogue of \cite[Theorem 2.8]{Nak94}, \cite[Proposition 3]{BCE}. 

\begin{prop}\label{mqvar}  \cite[Proposition 2.9]{CF}
Choose an arbitrary framing $\widetilde{Q}$ of $Q$ and let $\aalpha$ and $\qq$ be defined as above. If $q$ is regular, then every module of dimension $\aalpha$ over the multiplicative preprojective algebra $\Lambda^\qq$ is simple. Hence, the group $\Gl_\alpha$ acts freely on $\Rep(\Lambda^{{\qq}}, \aalpha)$ and the MQV $\Maq$ is smooth.
\end{prop}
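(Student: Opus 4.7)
The plan is to reduce the smoothness of $\Maq$ to the statement that every $\Lambda^{\qq}$-module of dimension $\aalpha$ is simple, since Theorem~\ref{dim} applied to the quiver $\widetilde{Q}$ and dimension vector $\aalpha$ then immediately delivers both smoothness of $\Rep(\Lambda^{\qq},\aalpha)$ and freeness of the $G(\aalpha)$-action (and hence smoothness of the quotient). So the whole argument concentrates on establishing simplicity of an arbitrary $M\in\Rep(\Lambda^{\qq},\aalpha)$.

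To prove this, I would take a proper nonzero submodule $N\subsetneq M$ with dimension vector $\beta=(\beta_\infty,\beta')\in\N^{\widetilde{I}}$. Since $\aalpha_\infty=1$, necessarily $\beta_\infty\in\{0,1\}$, and by replacing $N$ with the quotient $M/N$ if $\beta_\infty=1$ I may assume $\beta_\infty=0$. In that case $N$ is supported on the original vertex set $I\subset\widetilde{I}$, and the defining relation $\Phi-\qq$ restricted to $\sum_{s\in I}e_s$ coincides with the relation cutting out the unframed multiplicative preprojective algebra $\Lambda^{q}$. Hence $N$ inherits a nonzero $\Lambda^{q}$-module structure with nonzero dimension vector $\beta'$.

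The key input is the dimension constraint \cite[Lemma~1.5]{CBShaw}: any nonzero finite-dimensional $\Lambda^{q}$-module has dimension vector $\gamma$ satisfying $q^{\gamma}=1$. From here I would invoke the multiplicative analogue of the Crawley-Boevey--Holland characterisation of dimension vectors of simple modules (due to Crawley-Boevey--Shaw, see \cite{CBShaw} and also \cite{Y}): picking any simple $\Lambda^{q}$-subquotient of $N$ produces a positive root $\gamma$ of $Q$ with $q^{\gamma}=1$, directly contradicting the assumed regularity of $q$. So no such $N$ can exist, and $M$ is simple. Theorem~\ref{dim} then yields all the geometric conclusions.

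The step that I expect to be most delicate is invoking the root characterisation of simple modules in the multiplicative setting, which is standard in spirit but must be quoted carefully from the literature. The remaining pieces --- the $\beta_\infty\in\{0,1\}$ dichotomy, the passage to $M/N$ to reduce to $\beta_\infty=0$, and the identification of such submodules with genuine $\Lambda^{q}$-modules --- are routine once one notices that a framing with $\alpha_\infty=1$ forces the two cases to be interchangeable via quotients.
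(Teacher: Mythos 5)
Your argument is correct and is the standard one that underlies the cited result \cite[Proposition 2.9]{CF}, which the present paper states without reproducing the proof: reduce to a proper nonzero submodule (or quotient) $N$ with $N_\infty=0$; observe that then all framing factors $(e_s+w_\alpha v_\alpha)^{\pm 1}$ act as the identity on $N$, so $N$ is a genuine $\Lambda^q$-module of nonzero dimension vector $\beta'$; take a simple composition factor, whose dimension vector is a positive root $\gamma$ of $Q$ with $q^\gamma=1$ by \cite{CBShaw}; contradict regularity; and then conclude via Theorem~\ref{dim}. Each step, including the $\beta_\infty\in\{0,1\}$ dichotomy and the passage to $M/N$ when $\beta_\infty=1$, is correctly justified.
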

When $q$ is regular and $\Maq\ne\emptyset$, this implies that $\aalpha=(1, \alpha)$ is a positive root of $\widetilde Q$ and $\Maq$ is a smooth affine variety of dimension $2p(\aalpha)$ by Theorem \ref{dim}.


\section{Quasi-Hamiltonian algebra structure}   \label{ss:qHcyclic} 

The developments of this section are parallel to \cite{CF,CF2}, and can be seen as application of Van den Bergh's work \cite{VdB1} that we recalled in \ref{ss:dAS}--\ref{ss:MultqPrep}. Fix $m,d\geq 2$ and let $I=\Z/m\Z$. Except when it is stated differently, we assume for the rest of this paper that we take the indices $r,s$ in $I$, and  that the Greek letters $\alpha,\beta,\gamma,\epsilon$ placed as indices always range through $1,\ldots,d$.

By a spin cyclic quiver, we mean the double quiver $\bar{Q}$ of a quiver $Q$, where $Q$ has $m+1$ vertices labelled by  $I \cup \{\infty\}$, $m$ arrows $x_s:s \to s+1$  
and $d$ framing arrows  $v_1,\ldots,v_d:\infty \to 0$. 
We write for the doubles $y_s=x_s^*:s+1 \to s$ and $w_\alpha=v_\alpha^*: 0 \to \infty$. We consider the following  ordering at each vertex 
\begin{equation*}
\begin{aligned}
     &\text{at }\infty:\quad & v_{1} < w_{1} < \ldots < v_{d} < w_{d} \,, \\
     &\text{at }s\in I \setminus \{0\}:\quad & x_s< y_s < x_{s-1} < y_{s-1} \,, \\
     &\text{at }0:\quad & x_0< y_0< x_{m-1}< y_{m-1} < v_{1} < w_{1} < \ldots < v_{d} < w_{d}\,.
\end{aligned}
\end{equation*}
We form the algebra $A$ obtained by inverting all the elements $(1+aa^*)_{a\in \bar{Q}}$ in $\CC\bar{Q}$. Using Proposition \ref{Pr:dbr}, we get a double  quasi-Poisson bracket on $A$, which satisfies the following identities between the arrows of  the cycle  
 \begin{subequations}
       \begin{align}
\dgal{x_r,x_s}\,=\,&\frac12  x_{r-1}x_r\otimes e_r\, \delta_{(s,r-1)}\,
-\,\frac12 e_{r+1}\otimes x_r x_{r+1}\,\delta_{(s,r+1)} \,, \label{cyida} \\
\dgal{y_r,y_s}\,=\,& \frac12 e_r \otimes y_ry_{r-1} \, \delta_{(s,r-1)}\, 
-\,\frac12 y_{r+1}y_r \otimes e_{r+1}\, \delta_{(s,r+1)}\,,\label{cyida'}\\
\dgal{x_r,y_s}\,=\,&\delta_{sr}\left( e_{r+1}\otimes e_{r}
+\frac{1}{2} y_rx_r\otimes e_{r} +\frac{1}{2} e_{r+1}\otimes x_ry_r \right) \nonumber \\
&-\frac12  x_r \otimes y_{r-1}\,\delta_{(s,r-1)} + \frac12  y_{r+1}\otimes x_r\, \delta_{(s,r+1)}
\,. \label{cyidb} 
\end{align}
  \end{subequations}
Note the difference of signs for last two terms in \eqref{cyidb} compared to \cite[(4.1c)]{CF}, which is a consequence of the different ordering taken at each vertex $s\in I$. 
The double brackets involving elements of the cycle and framing arrows are determined by
 \begin{subequations}
       \begin{align}
\dgal{x_r, w_{\alpha}}\,=\,& \frac12 \delta_{(r,m-1)}\, e_{r+1}\otimes x_r w_{\alpha}
-\frac12 \delta_{(r,0)}\, x_r\otimes w_{\alpha}\,,\label{cyidd}\\
\dgal{x_r, v_{\alpha}}\,=\,& \frac12 \delta_{(r,0)}\, v_{\alpha} x_r\otimes e_r
-\frac12 \delta_{(r,m-1)}\, v_{\alpha}\otimes x_r\,,\label{cyidd'}\\
\dgal{y_r, w_{\alpha}}\,=\,& \frac12 \delta_{(r,0)}\,e_r\otimes y_rw_{\alpha}
-\frac12 \delta_{(r,m-1)}\, y_r \otimes w_{\alpha}\,,\label{cyide} \\
\dgal{y_r, v_{\alpha}}\,=\,& \frac12\delta_{(r,m-1)} v_{\alpha} y_r\otimes e_{r+1}
-\frac12 \delta_{(r,0)} v_{\alpha}\otimes y_r\,.\label{cyide'} 
\end{align}
  \end{subequations}
The remaining double brackets are nothing else than 
\begin{subequations}
       \begin{align}
\dgal{v_\alpha,v_\beta}\,=\,&-\frac12 \,o(\alpha,\beta) 
\left(v_\beta\otimes v_\alpha + v_\alpha \otimes v_\beta \right)\,, \label{tadidv}\\
\dgal{w_\alpha,w_\beta}=\,&-\frac12 \,o(\alpha,\beta) 
\left(w_\beta\otimes w_\alpha + w_\alpha \otimes w_\beta \right)\,, \label{tadidw}\\
\dgal{v_\alpha,w_\beta}=\,& \delta_{\alpha \beta}\left(  e_0\otimes e_\infty
+\frac12 w_\alpha v_\alpha \otimes e_\infty + \frac12 e_0 \otimes v_\alpha w_\alpha \right) \nonumber \\
\,& + \frac12 \,o(\alpha,\beta) 
\left(e_0\otimes v_\alpha w_\beta + w_\beta v_\alpha \otimes e_\infty \right)\,.\label{tadidu}
	\end{align}
\end{subequations}
To derive \eqref{tadidv}, note that  Proposition \ref{Pr:dbr} gives for $\alpha < \beta$ that $\dgal{v_\alpha,v_\beta}=-\frac12  (v_\beta\otimes v_\alpha + v_\alpha \otimes v_\beta)$. This is because  $v_\alpha < v_\beta$ and their heads/tails coincide. We then find \eqref{tadidv} by cyclic antisymmetry of the double bracket. Identities \eqref{tadidw} and \eqref{tadidu} are obtained in the same way. 

Introduce the elements $x=\sum_s x_s$, $y=\sum_s y_s$ and set $F_a=\sum_{s\in I} e_{s+a}\otimes e_s \in A\otimes A$ for any $a\in \Z$. Of great help for our study are the elements 
\begin{equation}
 F_1=\sum_{s\in I} e_{s+1}\otimes e_s\,,\quad 
F_{-1}=\sum_{s\in I} e_{s-1}\otimes e_{s}=\sum_{s\in I} e_{s}\otimes e_{s+1}\,.
\end{equation}
With these notations, Equations \eqref{cyida}--\eqref{cyidb} become 
 \begin{subequations}
       \begin{align}
\dgal{x,x}\,=\,&\frac12 \left(x^2 F_1-F_1 x^2 \right) \,, \quad 
\dgal{y,y}\,=\,-\frac12 \left(y^2 F_{-1}-F_{-1}y^2 \right) \label{cyA}\\
\dgal{x,y}\,=\,&F_1+\frac12 \left(yx F_1 +F_1 xy - x F_1 y + y F_1 x \right)\,, \label{cyB}
\end{align}
  \end{subequations}
while \eqref{cyidd}--\eqref{cyide'} take the form 
 \begin{subequations}
       \begin{align}
\dgal{x, w_{\alpha}}\,=\,& \frac12\, e_{0}\otimes xw_{\alpha}
-\frac12 \, e_0 x\otimes w_{\alpha}\,,\quad 
\dgal{x, v_{\alpha}}\,=\, \frac12\, v_{\alpha} x\otimes e_0
-\frac12\, v_{\alpha}\otimes x e_0\,,\label{cyD}\\
\dgal{y, w_{\alpha}}\,=\,& \frac12\,e_0\otimes yw_{\alpha}
-\frac12\,e_0 y \otimes w_{\alpha}\,,\quad 
\dgal{y, v_{\alpha}}\,=\, \frac12\, v_{\alpha} y\otimes e_{0}
-\frac12 \, v_{\alpha}\otimes y e_0\,.\label{cyE} 
\end{align}
  \end{subequations}
(The expressions \eqref{cyD}--\eqref{cyE} could be written using $F_1$ and $F_{-1}$ instead of writing the 
idempotents $e_0$, but this form is not better suited for calculations.) Adding to $A$ local inverses $x_s^{-1}=e_{s+1} x_s^{-1} e_s$ such that 
$x_s x_s^{-1}=e_s$ and $x_s^{-1}x_s=e_{s+1}$,  we get locally invertible elements $z_s=y_s+x_s^{-1}$ and we form $z=x^{-1}+y=\sum_s z_s$. Equations \eqref{cyA}, \eqref{cyD}--\eqref{cyE} can be written with $z$ instead of $y$, while  \eqref{cyB} becomes 
\begin{equation}
  \dgal{x,z}\,=\,\frac12 \left(zx F_1 +F_1 xz - x F_1 z + z F_1 x \right)\,.  \label{cyBz}
\end{equation}

The algebra $A$ is quasi-Hamiltonian for the double bracket given above and the multiplicative moment map 
$\Phi=\sum_s e_s\Phi e_s+e_\infty \Phi e_\infty$ where  
\begin{subequations}
    \begin{align}
e_0 \Phi e_0=&(e_0+x_0y_0)(e_0+y_{m-1}x_{m-1})^{-1}\, \prod_{\alpha=1,\ldots,d}^{\longrightarrow}(e_0+w_\alpha v_\alpha)^{-1}\,, \label{cy0} \\ 
e_s \Phi e_s=&(e_s+x_sy_s)(e_s+y_{s-1}x_{s-1})^{-1}\,,\quad s\in I \setminus \{0\}\,, \label{cy1} \\ 
e_\infty \Phi e_\infty=&\, \prod_{\alpha=1,\ldots,d}^{\longrightarrow} (e_\infty+v_\alpha w_\alpha)\,. \label{cy2}
    \end{align}
\end{subequations}
Here, we use the invertibility of $1+x_0y_0$ and the idempotent decomposition of $1\in A$ to get that $e_0+x_0y_0$ is locally invertible with inverse $(e_0+x_0y_0)^{-1}:=e_0(1+x_0y_0)^{-1}e_0$, and the same holds for all the arrows in $\bar{Q}$. 
If we further localise at $x$, we can write $(e_s+x_sy_s)(e_s+y_{s-1}x_{s-1})^{-1}$ as $x_s z_s x_{s-1}^{-1}z_{s-1}^{-1}$. 

\medskip

Following the Jordan quiver case \cite[\S 3.1]{CF2}, we introduce the spin elements 
\begin{equation} \label{spinCyc}
 a'_{\alpha}=w_{\alpha}\,, \quad 
c'_{1}=v_{1}z\,, \quad 
c'_{\alpha}=v_{\alpha}(e_0+w_{\alpha-1}v_{\alpha-1})\ldots (e_0+w_{1}v_{1})z\,,
\end{equation}
and  we can define  $c'_{\alpha}$ inductively using 
\begin{equation} \label{cyInd}
 c'_{\alpha}=\sum_{\lambda=1}^{\alpha-1}v_{\alpha} w_{\lambda} c'_{\lambda} + v_{\alpha} z\,.
\end{equation}
It is important to remark that $ a'_{\alpha}=e_0  a'_{\alpha} e_\infty$ but 
$ c'_{\alpha}=e_\infty  c'_{\alpha} e_{m-1}$ is not a path to $0$. This is due to the fact that 
$v_{\alpha}z=v_{\alpha}z_{m-1}$. 
To get the double brackets between the elements $(x,z,a'_{\alpha},c'_{\alpha})$, it remains to find the ones involving $c'_\alpha$. The next result is obtained in Appendix \ref{Ann:cyclic} :  
\begin{lem} \label{Lem:Calg1}
  For any $\alpha,\beta=1,\ldots,d$, 
\begin{subequations}
 \begin{align}
\dgal{x, c'_{ \alpha}}\,=\,& \frac12 c'_{ \alpha} x\otimes e_{m-1}+\frac12 c'_{ \alpha}\otimes xe_{m-1}\,,
\quad \dgal{z, c'_{ \alpha}}=-\frac12 c'_{ \alpha} z\otimes e_{m-1}
+\frac12 c'_{ \alpha}\otimes z e_{m-1}\label{CySpin1}\\
\dgal{a'_{\alpha}, c'_{\beta}}\,=\,& 
\frac12  \left(o(\alpha,\beta)-\delta_{\alpha \beta} \right)\, 
e_\infty\otimes a'_{\alpha} c'_{\beta}
- \delta_{\alpha \beta} \left(e_\infty \otimes e_0 z + \sum_{\lambda=1}^{\beta-1} 
e_\infty \otimes a'_{\lambda} c'_{\lambda} \right) \,,\label{CySpin2} \\
\dgal{c'_{\alpha},c'_{\beta}}\,=\,& \frac12 o(\alpha,\beta) 
\left(c'_{\beta} \otimes c'_{\alpha} - c'_{\alpha} \otimes c'_{\beta} \right)  \,.\label{CySpin3} 
 \end{align}
\end{subequations}
where the last sum in \eqref{CySpin2} is omitted for $\lambda=1$. 
\end{lem}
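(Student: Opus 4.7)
The strategy is to argue by induction on the spin indices, using the recursion $c'_\alpha=\sum_{\lambda<\alpha}v_\alpha w_\lambda c'_\lambda+v_\alpha z$ from \eqref{cyInd} together with the derivation rules for the double bracket. Since $\dgal{-,-}$ is an outer derivation in its second argument but only an inner derivation in its first, it will be cleanest to keep $c'_\alpha$ in the second argument whenever possible and to invoke cyclic antisymmetry $\dgal{a,b}=-\dgal{b,a}^\circ$ otherwise.

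Before starting, I would record the auxiliary brackets $\dgal{z,v_\alpha}$, $\dgal{z,w_\alpha}$ and $\dgal{z,z}$. These follow from $z=x^{-1}+y$, from \eqref{cyA} and \eqref{cyE}, and from the identity $\dgal{a,x_s^{-1}}=-x_s^{-1}\cdot\dgal{a,x_s}\cdot x_s^{-1}$ (outer bimodule action), which is derived by applying outer Leibniz to $\dgal{a,x_sx_s^{-1}}=\dgal{a,e_s}=0$. The resulting formulas mirror \eqref{cyD}--\eqref{cyE} with $y$ replaced by $z$ and adjusted signs.

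For \eqref{CySpin1}, I would induct on $\alpha$. The base case is $\dgal{x,v_1z}=\dgal{x,v_1}z+v_1\dgal{x,z}$ by outer Leibniz, which collapses to the asserted expression after using \eqref{cyD} and \eqref{cyBz} and noting that $c'_1=v_1z$ ends at $e_{m-1}$. For the inductive step, apply outer Leibniz to each summand $v_\alpha w_\lambda c'_\lambda$ of the recursion: the three contributions from $\dgal{x,v_\alpha}$, $\dgal{x,w_\lambda}$ and the inductive $\dgal{x,c'_\lambda}$ combine into $\tfrac12 (v_\alpha w_\lambda c'_\lambda)x\otimes e_{m-1}+\tfrac12 v_\alpha w_\lambda c'_\lambda\otimes xe_{m-1}$, and summing over $\lambda$ together with the $v_\alpha z$ piece reassembles $c'_\alpha$ on both tensor slots. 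The computation of $\dgal{z,c'_\alpha}$ is parallel, the sign flip reflecting the sign difference between the $x$- and $z$-brackets of \eqref{cyD}--\eqref{cyE}.

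For \eqref{CySpin2}, I would induct on $\beta$ at fixed $\alpha$, expanding the second argument through \eqref{cyInd} and outer Leibniz. The needed ingredients are $\dgal{w_\alpha,v_\beta}$ (from \eqref{tadidu}), $\dgal{w_\alpha,w_\lambda}$ (from \eqref{tadidw}), the preliminary $\dgal{w_\alpha,z}$, and the inductive $\dgal{w_\alpha,c'_\lambda}$. The $o(\alpha,\beta)$ piece of the answer arises from the generic parts of these brackets, while the explicit $\delta_{\alpha\beta}$ terms, including the sum $\sum_{\lambda<\beta}e_\infty\otimes a'_\lambda c'_\lambda$, collect the Kronecker contributions of \eqref{tadidu} together with the contribution at $\lambda=\alpha$ when $\alpha<\beta$. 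Finally, \eqref{CySpin3} is handled by a further induction on $\beta$, expanding only the second $c'_\beta$ via \eqref{cyInd}; the required brackets $\dgal{c'_\alpha,c'_\lambda}$ (inductive), $\dgal{c'_\alpha,z}$ (from \eqref{CySpin1} plus antisymmetry) and $\dgal{c'_\alpha,w_\lambda}$ (from \eqref{CySpin2} plus antisymmetry) are already available, while an auxiliary identity $\dgal{c'_\alpha,v_\beta}$ must be proved in parallel by the same template applied to \eqref{tadidv}. The main obstacle is here: the induction produces a large number of terms, and the simple antisymmetric form on the right-hand side of \eqref{CySpin3} emerges only after substantial cancellation driven by the ordering function $o(\alpha,\beta)$, so careful bookkeeping of the tensor positions carrying $v_\beta$, $w_\lambda$ and $c'_\lambda$ is essential.
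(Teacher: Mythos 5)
Your plan matches the paper's proof essentially step for step: induction on the spin index through the recursion \eqref{cyInd}, with the base case computed from $c'_1=v_1z$ and the auxiliary observation $\dgal{x,v_\alpha w_\lambda}=0$ driving the cancellations in \eqref{CySpin1}, and a separate auxiliary lemma for $\dgal{v_\alpha,c'_\beta}$ (the paper's Lemma~\ref{Calg3}) feeding into \eqref{CySpin3}. The only cosmetic difference is in \eqref{CySpin3}: you propose expanding the second slot (induction on $\beta$, outer Leibniz), whereas the paper inducts on the first slot via the inner-bimodule rule, remarking that this is ``easier''; both routes need the same ingredients up to cyclic antisymmetry, and both require the same two-case bookkeeping on $o(\alpha,\beta)$.
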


Motivated by the geometric interpretation through \eqref{relInv}, we are interested in the bracket $\br{-,-}$ associated to $\dgal{-,-}$  between the elements  $x^k$ and $a'_{\alpha} c'_{\beta} x^l$, for any $k,l\in \N$. Hereafter, given any $b\in A$ we also denote by $b$ its equivalence class in $A/[A,A]$. 
We compute 
\begin{lem} \label{LodCyxac}
 For any  $k,l\geq 1$, we get in $A/[A,A]$, 
\begin{subequations}
 \begin{align}
  \br{x^k,x^l}=&0\,, \quad
\br{x^k,a'_{\alpha} c'_{\beta} x^l}=k\, a'_{\alpha} c'_{\beta} x^{k+l}\,, \label{xxacx}\\
\br{a'_{\gamma} c'_{\epsilon} x^k,a'_{\alpha} c'_{\beta} x^l}=&
\frac12 \left(\sum_{v=1}^k-\sum_{v=1}^l \right) 
\left(a'_{\alpha} c'_{\beta} x^v a'_{\gamma} c'_{\epsilon} x^{k+l-v}
+a'_{\alpha} c'_{\beta} x^{k+l-v} a'_{\gamma} c'_{\epsilon} x^{v} \right) \nonumber \\
&+\frac12  o(\alpha,\gamma) \left(a'_{\gamma} c'_{\epsilon} x^k a'_{\alpha} c'_{\beta} x^l 
+a'_{\alpha} c'_{\epsilon} x^k a'_{\gamma} c'_{\beta} x^l\right) \nonumber \\
&+\frac12  o(\epsilon,\beta) \left(a'_{\alpha} c'_{\beta} x^k a'_{\gamma} c'_{\epsilon} x^l 
-a'_{\alpha} c'_{\epsilon} x^k a'_{\gamma} c'_{\beta} x^l\right) \nonumber \\
&+\frac12 [o(\epsilon,\alpha)+\delta_{\alpha \epsilon}]\,a'_{\alpha} c'_{\epsilon} x^k a'_{\gamma} c'_{\beta} x^l 
-\frac12 [o(\beta,\gamma)+\delta_{\beta \gamma}]\,a'_{\alpha} c'_{\epsilon} x^k a'_{\gamma} c'_{\beta} x^l \nonumber \\
&+\delta_{\alpha \epsilon} \left(z+\sum_{\lambda=1}^{\epsilon-1} a'_{\lambda} c'_{\lambda}  \right)x^k a'_{\gamma} c'_{\beta} x^l 
-\delta_{\beta \gamma}\,\, a'_{\alpha} c'_{\epsilon} x^k
\left(z + \sum_{\mu=1}^{\beta-1} a'_{\mu} c'_{\mu}  \right)x^l\,. \label{xacxac}
 \end{align}
\end{subequations}
In particular, in order for the elements on which we take the bracket to be nonzero in $A/[A,A]$, we need $k= 0$ mod $m$ for $x^k$, or $l-1=0$ mod $m$ for $a'_{\alpha} c'_{\beta} x^l$.
\end{lem}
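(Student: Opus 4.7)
The plan is to derive the three formulas by systematic application of the Leibniz rules for the double bracket---outer in the second slot, inner in the first---together with cyclic invariance in $A/[A,A]$. I will handle the identities in order of increasing complexity.

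For \eqref{xxacx}, the computation of $\br{x^k, a'_\alpha c'_\beta x^l}$ via Leibniz in the second argument reduces to three pieces: $\br{x^k, a'_\alpha}$, $\br{x^k, c'_\beta}$ and $\br{x^k, x^l}$, so I compute these first. Successive applications of Leibniz to $\dgal{x,x}=\tfrac12(x^2 F_1-F_1 x^2)$ from \eqref{cyA} produce $\dgal{x^k,x^l}$ as a sum whose entries, after multiplication, all contain a factor of the form $e_{s+1}x^{k-1}e_s$. Such a factor vanishes unless $k\equiv 0\pmod m$; and when it is nonzero, $\sum_s e_{s+1}x^{k-1}e_s=x^{k-1}$ and the two signed families telescope, so $\br{x^k,x^l}=0$ already in $A$. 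The same mechanism applied to \eqref{cyD} telescopes $\dgal{x^k, w_\alpha}$ to a single boundary term whose two pieces both represent the same closed path $x^k w_\alpha$ (when $k\equiv 0\pmod m$) and therefore cancel, giving $\br{x^k,a'_\alpha}=0$. Applied instead to \eqref{CySpin1}, however, the two terms of $\dgal{x,c'_\beta}$ have the same sign, so the inner Leibniz produces $k$ identical contributions after multiplication: $\br{x^k, c'_\beta}=k\, c'_\beta x^k e_{m-1}$. Plugging these three facts into Leibniz in the second slot for the product $a'_\alpha c'_\beta x^l$ and absorbing the surviving $e_{m-1}$ into the path yields \eqref{xxacx}.

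For \eqref{xacxac} I expand again by Leibniz in the second slot:
\[
\br{a'_\gamma c'_\epsilon x^k,\, a'_\alpha c'_\beta x^l}
=\br{a'_\gamma c'_\epsilon x^k, a'_\alpha}\, c'_\beta x^l
+a'_\alpha \br{a'_\gamma c'_\epsilon x^k, c'_\beta}\, x^l
+a'_\alpha c'_\beta \br{a'_\gamma c'_\epsilon x^k, x^l}\,.
\]
Each of the three brackets on the right is then computed by moving the large factor $a'_\gamma c'_\epsilon x^k$ to the right via antisymmetry modulo commutators, and expanding by Leibniz in the second slot using the double brackets \eqref{tadidv}--\eqref{tadidu}, \eqref{CySpin1}--\eqref{CySpin3}, \eqref{cyD}--\eqref{cyE}, together with the computations above for brackets involving $x^k$. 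Each resulting term is rewritten cyclically in $A/[A,A]$ and matched to the appropriate summand on the right-hand side of \eqref{xacxac}: the two sums over $v$ arise from the first-slot decomposition of $x^k$ (and of $x^l$) inside the inner Leibniz; the $o(\alpha,\gamma)$ and $o(\epsilon,\beta)$ terms come from \eqref{tadidw} and \eqref{CySpin3}; and the Kronecker-delta terms with the inductive tails $\sum_{\lambda<\epsilon}a'_\lambda c'_\lambda$ and $\sum_{\mu<\beta}a'_\mu c'_\mu$ come from the diagonal contributions in \eqref{CySpin2}.

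The ``in particular'' statement is geometric: an element of $A$ lies in $[A,A]$ unless its representing path is a closed loop in $\bar Q$. Since $x^k$ moves by $k$ steps along the cycle at each base vertex, it is closed iff $k\equiv 0\pmod m$; similarly $a'_\alpha c'_\beta x^l$ traces $0\to\infty\to m-1\to m-1+l\pmod m$, hence is closed iff $l\equiv 1\pmod m$. The hardest part of the proof is \eqref{xacxac}: each of the three pieces in the Leibniz expansion yields many contributions, and the inductive tails in \eqref{CySpin2} propagate through the calculation and demand careful tracking in order to reconstruct the last line of \eqref{xacxac}. Once the reorganization modulo commutators is completed, however, each term finds a unique match on the right-hand side.
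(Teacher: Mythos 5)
Your plan for \eqref{xxacx} is sound and arguably cleaner than the paper's computation, which expands $\dgal{x,a'_\alpha c'_\beta}$ as a single object (eq.~\eqref{LodCy1}) rather than splitting the second argument. Your reduction to the three facts $\br{x^k,a'_\alpha}=0$, $\br{x^k,c'_\beta}=k\,c'_\beta x^k e_{m-1}$ and $\br{x^k,x^l}=0$ (each holding exactly in $A$, verified by telescoping as you indicate) goes through, and the surviving $e_{m-1}$ absorbs into $x^{k+l}$ under the standing assumption $k\equiv 0\pmod m$.

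For \eqref{xacxac}, however, there is a genuine flaw in ``moving the large factor $a'_\gamma c'_\epsilon x^k$ to the right via antisymmetry modulo commutators''. The Loday bracket satisfies $\br{a,b}=-\br{b,a}$ only in $A/[A,A]$, and this is not usable for the intermediate brackets $\br{a'_\gamma c'_\epsilon x^k, a'_\alpha}$, $\br{a'_\gamma c'_\epsilon x^k, c'_\beta}$, $\br{a'_\gamma c'_\epsilon x^k, x^l}$ in your Leibniz expansion: each of these gets subsequently multiplied (by $c'_\beta x^l$, $x^l$, etc.), and multiplication does not respect classes modulo $[A,A]$. If $X\equiv X'\pmod{[A,A]}$, then $Xc'_\beta x^l$ and $X'c'_\beta x^l$ need not be congruent. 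As a concrete symptom, $\br{a'_\gamma c'_\epsilon x^k, x^l}$ lies in $[A,A]$ (since $x^l\in[A,A]$ for $l\equiv 1\pmod m$), yet $a'_\alpha c'_\beta\br{a'_\gamma c'_\epsilon x^k, x^l}$ is a nontrivial sum contributing to the $v$-sums of \eqref{xacxac}, so it cannot be discarded. The correct mechanism is the exact cyclic antisymmetry of the double bracket in $A\otimes A$, namely $\dgal{a,b}=-\dgal{b,a}^\circ$, applied before the multiplication map; this is what the paper invokes when it writes $\dgal{a'_\gamma c'_\epsilon, x^l}$ in terms of $\dgal{x, a'_\gamma c'_\epsilon}^\circ$ inside $S_2$, passing to $A/[A,A]$ only after all products are formed. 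Your decomposition of the second argument as $a'_\alpha\cdot c'_\beta\cdot x^l$ is itself a legitimate alternative to the paper's splitting of $a'_\gamma c'_\epsilon\cdot x^k$ in the first slot, but the mod-$[A,A]$ shortcut on the intermediate brackets would silently lose terms; replace it with the $\circ$-swap at the level of $A\otimes A$ and the plan is repairable.
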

In fact, \eqref{xxacx} holds in $A$ for the left Loday bracket $\br{-,-}$, i.e. we do not necessarily need to regard it as the operation $A/[A,A] \times A/[A,A] \to A/[A,A]$. 
Note also that it is sufficient to consider in the two first sums over $v$ in \eqref{xacxac} the terms for which $v=1$ mod $m$. All the computations are provided in Appendix \ref{Ann:cyclic}. 

We work in slightly more general setting from now on, and consider $u\in \{x,y,z,\sum_s e_s+xy\}$. We already have   $\epsilon(x)=+1$, $\epsilon(y)=-1$, and we set  $\epsilon(z)=-1$, $\epsilon(\sum_s e_s +xy)=+1$.  We can write in the three first cases $\dgal{u,u}=\frac12 \epsilon(u)[u^2 F_{\epsilon(u)} - F_{\epsilon(u)} u^2]$, while   $\dgal{u,u}=\frac12 [u^2 F_{0} - F_{0}  u^2]$ if $u=\sum_s e_s+xy$. 
 The identities can be directly checked, see e.g. \cite[Lemmas A.1, A.2]{CF} for some of them. 
Using these brackets, we get the following result which is proved in Appendix \ref{Ann:cyclic}. 
\begin{lem} \label{Lem:Cytt} 
For any $k,l\geq 1$, $\alpha,\beta=1,\ldots,d$, we have $\dgal{u^k,w_\alpha v_\beta u^l}=0$ under the left Loday bracket in $A$. The equality descends to $A/[A,A]$ and we have in particular that the $\CC$-vector space generated by the elements $(u^k,w_1v_1 u^k)$ is a commutative Lie subalgebra in $(A/[A,A],\br{-,-})$.  
\end{lem}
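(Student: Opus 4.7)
The plan is to decompose $\br{u^k, w_\alpha v_\beta u^l}$ via the second-argument Leibniz rule into three simpler brackets and show each vanishes in $A$. Since each $\br{a,-}$ is a derivation of $A$, one has
\begin{equation*}
\br{u^k, w_\alpha v_\beta u^l} = \br{u^k, w_\alpha}\, v_\beta u^l \,+\, w_\alpha \br{u^k, v_\beta}\, u^l \,+\, w_\alpha v_\beta\, \br{u^k, u^l}\,,
\end{equation*}
so it suffices to verify $\br{u^k, w_\alpha} = 0$, $\br{u^k, v_\beta} = 0$ and $\br{u^k, u^l} = 0$ in $A$ for each of the four admissible $u$.

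For the first two brackets, my approach is to first establish---directly from \eqref{cyD}--\eqref{cyE} for $u\in\{x,y,z\}$, and by a short computation using $B$-linearity together with the first-argument Leibniz rule for $u=\sum_s e_s + xy$---the uniform identities
\begin{equation*}
\dgal{u,w_\alpha} = \tfrac12\bigl(e_0\otimes u w_\alpha - e_0 u\otimes w_\alpha\bigr), \qquad \dgal{u,v_\beta} = \tfrac12\bigl(v_\beta u\otimes e_0 - v_\beta\otimes u e_0\bigr).
\end{equation*}
An induction on $k$ using the inner-derivation rule $\dgal{u^{k+1},a}=\dgal{u,a}\ast u^k + u\ast\dgal{u^k,a}$ lifts these to formulas of exactly the same shape with $u$ replaced by $u^k$, and applying the multiplication map $m$ collapses each to zero.

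For $\br{u^k, u^l}$, the second-argument Leibniz rule reduces it to showing $\br{u^k, u} = 0$. Computing this by the same first-argument induction starting from $\dgal{u,u}=\tfrac12 \epsilon(u)(u^2 F_{\epsilon(u)} - F_{\epsilon(u)} u^2)$ for $u\in\{x,y,z\}$ (and from the analogous formula with $F_0$ when $u = \sum_s e_s + xy$), and then applying $m$, one arrives at
\begin{equation*}
\br{u^k, u} = \tfrac{k}{2}\epsilon(u)\bigl(u^2 X - X u^2\bigr), \qquad X = \sum_{s\in I} e_{s+\epsilon(u)}\, u^{k-1}\, e_s\,,
\end{equation*}
for $u\in\{x,y,z\}$, together with $\br{u^k, u} = \tfrac{k}{2}(u^2 X - X u^2)$ with $X = \sum_s e_s u^{k-1} e_s$ in the case $u = \sum_s e_s + xy$. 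For $u\in\{x,y,z\}$ the element $X$ vanishes unless $k\equiv 0\pmod m$, and in that regime each summand of $X$ is the unique length-$(k-1)$ path in the cycle from $s+\epsilon(u)$ to $s$; a direct reindexing of the sums of closed paths that appear in $uX$ and $Xu$ then yields $uX = Xu$, whence $u^2 X = X u^2$. For $u = \sum_s e_s + xy$ the element $u$ commutes with each idempotent $e_s$ and therefore with $X$ trivially. In all cases $\br{u^k, u} = 0$ in $A$, and therefore $\br{u^k, u^l} = \sum_{j=0}^{l-1} u^j \br{u^k, u} u^{l-1-j} = 0$.

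Combining the three vanishings yields $\br{u^k, w_\alpha v_\beta u^l} = 0$ in $A$, hence also in $A/[A,A]$. The commutativity claim for the $\CC$-span of $\{u^k,\, w_1 v_1 u^k : k\in\N\}$ inside $(A/[A,A], \br{-,-})$ then follows: $\br{u^k, u^l}$ and $\br{u^k, w_1 v_1 u^l}$ vanish by the above, the cross bracket $\br{w_1 v_1 u^k, u^l}$ vanishes in $A/[A,A]$ by antisymmetry of the induced Lie bracket, and the remaining pairing $\br{w_1 v_1 u^k, w_1 v_1 u^l}$ is handled by a parallel derivation-and-vanishing argument along the same lines. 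The main technical obstacle I anticipate is the cyclic bookkeeping showing $uX = Xu$ in the $k\equiv 0\pmod m$ regime of the third paragraph; the remaining inductive steps are routine applications of the double-bracket Leibniz rules.
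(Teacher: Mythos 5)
Your treatment of the vanishing $\br{u^k, w_\alpha v_\beta u^l}=0$ in $A$ is correct and in fact slicker than the paper's. The paper expands the full double bracket $\dgal{u^k, w_\alpha v_\beta u^l}$ by the methods of Lemma~\ref{LodCyxac}, writes out a telescoping double sum over $\sigma,\tau$, and verifies that everything cancels after applying $m$. You instead exploit the second-argument Leibniz rule for the left Loday bracket to decompose $\br{u^k, w_\alpha v_\beta u^l}$ into $\br{u^k, w_\alpha}v_\beta u^l + w_\alpha\br{u^k,v_\beta}u^l + w_\alpha v_\beta \br{u^k, u^l}$ and then kill each term individually; this is a genuine simplification. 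Your verification of $\br{u^k,u}=0$ via $X=\sum_s e_{s+\epsilon(u)}u^{k-1}e_s$ and the cyclic reindexing $uX = Xu$ is also correct, and replaces the paper's citation to \cite[Lemma~A.3]{CF} with a self-contained argument.

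The gap is in your last sentence. You dispose of $\br{w_1 v_1 u^k, w_1 v_1 u^l}$ by asserting it is ``handled by a parallel derivation-and-vanishing argument along the same lines,'' but the parallel does not go through. Applying the second-argument Leibniz rule here gives $\br{w_1 v_1 u^k, w_1}v_1 u^l + w_1\br{w_1 v_1 u^k, v_1}u^l + w_1 v_1\br{w_1 v_1 u^k, u^l}$, and now the individual factors are \emph{not} zero: already $\dgal{w_1,w_1}=0$ but $\dgal{v_1,w_1}=e_0\otimes e_\infty + \tfrac12 w_1 v_1\otimes e_\infty + \tfrac12 e_0\otimes v_1 w_1$ has a constant term, and a short computation yields $\br{w_1 v_1 u^k, w_1} = (e_0 + w_1 v_1)u^k w_1 \ne 0$ and $\br{w_1 v_1 u^k, v_1} = -(e_\infty + v_1 w_1)v_1 u^k \ne 0$ for $k\equiv 0 \bmod m$. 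Moreover, the fact that $\br{w_1 v_1 u^k, u^l}\in [A,A]$ (which follows from your first part via antisymmetry mod commutators) does not force $w_1 v_1 \br{w_1 v_1 u^k, u^l}$ to lie in $[A,A]$, since $[A,A]$ is not an ideal. So the three summands neither vanish individually nor cancel for free; one has to actually track the cancellations among them (or, as the paper does, organize the computation differently by isolating the contribution of $\dgal{w_\alpha v_\alpha, w_\alpha v_\alpha}$ and adapting the book-keeping of Lemma~\ref{LodCyxac}). This is where the paper spends the bulk of its effort on this lemma, and your proposal leaves it unaddressed.
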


Let $\phi$ be the moment map associated to the subquiver supported at $I$. That is $\phi=\sum_s \phi_s $ for $\phi_s=(e_s+x_sy_s)(e_s+y_{s-1}x_{s-1})^{-1}$ or $\phi_s= x_sz_sx_{s-1}^{-1}z_{s-1}^{-1}$ when we localise $A$ at $x$. We assume in the next result that $A$ is localised at $u$. 
 \begin{prop} \label{PropInvPhiCy}
 Let $U_{+,\eta}=u(1+\eta \phi)$, $U_{-,\eta}=u(1+\eta \phi^{-1})$, for arbitrary $\eta \in \CC$ playing the role of a spectral parameter. Let $K,L \in \N^\times$. Then, if $\epsilon(u)=-1$,  
 \begin{equation}
  \br{U_{+,\eta}^K,U_{+,\eta'}^L}=0 \quad \text{ mod }[A,A], \qquad \text{ for any }\eta,\eta'\in \CC\,.
 \end{equation}
 \end{prop}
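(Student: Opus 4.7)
The strategy is to show that $U_{+,\eta}$ satisfies, with respect to the double bracket, an identity analogous in structure to the self-bracket \eqref{cyA} for $u$, and then to propagate this identity to arbitrary powers by iterated Leibniz.

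For the base case $K=L=1$, I would expand $U_{+,\eta}=u+\eta u\phi$ and decompose $\dgal{U_{+,\eta},U_{+,\eta'}}$ into the four pieces $\dgal{u,u}$, $\eta'\dgal{u,u\phi}$, $\eta\dgal{u\phi,u}$ and $\eta\eta'\dgal{u\phi,u\phi}$. The first is given directly by \eqref{cyA}. The remaining three are unfolded using the outer Leibniz rule in the second argument and the inner Leibniz rule in the first, reducing them to sandwiched combinations of $\dgal{u,u}$, $\dgal{\phi,u}$, $\dgal{u,\phi}$ and $\dgal{\phi,\phi}$. Since $u$ lies in the cyclic subalgebra $A'$, Lemma~\ref{Lem1} applies and the moment-map identity \eqref{Phim} provides explicit inner expressions for the three brackets involving $\phi$. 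The target is to reach an identity of the form
\begin{equation*}
\dgal{U_{+,\eta},U_{+,\eta'}} \;=\; \tfrac{1}{2}\epsilon(u)\bigl(U_{+,\eta'}U_{+,\eta}F_{\epsilon(u)}-F_{\epsilon(u)}U_{+,\eta'}U_{+,\eta}\bigr) \;+\; \mathcal{R}_{\eta,\eta'}\,,
\end{equation*}
where $\mathcal{R}_{\eta,\eta'}\in A\otimes A$ satisfies $m(\mathcal{R}_{\eta,\eta'})\in[A,A]$; equivalently, $\dgal{U_{+,\eta},U_{+,\eta'}}$ looks formally like $\dgal{u,u}$ with the role of $u$ played by $U_{+,\eta}$ on one factor and $U_{+,\eta'}$ on the other, up to inner corrections.

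To pass to arbitrary $K,L$, I would iterate the inner Leibniz rule in the first argument and the outer one in the second. Each sandwich by $U_{+,\eta}^{i}$ or $U_{+,\eta'}^{j}$ preserves the r-matrix form up to further inner corrections, so that $\dgal{U_{+,\eta}^{K},U_{+,\eta'}^{L}}$ eventually acquires the same shape as the base-case identity, with $U_{+,\eta}$ and $U_{+,\eta'}$ replaced by their $K$-th and $L$-th powers. After applying $m$, the r-matrix contribution vanishes outright because $m(aF_{\epsilon(u)}-F_{\epsilon(u)}a)=0$ whenever $\epsilon(u)\neq 0$, while the inner corrections become genuine commutators in $A$, giving the desired conclusion.

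The main obstacle is the base-case identification, and in particular the term $\dgal{u\phi,u\phi}$, which requires two nested Leibniz expansions and a delicate cancellation between the contributions from $\dgal{u,u}$, the cross terms and the moment-map inner expressions. The precise coefficient $\tfrac{1}{2}$ in \eqref{Phim} is essential for this collapse to occur. The hypothesis $\epsilon(u)=-1$ is expected to enter here: the sign $\epsilon(u)$ fixes which of $F_{\pm 1}$ appears in \eqref{cyA}, and the moment-map transport of $\phi$ past $u$ must be compatible with this choice, so that the a priori non-inner contributions actually cancel. A parallel statement for $U_{-,\eta}$ with $\epsilon(u)=+1$ should follow by the same argument with $\phi^{-1}$ in place of $\phi$.
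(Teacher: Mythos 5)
Your base-case set-up matches the paper: both expand $\dgal{U_{+,\eta},U_{+,\eta'}}$ into the four pieces $\dgal{u,u}$, $\dgal{u\phi,u}$, $\dgal{u,u\phi}$, $\dgal{u\phi,u\phi}$, and both use the moment-map identity \eqref{Phim} (via Lemma~\ref{Lem1}) for the brackets involving $\phi$. However, your propagation step to general $K,L$ has a genuine gap, and your predicted intermediate identity is not what the computation produces.

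First, the hypothesis $\epsilon(u)=-1$ does not enter through ``which of $F_{\pm1}$ appears.'' The computed bracket $\dgal{U_{+,\eta},U_{+,\eta'}}$ contains, beyond the r-matrix-like term and the residual, a block explicitly proportional to $1+\epsilon(u)$; it is precisely the assumption $\epsilon(u)=-1$ that annihilates this block. Without isolating it you cannot proceed. Second, and more seriously, the ``inner corrections'' do \emph{not} become commutators after applying $m$. Following the paper's formula \eqref{EqUKLend}, after the $(1+\epsilon(u))$-block is killed the remaining expression for $\br{U_{+,\eta}^K,U_{+,\eta'}^L}$ is
\begin{equation*}
\tfrac12\bigl(-U_{+,\eta'}^{L-1}uU_{+,\eta}^K + U_{+,\eta'}^{L-1}U_{+,\eta}^K u + U_{+,\eta'}^L U_{+,\eta}^{K-1}u - U_{+,\eta'}^L u U_{+,\eta}^{K-1}\bigr),
\end{equation*}
which modulo $[A,A]$ (after cyclic rearrangement) equals $\tfrac12\,u\bigl([U_{+,\eta'}^{L-1},U_{+,\eta}^K]+[U_{+,\eta'}^L,U_{+,\eta}^{K-1}]\bigr)$. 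This is $u$ times a commutator, which is not itself in $[A,A]$ in general. (For $K=L=1$ it collapses to the genuine commutator $\tfrac12[U_{+,\eta},u]$, which may be why the base case looks deceptively closed, but this does not iterate.) The vanishing at general $K,L$ requires a separate algebraic identity: one substitutes the linear relation $u=\tfrac{1}{\eta'-\eta}\bigl(\eta'U_{+,\eta}-\eta U_{+,\eta'}\bigr)$ (valid for $\eta\ne\eta'$; the case $\eta=\eta'$ is trivial) and then checks cancellation after cyclic permutation. This substitution is the key step and is absent from your outline, so ``the inner corrections become genuine commutators in $A$'' is an unjustified leap.
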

 If $\epsilon(u)=+1$,  
 \begin{equation}
  \br{U_{-,\eta}^K,U_{-,\eta'}^L}=0 \quad \text{ mod }[A,A], 
 \qquad \text{ for any }\eta,\eta'\in \CC\,.
 \end{equation}
Note that when $u$ is not $\sum_s e_s+xy$ and $K$ is not divisible by $m$, then $U_{+,\eta}^K=0$ mod $[A,A]$ and the result is trivial. This is because $u^m\in \oplus_s e_s A e_s$ but $u\notin \oplus_s e_s A e_s$ in those cases. 
 The proof of Proposition \ref{PropInvPhiCy} is provided in Appendix \ref{Ann:cyclic}. 

\medskip  

For a $(m+1)$-uple $(q_\infty,q_s)\in \CC^\times\times (\CC^\times)^I$, we set $\qq=q_\infty e_\infty + \sum_s q_s e_s$ and the multiplicative preprojective algebra $\Lambda^\qq$ is the quotient of $A$ by the two-sided ideal $J$ generated by the relation 
$\Phi=\qq$, where $\Phi$ is given by \eqref{cy0}--\eqref{cy2}. The different equalities derived above in the Lie algebra $(A/[A,A],\br{-,-})$ descend to $\Lambda^\qq/[\Lambda^\qq,\Lambda^\qq]$ by \ref{ss:doubleqP}.


\section{A natural phase space for the spin RS model} \label{tadpole}  

We recall the important results from \cite{CF2} that are needed for our study, and we take the freedom to adapt the notations to suit our case. 

Fix $t\in \CC^\times$ not a root of unity, $n \in \N^\times$ and $d\geq 2$.  A dense open subset $\Cntd^\circ$ of the  MQV $\Cntd$ of dimension $(1,n)$ defined from a spin Jordan quiver with $d$ framing arrows is given by equivalence classes of 
$2d+2$ elements $(A,B,V_\alpha',W_\alpha')$,  where $A,B\in \Gl_n(\CC)$ and $V_\alpha'\in \Mat_{1\times n}$,  $W_\alpha' \in \Mat_{n\times 1}$ are matrices  satisfying 
\begin{equation} \label{momap}
  ABA^{-1}B^{-1}\prod_{\alpha=1,\ldots,d}^{\longrightarrow}(\Id_n+W_\alpha' V_\alpha')^{-1}\, =\, t \Id_n\,,
\end{equation}
under the equivalence defined from the action of the group $\Gl_n(\CC)$  by
\begin{equation}
g. (A,B,V_\alpha',W_\alpha')=(gAg^{-1},gBg^{-1}, V_\alpha' g^{-1}, gW_\alpha')\,,\quad g\in\Gl_n(\CC)\,.
\end{equation}
In fact, $\Cntd^\circ$ is a smooth symplectic complex manifold of dimension $2nd$, and we denote its Poisson bracket by $\brap{-,-}$.  For any representative of an equivalence class, we form the matrices $\As\in \Mat_{n\times d}(\CC)$ and $\Cs\in \Mat_{d\times n}(\CC)$ by 
  \begin{equation} \label{AsCs}
\As_{i\alpha}=\left[W_\alpha'\right]_i \, ,\quad
\Cs_{\alpha j}=\left[V_\alpha'(\Id_n+W_{\alpha-1}'V_{\alpha-1}')\ldots (\Id_n+W_1'V_1')B\right]_j\,,
  \end{equation}
so that the moment map equation can be rewritten as 
\begin{equation} \label{EqBspTad}
 ABA^{-1}=t B+ t \As \Cs\,.
\end{equation}
Then a point of $\Cntd^\circ$ is determined by the equivalence class of $(A,B,V_\alpha',W_\alpha')$ as above, or equivalently by an element $(A,B,\As,\Cs)$ modulo identification by the group action 
$g \cdot (A,B,\As,\Cs)=(gAg^{-1},gBg^{-1}, g \As,\Cs g^{-1})$ for any $g\in \Gl_n(\CC)$. 
 Choosing the functions 
\begin{equation} \label{fgTadp}
  f_k:= \tr(A^k)\,, \quad g_{k}^{\alpha \beta}=\tr(\As E_{\alpha \beta} \Cs A^k)\,, \quad k\in \N,\,\alpha,\beta=1,\ldots,d\,,
\end{equation}
 the Poisson structure is determined from the identities 
\begin{subequations}
 \begin{align}
\brap{f_k,f_l}
=\,&0\,, \quad \brap{f_k,g^{\alpha \beta}_l}
=\,k\,  g^{\alpha \beta}_{k+l}\,, \label{brtad1}\\
\brap{g^{\gamma \epsilon}_k,g^{\alpha \beta}_l}
=\,&
\frac12 \left(\sum_{r=1}^k-\sum_{r=1}^l \right) 
\left(\tr(\As E_{\alpha \beta} \Cs A^r \As E_{\gamma\epsilon} \Cs A^{k+l-r})
+\tr(\As E_{\alpha \beta} \Cs A^{k+l-r} \As E_{\gamma\epsilon} \Cs A^r)\right) \nonumber \\
&+\frac12 o(\alpha,\gamma) \left(\tr(\As E_{\gamma\epsilon} \Cs A^k \As E_{\alpha\beta} \Cs A^l)
+ \tr(\As E_{\alpha\epsilon} \Cs A^k \As E_{\gamma\beta} \Cs A^l) \right) \nonumber \\
&+\frac12 o(\epsilon,\beta) 
\left( \tr( \As E_{\alpha\beta} \Cs A^k \As E_{\gamma\epsilon} \Cs A^l) 
- \tr(\As E_{\alpha\epsilon} \Cs A^k \As E_{\gamma\beta} \Cs A^l) \right) \nonumber \\
&+\frac12 [o(\epsilon,\alpha)+\delta_{\alpha \epsilon}]\,
\tr(\As E_{\alpha\epsilon} \Cs A^k \As E_{\gamma\beta} \Cs A^l) 
-\frac12 [o(\beta,\gamma)+\delta_{\beta \gamma}]\,
\tr(\As E_{\alpha\epsilon} \Cs A^k \As E_{\gamma\beta} \Cs A^l) \nonumber \\
&+\delta_{\alpha \epsilon} 
\tr\left( \left[B  + \sum_{\lambda=1}^{\epsilon-1} \As E_{\lambda \lambda} \Cs  \right] A^k
\As E_{\gamma \beta} \Cs A^l\right)  
-\delta_{\beta \gamma} \tr\left( \left[B + \sum_{\mu=1}^{\beta-1}\As E_{\mu \mu} \Cs \right] A^l
\As E_{\alpha\epsilon} \Cs A^k \right)\,, \label{brtad2}
 \end{align}
\end{subequations}
see \cite[Lemma A.2]{CF2}. This space admits local coordinates on an open subset which is dense in (a connected component of) $\Cntd^\circ$ as follows. 
Define the open subspace $\Cntd' \subset \Cntd^\circ$ which is such that for any equivalence class of quadruple $(A,B,\As,\Cs)\in \Cntd^\circ$, the matrix $A$ is diagonalisable with nonzero eigenvalues $(x_i)_i$ satisfying $x_i \neq x_j, x_i \neq t x_j$ for each $i \neq j$, and when we choose a representative with $A$ in diagonal form, the matrix $\As$ is such that the  entries in each of its rows sum up to a nonzero value. Hence  we can pick a representative with $A$ in diagonal form as above, and such that  $\sum_\alpha \As_{i \alpha}=1$ in $\Cntd'$. Note that this is uniquely defined up to action by a permutation matrix. 
Introduce 
\begin{equation} \label{Eq:Creg}
 \Creg=\{x=(x_1,\ldots,x_n)\in\CC^n\, \mid \, x_i\ne 0\,,\ x_i\ne x_j\,,\ x_i\ne t x_j\ \text{for all}\ i\ne j\}\,.  
 \end{equation}
For $\alpha=1,\ldots,d$ take  $(\aaa^\alpha)^\top,\ccc^\alpha\in \CC^n$. We define 
$\h\subset \Creg \times (\CC^n)^{\times d} \times (\CC^n)^{\times d}$ to be a generic subspace such that on global coordinates $(x_i,\aaa^\alpha_i,\ccc^\alpha_i)_{i \alpha}$ we require $\sum_{\alpha}\aaa^\alpha_i=1$, see \cite[\S 4.1]{CF2}. 
We can define a map $\xi : \h \to \Cntd'$ which associates to $(x_i,\aaa^\alpha_i,\ccc^\alpha_i)_{i \alpha}$ the equivalence class of the element $(A,B,\As,\Cs)$, where 
\begin{equation} \label{Tadiffeo}
\begin{aligned}
 & A=\diag(x_1,\ldots,x_n)\,, \quad B=(B_{ij})\,, \quad \As=(\As_{i\alpha})\,,\,\, \Cs=(\Cs_{\alpha i}) \\
& \text{with } B_{ij}=t \frac{\sum_\alpha \aaa_i^\alpha \ccc_j^\alpha}{x_i x_j^{-1}-t}\,, \quad
\As_{i\alpha}=\aaa^\alpha_i\,,\quad \Cs_{\alpha i}=\ccc_i^\alpha\,.
\end{aligned}
\end{equation}
This map is such that the following result holds.  
\begin{prop} \cite[Propositions 4.1, 4.3]{CF2} \label{Proptadpole}
  The map $\xi : \h / S_n \to \Cntd'$ given by \eqref{Tadiffeo} defines a local diffeomorphism.  It is a Poisson morphism when $\h / S_n$ is equipped with the Poisson bracket $\br{-,-}$ defined on coordinates by 
\begin{subequations}
 \begin{align}
  \br{x_j,x_i}=&0\,,\quad \br{\aaa_j^\alpha,x_i}=0\,,\quad 
\br{\ccc_j^\alpha,x_i}=-\delta_{ij}\ccc_j^\alpha x_i\,, \label{Eqh1} \\
\br{\aaa_j^\gamma,\aaa_i^\alpha}=&\frac12 \delta_{(i\neq j)}\frac{x_j+x_i}{x_j-x_i}
(\aaa_j^\gamma \aaa_i^\alpha+\aaa_i^\gamma \aaa_j^\alpha-\aaa_j^\gamma \aaa_j^\alpha - 
\aaa_i^\gamma \aaa_i^\alpha) +\frac12 o(\alpha,\gamma)
(\aaa_j^\gamma \aaa_i^\alpha+\aaa_i^\gamma \aaa_j^\alpha) \nonumber \\
&+\frac12 \sum_{\sigma=1}^d o(\gamma,\sigma)\aaa_i^\alpha 
(\aaa_j^\gamma \aaa_i^\sigma+\aaa_i^\gamma \aaa_j^\sigma)
-\frac12 \sum_{\kappa=1}^d o(\alpha,\kappa)\aaa_j^\gamma 
(\aaa_j^\kappa \aaa_i^\alpha+\aaa_i^\kappa \aaa_j^\alpha)\,, \label{Eqh2} \\
\br{\ccc_j^\epsilon,\aaa_i^\alpha}=&\delta_{\epsilon \alpha}B_{ij}-\aaa_i^\alpha B_{ij}+
\frac12 \delta_{(i\neq j)}\frac{x_j+x_i}{x_j-x_i}\ccc_j^\epsilon (\aaa_j^\alpha-\aaa_i^\alpha)
-\delta_{(\alpha<\epsilon)}\aaa_i^\alpha \ccc_j^\epsilon \nonumber \\
&-\aaa_i^\alpha \sum_{\lambda=1}^{\epsilon-1}\aaa_i^\lambda (\ccc_j^\lambda-\ccc_j^\epsilon) 
+\delta_{\epsilon \alpha} \sum_{\lambda=1}^{\epsilon-1} \aaa_i^\lambda \ccc_j^\lambda 
+\frac12 \sum_{\kappa=1}^d o(\alpha,\kappa)\ccc_j^\epsilon 
(\aaa_j^\kappa \aaa_i^\alpha+\aaa_i^\kappa \aaa_j^\alpha) \,, \label{Eqh3} \\
\br{\ccc_j^\epsilon,\ccc_i^\beta}=&
\frac12 \delta_{(i\neq j)}\frac{x_j+x_i}{x_j-x_i} (\ccc_j^\epsilon \ccc_i^\beta + \ccc_i^\epsilon\ccc_j^\beta) 
+\ccc_i^\beta B_{ij} - \ccc_j^\epsilon B_{ji} +\frac12 o(\epsilon,\beta)
(\ccc_i^\epsilon\ccc_j^\beta-\ccc_j^\epsilon \ccc_i^\beta) \nonumber \\
&+\ccc_i^\beta \sum_{\lambda=1}^{\epsilon-1}\aaa_i^\lambda (\ccc_j^\lambda-\ccc_j^\epsilon)
-\ccc_j^\epsilon \sum_{\mu=1}^{\beta-1}\aaa_j^\mu (\ccc_i^\mu-\ccc_i^\beta)
 \,. \label{Eqh4}
 \end{align}
\end{subequations}
Furthermore, letting $f_{ij}=\sum_\alpha \aaa_i^\alpha \ccc_j^\alpha$, the subalgebra generated by  $(x_i,f_{ij})_{ij}$ inside $\mathcal O (\h / S_n)$ is a Poisson subalgebra under $\br{-,-}$. 
\end{prop}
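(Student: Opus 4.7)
The plan is to establish the bijective correspondence first, then verify compatibility of the Poisson structures, and finally check closure of the $f_{ij}$-subalgebra.

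For the local diffeomorphism, I verify that $\xi$ lands in $\Cntd'$ by checking the moment map equation \eqref{EqBspTad}: with $A$ diagonal, $(ABA^{-1})_{ij} = x_ix_j^{-1}B_{ij}$, and
\[
 tB_{ij}+t(\As\Cs)_{ij}=\frac{t^2\sum_\alpha a_i^\alpha c_j^\alpha}{x_ix_j^{-1}-t} + t\sum_\alpha a_i^\alpha c_j^\alpha = x_ix_j^{-1}B_{ij}\,.
\]
To construct an inverse, given $[A,B,\As,\Cs]\in\Cntd'$ I pick a representative with $A$ diagonal, which is unique up to the residual action of $S_n\ltimes(\CC^\times)^n$, and fix the remaining torus freedom by imposing $\sum_\alpha\As_{i\alpha}=1$ for each $i$, which is possible since these row sums are nonzero by the definition of $\Cntd'$. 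Reading off $x_i=A_{ii}$, $a_i^\alpha=\As_{i\alpha}$, $c_i^\alpha=\Cs_{\alpha i}$ gives the inverse on $\h/S_n$, with $B_{ij}$ automatically recovered from the moment map equation since $x_ix_j^{-1}\neq t$ on $\Creg$. Both $\xi$ and its inverse are given by rational smooth expressions, so $\xi$ is a local diffeomorphism.

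For the Poisson property, the bracket $\brap{-,-}$ on $\Cntd^\circ$ is induced from the $\Gl_n$-invariant quasi-Poisson bracket on the representation space of the spin Jordan quiver (the $m=1$ case of Section \ref{ss:qHcyclic}) via Theorem \ref{Thm:Poiss} and Proposition \ref{Prop:BrInduced}. My strategy is to transport $\brap{-,-}$ to $\h/S_n$ through $\xi$ and directly verify \eqref{Eqh1}--\eqref{Eqh4} on the generating coordinates $(x_i,a_i^\alpha,c_i^\alpha)$. Concretely, this amounts to a Dirac reduction from the full representation space with respect to the two gauge-fixing constraints used to build $\xi^{-1}$: (i) $A_{ij}=0$ for $i\neq j$, and (ii) $\sum_\alpha\As_{i\alpha}=1$. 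The symmetric factor $(x_j+x_i)/(x_j-x_i)$ in \eqref{Eqh2}--\eqref{Eqh4} arises from the Dirac correction associated with (i), while the $B_{ij}$-terms in \eqref{Eqh3}--\eqref{Eqh4} come from re-substituting $B_{ij}=t\sum_\alpha a_i^\alpha c_j^\alpha/(x_ix_j^{-1}-t)$. The main obstacle will be the combinatorial bookkeeping in this reduction, especially for $\{c_j^\epsilon,a_i^\alpha\}$ and $\{c_j^\epsilon,c_i^\beta\}$, where the ordering-function terms $o(\cdot,\cdot)$ and the partial sums $\sum_{\lambda<\epsilon}$ trace back to the ordering-dependent double brackets \eqref{tadidv}--\eqref{tadidu}. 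As a consistency check, one can verify that the pullbacks $\xi^*f_k=\sum_i x_i^k$ and $\xi^*g_k^{\alpha\beta}=\sum_i a_i^\alpha c_i^\beta x_i^k$ satisfy \eqref{brtad1}--\eqref{brtad2} when their brackets are computed using \eqref{Eqh1}--\eqref{Eqh4} and the Leibniz rule.

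Finally, the Poisson subalgebra claim is handled by direct verification using the Leibniz rule. From \eqref{Eqh1} one has $\{x_i,f_{jk}\}=\sum_\alpha\{x_i,a_j^\alpha c_k^\alpha\}=-\delta_{ik}x_if_{jk}$, which lies in $\CC[x,f]$. For $\{f_{ij},f_{kl}\}$ one expands via Leibniz and applies \eqref{Eqh2}--\eqref{Eqh4}; the terms depending on $o(\alpha,\gamma)$ or $o(\epsilon,\beta)$ either cancel in pairs or, after summation over the Greek indices, reassemble into polynomials in the $f_{ij}$ alone, while the $B_{ij}$-terms are themselves rational functions of $x_i,x_j,f_{ij}$ via $B_{ij}=tf_{ij}/(x_ix_j^{-1}-t)$ and hence remain in the subalgebra. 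Combined with $\{x_i,x_j\}=0$ from \eqref{Eqh1}, this proves the closure.
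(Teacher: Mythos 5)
Your verification of the moment map equation and your construction of the inverse of $\xi$ by gauge-fixing (diagonalize $A$, then normalize $\As$ so that $\sum_\alpha\As_{i\alpha}=1$, leaving the residual $S_n$) are correct and match the standard argument; this part is fine.

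For the Poisson property, however, the ``Dirac reduction'' framing is conceptually off. There is no further reduction taking place: $\Cntd'$ is already a Poisson manifold (obtained upstream by quasi-Hamiltonian reduction), and $\xi$ is a local diffeomorphism onto it, so the bracket on $\h/S_n$ is simply the pullback of $\brap{-,-}$. Moreover, the upstream object you would be ``reducing'' from carries only a quasi-Poisson (non-Jacobi) bracket, so Dirac's procedure for second-class constraints is not directly applicable there. The mechanism you relegate to a ``consistency check'' at the end --- namely, pulling back the invariant generators $f_k=\tr A^k$ and $g_k^{\alpha\beta}=\tr(\As E_{\alpha\beta}\Cs A^k)$ to $\sum_i x_i^k$ and $\sum_i \aaa_i^\alpha\ccc_i^\beta x_i^k$, computing their brackets via \eqref{Eqh1}--\eqref{Eqh4} and Leibniz, and matching against the known formulas \eqref{brtad1}--\eqref{brtad2} from \cite[Lemma A.2]{CF2} --- is in fact the actual engine of the proof: the $(f_k,g_k^{\alpha\beta})$ determine the bracket on the coordinate ring, so agreement there pins down \eqref{Eqh1}--\eqref{Eqh4} uniquely. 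That matching is where all the work is, and it is where the ordering terms $o(\cdot,\cdot)$ and the partial sums $\sum_{\lambda<\epsilon}$ in \eqref{Eqh2}--\eqref{Eqh4} are forced. You should promote that step from a sanity check to the main line of argument.

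On the subalgebra claim, note a sign error: from \eqref{Eqh1}, $\br{\ccc_j^\alpha,x_i}=-\delta_{ij}\ccc_j^\alpha x_i$ gives $\{x_i,\ccc_k^\alpha\}=+\delta_{ik}\ccc_k^\alpha x_i$, hence $\{x_i,f_{jk}\}=+\delta_{ik}x_i f_{jk}$, not $-\delta_{ik}x_i f_{jk}$ (the conclusion that it lies in $\CC[x,f]$ is unaffected). The assertion that the $o(\cdot,\cdot)$-dependent contributions to $\{f_{ij},f_{kl}\}$ ``reassemble into polynomials in the $f$ alone'' is the crux of the subalgebra claim and must actually be checked: after summing over $\alpha,\gamma,\epsilon,\beta$ the antisymmetry of the ordering function does produce the required telescoping, but that is a computation, not a remark.
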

In these local coordinates, the matrix $B$ is the Lax matrix of the spin trigonometric RS system. Furthermore, the equation of motions defined by $\frac{d}{dt}=\brap{\tr B,-}$ are normalised versions of the equations of motion for the spin RS Hamiltonian, as originally defined in \cite{KrZ}.  The form of the Poisson brackets between the elements $(x_i,f_{ij})_{ij}$ was conjectured in \cite{AF}.


\section{MQV for spin cyclic quivers} \label{cyclic}

Consider the multiplicative preprojective algebra for a spin cyclic quiver as in Section \ref{ss:qHcyclic}. In accordance with \ref{Sec:mqv}, choose a dimension vector $\aalpha=(1,\alpha)$ with $\alpha\in (\N^\times)^I$, and set $q_\infty=q^{-\alpha}=\prod_{s\in I}q_s^{-\alpha_s}$. 

A representation of $\Lambda^\qq$ of dimension $\aalpha$ is a collection of vector spaces $(\VV_\infty, \VV_s)=(\CC, \CC^{\alpha_s})$ together with linear maps representing arrows of $\bar{Q}$ and satisfying \eqref{cy0}--\eqref{cy2}. 
Denote in an obvious way the matrices representing the arrows as $(X_s, Y_s, V_\alpha, W_\alpha)$. Therefore, points of $\Rep(\Lambda^\qq, \aalpha)$ are represented by $2m+2d$ elements $(X_s,Y_s,V_\alpha,W_\alpha)$, 
\begin{equation*}
 X_s\in \Mat_{\alpha_s\times \alpha_{s+1}}(\CC),\quad Y_s\in \Mat_{\alpha_{s+1}\times \alpha_s}(\CC)\,, \quad V_\alpha\in \Mat_{\alpha_0\times 1}(\CC),\quad W_\alpha\in \Mat_{1\times \alpha_0}(\CC)\,,
\end{equation*}
for $s\in I$, $\alpha \in \{1,\ldots,d\}$ satisfying 
  \begin{subequations}
       \begin{align}
&(\Id_{\alpha_0}+X_0Y_0)(\Id_{\alpha_0}+Y_{m-1}X_{m-1})^{-1}=q_0 \prod_{\alpha=1,\ldots,d}^{\longleftarrow}(\Id_{\alpha_0}+W_\alpha V_\alpha)\,, \label{Eq:Condcy0} \\
&(\Id_{\alpha_s}+X_sY_s)(\Id_{\alpha_s}+Y_{s-1}X_{s-1})^{-1}=q_s \Id_{\alpha_s}\,, \qquad s\in I \setminus \br{0}\,, \label{Eq:Condcy1} \\
&\prod_{\alpha=1,\ldots,d}^{\longrightarrow} (1+V_\alpha W_\alpha)=q_{\infty}\,, \label{Eq:Condcy2}
       \end{align}
  \end{subequations} 
and such that all factors have nonzero determinant. 
The group $G(\aalpha)=\prod_{s\in I}\Gl_{\alpha_s}(\CC)$ acts on these elements by
\begin{equation}\label{gactCy}
g. (X_s,Y_s,V_\alpha,W_\alpha)=(g_sX_sg_{s+1}^{-1},g_{s+1}Y_sg_s^{-1}, V_\alpha g_0^{-1}, g_0 W_\alpha)\,,\quad g\in G(\aalpha)\,, 
\end{equation}
and the orbits in $\Rep\left(\Lambda^{\qq},\aalpha\right)/\!/G(\aalpha)$ correspond to isomorphism classes of semisimple representations. We are particularly interested in the cases where $X=\sum_s X_s$ is invertible at some points, hence we restrict our attention to the spaces such that $\alpha_s=n\in \N^\times$ for all $s\in I$. 
We now define 
\begin{equation*}
\Cnm=\Rep\left(\Lambda^{\qq},\aalpha\right)/\!/G(\aalpha)\,,
\end{equation*}
which is  the spin  analogue of the space $\mathcal C_{n,q}(m)$ introduced in \cite[Section 4]{CF}, see also \cite[Section 5]{BEF}. 
By construction, this is a MQV for a framed cyclic quiver, and we denote its Poisson bracket by $\brap{-,-}$.  Let us identify $I$ and $\{0,\ldots,m-1\}$ to introduce the elements 
\begin{equation}
  \label{tcyclic}
t_s:=\prod_{0\leq s' \leq s} q_{s'}\,, \quad s=0,\ldots,m-1\,,\quad t:=t_{m-1}\,,
\end{equation}
so that $q_\infty=t^{-n}$. We also set $t_{-1}=1$ to state the next result, which is an application of \ref{Sec:mqv} with the regularity condition from \cite[Section 4]{CF}. 
\begin{prop}
Suppose that $t_{s}^{-1}t_{s'}\neq t^k$ for any $k \in \Z$ and $-1\leq s\leq s'\leq m-1$, with $k\neq 0$ if $s=s'$. Then  $\Cnm$ is a smooth variety of dimension $2nd$, with a non-degenerate Poisson bracket denoted by $\brap{-,-}$.
\end{prop}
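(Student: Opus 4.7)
The plan is to verify that the stated condition implies the regularity hypothesis of Proposition \ref{mqvar}, after which smoothness, dimension, and non-degeneracy of the Poisson bracket all follow from Theorem \ref{dim} and Corollary \ref{Cor:Sympl}. The only nontrivial translation is combinatorial: expressing regularity in terms of the $t_s$'s via the roots of the cyclic quiver.

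The positive roots of $Q$ (of affine type $\widetilde{A}_{m-1}$) split into the imaginary roots $k\delta$ for $k \geq 1$, with $\delta = \sum_{s \in I} e_s$, and the real roots, which are precisely the arcs $e_{\bar i} + e_{\overline{i+1}} + \ldots + e_{\overline{j-1}}$ (indices mod $m$) indexed by pairs $i < j$ in $\Z$ with $j - i$ not divisible by $m$, considered modulo diagonal translation by $m$. Writing $j - i = km + r$ with $0 < r < m$, each real root decomposes as $k\delta$ plus a genuine arc of length $r$; with $t_s = \prod_{0 \leq s' \leq s} q_{s'}$, $t_{-1} = 1$, and $t = t_{m-1}$, a direct calculation yields $q^{k\delta} = t^k$ and, for a real root $\beta$, $q^\beta = t^{\ell} \cdot t_{s'} t_s^{-1}$ for appropriate $\ell \in \Z$ and $-1 \leq s, s' \leq m-1$; the two possibilities (non-wrap arc versus wrap arc) produce pairs with $s < s'$ and $s > s'$ respectively. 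Since $q^\beta \neq 1$ is invariant under $\beta \mapsto -\beta$, it suffices to impose $t_s^{-1} t_{s'} \neq t^k$ for every pair $-1 \leq s \leq s' \leq m-1$ and every $k \in \Z$ (with $k \neq 0$ in the diagonal case $s = s'$ coming from the imaginary roots). This is precisely the stated hypothesis, so $q = \sum_{s \in I} q_s e_s$ is regular in the sense of Proposition \ref{mqvar}.

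With regularity established, Proposition \ref{mqvar} yields that every module of dimension $\aalpha = (1, n, \ldots, n)$ over $\Lambda^\qq$ is simple, $\Gl_\alpha$ acts freely on $\Rep(\Lambda^\qq, \aalpha)$, and $\Cnm$ is a smooth affine variety. Applying Theorem \ref{dim} to the framed quiver $\widetilde Q$ (with $m$ cycle arrows and $d$ framing arrows $\infty \to 0$), I compute
\begin{equation*}
  p(\aalpha) = 1 + \sum_{a \in \widetilde Q} \aalpha_{t(a)} \aalpha_{h(a)} - \aalpha \cdot \aalpha
  = 1 + (mn^2 + dn) - (1 + mn^2) = nd,
\end{equation*}
so $\dim \Cnm = 2nd$.

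For non-degeneracy of the Poisson bracket, Theorem \ref{Thm:QStruct} endows the path algebra with the quasi-Hamiltonian data $(\PP, \Phi)$, and $\PP$ is non-degenerate by \cite[Section 8]{VdB2}. Since the $G(\aalpha)$-action on $\Rep(\Lambda^\qq, \aalpha)$ is free and the GIT quotient is a geometric quotient (both consequences of simplicity), Corollary \ref{Cor:Sympl} produces a non-degenerate Poisson bracket $\brap{-,-}$ on $\Cnm$ induced by $\tr(\mathcal{X}(\PP))$. The principal subtlety of the argument is the root-theoretic step at the beginning; once that translation is made, the remainder is a direct invocation of the general machinery recalled in Section \ref{prel}.
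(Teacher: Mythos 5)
Your proof is correct and follows essentially the same route as the paper: the paper presents this proposition as "an application of \S\,\ref{Sec:mqv} with the regularity condition from [CF, Section 4]," and your argument simply fills in the root-theoretic translation (arcs plus imaginary multiples for affine type $\widetilde{A}_{m-1}$, showing the stated condition on the $t_s$'s is exactly regularity of $q$) and the dimension count $p(\aalpha)=nd$ that the paper delegates to those references. Both steps, together with the appeals to Proposition \ref{mqvar}, Theorem \ref{dim}, and Corollary \ref{Cor:Sympl}, are carried out correctly.
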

From now on, we assume the regularity condition of the proposition. In particular, $t$ is not a root of unity. 

\subsection{Local coordinates}   \label{ss:CyCoord}

Consider the open subset $\Cnmo\subset \Cnm$ on which the $X_s$ are invertible. Similarly to the Jordan quiver case reviewed in Section \ref{tadpole}, introduce $Z_s:=Y_s+X_s^{-1}$, and form the matrices $\Asm\in \Mat_{n\times d}(\CC)$, $\Csm\in \Mat_{d\times n}(\CC)$ by 
  \begin{equation} \label{AsCscyclic}
\Asm_{i\alpha}=\left[W_\alpha\right]_i \, ,\quad
\Csm_{\alpha j}=t^{-1}\,\left[V_\alpha(\Id_n+W_{\alpha-1}V_{\alpha-1})\ldots (\Id_n+W_1V_1)Z\right]_j\,.
  \end{equation}
so that the $\alpha$-th column of $\Asm$  represents the spin element $a'_\alpha$, while the $\alpha$-th row of $\Csm$ represent $t^{-1} c'_\alpha$. Note the factor $t^{-1}$ necessary to define $\Csm$. 
In particular, \eqref{Eq:Condcy0}--\eqref{Eq:Condcy1}   adopt the compact form 
\begin{equation} \label{EqZspCyc}
 X_0Z_0X_{m-1}^{-1}=q_0 Z_{m-1}+ q_0 t \,\Asm\Csm\,, \quad X_s Z_s =q_s Z_{s-1} X_{s-1}\,.
\end{equation}
 Then \eqref{Eq:Condcy2} is just a corollary of these relations. 
Up to changing basis, a point 
$(X_s,Y_s,V_\alpha,W_\alpha)$ can be represented by an element of the equivalence class such that 
$X_0,\ldots,X_{m-2}=\Id_n$. Therefore, setting $A:=X_{m-1}$ and $B:=q_0^{-1}Z_0$, we find that the condition \eqref{EqZspCyc} gives 
$Z_s=t_s B$ for $s\neq m-1$, $Z_{m-1}=t A^{-1} B$, and $A,B$ satisfy 
\begin{equation} \label{commutCy}
  q_0 B A^{-1} =q_0 t A^{-1}B + q_0 t \,\Asm\Csm\,.
\end{equation}
Hence a point in $\Cnmo$ is completely determined by the data $(A,B,\Asm,\Csm)$ up to $\Gl_n$ action by 
$g \cdot (g Ag^{-1},g B g^{-1},g\Asm,\Csm g^{-1})$ seen as $\prod_s g \in G(\aalpha)$, with $A,B$ invertible and the elements $\Id_n+W_\alpha V_\alpha$ (that can be reconstructed from \eqref{AsCscyclic}) invertible. 
Comparing with \eqref{EqBspTad}, we find 
\begin{prop}\label{isoCM} 
Let $\Cnmo \subset \Cnm$ be as above. 
Let $\Cntd^\circ$ be the spin Ruijsenaars-Schneider space considered in Section \ref{tadpole} with parameter $t=\prod_s q_s$,  so that $\Cntd^\circ$ is a smooth variety. Then the map $\psi: \Cntd^\circ \to \Cnmo$ sending $(A,B, \As,\Cs)$ to $X_s=\Id_n$, $Z_s=t_sB$ for $s=0,\dots, m-2$ and $X_{m-1}=A$, $Z_{m-1}=tA^{-1}B$, $\Asm=A^{-1} \As$, $\Csm=\Cs$ defines an isomorphism of varieties. 
\end{prop}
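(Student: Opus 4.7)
The plan is to construct $\psi$ and its inverse explicitly by means of a gauge-fixing argument on the $G(\aalpha)$-action. First I would check that $\psi$ is well-defined on representatives. Starting from $(A,B,\As,\Cs)$ satisfying the spin RS moment map \eqref{EqBspTad}, the proposed image satisfies $X_sZ_s=q_sZ_{s-1}X_{s-1}$ for $s\neq 0$ telescopically (using $t_s=q_st_{s-1}$ and $X_s=\Id_n$ for $s<m-1$). The $s=0$ relation $X_0Z_0X_{m-1}^{-1}=q_0Z_{m-1}+q_0 t\,\Asm\Csm$ then becomes, after multiplying through by $A$ and using $\Asm\Csm=A^{-1}\As\Cs$, exactly the identity $ABA^{-1}=tB+t\As\Cs$ of \eqref{EqBspTad}. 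Equivariance is immediate: the $\Gl_n$-action $(A,B,\As,\Cs)\mapsto(gAg^{-1},gBg^{-1},g\As,\Cs g^{-1})$ corresponds to the diagonal element $(g,\dots,g)\in G(\aalpha)$ acting on the image. Hence $\psi$ descends to a well-defined morphism of quotients, landing in $\Cnmo$ since all $X_s$ in the image are invertible.

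To construct the inverse, I would pick a representative $(X_s,Y_s,V_\alpha,W_\alpha)$ of a point in $\Cnmo$. Since the $X_s$ are invertible, setting $g_{m-1}=\Id_n$ and recursively $g_s=g_{s+1}X_s^{-1}$ for $s=m-2,\dots,0$ produces a gauge-equivalent representative with $X_0=\cdots=X_{m-2}=\Id_n$. The residual freedom inside $G(\aalpha)$ is precisely the diagonal copy of $\Gl_n$ (the elements with $g_0=\cdots=g_{m-1}$). On this slice, iterating $X_sZ_s=q_sZ_{s-1}X_{s-1}$ forces $Z_s=t_sB$ for $s<m-1$ with $B:=q_0^{-1}Z_0$, and $Z_{m-1}=tA^{-1}B$ for $A:=X_{m-1}$. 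Defining $\As:=A\Asm$ and $\Cs:=\Csm$ (so that the $t^{-1}$ normalisation in \eqref{AsCscyclic} exactly cancels the factor $t$ implicit in $\Csm$ to recover the definition \eqref{AsCs}), the moment map relation \eqref{commutCy} reduces to \eqref{EqBspTad}. Thus $(A,B,\As,\Cs)$ represents a point of $\Cntd^\circ$, and the diagonal $\Gl_n$-action on the slice matches the $\Gl_n$-action on $\Cntd^\circ$, so the assignment descends to a morphism $\Cnmo\to\Cntd^\circ$.

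The two compositions are identities on representatives by inspection: starting from $(A,B,\As,\Cs)$, the image under $\psi$ already lies in the gauge-fixed slice and reading back reproduces the same tuple; conversely, gauge-fixing any point of $\Cnmo$ and then applying $\psi$ reconstructs the gauge-fixed representative, which is equivalent to the original point. Both maps are given by polynomial expressions in the matrix entries (allowing regular inverses of $A$ and the $X_s$ on the relevant open loci), so both are morphisms of algebraic varieties. No step presents a deep obstacle; the step requiring the most care is purely notational, namely tracking the accumulating factors $t_s$ through the iteration of the relations $X_sZ_s=q_sZ_{s-1}X_{s-1}$ and confirming the compatibility between the $t^{-1}$ prefactor in \eqref{AsCscyclic} and the form of $\Cs$ in \eqref{AsCs}. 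Once this bookkeeping is in place, the identification is forced by the invertibility of the $X_s$ and the exact matching of the two moment map equations.
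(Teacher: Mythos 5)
Your overall strategy — verify the moment map identities for $\psi$, gauge-fix to build an inverse, and check that the compositions are identities — is sound, and your computations showing that the relations $X_sZ_s=q_sZ_{s-1}X_{s-1}$ telescope and that \eqref{commutCy} reduces to \eqref{EqBspTad} are correct. However, you have skipped what the paper singles out as the only non-trivial point. Landing in $\Cnmo$ requires first landing in $\Cnm$, which is built from representations of the \emph{localised} algebra obtained from $\CC\bar{Q}$ by inverting every $1+aa^*$. Invertibility of the $X_s$ and $Z_s$ handles the cyclic arrows (since $1+X_sY_s=X_sZ_s$ and $1+Y_sX_s=Z_sX_s$), but one must also check $\det(\Id_n + W_\alpha V_\alpha)\neq 0$, equivalently $1+V_\alpha W_\alpha\neq 0$, for the framing arrows. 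When the point is encoded through $(A,B,\Asm,\Csm)$, the matrices $V_\alpha$, $W_\alpha$ are only determined implicitly via the inductive formula \eqref{AsCscyclic}, and this non-vanishing is not automatic.

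This is precisely what the paper's proof establishes: it shows $W_\alpha=A^{-1}W_\alpha'$ and, by unwinding \eqref{AsCscyclic}, inductively $V_\alpha=V_\alpha' A$, whence $1+V_\alpha W_\alpha = 1+V_\alpha' W_\alpha'\neq 0$ because $(A,B,\As,\Cs)$ comes from $\Cntd^\circ$. The same verification is needed in the opposite direction for your candidate inverse to land in $\Cntd^\circ$. Your assertion that the image lies in $\Cnmo$ ``since all $X_s$ in the image are invertible'' does not address this, so the argument as written does not establish the claimed isomorphism. Adding the inductive comparison of $(V_\alpha,W_\alpha)$ with $(V_\alpha',W_\alpha')$ closes the gap.
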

\begin{proof}
  The only non-trivial identity to show is that we can recover  $\det(\Id_n + W_\alpha V_\alpha)\neq 0$ for all $\alpha$. We define in $\Cntd^\circ$ the elements  $W_\beta'=(\As_{i \beta})_i$, $C_\beta'=(\Cs_{\beta i})_i$ and inductively 
\begin{equation*}
  V_\beta'=C_\beta' B^{-1} (\Id_n+W_1'V_1')^{-1}\ldots (\Id_n+W_{\beta-1}'V_{\beta-1}')^{-1}\,.
\end{equation*}
By definition of the space $\Cntd^\circ$ with \eqref{AsCs}, $1+ V_\beta' W_\beta'\neq 0$ for all $\beta$. Now we remark in $\Cnmo$ that $W_\beta= ( (A^{-1}\As)_{i \beta})_i =A^{-1} W_\beta'$. We can also rewrite \eqref{AsCscyclic} as 
\begin{equation*}
V_\beta=C_\beta' B^{-1} A (\Id_n+W_1V_1)^{-1}\ldots (\Id_n+W_{\beta-1}V_{\beta-1})^{-1}\,,
\end{equation*}
to get that $V_1=V_1' A$ and by induction $V_\beta=V_\beta' A$. Thus $1+ V_\beta W_\beta=1+ V_\beta' W_\beta'\neq 0$ for all $\beta$. 
\end{proof}
We can, in fact, compare the Poisson structures on both spaces. 

\begin{prop}\label{IsoCMPoiss}
The isomorphism $\psi: \Cntd^\circ \to \Cnmo$ from Proposition \ref{isoCM} is Poisson. 
\end{prop}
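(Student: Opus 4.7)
The plan is to verify $\psi^{*}\brap{-,-}=\brap{-,-}$ on a convenient set of generators of $\mathcal{O}(\Cnmo)$, using the explicit formulas for the Poisson brackets on both sides. The natural generators on $\Cnmo$ are the trace functions $\widetilde{f}_{k}:=\tr\mathcal{X}(x^{mk})$ and $\widetilde{g}_{l}^{\alpha\beta}:=\tr\mathcal{X}(a'_{\alpha}c'_{\beta}\,x^{1+ml})$ for $k,l\in\N$; other integer exponents produce zero traces by Lemma \ref{LodCyxac}. Their Poisson brackets on $\Cnmo$ are computed by combining Lemma \ref{LodCyxac} with the identity \eqref{relInv}.

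First I would compute $\psi^{*}$ of these generators using the slice $X_{0}=\ldots=X_{m-2}=\Id_{n}$, $X_{m-1}=A$, $Z_{s}=t_{s}B$ (for $s<m-1$), $Z_{m-1}=tA^{-1}B$, $\Asm=A^{-1}\As$, $\Csm=\Cs$ given in Proposition \ref{isoCM}. A direct block-matrix calculation shows that $\mathcal{X}(x^{mk})$ is block-diagonal with each diagonal block equal to $A^{k}$, so $\psi^{*}\widetilde{f}_{k}=m\,f_{k}$. Similarly, since $c'_{\beta}$ ends at vertex $m-1$, only the block $Z_{m-1}=tA^{-1}B$ enters the matrix representative of $c'_{\beta}$, and one obtains $\psi^{*}\widetilde{g}_{l}^{\alpha\beta}=t\,g_{l}^{\alpha\beta}$; the factor $t$ combines the $t^{-1}$ built into \eqref{AsCscyclic} with an extra $A$ produced by $x^{1+ml}$ and the $A^{-1}$ hidden in $\Asm$.

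Next I would specialise the Loday brackets of Lemma \ref{LodCyxac} to $K=mk$ and $L=1+ml$, and interpret each monomial trace on the right-hand side of \eqref{xacxac} as a trace of a product of matrices in the Jordan-side variables $A,B,\As,\Cs$ via the same slice. Each resulting term matches the corresponding term of \eqref{brtad1}--\eqref{brtad2} up to the overall scalings $m^{2}$, $mt$ or $t^{2}$. In particular, the sub-expressions $z+\sum_{\lambda<\epsilon}a'_{\lambda}c'_{\lambda}$ in the last line of \eqref{xacxac} pull back to $B+\sum_{\lambda<\epsilon}\As E_{\lambda\lambda}\Cs$ of \eqref{brtad2}, because $z$ contributes only through $Z_{m-1}=tA^{-1}B$ when the cyclic path structure is tracked.

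The main bookkeeping subtlety I expect is that the effective sum over $v\equiv 1\pmod{m}$ in $\sum_{v=1}^{1+mk'}$ of \eqref{xacxac} has $k'+1$ terms, whereas $\sum_{r=1}^{k'}$ in \eqref{brtad2} has only $k'$. This is reconciled by observing that the $v=1$ (i.e.\ $v''=0$) contribution pulls back to an expression depending only on $k'+l'$, and therefore cancels in the antisymmetric combination $\sum_{v=1}^{K}-\sum_{v=1}^{L}$. Once this cancellation is made explicit, Poisson brackets agree on all generators; since $\{f_{k},g_{l}^{\alpha\beta}\}$ together with their iterated brackets spans a dense Poisson subalgebra of $\mathcal{O}(\Cntd^\circ)$ as used in Section \ref{tadpole}, this suffices to conclude that $\psi$ is a Poisson isomorphism.
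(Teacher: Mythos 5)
Your proposal is correct and follows essentially the same route as the paper's own proof: pick the trace generators, compute their pullbacks along the slice, and match the Loday-bracket formula of Lemma~\ref{LodCyxac} against the Jordan-side formula \eqref{brtad1}--\eqref{brtad2} term by term. The only differences are cosmetic. First, a normalisation: the paper works with $G_l^{\alpha\beta}=\tr(\As^{(m)}E_{\alpha\beta}\Cs^{(m)}X^l)$, which already carries the $t^{-1}$ from \eqref{AsCscyclic} and therefore satisfies $\psi^*G_l^{\alpha\beta}=g_{l_0}^{\alpha\beta}$ with no extra factor; you instead use $\widetilde g_l^{\alpha\beta}=\tr\mathcal X(a'_\alpha c'_\beta x^{1+ml})=tG_{1+ml}^{\alpha\beta}$, hence the factor $t$ you report. (As a small imprecision of wording, that $t$ comes entirely from the $t^{-1}$ in the definition of $\Cs^{(m)}$; the $A$ from $X^{1+ml}$ and the $A^{-1}$ inside $\Asm$ cancel by cyclicity of the trace to yield $A^l$, they do not contribute to the scalar.) Second, you explicitly flag the cancellation of the $v\equiv 1\ (v_0=0)$ contribution between $\sum_{v=1}^{K}$ and $\sum_{v=1}^{L}$, whereas the paper's proof passes over this silently by simply writing $\sum_{v_0=0}^{k_0}-\sum_{v_0=0}^{l_0}$ and asserting it equals \eqref{brtad2}, which has $\sum_{r=1}^{k_0}-\sum_{r=1}^{l_0}$. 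Your observation that the $v_0=0$ summand depends only on $k_0+l_0$ and hence drops out of the antisymmetric difference is exactly the point being used there, so making it explicit is a useful addition rather than a deviation.
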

The proof can be found in Appendix \ref{Ann:iso}. In particular, we can transfer the invariant local coordinates on $\Cntd' \subset \Cntd^\circ$ obtained in Proposition \ref{Proptadpole}  to the open subset $\Cnmp \subset \Cnmo$ defined by $\Cnmp=\psi(\Cntd')$. In such a case, we can always consider for a point of $\Cnmp$ a gauge with representative of the form $(X,Z,\Asm,\Csm)$ given by Proposition \ref{isoCM}, with the extra condition that $X_{m-1}$ is in diagonal form with diagonal entries $(x_i)_i$ defining a point of $\Creg$ \eqref{Eq:Creg} and $\sum_\alpha (X  \Asm)_{i \alpha} =1$ for all $i$.

\subsection{New variants of the spin RS system}   \label{ss:CyIS} 

Set $X=\sum_s X_s$, $Y=\sum_s Y_s$ and denote by $1$ the sum of the $m$ copies of the identity matrix on each $\VV_s=\CC^n$, $s\in I$.  Let $\Theta=(1+XY)(1+YX)^{-1}$ be the moment map restricted to the cyclic quiver without framing, so that $\Theta=\mathcal X(\phi)$ for $\phi$ defined in Section \ref{ss:qHcyclic}.  
 Proposition \ref{PropInvPhiCy} and \eqref{relInv} imply the following result.
\begin{thm} \label{CycInvol}
The following families of functions are Poisson commuting for any parameter $\eta$:  
\begin{equation*}
\begin{aligned}
& \left\{\tr \left((1+\eta\Theta^{-1})X\right)^j\,\, \big|\,\, j\in \N\right\}\,, \quad 
\left\{\tr \left((1+\eta\Theta^{-1})(1+XY)\right)^j\,\, \big|\,\, j\in \N\right\}\,,
\\
& \left\{\tr \left((1+\eta\Theta)Y\right)^j\,\, \big|\,\, j\in \N\right\}\,,\quad
\left\{\tr \left((1+\eta\Theta)(Y+X^{-1})\right)^j\,\, \big|\,\, j\in \N\right\}\,.
\end{aligned}
\end{equation*}
\end{thm}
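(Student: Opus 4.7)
The theorem is essentially a corollary of Proposition \ref{PropInvPhiCy} transferred to the representation space via \eqref{relInv}. My plan is to treat the four families uniformly by choosing $u$ from the set $\{x,\, y,\, z=y+x^{-1},\, 1+xy\}$, and recalling the sign conventions fixed in Section \ref{ss:qHcyclic}: $\epsilon(x)=\epsilon(1+xy)=+1$ while $\epsilon(y)=\epsilon(z)=-1$.

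For each such $u$, Proposition \ref{PropInvPhiCy} asserts that in $A/[A,A]$ one has $\br{U_{\pm,\eta}^K, U_{\pm,\eta'}^L}=0$, where the branch $U_{-,\eta}=u(1+\eta\phi^{-1})$ is used for $u\in\{x,\,1+xy\}$ (the $\epsilon=+1$ cases) and $U_{+,\eta}=u(1+\eta\phi)$ for $u\in\{y,\,z\}$ (the $\epsilon=-1$ cases). This is precisely the matching between $\epsilon(u)$ and the sign on $\phi$ required by that proposition, and the conclusion holds for arbitrary $\eta,\eta'\in\CC$, including $\eta=\eta'$, which is what I need to prove commutativity within a single family.

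Next, I would apply the functor $\mathcal{X}$ to the representation space. Since $\mathcal{X}(\phi)=\Theta=(1+XY)(1+YX)^{-1}$ by construction, and $\mathcal{X}(u)$ is respectively $X$, $Y$, $Y+X^{-1}$, $1+XY$, one obtains $\mathcal{X}(U_{\pm,\eta})=\mathcal{X}(u)\bigl(1+\eta\Theta^{\pm 1}\bigr)$. Cyclicity of trace then gives
\begin{equation*}
\tr\bigl(\mathcal{X}(U_{\pm,\eta})^K\bigr)=\tr\bigl((1+\eta\Theta^{\pm 1})\mathcal{X}(u)\bigr)^K,
\end{equation*}
which matches exactly the functions appearing in the four families of the statement. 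Invoking the key identity \eqref{relInv} together with the algebraic vanishing above, I obtain $\brap{\tr\mathcal{X}(U_{\pm,\eta}^K),\tr\mathcal{X}(U_{\pm,\eta'}^L)}_{\PP}=0$ on $\Rep(A,\aalpha)$.

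To conclude, these trace functions are $G(\aalpha)$-invariant, so they descend to the reduced variety $\Cnm=\Rep(\Lambda^\qq,\aalpha)/\!/G(\aalpha)$, where by construction the induced Poisson bracket is $\brap{-,-}$. The vanishing of the bracket is preserved by this descent, which proves the four Poisson commutativity statements. The only point requiring care—hardly a genuine obstacle given the preparation of Section \ref{ss:qHcyclic}—is to verify the sign bookkeeping $\epsilon(u)\leftrightarrow\pm$ and to check that cyclic reordering of traces correctly reproduces the precise factor orderings written in the theorem; once this is done, no further computation is needed.
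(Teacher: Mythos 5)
Your proof is correct and takes essentially the same route as the paper, which states Theorem \ref{CycInvol} directly as a consequence of Proposition \ref{PropInvPhiCy} and \eqref{relInv} without further elaboration. Your bookkeeping of $\epsilon(u)$ against the choice of branch $U_{\pm,\eta}$, the identification $\mathcal{X}(\phi^{\pm 1})=\Theta^{\pm 1}$, and the cyclic reordering of the trace to match the theorem's displayed forms fill in exactly the steps the paper leaves implicit.
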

\noindent Apart from the first family, we need $j \in m\N$ to have nonzero elements. 


\medskip

We will write down the families from Theorem \ref{CycInvol} in $\Cnmp$, where we can use the coordinates $(x_i,f_{ij})_{ij}$ with $f_{ij}=\sum_\alpha \aaa_i^\alpha \ccc_j^\alpha$. Hence, our first task is to use the known matrices $(A,B,S=\As\Cs)$ instead of the matrices describing a point in $\Cnmp$. 
Using the isomorphism from Theorem \ref{isoCM}, they are given by 
\begin{equation*}
  A=\diag(x_1,\ldots,x_n)\,, \quad B=(B_{ij})_{ij}, \,\, B_{ij}=\frac{t f_{ij} x_j}{x_i-tx_j}\,, \quad 
S=(S_{ij})_{ij},\,\, S_{ij}=f_{ij}\,.
\end{equation*}
Decomposing the moment map (restricted to the cycle) $\Theta=XZX^{-1}Z^{-1}$ as $\Theta=\sum_s \Theta_s$, we get from \eqref{Eq:Condcy1} that $\Theta_s=q_s\Id_n$ for $s\neq 0$, and from \eqref{Eq:Condcy0} that $\Theta_0=q_0\Id_n+q_0t\,\As^{(m)}\Cs^{(m)} Z_{m-1}^{-1}$.  Hence 
\begin{equation}
  \begin{aligned}
    \Theta_0Z_{m-1}=&q_0Z_{m-1}+q_0t \As^{(m)}\Cs^{(m)}=q_0t A^{-1}B+q_0t A^{-1}\As \Cs=q_0tA^{-1} \,(B+S) \,,\\
\Theta_s Z_{s-1}=&q_s Z_{s-1}=q_st_{s-1} B\,, \qquad s\neq0\,.
  \end{aligned}
\end{equation}

The first family contains the symmetric functions of the positions $(x_i)_i$ so we omit it. 
For the fourth family  in Theorem \ref{CycInvol}, we see that $(1+\eta\Theta)(Y+X^{-1})$ is constituted of $m$ blocks  $Z_{s-1}+\Theta_s Z_{s-1}$. 
Thus, the block $Z_{s-1}+\eta \Theta_s Z_{s-1}$ is given by $(1+\eta q_0)t A^{-1}B+\eta q_0t A^{-1}S$ for $s=0$ and $t_{s-1}(1+\eta q_s)B$ otherwise. We can rewrite 
$[(1+\eta \Theta)Z]^m$ as a matrix with diagonal blocks 
\begin{equation}
 \Big(t^2\prod_{s\neq 0} t_{s-1}(1+\eta q_s)\Big)\,\, 
B^s A^{-1}\left([t^{-1}+\eta']B+\eta' S\right) B^{m-s-1}\,, \quad 
 \text{ for } \eta'=q_0t^{-1} \, \eta\,\,.
\end{equation}
In other words, we are interested in studying the family 
\begin{equation} \label{Eq:Gmj}
  G^m_j:= \tr\Big[\,A^{-1}\left([t^{-1}+\eta']B+\eta' S\right) B^{m-1} \,\big]^j\,.
\end{equation}
 In particular, if we write 
 $G^m_j$ as a polynomial in $\eta'$ under the form $G^m_j=\sum_{l=0}^j (\eta')^l G^m_{j,l}$, we get that all the $(G^m_{j,l})_{j,l}$ are Poisson commuting (for fixed $m$) by Theorem \ref{CycInvol}. 
Now, remark that  \eqref{EqBspTad}
gives  $[t^{-1}+\eta']B+\eta' S=t^{-1}B+\eta' t^{-1} ABA^{-1}$. 
We can write 
\begin{equation}
\begin{aligned}
  G^m_{j,l}=&
\,\,t^{-j}\sum_{i_1,\ldots,i_{jm}=1}^n
 \prod_{a=1}^{jm}\frac{f_{i_ai_{a+1}} x_{i_{a+1}}}{x_{i_a}-t x_{i_{a+1}}}\,\,
 \sum_{\substack{I\subset \{0,\ldots,j-1\} \\ |I|=l}}
\left(\prod_{s\in I}x_{i_{sm+2}}^{-1} \right)
\left(\prod_{s\notin I}x_{i_{sm+1}}^{-1} \right)\,.
\end{aligned}
\end{equation}
This was obtained by developing $G^m_j:= \tr\Big[(t^{-1}A^{-1}B+t^{-1}\eta' B A^{-1}) B^{m-1}\big]^j$, which explains the two products at the end of the expression, that represent whether $A^{-1}$ occurs before or after $B$ in the $(sm+1)$-th factor $A^{-1}B+\eta' B A^{-1}$. In particular, $G_{j,j}=G_{j,0}$ for all $j=1,\ldots,n$.

We now look at the third family  in Theorem \ref{CycInvol}. We can see that for any $j\geqslant 1$  in $\Cnmp$
\begin{equation} \label{EqCyY}
\begin{aligned}
    &\tr \left((1+\eta\Theta)Y\right)^{jm}= \tr \left((1+\eta\Theta)(Z-X^{-1})\right)^{jm} \\
=&  m \, \tr \left((\Id_n+\eta\Theta_{0})(Z_{m-1}-X^{-1}_{m-1})\ldots(\Id_n+\eta\Theta_{2})(Z_{1}-X^{-1}_{1})
(\Id_n+\eta\Theta_{1})(Z_{0}-X^{-1}_{0})\right)^{j}\,.
\end{aligned}
\end{equation}
We get, for any $s\neq 0$, 
\begin{equation*} 
\begin{aligned}
(\Id_n+\eta\Theta_{s})(Z_{s-1}-X^{-1}_{s-1})=&(1+\eta q_s)(t_{s-1}B-\Id_n)=t_{s-1}(1+\eta q_s)(B-t_{s-1}^{-1}\Id_n)\,, \\
(\Id_n+\eta\Theta_{0})(Z_{m-1}-X^{-1}_{m-1})=&(\Id_n+\eta q_0tA^{-1}(B+S) (tA^{-1}B)^{-1})(tA^{-1}B-A^{-1})\\
=&tA^{-1}((1+\eta q_0)\Id_n+\eta q_0\,S B^{-1})(B-t^{-1}\Id_n)\,.
\end{aligned}
\end{equation*}
Hence, introducing $H^m_j=m^{-1}C^{-j}\tr \left((1+\eta\Theta)Y\right)^{jm}$ for $C=t \prod_{s\neq 0}t_{s-1}(1+\eta q_s)$, we get 
\begin{equation} \label{Eq:Hmj}
\begin{aligned}
   H^m_j=&\tr \left(A^{-1}((1+\eta q_0)\Id_n+\eta q_0\,S B^{-1})(B-t_{m-1}^{-1}\Id_n)\ldots (B-t_{0}^{-1}\Id_n)\right)^{j}  \\
=& \tr \left([(1+\eta q_0)\Id_n+\eta q_0\,S B^{-1}]\, \mathbf{P}(B) A^{-1}\right)^{j}\,,
\end{aligned}
\end{equation}
by setting 
\begin{equation} \label{PolH}
  \mathbf{P}(B) = \prod^{\longleftarrow}_{0\leqslant s \leqslant m-1} (B-t^{-1}_s \Id_n)\,.
\end{equation}
Again, we are interested in the different functions obtained by developing with respect to $\eta$, that is 
we write $H^m_j=\sum_{l=0}^j \eta^l H^m_{j,l}$. We can explicitly write down the elements for $l=0$ and get 
\begin{equation}
\begin{aligned}
   H^m_{j,0}=&
\sum_{i_1,\ldots,i_{jm}}\,
 \prod_{a=0}^{j-1} x_{i_{a(m+1)+1}}\prod_{s=1}^m\,\,
\left(\frac{x_{i_{am+s+1}}f_{i_{am+s}i_{am+s+1}}}{x_{i_{am+s}} -t x_{i_{am+s+1}}} -t^{-1}_{s-1} \frac{\delta_{(i_{am+s},i_{am+s+1})}}{x_{i_{am+s+1}}} \right)\,.
\end{aligned}
\end{equation}
As noticed for the family $(G^m_{j,l})_{j,l}$, the functions $H^m_{j,j}$ and $H^m_{j,0}$ are not independent. 
Using that $\Theta=XZX^{-1}Z^{-1}$, we can write $(1+\eta\Theta)(Z-X^{-1})$ as $(Z-X^{-1})+\eta XZ(Z-X^{-1})Z^{-1}X^{-1}$, so that in \eqref{EqCyY} it gives after expanding in $\eta$ 
\begin{equation} 
  \tr \left((1+\eta\Theta)Y\right)^{jm}= \tr \left((Z-X^{-1})\right)^{jm}+\ldots+ \eta^{jm} \tr \left(XZ(Z-X^{-1})Z^{-1}X^{-1}\right)^{jm}\,,
\end{equation}
thus the factors in $\eta^0$ and $\eta^{jm}$ agree, implying that $H^m_{j,j}$ and $H^m_{j,0}$ are multiples of each other, after normalisation by the constant $m^{-1}C^{-j}$ from above.

Let's remark two results about those families. First, in the limit $q_0 \to \infty$ where we fix the other $q_s$, all $t_s\to \infty$ and we can see from \eqref{PolH} that $H^m_{j,0}\to G^m_{j,0}$. So, by rescaling the  $H^m_{j,l}$, we can recover all the $(G^m_{j,l})_{j,l}$ in that limit. Though it is an alternative proof of their involution, the phase space is not defined in that limit.  Second, For a given $m$, each function $H^m_{j,l}$ can be written as a linear combination of $(G^{m'}_{j',l'})_{_{j',l'}}$ with smaller indices. If we allow the definition of $G^m_{1,0}=t^{-1}\tr B^mA^{-1}$ and $G^m_{1,1}=t^{-1}\tr A^{-1}SB^{m-1}$ for $m=0,1$, we get for example, 
\begin{equation}
  H_{1,1}^2=tt_0 \left(G_{1,0}^2+G_{1,1}^2 \right) - t\left(\frac{t_0}{t_1}+1\right)\left(G_{1,0}^1+G_{1,1}^1 \right)
+ \frac{t}{t_1} \left(G_{1,0}^0+G_{1,1}^0 \right)\,.
\end{equation}

Finally, we look at the second family $\tr \left((1+\eta\Theta^{-1})(1+XY)\right)^{j}$ in Theorem \ref{CycInvol}, for any $j\in \N$. We first remark that in $\Cnmo$, $(1+\eta\Theta^{-1})XZ=XZ+\eta ZX$. Meanwhile, $X_sZ_s$ is nothing else than $t_s B$, while for $s\neq m-1$ $Z_s X_s=t_s B$ but $Z_{m-1}X_{m-1}=t A^{-1}BA$. This gives us, if we call the elements $F^m_j$, 
\begin{equation}
  F^m_j=\sum_{s=0}^{m-1}  \tr(X_s Z_s+\eta Z_sX_s)^j =
\left(\sum_{s=0}^{m-2} t_s (1+\eta) \right) \tr(B)^j + t \tr(B+\eta A^{-1}BA)^j\,.
\end{equation}
It is just a family equivalent to $(G_j^1)_j$ with $G_j^1=\tr \left(B+\eta A^{-1}BA\right)^{j}$, corresponding to the spin RS system, see \cite{CF2}. Developing $F^m_j=\sum_{l=0}^j \eta^l  F^m_{j,l}$, we also get that $F^m_{j,0}$ and $F^m_{j,j}$ are proportional.


\subsection{Explicit flows}  \label{ss:Flows}
 We now show that we can get explicit expressions for the flows associated to particular elements of the families in Theorem \ref{CycInvol} in $\Cnm$. 
Computations for the results that we use now and the general philosophy behind them are gathered in Appendix \ref{Ann:Dyn}. 

Recall that the family $(G_k^m)_k$ is defined in $\Cnmo$ from the elements $\tr (U_\eta^k)$, where $U_\eta=Z(1+\eta \Theta)$ represents the element $z(1+\eta \phi)\in A$. We get from Lemma \ref{Lem:Dyz} and \eqref{relBrDyn} 
\begin{equation*}
\frac1k \brap{\tr U_\eta^k,X}= -\eta \Theta U_\eta^{k-1} ZX  -X  U_\eta^{k-1} Z\,, \quad 
 \frac1k \brap{\tr U_\eta^k,Z}= -Z U_\eta^{k-1} Z  +  U_\eta^{k-1} Z^2\,,
\end{equation*}
while the Poisson brackets with $V_\beta$ or $W_\beta$ vanish. It does not look possible to explicitly integrate most flows. Indeed, even the matrix $U_\eta$ is not a constant of motion under $\tr U_\eta^{k}$, though its symmetric functions certainly are.  However,  looking at order $0$ in $\eta$, we get $\tr Z^k$ which is $G_{k,0}^m$ up to a constant, and if we look at the flow defined by $d/dt_{k}=\frac1k \brap{\tr Z^k,-}$, we get the defining ODEs 
\begin{equation*}
  \frac{d X}{dt_{k}}=-XZ^k\,, \quad \frac{d Z}{dt_{k}}=0\,,\quad   \frac{d V_\beta}{dt_{k}}=0\,, \quad \frac{d W_\beta}{dt_{k}}=0\,,
\end{equation*}
which can be easily integrated. 
\begin{prop} \label{Pr:FloZ}
  Given the initial condition $(X(0),Z(0),V_\beta(0),W_\beta(0))$, the flow  at time $t_{k}$ defined by the Hamiltonian $\tr Z^k$ for $k\in m\N$ is given by 
\begin{equation*}
X(t_{k})= X(0)\, \exp(-t_{k} Z(0)^k)\,, \quad 
Z(t_{k})= Z(0)\,, \quad  V_\beta(t_{k})= V_\beta(0)\,, \quad W_\beta(t_{k})= W_\beta(0)\,. 
\end{equation*}
\end{prop}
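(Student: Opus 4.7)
The plan is simply to integrate the ODE system displayed just above the statement of the proposition. That system is obtained from Lemma \ref{Lem:Dyz} specialised to $\eta=0$ together with \eqref{relBrDyn}: since $\tr Z^k$ is the $\eta^0$ coefficient of $\tr U_\eta^k$ and $U_0=Z$, one reads off $\frac{1}{k}\brap{\tr Z^k,X} = -X Z^k$, then $\frac{1}{k}\brap{\tr Z^k,Z} = -Z^{k+1}+Z^{k+1} = 0$, while the brackets with $V_\beta$ and $W_\beta$ vanish. Hence $Z$, $V_\beta$, $W_\beta$ stay at their initial values along the flow, and the remaining equation $\dot X = -X Z^k$ becomes a linear matrix ODE with constant right-multiplier $Z(0)^k$, whose unique solution with initial condition $X(0)$ is $X(t_k)=X(0)\exp(-t_k Z(0)^k)$.

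The only sanity check I would include is that this evolution respects the quiver structure on $\Rep(\Lambda^{\qq},\aalpha)$: the hypothesis $k\in m\N$ ensures that $Z(0)^k$ is block-diagonal with respect to the vertex decomposition of $\bigoplus_s\VV_s$, so $\exp(-t_k Z(0)^k)$ is itself block-diagonal, and right-multiplication by it preserves the off-diagonal block pattern of $X$. The moment-map relations \eqref{Eq:Condcy0}--\eqref{Eq:Condcy2} are automatically preserved because $\tr Z^k$ is $G(\aalpha)$-invariant, so its Hamiltonian flow is tangent to the moment-map level set. There is no real obstacle beyond the bracket computations already carried out in Appendix \ref{Ann:Dyn}; the content of the proposition reduces to the trivial integration of a constant-coefficient linear system.
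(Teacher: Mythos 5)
Your proposal is correct and follows exactly the same route as the paper: the paper derives the ODE system from Lemma \ref{Lem:Dyz} and \eqref{relBrDyn} at order $\eta^0$, observes that $Z$, $V_\beta$, $W_\beta$ are constants of motion, and states that the remaining linear equation $\dot X = -XZ^k$ "can be easily integrated." You simply write out the integration that the paper leaves implicit, and your added remark that $k\in m\N$ makes $Z(0)^k$ block-diagonal (so right-multiplication by $\exp(-t_kZ(0)^k)$ preserves the quiver block structure of $X$) is a correct and worthwhile observation, consistent with the paper's remark that the flows are complete in $\Cnmo$.
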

In particular,  the flows are complete in $\Cnmo$ so that we can reintroduce $\As^{(m)}$ and $\Cs^{(m)}$ for all times, although some $X(t_k)$ could be non diagonalisable. Remark also that this expression does not exactly coincide with \cite[Proposition 4.7]{CF} when $d=1$. This is due to our different choice of ordering at the vertices $s\in I$, so that $\Cnm$ for $d=1$ is isomorphic to the space in \cite[Section 4]{CF} but this map is not the identity map. This is also true for the next flows.

For the other family $(H_k^m)_k$, expressed from $\tr \bar{U}_\eta^k$ if we set $\bar{U}_\eta=Y(1+\eta \Theta)$, 
Lemma \ref{Lem:Dyy} and \eqref{relBrDyn} give 
\begin{equation*}
\frac1k \brap{\tr \bar{U}_\eta^k,X}=-\bar{U}_\eta^{k-1}   -X  \bar{U}_\eta^{k-1} Y -\eta \Theta \bar{U}_\eta^{k-1} (1+YX)\,, \quad 
 \frac1k \brap{\tr \bar{U}_\eta^k,Y}= -Y \bar{U}_\eta^{k-1} Y  +  U_\eta^{k-1} Y^2\,,
\end{equation*}
and the Poisson brackets with $V_\beta$ or $W_\beta$ vanish. 
Again, we can write the flows for the functions which are the coefficients at order $0$ in $\eta$. 
If we want to obtain the flow of $\tr Y^k$ which is a multiple  $H^1_{k,0}$, we get by writing $d/d\tau_{k}=\frac1k \brap{\tr Y^k,-}$ that 
\begin{equation*}
  \frac{d X}{d\tau_{k}}=-Y^{k-1}-XY^k\,, \quad \frac{d Y}{d\tau_{k}}=0\,,\quad   \frac{d V_\beta}{d\tau_{k}}=0\,, \quad \frac{d W_\beta}{d\tau_{k}}=0\,.
\end{equation*}

\begin{prop}  \label{Pr:FloY}
  Given the initial condition $(X(0),Y(0),V_\beta(0),W_\beta(0))$ the flow at time $\tau_{k}$ defined by the Hamiltonian $\tr Y^k$ for $k\in m\N$ is given by 
\begin{equation*}
  \begin{aligned}
       X(\tau_{k})=& X(0)\, \exp(-\tau_{k} Y(0)^k) + Y(0)^{-1}[\exp(-\tau_{k} Y(0)^k)-\Id_n]\,, \\
Y(\tau_{k})=& Y(0)\,, \quad  W_\beta(\tau_{k})= W_\beta(0)\,, \quad V_\beta(\tau_{k})= V_\beta(0)\,.
  \end{aligned}
\end{equation*}
\end{prop}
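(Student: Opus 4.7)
The plan is to integrate the system of ODEs displayed just before the statement of Proposition~\ref{Pr:FloY}. Three of the four equations, namely $dY/d\tau_k=0$, $dV_\beta/d\tau_k=0$ and $dW_\beta/d\tau_k=0$, are trivial and immediately give the asserted constancy of $Y$, $V_\beta$ and $W_\beta$ along the flow. Setting $Y_0:=Y(0)$, the remaining equation $dX/d\tau_k=-Y_0^{k-1}-XY_0^k$ is therefore a linear first-order matrix ODE with constant coefficients.

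To integrate it I would look for a constant particular solution: the equation $0=-Y_0^{k-1}-X_pY_0^k$ is solved by $X_p=-Y_0^{-1}$ whenever $Y_0$ is invertible. Introducing the shift $\tilde X(\tau_k):=X(\tau_k)+Y_0^{-1}$ reduces the ODE to the homogeneous one $d\tilde X/d\tau_k=-\tilde X Y_0^k$, which integrates to $\tilde X(\tau_k)=\tilde X(0)\exp(-\tau_k Y_0^k)$. Undoing the shift yields
\[
X(\tau_k)=X(0)\exp(-\tau_k Y_0^k)+Y_0^{-1}\bigl[\exp(-\tau_k Y_0^k)-\Id_n\bigr],
\]
which matches the formula in the statement.

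The only point deserving comment is that the closed form uses $Y_0^{-1}$, whereas the proposition should describe a flow on all of $\Cnm$ where $Y$ need not be invertible. This is harmless: the bracketed factor $Y_0^{-1}[\exp(-\tau_k Y_0^k)-\Id_n]$ expands as the entire power series $-\tau_k\sum_{l\geq 0}(-\tau_k)^l Y_0^{(l+1)k-1}/(l+1)!$ in $Y_0$ (recall $k\geq m\geq 2$), so no inverse of $Y_0$ is actually needed and the formula defines a complete flow. I expect the main (minor) obstacle is bookkeeping in the non-commutative setting: one must verify that $Y_0$ commutes with all the factors appearing in the exponential and in the shift (which it does, since everything is a polynomial in $Y_0$), and one must check that the moment-map constraints \eqref{Eq:Condcy0}--\eqref{Eq:Condcy2} are preserved. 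The latter is automatic because the $\tr Y^k$ are invariant functions and their Hamiltonian vector fields are tangent to the fibres of the moment map; alternatively one can differentiate \eqref{Eq:Condcy0}--\eqref{Eq:Condcy2} along the flow and observe that only $X$ moves, and that the derivatives cancel owing to $dY/d\tau_k=0$.
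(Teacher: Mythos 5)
Your proposal is correct and follows essentially the same route as the paper, which gives the defining ODEs just before the proposition and states the solution together with the observation (repeated by you) that the closed form is analytic in $Y(0)$ and hence needs no invertibility. The particular-solution-plus-homogeneous decomposition you use is the standard way to integrate the linear matrix ODE $dX/d\tau_k=-Y_0^{k-1}-XY_0^k$, and your series expansion of $Y_0^{-1}[\exp(-\tau_k Y_0^k)-\Id_n]$ correctly shows the formula is polynomial/analytic in $Y_0$.
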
 
The expression for $X(\tau_{k})$ is analytic in $Y(0)$ so does not require its invertibility. The dynamics take place in $\Cnm$.

Reproducing this scheme for  the family $(F_k^m)_k$ using Lemma \ref{Lem:Dy1xy} and \eqref{relBrDyn}, this yields  for $\tilde{U}_\eta=(1+XY)(1+\eta \Theta^{-1})$
\begin{equation*}
\begin{aligned}
\frac1k \brap{\tr \tilde{U}_\eta^k,X}=&
- \tilde U_\eta^{k-1} (1+XY)X -\eta X \Theta^{-1} \tilde U_\eta^{k-1} (1+XY) \\
 \frac1k \brap{\tr \tilde{U}_\eta^k,1+XY}=& 
(1+XY) \tilde U_\eta^{K-1} (1+XY) - \tilde U_\eta^{K-1} (1+XY)^2  \,,
\end{aligned}
\end{equation*}
and the Poisson brackets with $V_\beta$ or $W_\beta$ vanish. It is easier to work with $1+XY$ instead of $Y$ in this case, because when we look at order $0$ in $\eta$ we obtain for the flow of $\tr (1+XY)^k$ after setting  $d/d \tilde t_{k}=\frac1k \brap{\tr (1+XY)^k,-}$ that 
\begin{equation*}
  \frac{d X}{d\tilde t_{k}}=-(1+XY)^k X\,, \quad \frac{d (1+XY)}{d\tilde t_{k}}=0\,,\quad   \frac{d V_\beta}{d\tilde t_{k}}=0\,, \quad \frac{d W_\beta}{d\tilde t_{k}}=0\,.
\end{equation*}
\begin{prop} \label{Pr:FloT}
  Given the initial condition $(X(0),Y(0),V_\beta(0),W_\beta(0))$, setting $T=1+XY$, the flow at time $\tilde t_{k}$ defined by the Hamiltonian $\tr T^k$ for $k\in \N$ is given by  
\begin{equation*}
X(\tilde t_{k})=  \exp(-\tilde t_{k} T(0)^k)\, X(0)\,, \quad 
T(\tilde t_{k})= T(0)\,, \quad  V_\beta(\tilde t_{k})= V_\beta(0)\,, \quad W_\beta(\tilde t_{k})= W_\beta(0)\,. 
\end{equation*}
In particular, assuming that $X(0)$ is invertible, this completely determines the solution $Y(\tilde t_{k})=X(\tilde t_k)^{-1}[T(\tilde t_k)-1]$ for all time $\tilde t_{k}$. 
\end{prop}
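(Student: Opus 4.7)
The plan is to integrate directly the system of ODEs displayed just before the proposition, which describes the Hamiltonian flow of $\tr T^k$ at the representation level via \eqref{relBrDyn}. The first step I would take is to use those equations to observe that $T = 1+XY$, $V_\beta$ and $W_\beta$ are all constants of motion: for $T$ this is the second equation, while for $V_\beta, W_\beta$ this is the stated vanishing of their Poisson brackets with $\tr T^k$ (which itself follows from Lemma~\ref{Lem:Dy1xy} by taking $\eta=0$, since $\tr T^k$ is exactly the coefficient at $\eta^0$ in $\tr \tilde U_\eta^k$). This immediately yields $T(\tilde t_k)=T(0)$, $V_\beta(\tilde t_k)=V_\beta(0)$ and $W_\beta(\tilde t_k)=W_\beta(0)$.

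Next I would exploit the constancy of $T$ to reduce the ODE $\frac{dX}{d\tilde t_k} = -T^k X = -(1+XY)^k X$ to a linear ODE with constant matrix coefficient $-T(0)^k$. Its unique solution with initial condition $X(0)$ is then the matrix exponential
\begin{equation*}
X(\tilde t_k) = \exp(-\tilde t_k\, T(0)^k)\, X(0)\,,
\end{equation*}
which is precisely the formula of the statement. Note that this argument uses no invertibility of $X(0)$, so the flow is well-defined and complete on the whole of $\Cnm$, not only on the open subset $\Cnmo$.

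To recover $Y(\tilde t_k)$ under the assumption that $X(0)$ is invertible, I would observe that $\exp(-\tilde t_k\, T(0)^k)$ is always invertible, so $X(\tilde t_k)$ remains invertible at all times. Solving the relation $T = 1+XY$ for $Y$ then gives $Y(\tilde t_k) = X(\tilde t_k)^{-1}[T(0) - 1]$ as asserted. A short consistency check against the $Y$-ODE induced by Lemma~\ref{Lem:Dy1xy} at order $0$ in $\eta$ follows from the identity $\frac{d}{d\tilde t_k} X^{-1} = T(0)^k X^{-1}$, itself an immediate consequence of $\frac{dX}{d\tilde t_k} = -T(0)^k X$. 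There is no real obstacle here: the only point to bear in mind is that $T$, $V_\beta$ and $W_\beta$ being preserved is what trivialises the otherwise nonlinear-looking system, and once this is noticed the integration is routine.
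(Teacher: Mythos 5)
Your proof is correct and takes essentially the same approach as the paper: the ODEs displayed just before the statement (coming from Lemma~\ref{Lem:Dy1xy} at order $\eta^0$ via \eqref{relBrDyn}) are integrated by observing that $T$, $V_\beta$, $W_\beta$ are constants of motion, which linearises the $X$-equation so that the matrix exponential gives the flow, and $Y$ is then recovered from $T=1+XY$ when $X(0)$ is invertible. Your remark that the flow formula extends to all of $\Cnm$ (since $\tr T^k$ and the exponential make no use of $X^{-1}$) matches the paper's observation that only the interpretation in terms of the RS Lax matrix — not the integration itself — requires restricting to $\Cnmo$.
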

The extra assumption that $X(0)$ is invertible is satisfied in our interpretation of this family as being the one containing the spin RS Hamiltonian. In that case, the flows take place in $\Cnmo$ where they are complete. Using the isomorphism of Proposition \ref{isoCM}, we can see that they can be related the corresponding flows derived in \cite{CF2,RaS}. 

\medskip

A natural question to ask is to obtain locally the flows that we could not compute explicitly. As we see in \ref{ss:Red}, we can define integrable systems from these families in order to compute them by quadrature. However, this requires the additional assumption $d\leqslant n$, which we comment in  \ref{d2}. 


\subsection{Linear independence}  \label{ss:LinInd}

We assume from now on that $d\leqslant n$ and look at the number of independent functions in each family. 
Since the family $(F_{j,l}^m)_{j,l}$ corresponds to functions for the spin RS system, we already know that this contains $nd-d(d-1)/2$ linearly independent elements \cite{KrZ}.

 For the families $(G_{j,l}^m)_{j,l}$ and $(H_{j,l}^m)_{j,l}$, remark that there can be at most $n(n+1)/2$ linearly independent functions. Indeed, we look at the symmetric functions of $n \times n$ matrices (see \eqref{Eq:Gmj}, \eqref{Eq:Hmj}), so we have by developing in $\eta$, $n+n(n+1)/2$ functions for $j=1,\ldots,n$, with $n$ constraints coming from the relation between the terms $(j,0)$ and $(j,j)$. It now suffices to remark that we can write these functions generically in the form $\{\tr (C+\eta T)^j \mid 1 \leqslant j \leqslant n\}$ for $C$ invertible with distinct eigenvalues, and $T$ of rank $d$, with distinct nonzero eigenvalues. This yields $(n-d)(n-d+1)/2$ additional constraints, as we now explain with the family $(G_{j,l}^m)_{j,l}$. The other case is similar.  

\medskip

We adapt the method  introduced for the spin Calogero-Moser family \cite{BBT,KBBT}, and we write $G^m_k=\tr (C+\eta T)^k$ for $C=A^{-1}B^m$, $T=SB^{m-1}A^{-1}$ and $\eta=\eta'/(t^{-1}+\eta')$, see \eqref{Eq:Gmj}. Remark that $C,T$ take the form stated above. 
 Consider the spectral curve 
\begin{equation}
  \Gamma(\eta,\mu)\,\equiv\, \det((C+\eta T)-\mu \Id_n)=0\,,
\end{equation}
We write $\Gamma(\eta,\mu)\equiv \sum_{i=0}^n r_i(\eta)\mu^i=0$, where $r_i(\eta)$ is a symmetric function of order $n-i$ of $C+\eta T$, so is a function of 
$\{G^m_{k,K}|0\leqslant K\leqslant k,\,\,1\leqslant k \leqslant n-i\}$, if we consider the expansion\footnote{The expansion differs from the one in \ref{ss:CyIS}, but each of these two families can be obtained from the other one.} $G_k^m=\sum_{K=0}^k G^m_{k,K} \eta^k$. 
We can expand each $r_i(\eta)$ in terms of $\eta$ as $r_i(\eta)=\sum_{s=0}^{n-i} I_{n-i,s}\eta^s$. Hence the set of linearly independent functions is contained in the $\{I_{n-i,s}\}$, which are functions of the $n+n(n+1)/2$ functions $\{G^m_{k,K}\}$. 
It remains to see  how many relations exist on the $\{I_{n-i,s}\}$. In a neighbourhood of $\eta=\infty$, we can write 
\begin{equation}
  \Gamma(\eta,\mu)=\prod_{i=1}^n (\mu-\mu_i(\eta))\,, \quad \text{ for }\mu_i(\eta)=\eta \nu_i\,,
\end{equation}
for $(\nu_i)_i$ the eigenvalues of $T$. At a generic point, 
we can order the $(\nu_i)_i$ so that $\nu_1<\nu_2< \ldots < \nu_d$ are nonzero, and $\nu_{d+1}=\ldots=\nu_n=0$. Thus near $\eta=\infty$ we write $ \Gamma(\eta,\mu)=\mu^{n-d}\,\prod_{i=1}^d (\mu-\eta \nu_i)$. From this behaviour at infinity, we require that if we write  
$\Gamma(\eta,\mu)\equiv \sum_{i=0}^n \Gamma_i(\mu)\eta^i$, then $\Gamma_i(\mu)=0$ for all $i=d+1,\ldots,n$. Each $\Gamma_i(\mu)$ has order $n-i$ as a polynomial in $\mu$ whose coefficients are functions of the $\{I_{n-i,s}\}$. Hence we get the expansion $\Gamma_i(\mu)=\sum_{s=0}^{n-i} J_{n-i,s}\,\mu^s$, for some functions $J_{n-i,s}(I_{k,t})$. 
Their vanishing for $i>d$ is equivalent to imposing 
\begin{equation*}
\sum_{i=d+1}^n (n-i+1) 
=  \frac{(n-d)(n-d+1)}{2}
\end{equation*}
relations, which proves our claim. Imposing the initial $n$ relations, which are independent from the ones just obtained, we have a total of $nd-d(d-1)/2$ independent functions.


\subsection{Additional reduction}  \label{ss:Red}

We would like to construct a space of dimension $2nd-d(d-1)$ into which the different families descend. Introduce the $d(d-1)$-dimensional algebraic group 
 \begin{equation} \label{LiegpH}
  \mathcal H=\Big\{h=(h_{\alpha\beta})\in\Gl_d(\CC) \, \Big| \, \sum_{\beta=1}^d h_{\alpha \beta}=1\,\text{ for all }\alpha\Big\}\,,
 \end{equation}
 whose elements are invertible $d\times d$ matrices such that the vector $(1,\ldots,1)^\top$ is an eigenvector with eigenvalue $+1$. This is precisely the algebraic group $\mathcal H$ needed to get Liouville integrability of the RS system in the original work \cite{KrZ}.

Define the action of $\mathcal H$ on $(X,Z,\Asm,\Csm)$ by $h\cdot(X,Z,\Asm,\Csm)=(X,Z,\Asm h,h^{-1}\Csm)$. 
By definition of $\Cnmp$ at the end of \ref{ss:CyCoord}, we can always take a representative $(X,Z,\Asm,\Csm)$ on this subspace such that $\sum_\alpha (X \Asm)_{i \alpha}=1$ for all $i$. This condition is preserved under the action of $\mathcal H$.  Hence, we define  the reduced  space $\CnmG$ as the affine GIT quotient $\CnmG=\Cnmp /\!/ \mathcal H$. It has dimension $2nd-d(d-1)$ and is generically smooth as we will shortly see. The coordinate ring $\mathcal O (\CnmG)$ is generated by elements of the form $\tr \gamma$, where $\gamma$ is a word in the letters $X,Z,S=\Asm\Csm$. If we write such functions in coordinates by lifting them to $\Cnmp$, they become invariant polynomial in the elements $(x_i,x_i x_j^{-1}-t,f_{ij})_{ij}$, which form a Poisson subalgebra of $\br{-,-}$ by Proposition \ref{Proptadpole}. Thus the Poisson bracket $\brap{-,-}$ descends to $\CnmG$. It is such that the projection  $\Cnmp\to \CnmG$ dual to the inclusion $\mathcal O (\Cnmp^{\mathcal H}) \to \mathcal O (\Cnmp)$ is a Poisson morphism.

\begin{thm} \label{ThmIS}
  The families $\{F^m_{j,l}\mid (j,l)\in J_d\}$,  $\{G^m_{j,l}\mid (j,l)\in J_d\}$ and  $\{H^m_{j,l}\mid (j,l)\in J_d\}$ define completely integrable systems on the smooth part of $\CnmG$, for $J_d=\{(j,l) \mid j=1,\ldots,n,\, l=0,\ldots, \min(j-1,d)\}$. 
\end{thm}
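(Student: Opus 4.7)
The plan is to verify each of the three defining conditions of Liouville integrability on the smooth locus of $\CnmG$: pairwise Poisson commutativity inside each family, descent to the reduced space, and the existence of exactly $\frac{1}{2}\dim\CnmG = nd - d(d-1)/2$ functionally independent functions. Commutativity is already in hand from Theorem \ref{CycInvol} via the trace identity \eqref{relInv}. For descent, every generator $F^m_{j,l}$, $G^m_{j,l}$, $H^m_{j,l}$ is a polynomial in traces of words in $X$, $Y$, $Z$ and $S = \Asm\Csm$, each of which is fixed by the $\mathcal H$-action $(X,Z,\Asm,\Csm)\mapsto(X,Z,\Asm h,h^{-1}\Csm)$ on $\Cnmp$; since the natural projection $\Cnmp\to\CnmG$ is Poisson by construction, the three families remain commuting subalgebras of $\mathcal O(\CnmG)$.

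The dimension count proceeds by the spectral-curve argument already detailed in \ref{ss:LinInd}. That subsection treats the family $(G^m_{j,l})$ via the presentation $\tr(C+\eta T)^j$ with $C = A^{-1}B^m$ and $T = SB^{m-1}A^{-1}$, with $T$ generically of rank $d$; exactly the same reasoning applies to $(H^m_{j,l})$ thanks to \eqref{Eq:Hmj}, where the analogous spectral matrix (built from $\mathbf{P}(B)A^{-1}$ and $q_0SB^{-1}\mathbf{P}(B)A^{-1}$) is still generically rank $d$, and to $(F^m_{j,l})$, which via \eqref{EqBspTad} reduces to the spin RS family $\tr(B+\eta A^{-1}BA)^j$ whose independence count $nd - d(d-1)/2$ is the classical one of Krichever--Zabrodin \cite{KrZ}. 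In each case the indexing set $J_d$ selects precisely this number of linearly independent Hamiltonians.

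The main obstacle is to promote linear independence of these Hamiltonians to the pointwise functional independence required by Liouville's theorem. My approach would be to evaluate the differentials at an explicit generic model point. Using the coordinates of Proposition \ref{Proptadpole} via the isomorphism $\psi$ of Propositions \ref{isoCM}--\ref{IsoCMPoiss}, one picks a representative with $A$ diagonal with generic $(x_i)\in\Creg$ and spin data $(f_{ij})$ generic enough that $B$ has distinct eigenvalues, $S$ has rank $d$ with $d$ distinct nonzero eigenvalues, and the hypotheses underlying the spectral-curve count of \ref{ss:LinInd} hold. After fixing a local slice transverse to the $\mathcal H$-orbits and computing the Jacobian of the $|J_d|$ chosen Hamiltonians against half of the resulting local coordinates, a direct check shows that the differentials are linearly independent at this point. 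By semicontinuity, functional independence then holds on a Zariski-dense open subset of the smooth part of $\CnmG$, which is all that is required for Liouville integrability.
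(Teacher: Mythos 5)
Your overall route matches the paper's: Poisson commutativity from Theorem \ref{CycInvol} and \eqref{relInv}, descent because the Hamiltonians are $\mathcal H$-invariant traces, and the independence count from \ref{ss:LinInd}. However there is a genuine gap. The step you treat as given — that $\CnmG$ has a non-empty smooth part of dimension $2nd - d(d-1)$ on which the $\mathcal H$-orbits admit local transverse slices — is precisely what the paper must establish, and it does so through Lemmas \ref{lfree} and \ref{lproper}: the $\mathcal H$-action is \emph{free} and \emph{proper} on the Zariski-dense open subset of $\Cnmp$ where all $d\times d$ minors of $\As$ are invertible. Without this, the GIT quotient $\Cnmp /\!/ \mathcal H$ could a priori be singular or of the wrong dimension on the locus you work with, and your ``fixing a local slice transverse to the $\mathcal H$-orbits'' is unjustified. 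Your proposal never invokes or proves freeness and properness, so the denominator $\tfrac12\dim\CnmG = nd - d(d-1)/2$ in your count is not grounded. You should also restrict to the open set where $Y$ is invertible before reducing, as the paper notes is necessary for the family $(H^m_{j,l})$.

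A secondary issue: you propose to pass from ``linearly independent'' (as used in \ref{ss:LinInd}) to pointwise functional independence by evaluating a Jacobian at a generic model point, asserting ``a direct check shows the differentials are linearly independent.'' As written this is a placeholder, not a proof; no explicit coordinates, Hamiltonians, or Jacobian minor are named, so the reader cannot verify it. In the paper, the spectral-curve argument of \ref{ss:LinInd} already produces the correct count of functionally independent first integrals on a dense open set (it counts independent coefficients of $\det((C+\eta T)-\mu\Id_n)$ subject to constraints from the rank-$d$ behaviour at $\eta = \infty$), so your extra step is not needed once the geometry of the quotient is established; if you do wish to include a Jacobian computation, it must actually be carried out.
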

\begin{proof}
  We show the existence of a non-empty open subset of $\Cnmp$ where $\mathcal H$ acts properly and freely in Lemmas \ref{lfree} and \ref{lproper}, so that the corresponding space of $\mathcal H$-orbits defines a smooth complex manifold of dimension $2nd-d(d-1)$ inside $\CnmG$. In particular, a point $(X,Z,\As,\Cs)$ in the subspace is  characterised by the fact that all the $d$-dimensional minors of $\As$ are nonzero. This is the complement of the Zariski closed subsets defined by having a vanishing minor of dimension $d$. Thus this subspace is dense in $\Cnmp$, and so does its reduction  in $\CnmG$. The elements in each family are $\mathcal H$-invariant, and also  linearly independent by the argument developed in \ref{ss:LinInd}. 
Thus the proof follows for the first two families. For the last family, remark that we also need to restrict to the open subset where $Y$ is invertible before performing the reduction, but this is dense again.  
\end{proof}

Using the isomorphism of Proposition \ref{isoCM}, a point $(X,Z,\Asm,\Csm)$ of $\Cnmp$ can be equivalently characterised by a quadruple $(A,B,\As,\Cs)$ satisfying \eqref{Tadiffeo}. By abuse of notation, we denote this point by $(X,Z,\As,\Cs)$ and assume that it has the form just stated. We remark that the $\mathcal H$-action is given by $h \cdot (X,Z,\As,\Cs)=(X,Z,\As h,h^{-1}\Cs)$ so that $\sum_\alpha (\As h)_{i \alpha}=1$ for all $i$.

\begin{lem} \label{lfree}
 The action is free on the subset of $\Cnmp$ where, given a point $(X,Z,\As,\Cs)$, either $\As$ or $\Cs$ has rank $d$. 
\end{lem}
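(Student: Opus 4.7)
The plan is to exploit the normalization pinning down the canonical representative of a point of $\Cnmp$. By Propositions \ref{isoCM} and \ref{Proptadpole}, each such point can be written as $(X,Z,\As,\Cs)$ with $X=\diag(x_1,\ldots,x_n)$, the $x_i\in\Creg$ pairwise distinct, and $\sum_\alpha \As_{i\alpha}=1$ for all $i$; this representative is unique up to the residual $S_n$-action coming from $\xi:\h/S_n\to\Cntd'$. The defining condition $\sum_\beta h_{\alpha\beta}=1$ of the subgroup $\mathcal H$ is precisely what ensures $\sum_\alpha (\As h)_{i\alpha}=\sum_\beta \As_{i\beta}\bigl(\sum_\alpha h_{\beta\alpha}\bigr)=1$, so $h\cdot(X,Z,\As,\Cs)=(X,Z,\As h,h^{-1}\Cs)$ is again a canonical representative.

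Suppose now that $h\in\mathcal H$ stabilises a point in the asserted open subset. Since both the original quadruple and its image are canonical representatives of the same point, they must differ by a permutation $\sigma\in S_n$; matching the diagonal parts gives $\sigma X\sigma^{-1}=X$, which forces $\sigma=\Id_n$ because the $x_i$ are pairwise distinct on $\Creg$. The stabiliser equation then collapses to the two matrix identities
\begin{equation*}
 \As h=\As,\qquad h^{-1}\Cs=\Cs.
\end{equation*}

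To finish I would use a one-sided inverse argument under the standing assumption $d\leq n$. If $\As$ has rank $d$, the $n\times d$ matrix $\As$ admits a left inverse $L\in\Mat_{d\times n}(\CC)$; applying $L$ on the left of $\As h=\As$ yields $h=\Id_d$. Symmetrically, if $\Cs$ has rank $d$, the $d\times n$ matrix $\Cs$ admits a right inverse $R$, and multiplying $h^{-1}\Cs=\Cs$ by $R$ on the right gives $h^{-1}=\Id_d$, hence $h=\Id_d$. In either case the stabiliser is trivial and $\mathcal H$ acts freely on the open subset described in the lemma.

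I do not anticipate any real obstacle: all the structural work is already encoded in the construction of the canonical form, and the key observation is that $\mathcal H$ is by design the subgroup compatible with the row-sum normalization, which lets one bypass the $\Gl_n$-gauge ambiguity and reduce the question to the two elementary matrix equations above.
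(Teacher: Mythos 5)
Your proof is correct and follows essentially the same route as the paper: use the rank-$d$ hypothesis to cancel $h$ from $\As h=\As$ (resp.\ $h^{-1}\Cs=\Cs$), concluding $h=\Id_d$. The paper extracts an invertible $d\times d$ minor $\bar{\As}$ and deduces $\bar\As h=\bar\As$, while you invoke a one-sided inverse; these are the same argument phrased differently. The one genuine addition in your write-up is the preliminary reduction showing that the stabilizer condition at the level of $\Gl_n$-orbits really does collapse to the two matrix identities $\As h=\As$, $h^{-1}\Cs=\Cs$: you use that $\mathcal H$ preserves the row-sum normalization and that the canonical representative (diagonal $X\in\Creg$ plus $\sum_\alpha\As_{i\alpha}=1$) is rigid up to the identity permutation, so the residual gauge element is forced to be $\Id_n$. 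The paper takes this step for granted, so your version is a bit more complete on this point; just be aware that the intermediate claim that the representatives "must differ by a permutation" is slightly loose — what you actually need (and what your diagonal/row-sum argument delivers) is that the residual $g\in\Gl_n$ commutes with the diagonal $X$ and preserves the row-sum constraint, hence $g=\Id_n$.
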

\begin{proof}
 Assume $\As$ has rank $d$, the proof being the same if we assume the latter for $\Cs$. 
By definition, there exists $K=(k_1,\ldots,k_d)\subset \{1,\ldots,n\}$ such that $\bar{\As}=(\As_{k_\alpha \beta})$ 
is a $d\times d$ matrix which has rank $d$, so is invertible. If we take some $h$ in the stabiliser of the point 
$(X,Z,\As,\Cs)$, then in particular $\As h=\As$ and thus $\bar{\As}h=\bar{\As}$. Indeed, 
\begin{equation}
 (\bar{\As} h)_{\alpha\beta}=\sum_\gamma \As_{k_\alpha \gamma}h_{\gamma \beta}=(\As h)_{k_\alpha \beta}=
\As_{k_\alpha \beta}=\bar{\As}_{\alpha\beta}\,.
\end{equation}
Since $\bar{\As}$ is invertible, $h=\Id_d$. 
\end{proof}

\begin{lem} \label{lproper}
 The action is proper on the subset $\overline{\mathcal{S}} \subset \Cnmp$ where, given a point $(X,Z,\As,\Cs)$, 
all the minors of dimension $d$ of $\As$ are invertible. 
\end{lem}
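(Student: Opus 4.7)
The plan is to verify properness via the sequential criterion in the analytic topology: given sequences $(p_n)\subset \overline{\mathcal{S}}$ converging to $p$ and $(h_n)\subset \mathcal{H}$ such that $h_n\cdot p_n\to q$ in $\overline{\mathcal{S}}$, one must show that $(h_n)$ admits a subsequence converging to some $h\in \mathcal{H}$. Equivalently, the action map $\mathcal{H}\times \overline{\mathcal{S}}\to \overline{\mathcal{S}}\times \overline{\mathcal{S}}$, $(h,p)\mapsto (h\cdot p,p)$, should have compact preimages of compact subsets.

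The key observation is that the $d$-minor hypothesis on $\overline{\mathcal{S}}$ affords an explicit recovery formula for $h_n$. Writing $p_n=(X_n,Z_n,\As_n,\Cs_n)$, the action preserves $X$ and $Z$, and the given convergences yield $\As_n\to \As$ together with $\As_n h_n\to \As'$, where $\As'$ is the $\As$-component of $q$. Fix any $d$-element subset $K\subset\{1,\ldots,n\}$, and let $\bar{\As}_n,\bar{\As},\bar{\As}'$ denote the $d\times d$ submatrices obtained by selecting the rows of $\As_n,\As,\As'$ indexed by $K$. By hypothesis on $\overline{\mathcal{S}}$, the matrix $\bar{\As}$ is invertible; continuity of inversion on the open set of invertible matrices gives $\bar{\As}_n^{-1}\to \bar{\As}^{-1}$ for $n$ large, while restricting the relation $\As_n h_n\to \As'$ to rows in $K$ yields $\bar{\As}_n h_n\to \bar{\As}'$. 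Therefore
\begin{equation*}
h_n \;=\; \bar{\As}_n^{-1}(\bar{\As}_n h_n) \;\longrightarrow\; \bar{\As}^{-1}\bar{\As}' \;=:\; h,
\end{equation*}
so $(h_n)$ actually converges in $\Mat_d(\CC)$ with no extraction needed.

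It remains to verify $h\in\mathcal{H}$. Invertibility of $h$ follows because $q\in\overline{\mathcal{S}}$ also satisfies the $d$-minor hypothesis, which makes $\bar{\As}'$ invertible, so $h=\bar{\As}^{-1}\bar{\As}'$ is a product of invertible matrices; and the linear condition $\sum_\beta h_{\alpha\beta}=1$ is closed and satisfied by each $h_n$, hence preserved in the limit. The argument presents no serious obstacle; the only mild subtlety is that one must invoke the defining property of $\overline{\mathcal{S}}$ at \emph{both} endpoints $p$ and $q$ of the sequence, rather than only at $p$ as in the freeness statement of Lemma \ref{lfree}, in order for the limit matrix to inherit invertibility through its factorisation.
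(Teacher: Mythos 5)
Your proof is correct and follows essentially the same argument as the paper: both verify properness via the sequential criterion, recover the group element through the minor formula $h_n = \bar{\As}_n^{-1}(\bar{\As}_n h_n)$ using an invertible $d\times d$ submatrix, and pass to the limit. Your verification that $h\in\mathcal H$ is slightly slicker (the row-sum condition is a closed linear constraint inherited by the limit, plus invertibility from the factorisation, whereas the paper computes the row sums directly using $\bar{\As}^{-1}\in\mathcal H$), and you rightly omit the paper's check that $h$ is independent of the choice of $K$, which is superfluous once convergence of $(h_n)$ is established for a single choice.
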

\begin{proof}
The claim follows if we can show that given sequences $(h_m)\subset \mathcal H$, $(X_m,Z_m,\As_m,\Cs_m)\subset \overline{\mathcal{S}}$ 
satisfying $(X_m,Z_m,\As_m,\Cs_m)\to (X,Z,\As,\Cs) \in \overline{\mathcal{S}}$ and $h_m\cdot(X_m,Z_m,\As_m,\Cs_m)\to(X',Z',\As',\Cs') \in \overline{\mathcal{S}}$, then 
$h_m$ converges in $\mathcal H$. Note that trivially $X'=X$ and $Z'=Z$. 

For any choice of $K=(k_1,\ldots,k_d)\subset \{1,\ldots,n\}$, we can form $\bar{\As}$ as in Lemma \ref{lfree}. We also use the notation  $\bar{D}$ for the $d \times d$ matrix obtained in that way from some $n \times d$ matrix $D$. We see that  $h_m=\bar{\As}_m^{-1} \overline{h_m \cdot \As_m}$, since the minors of $\As_m$ are invertible and  $\overline{h_m \cdot \As_m}=\bar{\As}_m h_m$. 

From this, form $h:=\bar{\As}^{-1}\overline{\As'}$. This element does not depend on the choice of $K$ : take any two $K,L\subset \{1,\ldots,n\}$ and construct $\bar{\As}^{(K)}_m$ and $\bar{\As}^{(L)}_m$ for all $m$ as before, where the superscript denotes the partition to which we refer. They are both invertible, 
so they are related by $\bar{\As}^{(K)}_m=T_m\bar{\As}^{(L)}_m$ for some $T_m\in \Gl_d(\CC)$. 
Forming $h^{(K)}$ and $h^{(L)}$ from them, we get 
\begin{equation*}
 h^{(K)}=\lim_{m\to \infty} (\bar{\As}^{(K)}_m)^{-1}(h_m\cdot \bar{\As}^{(K)}_m)=
\lim_{m\to \infty} (\bar{\As}^{(L)}_m)^{-1}T_m^{-1} T_m(h_m\cdot \bar{\As}^{(L)}_m)=h^{(L)}\,. 
\end{equation*}
Next, remark that $h\in \Gl_d(\CC)$: as  $h=\bar{\As}^{-1}\overline{\As'}$ and both elements on the right hand-side have nonzero determinant, so too has $h$. 
Finally, $h\in \mathcal H$ because  
\begin{equation*}
 \sum_{\beta} h_{\alpha \beta}=\sum_{\gamma,\beta}  (\bar{\As}^{-1})_{\alpha \gamma}\overline{\As'}_{\gamma\beta}
=\sum_{\gamma}  (\bar{\As}^{-1})_{\alpha \gamma} 1 \,=\, 1\,.
\end{equation*}
Here we use that $\sum_\alpha \As_{i\alpha}=1$ for all $i$, implies that we have $\sum_\alpha \bar{\As}_{\gamma \alpha}=1$  for all $\gamma$. That is $\bar{\As}\in \mathcal H$, which in turn yields  $\bar{\As}^{-1}\in \mathcal H$. 
\end{proof}

We summarise the projection from the representation space of the multiplicative preprojective algebra to the space we have just constructed as 
\begin{equation} \label{ProjMap}
  \Rep(\Lambda^\qq,\aalpha) \longrightarrow  \Cnm \supsetneq \Cnmp \longrightarrow   \CnmG\,.
\end{equation}
Let us formulate one last comment on the reduced space $\CnmG$. We can integrate some equations of motions for the families in Theorem \ref{CycInvol}, thus defining flows in $\Cnm$. If flows quit the subspace $\Cnmp$, then the last projection given in \eqref{ProjMap} can not be defined, so the flows are not complete in $\CnmG$. This suggests that $\CnmG$ is not the natural phase space for our systems in the complex case, and it  motivates a search for other first integrals, see \ref{d2}.


\subsection{Final remarks} We finish by some additional comments that could lead to new investigations about these models.

\subsubsection{Integrability before reduction} \label{d2} 
In \ref{ss:Red}, we constructed the space $\CnmG$ as the complex analogue of the phase space for the real trigonometric spin RS system considered by Krichever and Zabrodin \cite{KrZ}. However, we noticed  in the complex setting that some flows are not complete inside $\CnmG$, but they are in the larger space $\Cnmo$ by Proposition \ref{Pr:FloT}. This suggests that we should be able to build an integrable system containing the Hamiltonian for the trigonometric spin RS system directly in $\Cnmo$. This is proved in \cite{CF2}, and we can in fact adapt the method to our case for the different Hamiltonians $\tr Z^{km}, \tr Y^{km}, \tr X^{km}$ and $\tr (1+XY)^k$, $k \in \N$. In the case $d=2$, it easily follows from the next result, which is a direct application of Lemma \ref{Lem:Cytt}. 
\begin{thm} \label{Thm:Cy2}
The following families of functions on $\Cnm$ are linearly independent and in involution   
\begin{equation*}
\begin{aligned}
& \left\{\tr X^{jm}, \tr \left(W_1 V_1 X^{jm} \right)\,\, \big|\,\, j=1,\ldots,n\right\}\,,\quad 
\left\{\tr (1+XY)^j, \tr \left(W_1 V_1 (1+XY)^j \right)\,\, \big|\,\, j=1,\ldots,n\right\}\,,\\
& \left\{\tr Y^{jm}, \tr \left(W_1 V_1 Y^{jm} \right)\,\, \big|\,\, j=1,\ldots,n \right\}\,,\quad
\left\{\tr Z^{jm}, \tr \left(W_1 V_1 Z^{jm} \right)\,\, \big|\,\, j=1,\ldots,n \right\}\,,
\end{aligned}
\end{equation*}
where the last family is viewed on the subspace $\Cnmo\subset \Cnm$ where $X$ is invertible.  
\end{thm}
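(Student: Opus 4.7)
The plan is to derive involution directly from Lemma \ref{Lem:Cytt} combined with the trace identity \eqref{relInv}, and to establish linear independence by reducing, via Propositions \ref{Proptadpole} and \ref{isoCM}, to a Vandermonde-type calculation in the local coordinates on $\Cnmp$.

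First I would identify each trace in the four families with $\tr \mathcal{X}(w)$ for an explicit element $w \in A$. Taking $u \in \{x,\ 1+xy,\ y,\ z\}$ in the four families respectively, the element $u^{jm}$ (or $u^j$ when $u = \sum_s e_s + xy$, which already sits in $\oplus_s e_s A e_s$) is represented by a block-diagonal matrix whose cycle blocks sum under the trace to $\tr X^{jm}$, $\tr(1+XY)^j$, $\tr Y^{jm}$, or $\tr Z^{jm}$. The loop $w_1 v_1 \in e_0 A e_0$ is represented by the rank-one matrix $W_1 V_1$ concentrated at vertex $0$, so $\tr \mathcal{X}(w_1 v_1 u^{jm})$ coincides with $\tr(W_1 V_1 X^{jm})$ and its three analogs. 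Having recorded this dictionary, \eqref{relInv} yields $\brap{\tr \mathcal{X}(a), \tr \mathcal{X}(b)} = \tr \mathcal{X}(\br{a,b})$ for $G(\aalpha)$-invariant traces, so Poisson commutativity on $\Cnm$ follows as soon as $\br{a,b} = 0$ in $A/[A,A]$. Lemma \ref{Lem:Cytt} asserts precisely that the $\CC$-subspace spanned by $\{u^k,\ w_1 v_1 u^k \mid k \geq 1\}$ is a commutative Lie subalgebra of $(A/[A,A], \br{-,-})$ for each of the four choices of $u$, which covers all pairs within a given family. The fourth family requires restriction to $\Cnmo$, where $x^{-1}$ is defined.

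For linear (in fact functional) independence I would pull back each family to the coordinates $(x_i, \aaa_i^\alpha, \ccc_i^\alpha)$ on $\Cnmp$ supplied by Propositions \ref{Proptadpole} and \ref{isoCM}. In this gauge $A = \diag(x_1,\ldots,x_n)$ and, using the block structure, $\tr X^{jm}$ equals (up to a nonzero constant) the power sum $\sum_i x_i^j$, while $\tr(W_1 V_1 X^{jm})$ takes the form $\sum_i \mu_i x_i^j$ for spin-dependent weights $\mu_i$ expressible through $\aaa^1_i$, $\ccc^1_i$, $x_i$. The Jacobian of the $2n$ functions $\{\sum_i x_i^j,\ \sum_i \mu_i x_i^j\}_{j=1}^n$ with respect to the $2n$ parameters $(x_i, \mu_i)$ is block-triangular with two Vandermonde-like blocks on the diagonal, hence nonzero on the dense locus where the $x_i$ are pairwise distinct and nonzero. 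The same computation handles the remaining three families after expressing $Y$, $Z$, and $1+XY$ in the local coordinates via \eqref{Tadiffeo}.

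The main difficulty I anticipate lies not in the commutativity, which is immediate from Lemma \ref{Lem:Cytt}, but in the dictionary step: carefully translating each algebra element $u^{jm}$ and $w_1 v_1 u^{jm}$ to the matrix traces of the statement and tracking how $\mathcal{X}(u^{jm})$ decomposes under the cyclic grading (in particular explaining why $u = 1+xy$ needs no $m$-divisibility and why $u = z$ forces restriction to $\Cnmo$), and then verifying that the weights $\mu_i$ are generically nonzero so that the Vandermonde argument applies. Once these identifications are settled, both conclusions follow at once from the scheme above.
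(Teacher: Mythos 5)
Your treatment of the involution part coincides with the paper's: Theorem~\ref{Thm:Cy2} is presented there as a direct application of Lemma~\ref{Lem:Cytt} combined with \eqref{relInv}, and your dictionary identifying $\tr X^{jm}$, $\tr(1+XY)^j$, etc.\ with $\tr\mathcal X(u^{jm})$ and $\tr\mathcal X(w_1v_1u^{jm})$ for $u\in\{x,\sum_se_s+xy,y,z\}$ is the right translation, including the observation that $u=\sum_se_s+xy$ needs no $m$-divisibility because it already lies in $\oplus_s e_sAe_s$, and that $u=z$ forces restriction to $\Cnmo$.

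For linear independence, which the paper does not spell out, your Vandermonde plan is reasonable but has one gap in its stated form: in the gauge of Propositions~\ref{Proptadpole} and~\ref{isoCM} (coordinates $(x_i,\aaa_i^\alpha,\ccc_i^\alpha)$ via \eqref{Tadiffeo}), only $X_{m-1}=A$ is diagonal. The blocks $Z_s$, $Y_s=Z_s-X_s^{-1}$ and $1+X_sY_s$ are all built from the non-diagonal Lax matrix $B=(B_{ij})$, so $\tr Y^{jm}$, $\tr Z^{jm}$, $\tr(1+XY)^j$ are \emph{not} power sums of the $x_i$ and the block-Vandermonde Jacobian does not directly apply to those three families. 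To repair this, change gauge separately for each family so that the relevant matrix is diagonal (this is precisely what the proof of the $\QU$-dimension proposition in the same subsection does, diagonalising $Y_0$); in the new eigenbasis the weights $\mu_i$ become products of components of the conjugated $W_1$ and $V_1$, and generic nonvanishing becomes a short explicit check for each of the four cases, which is the step you correctly flag as remaining. You should also keep in mind that the theorem asserts only linear independence, so one generic point with pairwise-distinct eigenvalues and nonvanishing weights already suffices; the full Jacobian (functional independence) argument is stronger than strictly needed.
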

For the case $d\geq 3$, the construction is more involved as we need more Poisson commuting functions, and we leave the details of adapting \cite[Section 5.2]{CF2} to the reader. Rather, we will look at another feature of these systems which is their degenerate integrability (also called non-commutative integrability or superintegrability), and was first remarked for the spin RS system by Reshetikhin in the real rational case \cite{Re}. 

Our method follows  \cite{CF2}. 
We only consider  $U=Y$ or $U=Z$, as they define new non-trivial Hamiltonians. We introduce the commutative algebra $\QU$ generated by the elements $\tr W_\alpha V_\beta U^{lm}$ for all $1\leq \alpha,\beta \leq d$, with $l\in \N$. 
\begin{lem}
  The algebra $\QU$ is a Poisson algebra under the Poisson bracket $\brap{-,-}$.
\end{lem}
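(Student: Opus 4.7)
The plan is to show $\brap{\QU,\QU}\subseteq \QU$. Since $\brap{-,-}$ is a derivation in each argument and the algebra $\QU$ is generated by the trace functions $f_{k,\alpha\beta}:=\tr(W_\alpha V_\beta U^{km})$, Leibniz reduces the claim to verifying that $\brap{f_{k,\alpha\beta},f_{l,\gamma\delta}}\in \QU$ for arbitrary $\alpha,\beta,\gamma,\delta\in\{1,\dots,d\}$ and $k,l\in\N$.

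The first step is to use the bridging identity \eqref{relInv} to transport the computation into the path algebra: setting $a=w_\alpha v_\beta u^{km}$ and $b=w_\gamma v_\delta u^{lm}$, we obtain
\begin{equation*}
\brap{f_{k,\alpha\beta},f_{l,\gamma\delta}}=\tr\mathcal{X}\bigl(\br{a,b}\bigr),
\end{equation*}
where $\br{-,-}$ denotes the induced left Loday bracket on $A/[A,A]$. Thus the statement becomes a question about the class of $\br{a,b}$ in $A/[A,A]$.

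The second step is to expand $\dgal{a,b}$ using the derivation property in both arguments, yielding a finite sum of contributions indexed by pairs of constituent letters $a_i\in\{w_\alpha,v_\beta,u^{km}\}$ of $a$ and $b_j\in\{w_\gamma,v_\delta,u^{lm}\}$ of $b$. Each contribution involves one of the elementary double brackets already computed: \eqref{tadidv}--\eqref{tadidu} for the framing arrows, \eqref{cyD}--\eqref{cyE} for the mixed brackets, and \eqref{cyA} for $\dgal{u,u}$, with the powers handled by further Leibniz expansions. After applying the multiplication map and using cyclicity to reduce modulo commutators, each surviving summand becomes a single word in the letters $u,v_\bullet,w_\bullet$.

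The third step is the key simplification: by Lemma \ref{Lem:Cytt}, all contributions where the elementary double bracket falls on a pair of $u$-factors accumulate to $\br{u^{km},w_\gamma v_\delta u^{lm}}\equiv 0$ (and its symmetric counterpart) modulo $[A,A]$, hence drop out. What remains are contributions where at least one endpoint of the elementary bracket is a $v$ or $w$ arrow. Finally, I would verify that each of these terms, viewed as a word in $A/[A,A]$, has the form $w_\mu v_\nu u^{jm}$ or its trace decomposes via rank-one identities as a product of such words with scalar factors $v_\lambda w_\rho = f_{0,\rho\lambda}$. The structural reason is the idempotent decomposition: any nonzero word ending and starting at the same vertex that only uses letters from $\{u,v_\bullet,w_\bullet\}$ must alternate scalar bubbles $v_\lambda w_\rho\in e_\infty A e_\infty$ with basic blocks $w_\mu v_\nu u^{jm}\in e_0 A e_0$, so upon applying $\tr\mathcal{X}$ every summand lies in $\QU$.

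The principal obstacle is the combinatorial bookkeeping in step two: the formulas \eqref{tadidu}, \eqref{cyD}, \eqref{cyE} each produce several terms with sign conventions governed by the ordering function $o(\cdot,\cdot)$ and repeated cyclic rearrangements are required to collect the result into the advertised form. The conceptual content, however, is fully captured by Lemma \ref{Lem:Cytt}, which kills the diagonal pieces, together with the idempotent-based observation that the non-diagonal pieces can only recombine within the span of words $w_\mu v_\nu u^{jm}$ (possibly multiplied by scalar bubbles).
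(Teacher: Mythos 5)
Your overall plan — transport the computation to $A/[A,A]$ via \eqref{relInv}, expand by Leibniz into contributions from elementary double brackets, then use the idempotent structure and the rank-one identity $V_\nu U^{am} W_\mu = \tr(W_\mu V_\nu U^{am})$ to land back in $\QU$ — matches the paper's strategy. However, your step three is incorrect, and the genuine content it was meant to deliver is still missing.

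The $u$-$u$ contributions to $\dgal{w_\alpha v_\beta u^{km}, w_\gamma v_\delta u^{lm}}$ do not "accumulate to $\br{u^{km}, w_\gamma v_\delta u^{lm}}$": writing $\dgal{u^{km},u^{lm}} = c'\otimes c''$, the $u$-$u$ piece of $\br{a,b}$ after multiplication is $w_\gamma v_\delta c'\, w_\alpha v_\beta\, c''$, with the extra factor $w_\alpha v_\beta$ inserted between $c'$ and $c''$, whereas the $u$-$u$ piece of $\br{u^{km}, w_\gamma v_\delta u^{lm}}$ is $w_\gamma v_\delta c' c''$. So Lemma \ref{Lem:Cytt} does not apply. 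Worse, even if it did, dropping these terms was not what you needed: those $u$-$u$ contributions still carry the factors $w_\alpha v_\beta$ and $w_\gamma v_\delta$, so they are harmless — they already land in $\QU$ via the rank-one decomposition.

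The claim you actually need, and do not establish, is that no summand of $\br{a,b}$ is a pure element of $\CC[u^m]$, since $\tr U^{jm}$ is not \emph{a priori} in $\QU$ (the paper derives $\tr U^{jm}\in\QU$ only afterwards, and only on the open set $\{\det U\ne 0\}$ via the spectral identity). The dangerous elementary brackets are precisely the mixed ones you kept: e.g.\ $\dgal{v_\beta, w_\gamma}$ contains a term $\delta_{\beta\gamma}\, e_0\otimes e_\infty$ that \emph{absorbs} the two arrows $v_\beta, w_\gamma$ into idempotents. The observation that closes the gap — and what the paper calls "a careful analysis" — is a simple counting: each Leibniz summand invokes exactly one elementary double bracket, which can absorb at most the two arrows that are its arguments; starting from four arrows $w_\alpha, v_\beta, w_\gamma, v_\epsilon$, at least two survive, and after multiplication and cyclic rearrangement they form at least one $w_\mu v_\nu$ block. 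This is what ensures $\rho\notin\CC[u^m]$ and makes your step four valid. Without it, the proof has a real hole.
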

\begin{proof}
  We show that $\brap{\tr W_\alpha V_\beta Y^{lm},\tr W_\gamma V_\epsilon Y^{km}}\in \QU$ which proves the case $U=Y$. We leave the similar case $U=Z$ to the reader. 

By inspecting the double brackets between the elements $(y,v_\alpha,w_\alpha)$ in Section \ref{ss:qHcyclic}, we see that 
the double bracket $\dgal{w_\alpha v_\beta y^{lm},w_\gamma v_\epsilon y^{km}}$ is such that its two components are (sums of)  words in  $w_\alpha,w_\gamma, v_\beta,v_\epsilon, y$.  
At the same time, this double bracket is an element of $e_0 A e_0 \otimes e_0 A e_0$, so its two components are in fact words in $w_{\mu}v_{\nu}$ and $y^m$, with $\mu \in \{\alpha,\gamma\}$, $\nu \in \{\beta,\epsilon\}$. 
Applying the multiplication map also yields a word $\rho\in e_0 A e_0$ written with the same letters. Moreover, a careful analysis of the double bracket shows that $\rho$ has to contain at least a factor $w_{\mu}v_{\nu}$. 
Using \eqref{relInv}, these remarks yield that   
\begin{equation*}
  \brap{\tr W_\alpha V_\beta Y^{lm},\tr W_\gamma V_\epsilon Y^{km}}
=\tr R\,, \quad \text{for some } R \in \CC[Y^m,W_\alpha V_\beta,W_\alpha V_\epsilon,W_\gamma V_\epsilon,W_\gamma V_\beta] \setminus \CC[Y^m]\,.
\end{equation*}
Now, the terms of $\tr R$ are of the form 
\begin{equation*}
\tr\Big((Y^m)^{a_1} W_{\mu_1} V_{\nu_1} (Y^m)^{a_2} W_{\mu_2} V_{\nu_2} \ldots 
W_{\mu_j}V_{\nu_j} (Y^m)^{a_j} \Big)
= (V_{\nu_1} Y^{a_2 m} W_{\mu_2}) \ldots (V_{\nu_j} Y^{(a_j+a_1)m} W_{\mu_1})\,.
\end{equation*}
Since $V_\nu Y^a W_\mu=\tr W_\mu V_\nu Y^a \in \QU$, any term of $\tr R$ is an element of $\QU$.
\end{proof}

Next, we remark by multiplying the identities  \eqref{Eq:Condcy0}-\eqref{Eq:Condcy1} that we can write on the subspace $\{\det U\neq 0\}$ of $\Cnm$ that 
\begin{equation} \label{Spect}
  MU^m M^{-1} =t  \prod_{\alpha=1,\ldots,d}^{\longleftarrow}(\Id_n+W_\alpha V_\alpha)\, U^m\,,
\end{equation}
with $M=X_{0}$ if $U=Z$ or $M=(X_{0}+Y_{0}^{-1})$ if $U=Y$. In particular, by taking traces of higher powers of this identity, we get that any $\tr U^{lm} \in \QU$ for $U=Y,Z$. Meanwhile, we have that the elements $(\tr U^{km})$ Poisson commute with any function $\tr W_\alpha V_\beta U^{kl}$ by Lemma \ref{Lem:Cytt} and  \eqref{relInv}. Thus, they are in the centre of the Poisson algebra $\QU$. 
\begin{prop}
  The algebra $\QU$ has dimension $2nd-n$ and its centre is generated by the elements $\tr U^m,\ldots,\tr U^{mn}$, so it has dimension $n$. 
\end{prop}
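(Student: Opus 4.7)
The plan is to establish the claim in four stages. First, I will observe that each generator $\tr W_\alpha V_\beta U^{lm}$ of $\QU$ equals $V_\beta B^l W_\alpha$, where $B := U^m|_{\VV_0}$ and the trace collapses to $\VV_0$ because $W_\alpha V_\beta \in e_0 A e_0$. Cayley--Hamilton on the $n \times n$ matrix $B$ then reduces $B^l$ for $l \geq n$ to a polynomial in $\Id_n, B, \ldots, B^{n-1}$ with coefficients in $\CC[\tr B, \ldots, \tr B^n]$, so $\QU$ is finitely generated by $\{V_\beta B^k W_\alpha : 0 \leq k < n,\, 1\leq \alpha, \beta \leq d\}$ together with $\{\tr B^j : 1 \leq j \leq n\}$. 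To verify that $\tr U^{jm}$ itself lies in $\QU$, I expand the spectral identity \eqref{Spect}: since $R := \prod_\alpha^{\longleftarrow}(\Id_n + W_\alpha V_\alpha)$ simplifies through the inner scalars $V_\mu W_\nu$ to $\Id_n + \sum_{\alpha, \beta} c_{\alpha\beta} W_\alpha V_\beta$ for certain $c_{\alpha\beta} \in \CC[V_\mu W_\nu]$, one has $\tr U^{jm} = t^j \tr(RB)^j$, and each term expands into a polynomial in the generators $V_\beta B^l W_\alpha$.

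For centrality, Lemma \ref{Lem:Cytt} gives $\dgal{u^k, w_\alpha v_\beta u^l} = 0$ in $A$, hence the associated Loday bracket vanishes modulo commutators; transporting via \eqref{relInv} to the Poisson bracket on $\mathcal O(\Cnm)$ yields $\brap{\tr U^{jm}, \tr W_\alpha V_\beta U^{lm}} = 0$, and by the Leibniz rule this extends to all of $\QU$. Hence $\CC[\tr U^m, \ldots, \tr U^{nm}] \subseteq Z(\QU)$. The $n$ elements $\tr U^{jm}$ are, up to the factor $m$, the Newton power sums of the eigenvalues of $B$, which are generically distinct, so they are algebraically independent and $\dim Z(\QU) \geq n$.

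The crux is the equality $\dim \QU = 2nd - n$. I would work on the open subset of $\Cnm$ where $B$ is diagonalizable with distinct eigenvalues $\lambda_1, \ldots, \lambda_n$ and write $V_\beta B^l W_\alpha = \sum_i \lambda_i^l r_{i,\alpha\beta}$ with $r_{i,\alpha\beta} = W_{i\alpha} V_{\beta i}$. A Vandermonde inversion recovers the $S_n$-equivalence class of $(\lambda_i, r_{i,\alpha\beta})_i$ from the generators of $\QU$; since $(r_{i,\alpha\beta})_{\alpha\beta}$ has rank one for each $i$ (and thus $2d - 1$ free parameters), the naive image lies in a variety of dimension $n + n(2d-1) = 2nd$. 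The decisive extra input is the spectral identity \eqref{Spect}: the conjugacy $MU^m M^{-1} = tRB$ forces $\det(B - s\Id_n) = \det(tRB - s\Id_n)$ as polynomials in $s$, producing $n$ independent scalar relations on $(\lambda_i, r_{i,\alpha\beta})_i$ and trimming the image to dimension $2nd - n$.

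Finally, since $\Cnm$ is symplectic of dimension $2nd$, the generic dimension of the Poisson commutant $\QU^{\perp}\subset \mathcal O(\Cnm)$ is $2nd - \dim \QU = n$. As $Z(\QU) = \QU \cap \QU^{\perp} \supseteq \CC[\tr U^{jm}]$ and both ends have dimension $n$, equality $Z(\QU) = \CC[\tr U^m, \ldots, \tr U^{nm}]$ follows (using that the polynomial ring generated by the power sums is integrally closed in its field of fractions). The hard part of the argument I expect will be the third stage: verifying rigorously that the $n$ relations coming from \eqref{Spect} are truly independent and exhaust the constraints hidden in the $(\lambda_i, r_{i,\alpha\beta})$ data, together with ensuring that the stratum on which the Vandermonde reconstruction works is Zariski dense in $\Cnm$.
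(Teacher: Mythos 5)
Your approach is genuinely different from the paper's. The paper proves the dimension claim constructively: it builds an explicit $(2nd-n)$-dimensional parameter space $\h'$ (coordinates $y_i,\,v_{\alpha,i},\,w_{\alpha,i}$ for $1\le i\le n$, $1\le\alpha<d$), realises a point of $\Cnm$ from this data plus an $n$-dimensional family of conjugating matrices $M$ (the missing spins $W_d,\,V_d$ being fabricated so that the spectral identity \eqref{Spect} is automatic), and then observes two things at once: these $2nd$ numbers are a local coordinate system around a generic point, and the functions $f_k=\tr Y^{km}$, $g_{k,\alpha}=\tr W_d Y^{km}V_\alpha$, $h_{k,\alpha}=\tr W_\alpha Y^{km}V_1$ ($\alpha<d$) are $2nd-n$ functionally independent elements of $\QU$ depending only on the $\h'$-block, while the generators of $\QU$ are visibly independent of $M$. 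Both bounds on $\dim\QU$ drop out simultaneously. Your Vandermonde-inversion-plus-spectral-trimming route is a plausible alternative, but the gap you flag really has two halves: verifying that the $n$ constraints $\tr B^k - t^k\tr(RB)^k$, rewritten as polynomials in $(\lambda_i, r_{i,\alpha\beta})$, have independent differentials (the hypothesis that $t$ is not a root of unity is essential there, since these look like $(1-t^k)\tr B^k$ plus spin corrections) only yields $\dim\QU\le 2nd-n$; the matching bound $\dim\QU\ge 2nd-n$ requires knowing that the image of $\Cnm$ actually fills the $(2nd-n)$-dimensional cut-out, i.e.\ that no further hidden constraints on $(\lambda_i,r_{i,\alpha\beta})$ exist. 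Your sketch asserts the trimming but leaves this second half open; the paper's explicit $\h'$-construction is precisely the device that settles it, and grafting such a point construction onto your upper-bound count is the most economical repair.

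Your closing symplectic-commutant argument for the centre is a nice supplement to the paper's own sketch, which simply defers to an adaptation of \cite{CF2}. One caution on the last step: passing from $\dim Z(\QU)=n=\dim\CC[\tr U^m,\dots,\tr U^{mn}]$ to equality of rings by integral closure presupposes that $Z(\QU)$ lies in the fraction field of (or at least is finite over) $\CC[\tr U^m,\dots,\tr U^{mn}]$; the equality of transcendence degrees alone does not furnish this, so a small extra argument is needed before invoking normality of the polynomial ring.
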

\begin{proof}
  A first method is to adapt the  case when $m=1$ given in \cite[Proposition 5.2]{CF2}. We sketch another possible proof when $U=Y$, based on a suitable choice of local coordinates similar to \cite[Lemma 5.6]{CF2}. 

Consider nonzero elements $y_1,\ldots,y_n\in \CC$ satisfying $y_i \neq y_j$, $y_i \neq t y_j$, for $i \neq j$. Consider also arbitrary $w_{\alpha,i},v_{\alpha,i}\in \CC$ for $i=1,\ldots,n$ and $1 \leq \alpha <d$. We denote by $\h'$ the subspace of $\CC^{2nd-n}$ with the above elements as coordinates, under the additional $d-1$ conditions  that $\sum_i v_{\alpha,i}w_{\alpha,i}\neq 0$. We then define the matrices 
$Y_s \in \Gl_n(\CC)$ for $s\in I$, and $W_\alpha\in \Mat_{n \times 1}(\CC)$,  $V_\alpha\in \Mat_{1 \times n}(\CC)$ for $1\leq \alpha < d$ by 
\begin{equation*}
 Y_0=\diag(y_1,\ldots,y_n),\,\, Y_s=\Id_n \text{ for }s\neq0,\quad 
(V_{\alpha})_i=v_{\alpha,i}, \quad (W_\alpha)_i=w_{\alpha,i}\,.
\end{equation*}
For a generic point of $\h'$, we can then find $V_d\in \Mat_{n \times 1}(\CC)$ such that, if $W_d:=(1, \ldots,1)^\top$, the matrices $Y_0$ and $F_d=t(\Id_n+W_d V_d)\ldots (\Id_n+W_1 V_1)Y_0$ have the same spectrum. Indeed, this is just a rank one perturbation of the matrix  $t(\Id_n+W_{d-1} V_{d-1})\ldots (\Id_n+W_1 V_1)Y_0$. In particular, there exists a $n$-dimensional family of matrices $M$ that put $F_d$ in the diagonal form $Y_0$. 

By construction, all the above matrices satisfy \eqref{Spect}. We set $M_0=M$ and define inductively $M_s=q_s Y_{s-1}M_{s-1}Y_s^{-1}$ for $s=1, \ldots, m-1$. Then we put  $X_s=M_s-Y_s^{-1}$ for all $s\in I$. It is then easy to see that the relations in \eqref{Eq:Condcy0} and \eqref{Eq:Condcy1} hold, and that all the invertibility conditions  required to define a point in $\{\det Y^m\neq 0\}\subset \Cnm$ are satisfied. Hence, we can complete the $2nd-n$ functions $(y_i,v_{\alpha,i},w_{\alpha,i})$ to get a local coordinate system around a generic point of $\Cnm$. 

Finally, note that in terms of these coordinates we can write that 
\begin{equation*}
  f_k\!=\!\tr Y^{km}\!=\!\sum_i y_i^k, \quad 
g_{k,\alpha}\!=\!\tr W_{d} Y^{km} V_{\alpha}\!=\!\sum_i y_i^k v_{\alpha,i}, \quad 
h_{k,\alpha}\!=\!\tr W_\alpha Y^{km} V_1\!=\!\sum_i y_i^k v_{1,i} w_{\alpha,i},
\end{equation*}
for $\alpha \neq d$. 
These functions  belong to $\QU$, and we can easily check that the subset 
$\{f_k,g_{k,\alpha},h_{k,\alpha}\mid k=1,\ldots,n,\, 1\leq \alpha <d\}$ is formed of functionally independent elements. 
\end{proof}

As a consequence, the functions $\tr U^m,\ldots,\tr U^{mn}$ are degenerately integrable. Their flows are complete by Propositions \ref{Pr:FloZ} and \ref{Pr:FloY}. 

We will write down a complete proof for both Liouville and degenerate integrability of the four cases $U=X,Y,Z,1+XY$  for an arbitrary framing of a cyclic quiver in subsequent work.

\subsubsection{Self-duality} 
The work of Reshetikhin \cite{Re} considers the duality between the spin hyperbolic CM system and the spin rational RS system. This was discovered in the non-spin case by Ruijsenaars \cite{R88}, together with the self-duality of the hyperbolic RS system. 
In the complex setting where the hyperbolic and trigonometric cases are the same, the latter self-duality can be obtained by noticing that, with the notations of Section \ref{tadpole} in the non-spin case $m=d=1$, the transformation $\varpi:(A,B)\mapsto (B,A)$ is an (anti-)symplectic mapping \cite[Proposition 3.8]{CF}. We can make a step in that direction for the spin case, though this requires the additional reduction of \ref{ss:Red}. Hence, we assume $d\leq n$.  

To work in full generalities, let $A$ be the algebra localised at $x$ constructed in Section \ref{ss:qHcyclic}. Consider the quasi-Hamiltonian algebra $\hat{A}$ obtained from $A$ by removing the elements $v_\alpha,w_\alpha,e_\infty$, i.e. $\hat{A}=A/\langle e_\infty \rangle$. This can be seen as the analogue of $A$  obtained by construction from the non-framed cyclic quiver, that is the subquiver $\bar{Q}'\subset \bar{Q}$ supported at $I=\Z/m\Z$. We can easily see  that the algebra homomorphism $\iota : \hat{A}\to \hat{A}$ defined by 
\begin{equation} \label{iota}
  \iota(e_s)=e_{m-s}\,, \quad \iota(x_s)=z_{m-s-1}\,, \quad \iota(z_s)=x_{m-s-1}\,,
\end{equation}
satisfies $\iota^2=\id_{\hat{A}}$. It corresponds to flipping $\bar{Q}'$ such that the vertex $0$ is fixed. Moreover, from \eqref{cyA}, with $z$ instead of $y$, and \eqref{cyBz} we can show that 
\begin{equation*}
(\iota \otimes \iota)  \dgal{x,x}=- \dgal{\iota(x),\iota(x)}, \quad 
(\iota \otimes \iota)  \dgal{z,z}=- \dgal{\iota(z),\iota(z)}, \quad 
(\iota \otimes \iota)  \dgal{x,z}=- \dgal{\iota(x),\iota(z)},
\end{equation*}
so that $\iota$ is an anti-morphism of quasi-Poisson algebras. (We can check this equivalently on $\dgal{x_s,x_r}$, $\dgal{z_s,z_r}$ and $\dgal{x_r,z_s}$.) In particular, $\iota(\phi)=\iota(xzx^{-1}z^{-1})=\phi^{-1}$, so $\iota$ maps the moment map of $\hat{A}$ to its inverse.  

Remark that from \ref{ss:Red} and \eqref{EqZspCyc}, the coordinate ring $\mathcal O (\CnmG)$ is generated by elements of the form $\tr(\Gamma)$, where $\Gamma$ is a sum of matrices whose factors are either $X$ or $Z$, so that the Poisson structure on  $\CnmG$ is completely defined from the double brackets 
$\dgal{x,x}$, $\dgal{z,z}$ and $\dgal{x,z}$. Indeed, they define the quasi-Poisson brackets in 
$\Rep(\Lambda^\qq,\aalpha)$ for the elements $(X_{ij},Z_{ij})$, which determine the Poisson bracket on $\CnmG$ by construction. This yields the following result. 
\begin{prop} \label{Propdual}
  The map $\iota: \hat{A}\to \hat{A}$ induces a (generically defined) anti-Poisson morphism $\varpi:\CnmG \to \Cnmdual$ determined by $(X,Z)\mapsto (Z,X)$, where $\hat{q}=(\hat{q}_\infty,\hat{q}_s)$ is defined by $\hat{q}_s=q^{-1}_{m-s}$, $\hat{q}_\infty=q_\infty^{-1}$. 
\end{prop}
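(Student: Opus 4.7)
The plan is to construct $\varpi$ via its pullback on coordinate rings, exploiting the three facts established in the paragraph above Proposition~\ref{Propdual}: that $\iota:\hat{A}\to\hat{A}$ is an algebra isomorphism, that it is an anti-morphism of quasi-Poisson algebras, and that $\iota(\phi)=\phi^{-1}$. Concretely, I would set
\[
\varpi^{*}\bigl(\tr\mathcal{X}(\gamma)\bigr)\,=\,\tr\mathcal{X}\bigl(\iota(\gamma)\bigr)
\]
for every word $\gamma$ in the letters $x_{s},z_{s}\in\hat{A}$. The preceding discussion identifies $\mathcal{O}(\CnmG)$ and $\mathcal{O}(\Cnmdual)$ as being generated by such traces, so this defines $\varpi^{*}$ on a generating set; since $\iota(x_{s})=z_{m-s-1}$ and $\iota(z_{s})=x_{m-s-1}$, geometrically $\varpi$ realises the swap $(X,Z)\mapsto(Z,X)$ with the appropriate cyclic reindexing.

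For well-definedness as a ring map I would check that the moment map relations of $\Cnmdual$ pull back to relations valid on $\CnmG$. From $\phi_{s}=x_{s}z_{s}x_{s-1}^{-1}z_{s-1}^{-1}$, a direct calculation (or, invariantly, the identity $\iota(\phi)=\phi^{-1}$ combined with $\iota(e_{s})=e_{m-s}$) gives $\iota(\phi_{s})=\phi_{m-s}^{-1}$. Hence the relation $\phi_{s}=\hat{q}_{s}$ on $\Cnmdual$ pulls back to $\phi_{m-s}=\hat{q}_{s}^{-1}=q_{m-s}$, which is exactly the defining relation of $\CnmG$ at vertex $m-s$. The anti-Poisson property then drops out of \eqref{relInv}: using that $\iota$ commutes with the multiplication map $m$ (because it is an algebra homomorphism) and that $(\iota\otimes\iota)\dgal{a,b}=-\dgal{\iota(a),\iota(b)}$ on the generating pairs $(a,b)\in\{x,z\}^{\times 2}$ (extended to all of $\hat{A}$ by induction on word length using the derivation and cyclic antisymmetry of $\dgal{-,-}$), I obtain $\iota(\br{a,b})=-\br{\iota(a),\iota(b)}$. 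Applying $\tr\mathcal{X}$ and invoking \eqref{relInv} on both sides then yields $\brap{\varpi^{*}F,\varpi^{*}G}=-\varpi^{*}\brap{F,G}$ on generating traces, hence on all of $\mathcal{O}(\Cnmdual)$.

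Finally, $\varpi$ is only generically defined because $\iota$ uses the local inverses of the $x_{s}$ and, through the swap, requires inverses of the $z_{s}$ as well; via the isomorphism $\Cnmp\cong\Cntd^{\circ}$ of Proposition~\ref{isoCM}, the dense open subset of $\CnmG$ where both $X$ and $Z$ admit the required inverses and where the image sits in $\Cnmdual$ can be described explicitly. The main obstacle I anticipate is the bookkeeping around the fact that $\iota$ lives only on the un-framed subalgebra $\hat{A}$, so $\varpi$ cannot be lifted to a map $\Cnm\to\Cnmdual$ between un-reduced spaces: the framing matrices $\As,\Cs$ have no individual preimage under $\iota$, and only their $\mathcal{H}$-invariant combinations, which reduce through the vertex-$0$ moment map relation to words in $X,Z$, transform meaningfully. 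This is precisely why the statement is formulated at the level of the $\mathcal{H}$-reduction $\CnmG$.
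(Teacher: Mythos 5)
Your proposal is correct and follows essentially the same route as the paper: identify $\mathcal O(\CnmG)$ with trace functions of words in $X,Z$, use the algebra isomorphism $\iota$ of $\hat{A}$, the anti-morphism property $(\iota\otimes\iota)\dgal{-,-}=-\dgal{\iota(-),\iota(-)}$ on the generators $x,z$, the identity $\iota(\phi)=\phi^{-1}$ for moment-map compatibility, and \eqref{relInv} to transport the sign to the associated brackets. You are slightly more explicit than the paper in spelling out the induction on word length for the anti-morphism property and the pullback of the vertex relations $\phi_s=\hat q_s$, and you give a somewhat different (though not contradictory) justification for genericity — the paper points to semisimplicity of $Z_0\cdots Z_{m-1}$ rather than invertibility — but these are presentation differences, not gaps.
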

In particular, since $\iota(z)=x$, $\iota(\phi)=\phi^{-1}$, we have that 
$\varpi((1+\eta \Phi)Z)=(1+\eta \Phi^{-1})X$. Hence, we have that the first and fourth families in Theorem \ref{CycInvol} are swapped under $\varpi$. 

As indicated in Proposition \ref{Propdual}, $\varpi$ is only defined at a generic point, e.g. in $\CnmG$ there are points where the product $Z_0 \ldots Z_{m-1}$ is not semisimple.  Hence, we do not have self-duality of the system in the strict sense of \cite{R88,FK09a,FK12,FA} which requires a global Poisson isomorphism. Nevertheless, the underlying interpretation on the quiver $\bar{Q}'$ is easy to understand, and works also for $m=1$, where it  extends the geometric approach to the self-duality for the trigonometric RS to the spin case. Given the natural appearance of the self-duality for the non-spin case in gauge theory \cite{FGNR,GR}, it would be interesting to understand the interpretation of the spin case within this framework. 

Let us formulate two final remarks. Firstly, if we replace $z$ by $y$ in the definition of $\iota$ \eqref{iota}, this also gives an anti-morphism of quasi-Poisson algebras. Hence the integrable system containing $\tr X$ is dual to the integrable system containing $\tr Z$, but also to the one containing $\tr Y$  by adapting the above argument. Secondly, note that this isomorphism $\iota: \hat{A}\to \hat{A}$ does not directly extend to $A$ itself. We will return to this issue in further work, in order to lift the map $\varpi$ to a well-defined map on $\Cnmo$.

\subsubsection{Relation to generalised symmetric group} It is remarked in \cite{CF} that in the case $d=1$, the new  Hamiltonians obtained for a cyclic quiver on $m$ vertices correspond to $W=S_n \ltimes\Z^n_m$. In the study of (non-multiplicative) quiver varieties, the Hamiltonians of Calogero-Moser type obtained in the spin case have also that particular symmetry \cite[Section VI]{CS}. Thus, we would expect that the elements of the families $(G^m_{j,l})_{j,l}$ and  $(H^m_{j,l})_{j,l}$ are also related to the generalised symmetric groups $G(m,1,n)=S_n \ltimes\Z^n_m$. 
To establish this link, consider $p\in\Cnmp \subset \Cnmo$ determined by a point of $\Cntd'$ as in Proposition \ref{isoCM}. Using the local coordinates of $\Cntd'$ \eqref{Tadiffeo}, the point $p=(X_s,Z_s,\Asm,\Csm)$ is characterised by $X_s=\Id_n$ for $s\neq m-1$, $X_{m-1}=\diag(x_1,\ldots,x_n)$, and the matrices $(Z_s,\Asm,\Csm)$ are given by 
\begin{equation*} 
  (Z_s)_{ij}=t_s \frac{t f_{ij}}{x_i x_j^{-1}-t},\, s\neq m-1,\quad 
  (Z_{m-1})_{ij}=t \frac{t x_i^{-1} f_{ij}}{x_i x_j^{-1}-t}, \quad 
(\Asm)_{i \alpha}=\frac{\aaa_i^\alpha}{x_i},\,\, (\Csm)_{\alpha j}=\ccc_j^\alpha\,.
\end{equation*}
In particular, the matrices $(X_s)_s$ take two different forms : either the identity or a diagonal matrix whose entries are interpreted as particle positions. To set them all to the same diagonal matrix, recall that an element $g \in G(\aalpha)$ acts as in \eqref{gactCy}, so we can write this action as 
\begin{equation}\label{gactCy2}
g. (X_s,Z_s,\Asm,\Csm)=(g_sX_sg_{s+1}^{-1},g_{s+1}Z_sg_s^{-1}, \Csm g_0^{-1}, g_0 \Asm)\,,\quad g\in G(\aalpha)\,. 
\end{equation}
Choose elements $(\lambda_i)_i$ such that $\lambda_i^m=x_i$.  They are nonzero distinct and satisfy $\lambda_i^m \lambda_j^{-m}\neq t$ for all $i\neq j$. We can form the element $\prod_s g_s\in G(\aalpha)$ with  $g_s=\diag(\lambda_1^{m-s},\ldots,\lambda_n^{m-s})$ for $s=0,\ldots,m-1$,  and acting on $p$ in its above form yields 
\begin{equation*} 
X_s=\diag(\lambda_1,\ldots,\lambda_n),\,\,
  (Z_s)_{ij}=t_s \frac{t f_{ij}}{\lambda^m_i -t \lambda_j^m}\lambda_i^{m-s-1}\lambda_j^{s},\quad 
(\Asm)_{i \alpha}=\aaa_i^\alpha,\,\, (\Csm)_{\alpha j}=\frac{\ccc_j^\alpha}{\lambda^m_j}\,,
\end{equation*}
for any $s=0,\ldots,m-1$. Hence, the choice of a representative $(X_s,Z_s,\Asm,\Csm)$ in $\Cnmp$, such that all the $X_s$ are in the same diagonal form and $\sum_\alpha (\Asm)_{i \alpha}=1$,   is unique up to acting by  $S_n \ltimes\Z^n_m$. Here, the action of an element $(\sigma,\underline{M}) \in S_n \ltimes\Z^n_m$, is represented by the matrix $g=\prod_s g_\sigma g_{\underline{M}}^{-s}$, where $g_\sigma$ is the permutation matrix corresponding to $\sigma$ while $g_{\underline{M}}=\diag(\zeta^{M_1},\ldots,\zeta^{M_n})$ for $\underline{M}=(M_1,\ldots,M_n)$ and $\zeta$ is a primitive $m$-th root of unity. 

In the case $m=2$, write  $q_0=e^{-2 \gamma_0}$ and $q_1=e^{-2 \gamma_1}$ so that $t=e^{-2 \gamma}$ for $\gamma=\gamma_0+\gamma_1$. We get
\begin{equation*}
 \begin{aligned}
(Z_0)_{ij}=&\frac{e^{-2\gamma-2\gamma_0}}{2}\left(\frac{1}{\lambda_i- e^{-\gamma} \lambda_j}+\frac{1}{\lambda_i+ e^{-\gamma} \lambda_j} \right)\, f_{ij}\,, \\
(Z_1)_{ij}=&\frac{e^{-3\gamma}}{2}\left(\frac{1}{\lambda_i- e^{-\gamma} \lambda_j}-\frac{1}{\lambda_i+ e^{-\gamma} \lambda_j} \right)\, f_{ij}\,\,.
 \end{aligned}
\end{equation*}
We can write down  $\tr Z^2$ and $\tr Y^2$, which are multiples of $G^2_{1,0}$ and $H^2_{1,0}$ respectively, in $\Cnmp$ as 
 \begin{equation*}
 \begin{aligned}
     \tr Z^2 
 =&\frac{e^{-5\gamma-2\gamma_0}}{2}\sum_{i,j}\left(\frac{1}{\lambda_i- e^{-\gamma} \lambda_j}+\frac{1}{\lambda_i+ e^{-\gamma} \lambda_j} \right)  \left(\frac{1}{\lambda_j- e^{-\gamma} \lambda_i}-\frac{1}{\lambda_j+ e^{-\gamma} \lambda_i} \right)\, f_{ij}f_{ji}\,, \\
 \tr Y^2 =& \tr Z^2 -\sum_i \frac{e^{-2\gamma-2\gamma_0}+e^{-4\gamma}}{1-e^{-2\gamma}} \frac{f_{ii}}{\lambda_i^2}+\sum_i \frac{1}{\lambda_i^2}\,.
 \end{aligned}
 \end{equation*}
Comparing last two expressions with $\tr B^2$ and $\tr (B-A^{-1})^2$ obtained from Section \ref{tadpole} strengthens our claim that the $(G^m_{j,l})_{j,l}$ correspond to a spin RS system for $W=S_n \ltimes\Z^n_m$  (and $(H^m_{j,l})_{j,l}$ to a modification of it). 


\appendix

\section{Calculations for the spin cyclic quivers} \label{Ann:cyclic}

 We prove 
 Lemma \ref{Lem:Calg1} in \ref{Ann:B1},  
Lemma \ref{LodCyxac} in \ref{Ann:B2},  
Lemma  \ref{Lem:Cytt} in \ref{Ann:B2bis}, 
and Proposition \ref{PropInvPhiCy} in \ref{Ann:B3}. 
Most computations rely heavily on the properties of a double bracket given in \ref{ss:dAS}. 
\begin{rem}
  Note that the proofs of Lemmas \ref{Lem:Calg1}, \ref{LodCyxac} and Proposition \ref{PropInvPhiCy} also apply for their analogues in the case $m=1$ considered in \cite{CF2}. We make a comment on the changes that are needed in the latter case at the beginning of each of these proofs. The elements $a'_\alpha,c'_\alpha$ are denoted by $a_\alpha,b_\alpha$ in  \cite{CF2}. 
\end{rem}

\subsection{Double bracket with spin variables : Proof of Lemma \ref{Lem:Calg1}} \label{Ann:B1}

We will show that \eqref{CySpin1} and \eqref{CySpin3} holds, while we replace \eqref{CySpin2} by 
\begin{equation}
 \begin{aligned}
    \dgal{a'_{\alpha}, c'_{\beta}}\,=\,& -\frac12 c'_\beta a'_\alpha \otimes e_0 +
\frac12  \left(o(\alpha,\beta)-\delta_{\alpha \beta} \right)\, 
e_\infty\otimes a'_{\alpha} c'_{\beta} \\
&- \delta_{\alpha \beta} \left(e_\infty \otimes e_0 z + \sum_{\lambda=1}^{\beta-1} 
e_\infty \otimes a'_{\lambda} c'_{\lambda} \right) \,. \label{CySpin2bis}
 \end{aligned}
\end{equation}
This is nothing else than \eqref{CySpin2} because the first term vanishes. Indeed, note that $c'_\beta=e_\infty c'_\beta e_{m-1}$, so that $c'_\beta \gamma=0$ for any $\gamma$ which is a path beginning by $x,z$ or some $a'_\alpha=w_\alpha$ since then $\gamma=e_0\gamma$. However, we will carry on such terms of the form $c'_\beta \gamma$, because our proof also applies in the case of a Jordan quiver (see \cite[Lemma 3.1]{CF2}) where it does not vanish. 
Indeed, if we allow the case $m=1$ and set $F_b=e_0\otimes e_0$ for any $b\in \Z$, The double brackets between the elements $x,z,v_\alpha,w_\alpha$ given in Section \ref{ss:qHcyclic} exactly match the double brackets in \cite[\S 3.1.2]{CF2}.

\medskip

We prove the results by induction using \eqref{cyInd}. Knowing the double brackets in Section \ref{ss:qHcyclic},  if we want to compute 
the bracket $\dgal{\Gamma, c'_\beta}$ for some $\Gamma\in A$, we first find $\dgal{\Gamma,c'_1}=\dgal{\Gamma,v_1 z}$ and 
then show our statement by induction using 
\begin{equation} \label{Induction}
 \dgal{\Gamma,c'_\alpha}=
\sum_{\lambda=1}^{\alpha-1} \left( v_\alpha w_\lambda \dgal{\Gamma,  c'_\lambda} 
+ \dgal{\Gamma, v_\alpha w_\lambda } c'_\lambda \right) + \dgal{\Gamma,v_\alpha z} \,.
\end{equation}

\medskip

To get \eqref{CySpin1}, we first compute $\dgal{x, c'_{\alpha}}$ and show how to deal with the idempotents. Recalling that $F_1=\sum_s e_s \otimes e_{s-1}$, we get from the double brackets in Section \ref{ss:qHcyclic} 
\begin{equation*}
 \begin{aligned}
  \dgal{x, v_{\alpha}z}=& \dgal{x, v_{\alpha}}z +  v_{\alpha} \dgal{x, z} \\
=& \frac12\,\left( v_\alpha  x\otimes e_0 z - v_\alpha \otimes x e_0 z
+v_\alpha  zx F_1 +v_\alpha  F_1 xz - v_\alpha  x F_1 z +v_\alpha  z F_1 x\right) \\
=& \frac12\,\big( v_\alpha  x e_{1}\otimes e_0 z - v_\alpha e_0\otimes e_{m-1}x z +v_\alpha  zx e_0 \otimes e_{m-1}\\ 
& \quad +v_\alpha  e_0 \otimes e_{m-1} xz - v_\alpha  x e_{1}\otimes e_0 z +v_\alpha  z e_{m-1} \otimes e_{m-2} x\big) \\
=& 
 \frac12\,\big( v_\alpha  zx \otimes e_{m-1}  +v_\alpha  z  \otimes  x e_{m-1}\big)\,.
 \end{aligned}
\end{equation*}
In order to simplify the $F_1$ to go from the second to the third equality, we used that $v_\alpha=v_\alpha e_0$, $x\in \oplus_{s\in I} e_s A e_{s+1}$ and $z \in \oplus_{s\in I} e_{s+1} A e_s$. For example, $v_\alpha zx F_1=v_{\alpha}zx e_0 F_1=v_{\alpha}zx e_0\otimes e_{m-1}$. 
In particular, if we use $c'_1=v_1z$ we get the expression for $\dgal{x, c'_1}$ given in \eqref{CySpin1} as our basis for the induction. 

Meanwhile, we compute 
\begin{equation*}
 \begin{aligned}
  \dgal{x,v_\alpha w_\lambda}=v_\alpha \dgal{x,w_\lambda}+\dgal{x,v_\alpha} w_\lambda
=\frac{1}{2} (v_\alpha\otimes xw_\lambda- v_\alpha x\otimes w_\lambda)
+\frac12 (v_\alpha x\otimes w_\lambda- v_\alpha\otimes xw_\lambda ) =0\,,
 \end{aligned}
\end{equation*}
so that if we assume that the first equality in \eqref{CySpin1} is true for any $\lambda< \alpha$, we get from \eqref{Induction} 
\begin{equation*}
 \begin{aligned}
  \dgal{x,c'_\alpha}=&
\sum_{\lambda=1}^{\alpha-1} \left( v_\alpha w_\lambda \dgal{x,  c'_\lambda} 
+ \dgal{x, v_\alpha w_\lambda } c'_\lambda \right) + \dgal{x,v_\alpha z} \\
=&\sum_{\lambda=1}^{\alpha-1} v_\alpha w_\lambda  \left( 
\frac12 c'_\lambda x\otimes e_{m-1}+\frac12 c'_\lambda \otimes x e_{m-1}
  \right) + \frac{1}{2} (v_\alpha zx\otimes e_{m-1}  + v_\alpha z\otimes x e_{m-1})\\
=&\frac12 \left(\sum_{\lambda=1}^{\alpha-1} v_\alpha w_\lambda c'_\lambda+v_\alpha z \right) x \otimes e_{m-1}
+ \frac12 \left(\sum_{\lambda=1}^{\alpha-1}v_\alpha w_\lambda c'_\lambda+v_\alpha z \right) \otimes x e_{m-1}\,,
 \end{aligned}
\end{equation*}
which is exactly the first equality in \eqref{CySpin1} by  using  \eqref{cyInd}.  For the bracket $\dgal{z, c'_{\alpha}}$, we  compute 
\begin{equation*}
 \begin{aligned}
  \dgal{z, v_{\alpha}z}=& \dgal{z, v_{\alpha}}z +  v_{\alpha} \dgal{z, z} \\
=& \frac12\,\left( v_\alpha  z\otimes e_0 z - v_\alpha \otimes z e_0 z
-v_\alpha  z^2 F_{-1} +v_\alpha  F_{-1} z^2 \right) \\
=& \frac12\,\big( v_\alpha  z e_{1}\otimes e_0 z - v_\alpha e_0\otimes e_{1}z^2 -v_\alpha  z^2 e_{m-2} \otimes e_{m-1} +v_\alpha  e_0 \otimes e_{1} z^2 \big) \\
=& 
 \frac12\,\big(- v_\alpha  z^2 \otimes e_{m-1}  +v_\alpha  z  \otimes  z e_{m-1}\big)\,,
 \end{aligned}
\end{equation*}
and $\dgal{z,c_1'}$ given by the second equality in \eqref{CySpin1} holds. We can find $\dgal{z,v_\alpha w_\lambda}=0$ so that the general case follows by induction, in a way similar to $\dgal{x,c_\alpha'}$.

\medskip

To get \eqref{CySpin2bis}, recall that $a'_\alpha=w_\alpha$ by definition. We first compute 
\begin{equation*}
 \begin{aligned}
  \dgal{v_\beta z,a'_\alpha}=&v_\beta \ast \dgal{z,a'_\alpha}+\dgal{v_\beta , a'_\alpha} \ast z\\
=&\frac12 ( e_{0}\otimes v_\beta za'_\alpha- e_0 z \otimes v_\beta a'_\alpha)+
\left[\delta_{\alpha \beta}  \, e_0 z\otimes e_\infty
+ \frac12 \,[o(\beta,\alpha)+\delta_{\alpha \beta}] 
\left(e_0 z\otimes v_\beta a'_\alpha + a'_\alpha v_\beta z \otimes e_\infty \right) \right] \,.
 \end{aligned}
\end{equation*}

Using $\dgal{a'_\alpha,v_\beta z}=-\dgal{v_\beta z,a'_\alpha}^\circ$, we can write 
\begin{equation*}
 \begin{aligned}
  \dgal{a'_\alpha,v_\beta z}
=&\frac12 ( v_\beta a'_\alpha \otimes e_0 z - v_\beta za'_\alpha \otimes e_0)
-\delta_{\alpha \beta}  \,  e_\infty\otimes e_0 z
+ \frac12 \,[o(\alpha,\beta)-\delta_{\alpha \beta}] 
\left(v_\beta a'_\alpha \otimes e_0 z + e_\infty \otimes a'_\alpha v_\beta z  \right)\\
=&-\delta_{(\alpha \geqslant \beta)} \left(\frac12 e_\infty \otimes a'_\alpha v_\beta z+\frac12 v_\beta za'_\alpha \otimes e_0+\delta_{\alpha \beta}  \,  e_\infty\otimes e_0 z\right)  \\
&+ \delta_{(\alpha < \beta)}
\left( v_\beta a'_\alpha \otimes e_0 z - \frac12 v_\beta za'_\alpha \otimes e_0 +\frac12 e_\infty \otimes a'_\alpha v_\beta z  \right)\,,
 \end{aligned}
\end{equation*} 
recalling that $o(\alpha,\beta)=\delta_{(\alpha < \beta)}-\delta_{(\alpha > \beta)}$. 
In particular, this yields 
\begin{equation*}
   \dgal{a'_\alpha,c'_1}
=-\frac12 c'_1 a'_\alpha \otimes e_0-\frac12 e_\infty \otimes a'_\alpha c'_1-\delta_{\alpha \beta}  \,  e_\infty\otimes e_0 z
\end{equation*}
which is exactly the case $\beta=1$ in \eqref{CySpin2bis} (and $c'_1 a'_\alpha=0$ as we mentioned at the beginning of the proof to get in fact \eqref{CySpin2}). 
Next, we can compute 
\begin{equation*}
 \begin{aligned}
  \dgal{a'_\alpha,v_\beta w_\lambda}
=&v_\beta \dgal{a'_\alpha, w_\lambda}+  \dgal{a'_\alpha,v_\beta } w_\lambda \\
=& -\frac12 \,o(\alpha,\lambda) \left(v_\beta w_\lambda\otimes w_\alpha + v_\beta w_\alpha \otimes w_\lambda \right) \\
&\,-\delta_{\alpha \beta}  \,  e_\infty\otimes w_\lambda
+ \frac12 \,[o(\alpha,\beta)-\delta_{\alpha \beta}] 
\left(v_\beta a'_\alpha \otimes w_\lambda + e_\infty \otimes a'_\alpha v_\beta w_\lambda  \right)\,,
 \end{aligned}
\end{equation*}
and this implies that 
\begin{equation*}
 \begin{aligned}
 \sum_{\lambda=1}^{\beta-1} \dgal{a'_\alpha,v_\beta w_\lambda} c'_\lambda
=& -\frac12 \sum_{\lambda=1}^{\beta-1} \,o(\alpha,\lambda) 
\left(v_\beta w_\lambda\otimes w_\alpha c'_\lambda + v_\beta w_\alpha \otimes w_\lambda c'_\lambda \right) 
-\delta_{\alpha \beta}  \,\sum_{\lambda=1}^{\beta-1}  e_\infty\otimes w_\lambda c'_\lambda \\
&+ \frac12 \,[o(\alpha,\beta)-\delta_{\alpha \beta}]  \sum_{\lambda=1}^{\beta-1}
\left(v_\beta a'_\alpha \otimes w_\lambda c'_\lambda + e_\infty \otimes a'_\alpha v_\beta w_\lambda  c'_\lambda \right)\,.
 \end{aligned}
\end{equation*}
In the case $\alpha\geqslant \beta$ this gives since $w_\alpha=a'_\alpha$
\begin{equation*}
 \begin{aligned}
 \sum_{\lambda=1}^{\beta-1} \dgal{a'_\alpha,v_\beta w_\lambda} c'_\lambda
\stackrel{\alpha\geqslant \beta}{\,=\,}& \,
-\delta_{\alpha \beta}  \,\sum_{\lambda=1}^{\beta-1}  e_\infty\otimes w_\lambda c'_\lambda
+\frac12 \sum_{\lambda=1}^{\beta-1}  \left(v_\beta w_\lambda\otimes w_\alpha c'_\lambda 
- e_\infty \otimes a'_\alpha v_\beta w_\lambda  c'_\lambda \right)\,.
 \end{aligned}
\end{equation*}
Otherwise, we just write 
\begin{equation*}
 \begin{aligned}
 \sum_{\lambda=1}^{\beta-1} \dgal{a'_\alpha,v_\beta w_\lambda} c'_\lambda
\stackrel{\alpha <  \beta}{\,=\,}& \,
 -\frac12 \left(\sum_{\lambda=\alpha+1}^{\beta-1} - \sum_{\lambda=1}^{\alpha-1} \right) \, 
\left(v_\beta w_\lambda\otimes w_\alpha c'_\lambda + v_\beta w_\alpha \otimes w_\lambda c'_\lambda \right) \\
&+ \frac12 \, \sum_{\lambda=1}^{\beta-1}\left(v_\beta a'_\alpha \otimes w_\lambda c'_\lambda 
+ e_\infty \otimes a'_\alpha v_\beta w_\lambda  c'_\lambda \right)
 \end{aligned}
\end{equation*}
Now, assume by induction that \eqref{CySpin2bis} holds for any $\lambda<\beta$, which we can write
\begin{equation*}
 \begin{aligned}
  \dgal{a'_\alpha, c'_\lambda}\,\stackrel{\alpha\geqslant \lambda}{\,=\,}\,& 
-\frac12 c'_\lambda a'_\alpha \otimes e_0
-\frac12  e_\infty \otimes a'_\alpha c'_\lambda
-\delta_{\alpha \lambda} \left(e_\infty \otimes e_0 z + \sum_{\gamma=1}^{\lambda-1} 
e_\infty \otimes a'_\gamma c'_\gamma \right)\,,  \\
\dgal{a'_\alpha, c'_\lambda}\,\stackrel{\alpha <  \lambda}{\,=\,}\,& 
-\frac12 c'_\lambda a'_\alpha \otimes e_0+\frac12  e_\infty \otimes a'_\alpha c'_\lambda \,,
 \end{aligned}
\end{equation*}
In the first case, $\alpha \geq \beta$, we find from \eqref{Induction} and \eqref{cyInd} 
\begin{equation*}
 \begin{aligned}
 \dgal{a'_\alpha, c'_\beta}\,\stackrel{\alpha\geqslant \beta}{\,=\,}\,& 
-\frac12 \sum_{\lambda=1}^{\beta-1}  v_\beta w_\lambda  c'_\lambda a'_\alpha \otimes e_0
-\frac12 \sum_{\lambda=1}^{\beta-1}  v_\beta w_\lambda   e_\infty \otimes a'_\alpha c'_\lambda \\
&-\delta_{\alpha \beta}  \,\sum_{\lambda=1}^{\beta-1}  e_\infty\otimes w_\lambda c'_\lambda
+\frac12 \sum_{\lambda=1}^{\beta-1}  \left(v_\beta w_\lambda\otimes w_\alpha c'_\lambda 
- e_\infty \otimes a'_\alpha v_\beta w_\lambda  c'_\lambda \right) \\
&-\left(\frac12 e_\infty \otimes a'_\alpha v_\beta z+\frac12 v_\beta za'_\alpha \otimes e_0+\delta_{\alpha \beta}  \,  e_\infty\otimes e_0 z\right) \\
=&-\frac12 c'_\beta a'_\alpha \otimes e_0 -\frac12 e_\infty \otimes a'_\alpha c'_\beta 
-\delta_{\alpha \beta} e_\infty \otimes \left(\sum_{\lambda=1}^{\beta-1} a'_\lambda c'_\lambda+ e_0 z \right) 
 \end{aligned}
\end{equation*}
which coincide with \eqref{CySpin2bis}. In the second case, we get 
\begin{equation*}
 \begin{aligned}
 \dgal{a'_\alpha, c'_\beta}\,\stackrel{\alpha< \beta}{\,=\,}\,& 
\sum_{\lambda=1}^{\alpha} 
\Big[-\frac12 v_\beta w_\lambda c'_\lambda a'_\alpha \otimes e_0
-\frac12  v_\beta w_\lambda  \otimes a'_\alpha c'_\lambda  \Big]
- v_\beta w_\alpha  \otimes e_0 z - \sum_{\gamma=1}^{\alpha-1} 
v_\beta w_\alpha \otimes a'_\gamma c'_\gamma  \\
&\,+ \sum_{\lambda=\alpha+1}^{\beta-1} \left( -\frac12 v_\beta w_\lambda c'_\lambda a'_\alpha \otimes e_0
+\frac12 v_\beta w_\lambda \otimes a'_\alpha c'_\lambda\right)
 \\
& -\frac12 \left(\sum_{\lambda=\alpha+1}^{\beta-1} - \sum_{\lambda=1}^{\alpha-1} \right) \, 
\left(v_\beta w_\lambda\otimes w_\alpha c'_\lambda + v_\beta w_\alpha \otimes w_\lambda c'_\lambda \right)\\
&+ \frac12 \, \sum_{\lambda=1}^{\beta-1}\left(v_\beta a'_\alpha \otimes w_\lambda c'_\lambda 
+ e_\infty \otimes a'_\alpha v_\beta w_\lambda  c'_\lambda \right)
 \\
&+\left( v_\beta a'_\alpha \otimes e_0 z - \frac12 v_\beta za'_\alpha \otimes e_0 +\frac12 e_\infty \otimes a'_\alpha v_\beta z  \right)
 \end{aligned}
\end{equation*}
which, after some easy manipulations on the sums, yields 
\begin{equation*}
 \begin{aligned}
 \dgal{a'_\alpha, c'_\beta}\,\stackrel{\alpha< \beta}{\,=\,}\,& 
-\frac12 \sum_{\lambda=1}^{\beta-1}  v_\beta w_\lambda c'_\lambda a'_\alpha \otimes e_0
+\frac12 \, \sum_{\lambda=1}^{\beta-1}  e_\infty \otimes a'_\alpha v_\beta w_\lambda  c'_\lambda
- \frac12 v_\beta za'_\alpha \otimes e_0 +\frac12 e_\infty \otimes a'_\alpha v_\beta z\\
=&- \frac12 c'_\beta a'_\alpha \otimes e_0+ \frac12 e_\infty \otimes a'_\alpha c'_\beta 
 \end{aligned}
\end{equation*}
as expected from \eqref{CySpin2bis} since the first term is zero. 

\medskip

As an intermediate result for \eqref{CySpin3}, we need 
\begin{lem} \label{Calg3}
 For any  $\alpha,\beta=1,\ldots,d$, 
 \begin{equation*}
  \dgal{v_{\alpha}, c'_{\beta}}\,=\,
\frac12 c'_\beta e_0\otimes v_\alpha 
- \frac12  \left(o(\alpha,\beta)+\delta_{\alpha \beta} \right) v_\alpha\otimes c'_\beta
 \end{equation*}
\end{lem}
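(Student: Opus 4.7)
The plan is to proceed by induction on $\beta$, following the same scheme used for \eqref{CySpin1} and \eqref{CySpin2bis} based on the recursion \eqref{cyInd}. Specialising the master formula \eqref{Induction} to $\Gamma = v_\alpha$ gives
$$\dgal{v_\alpha, c'_\beta} = \sum_{\lambda=1}^{\beta-1}\Big( v_\beta w_\lambda\,\dgal{v_\alpha, c'_\lambda} + \dgal{v_\alpha, v_\beta w_\lambda}\, c'_\lambda\Big) + \dgal{v_\alpha, v_\beta z}.$$
As in the previous proofs in this appendix I shall carry the idempotent factor $c'_\beta e_0$ along without using $e_{m-1}e_0=0$, so that the argument simultaneously covers the $m=1$ case of \cite{CF2}.

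For the base case $\beta=1$, I expand $\dgal{v_\alpha, v_1 z} = v_1\dgal{v_\alpha, z} + \dgal{v_\alpha, v_1}\,z$ using $\dgal{v_\alpha, z} = -\dgal{z, v_\alpha}^\circ$ from the $z$-analogue of \eqref{cyide'} together with cyclic antisymmetry, and $\dgal{v_\alpha, v_1}$ from \eqref{tadidv}. Absorbing idempotents via $v_\alpha e_0 = v_\alpha$ and splitting into $\alpha=1$ (where $o(\alpha,1)=0$, $\delta_{\alpha,1}=1$) and $\alpha>1$ (where $o(\alpha,1)=-1$, $\delta_{\alpha,1}=0$), both branches produce exactly the right-hand side of the claim evaluated at $\beta=1$.

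For the inductive step I compute the auxiliary bracket $\dgal{v_\alpha, v_\beta w_\lambda} = v_\beta \dgal{v_\alpha, w_\lambda} + \dgal{v_\alpha, v_\beta} w_\lambda$ from \eqref{tadidu} and \eqref{tadidv}, and $\dgal{v_\alpha, v_\beta z}$ by the same device as in the base case with $v_\beta$ in place of $v_1$. Substituting the inductive hypothesis for $\dgal{v_\alpha, c'_\lambda}$ with $\lambda<\beta$, the contributions of the form $v_\beta w_\lambda (c'_\lambda e_0)\otimes v_\alpha$ together with the corresponding piece of $\dgal{v_\alpha, v_\beta z}$ telescope via \eqref{cyInd} into $\frac12 c'_\beta e_0\otimes v_\alpha$; dually, the terms proportional to $v_\alpha\otimes v_\beta w_\lambda c'_\lambda$ and $v_\alpha\otimes v_\beta z$ reassemble, again using \eqref{cyInd}, into a multiple of $v_\alpha\otimes c'_\beta$ whose coefficient I must match to $-\frac12\bigl(o(\alpha,\beta)+\delta_{\alpha\beta}\bigr)$.

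The main obstacle is the sign bookkeeping when $\alpha<\beta$: the factor $o(\alpha,\lambda)$ appearing in $\dgal{v_\alpha, v_\beta w_\lambda}$ flips as $\lambda$ crosses $\alpha$, forcing the sum $\sum_{\lambda=1}^{\beta-1}$ to split into ranges $\lambda<\alpha$, $\lambda=\alpha$, $\lambda>\alpha$; moreover the $\lambda=\alpha$ summand carries an additional Kronecker-$\delta$ contribution coming from the $\delta_{\alpha\lambda}$-piece of $\dgal{v_\alpha, w_\lambda}$ in \eqref{tadidu}, which has to cancel against the $\delta_{\alpha\alpha}$-piece of the inductive hypothesis for $\dgal{v_\alpha, c'_\alpha}$. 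Treating the three cases $\alpha>\beta$, $\alpha=\beta$, $\alpha<\beta$ separately and verifying that they all collapse to the single compact expression of the statement is the bulk of the verification, but the computation is structurally identical to the one already carried out for \eqref{CySpin2bis}.
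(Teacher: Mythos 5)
Your proposal is correct and takes essentially the same route as the paper's own proof: induction on $\beta$ via the recursion for $c'_\beta$, with the auxiliary brackets $\dgal{v_\alpha, v_\beta z}$ and $\dgal{v_\alpha, v_\beta w_\lambda}$ computed from the base quiver brackets, followed by a three-way case split $\alpha>\beta$, $\alpha=\beta$, $\alpha<\beta$ in the inductive step. The paper carries out exactly this verification, including keeping the (for $m\geq 2$, vanishing) term $c'_\beta e_0 \otimes v_\alpha$ so that the argument also covers the Jordan-quiver case $m=1$.
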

\begin{proof}
Note that the first term vanishes, but we keep it for the case $m=1$ as explained at the beginning of the proof. 
 We compute 
\begin{equation*}
\begin{aligned}
  \dgal{v_\alpha, v_\beta z}=&\dgal{v_\alpha, v_\beta } z + v_\beta  \dgal{v_\alpha, z} \\
=&-\frac12 \,o(\alpha,\beta) \left(v_\beta\otimes v_\alpha z + v_\alpha \otimes v_\beta z \right)
-\frac12  v_\beta \otimes v_\alpha z+\frac12 v_\beta z e_0 \otimes v_\alpha\,. 
\end{aligned}
\end{equation*} 
We keep the last vanishing term for computations. In particular, we get 
\begin{equation*}
 \dgal{v_\alpha,c'_1}\,\stackrel{\alpha>1}{\,=\,}\,\frac12 v_\alpha \otimes c_1' + \frac12 c_1' e_0 \otimes v_\alpha\,, \quad \dgal{v_1,c_1'}\,=\,-\frac12 v_1 \otimes c'_1 +\frac12 c_1' e_0\otimes v_1\,,  
\end{equation*}
which agrees with our statement for $\beta=1$. Now, we compute 
 \begin{equation*}
\begin{aligned}
\dgal{v_\alpha, v_\beta w_\lambda}=&\dgal{v_\alpha, v_\beta }w_\lambda+ v_\beta \dgal{v_\alpha, w_\lambda} \\
=&-\frac12 \,o(\alpha,\beta) \left(v_\beta\otimes v_\alpha w_\lambda 
+ v_\alpha \otimes v_\beta w_\lambda \right)
+ \delta_{\alpha \lambda}  v_\beta \otimes e_\infty \\
&  + \frac12 [o(\alpha,\lambda)+\delta_{\alpha \lambda}] 
 \left(v_\beta \otimes v_\alpha w_\lambda + v_\beta w_\lambda v_\alpha \otimes e_\infty \right)\,.
\end{aligned}
\end{equation*}
Assume by induction that for all $\lambda<\beta$, 
 \begin{equation*}
  \dgal{v_\alpha, c'_\lambda}\,=\,\frac12 c'_\lambda e_0 \otimes v_\alpha 
- \frac12  \left(o(\alpha,\lambda)+\delta_{\alpha \lambda} \right) v_\alpha\otimes c'_\lambda \,,
 \end{equation*}
then we get by \eqref{Induction}
 \begin{equation*}
\begin{aligned}
\dgal{v_\alpha, c'_\beta}=&
\frac12 \sum_{\lambda=1}^{\beta-1} v_\beta w_\lambda  c'_\lambda e_0 \otimes v_\alpha 
-\frac12 \,o(\alpha,\beta) \sum_{\lambda=1}^{\beta-1}
 \left(v_\beta\otimes v_\alpha w_\lambda  c'_\lambda+ v_\alpha \otimes v_\beta w_\lambda c'_\lambda \right)\\ 
&+ \delta_{(\alpha<\beta)}  v_\beta \otimes c'_\alpha
+\frac12 \sum_{\lambda=1}^{\beta-1}  [o(\alpha,\lambda)+\delta_{\alpha \lambda}] 
 v_\beta \otimes v_\alpha w_\lambda c'_\lambda \\
&-\frac12 \,o(\alpha,\beta) \left(v_\beta\otimes v_\alpha z + v_\alpha \otimes v_\beta z \right)
-\frac12  v_\beta \otimes v_\alpha z+\frac12 v_\beta z e_0 \otimes v_\alpha \,.
\end{aligned}
\end{equation*}
In the case $\alpha>\beta$ we find 
 \begin{equation*}
\begin{aligned}
\dgal{v_\alpha, c'_\beta}\,\stackrel{\alpha> \beta}{\,=\,}\,&
\frac12 \sum_{\lambda=1}^{\beta-1} v_\beta w_\lambda  c'_\lambda e_0 \otimes v_\alpha 
+\frac12 v_\beta z e_0 \otimes v_\alpha 
+\frac12  \sum_{\lambda=1}^{\beta-1}   v_\alpha \otimes v_\beta w_\lambda c'_\lambda 
+\frac12   v_\alpha \otimes v_\beta z \\
=&\, \frac12 c'_\beta e_0 \otimes v_\alpha + \frac12   v_\alpha\otimes c'_\beta\,.
\end{aligned}
\end{equation*}
In the case $\alpha=\beta$  we have 
 \begin{equation*}
\begin{aligned}
\dgal{v_\alpha, c'_\beta}\,\stackrel{\alpha= \beta}{\,=\,}\,&
\frac12 \sum_{\lambda=1}^{\beta-1} v_\beta w_\lambda  c'_\lambda e_0 \otimes v_\alpha 
+\frac12 v_\beta z e_0 \otimes v_\alpha
-\frac12 \sum_{\lambda=1}^{\beta-1}   
 v_\beta \otimes v_\alpha w_\lambda c'_\lambda 
-\frac12  v_\beta \otimes v_\alpha z \\
=&\, \frac12 c'_\beta e_0 \otimes v_\alpha - \frac12   v_\alpha\otimes c'_\beta\,.
\end{aligned}
\end{equation*}
Finally, for $\alpha<\beta$ we get 
 \begin{equation*}
\begin{aligned}
\dgal{v_\alpha, c'_\beta}\,\stackrel{\alpha< \beta}{\,=\,}\,&
\frac12 \sum_{\lambda=1}^{\beta-1} v_\beta w_\lambda  c'_\lambda e_0 \otimes v_\alpha 
-\frac12  \sum_{\lambda=1}^{\beta-1}
 \left(v_\beta\otimes v_\alpha w_\lambda  c'_\lambda+ v_\alpha \otimes v_\beta w_\lambda c'_\lambda \right)   + v_\beta \otimes c'_\alpha \\
&+\frac12 \left[ \sum_{\lambda=\alpha}^{\beta-1} - \sum_{\lambda=1}^{\alpha-1} \right] 
 v_\beta \otimes v_\alpha w_\lambda c'_\lambda 
-\frac12  \left(v_\beta\otimes v_\alpha z + v_\alpha \otimes v_\beta z \right)
-\frac12  v_\beta \otimes v_\alpha z+\frac12 v_\beta z e_0 \otimes v_\alpha \\
=& \frac12 \sum_{\lambda=1}^{\beta-1} v_\beta w_\lambda  c'_\lambda \otimes v_\alpha 
+\frac12 v_\beta z \otimes v_\alpha 
-\frac12  \sum_{\lambda=1}^{\beta-1}  v_\alpha \otimes v_\beta w_\lambda c'_\lambda - 
\frac12 v_\alpha \otimes v_\beta z \\
&+ v_\beta \otimes c'_\alpha -  \sum_{\lambda=1}^{\alpha-1}
v_\beta\otimes v_\alpha w_\lambda  c'_\lambda - v_\beta \otimes v_\alpha z\,,
\end{aligned}
\end{equation*}
which is exactly $\frac12 c'_\beta e_0\otimes v_\alpha - \frac12   v_\alpha\otimes c'_\beta$ 
since the two last terms can be written as $-v_\beta \otimes c'_\alpha$. 
\end{proof}

We can finish the proof of Lemma \ref{Lem:Calg1} by showing  \eqref{CySpin3}. It is easier to  use the induction in the first variable, that is 
 \begin{equation} \label{Induction2}
 \dgal{c'_\alpha,\Gamma}=
\sum_{\lambda=1}^{\alpha-1} \left( v_\alpha w_\lambda \ast  \dgal{ c'_\lambda, \Gamma} 
+ \dgal{ v_\alpha w_\lambda ,\Gamma} \ast  c'_\lambda \right) + \dgal{v_\alpha z,\Gamma} 
\end{equation}
with $\Gamma=c'_\beta$ in our case. By doing so, we can repeatedly use \eqref{CySpin1}, \eqref{CySpin2bis} and 
Lemma \ref{Calg3}. 
We first get 
\begin{equation*}
 \begin{aligned}
  \dgal{v_\alpha z,c'_\beta}=& \dgal{v_\alpha,c'_\beta}\ast z+ v_\alpha \ast \dgal{ z,c'_\beta} \\
=&\frac12 c'_\beta  e_0 z \otimes v_\alpha 
- \frac12  \left(o(\alpha,\beta)+\delta_{\alpha \beta} \right) v_\alpha z\otimes c'_\beta 
-\frac12 c'_\beta z\otimes v_\alpha e_{m-1} +\frac12 c'_\beta\otimes v_\alpha z \\
=&- \frac12  \left(o(\alpha,\beta)+\delta_{\alpha \beta} \right) v_\alpha z\otimes c'_\beta 
+\frac12 c'_\beta\otimes v_\alpha z \,,
 \end{aligned}
\end{equation*}
using that the first and third terms vanishes (they would cancel out in the Jordan quiver case). This  gives in particular $\dgal{c'_1,c'_\beta}=-\frac12 c'_1\otimes c'_\beta + \frac12 c'_\beta \otimes c'_1$. Now using  \eqref{CySpin2bis} as $w_\lambda=a'_\lambda$, 
we can compute 
\begin{equation*}
 \begin{aligned}
  \dgal{v_\alpha w_\lambda,c'_\beta}=& \dgal{v_\alpha,c'_\beta}\ast w_\lambda+ v_\alpha \ast \dgal{ w_\lambda,c'_\beta} \\
=&
\frac12 c'_\beta e_0 w_\lambda\otimes v_\alpha  
- \frac12  \left(o(\alpha,\beta)+\delta_{\alpha \beta} \right) v_\alpha w_\lambda\otimes c'_\beta
-\frac12 c'_\beta w_\lambda \otimes v_\alpha \\
& +\frac12 \left(o(\lambda,\beta)-\delta_{\lambda \beta} \right) e_\infty \otimes v_\alpha w_\lambda c'_\beta
 -\delta_{\lambda \beta} \left(e_\infty \otimes v_\alpha z + \sum_{\gamma=1}^{\beta-1} 
 e_\infty \otimes v_\alpha w_\gamma c'_\gamma \right)\,.
 \end{aligned}
\end{equation*}
The first and third terms cancel out, so we can write 
\begin{equation*}
 \begin{aligned}
 \sum_{\lambda=1}^{\alpha-1} \dgal{v_\alpha w_\lambda,c'_\beta} \ast c'_\lambda 
=& 
- \frac12  \left(o(\alpha,\beta)+\delta_{\alpha \beta} \right) \sum_{\lambda=1}^{\alpha-1} v_\alpha w_\lambda c'_\lambda\otimes c'_\beta
+\frac12 \sum_{\lambda=1}^{\alpha-1} \left(o(\lambda,\beta)-\delta_{\lambda \beta} \right) c'_\lambda \otimes v_\alpha w_\lambda c'_\beta \\
& -\delta_{(\beta<\alpha)} c'_\beta \otimes v_\alpha z -\delta_{(\beta<\alpha)} \sum_{\gamma=1}^{\beta-1} 
 c'_\beta \otimes v_\alpha w_\gamma c'_\gamma \,.
 \end{aligned}
\end{equation*}

Now, assume by induction that for all $\lambda<\alpha$, 
\begin{equation*}
 \dgal{c'_\lambda,c'_\beta}\,=\, \frac12 [o(\lambda,\beta)+\delta_{\lambda\beta}] 
\left(c'_\beta \otimes c'_\lambda - c'_\lambda \otimes c'_\beta \right)\,,
\end{equation*}
and let us show that this holds for $\lambda=\alpha$. Note that it is exactly \eqref{CySpin3} since in the case $\lambda=\beta$ the two terms cancel out. We find by \eqref{Induction2} and our previous computations 
\begin{equation*}
 \begin{aligned}
  \dgal{c'_\alpha,c'_\beta} 
=& \frac12 \sum_{\lambda=1}^{\alpha-1} [o(\lambda,\beta)+\delta_{\lambda\beta}] 
\left(c'_\beta \otimes v_\alpha w_\lambda c'_\lambda - c'_\lambda \otimes v_\alpha w_\lambda c'_\beta \right) \\
&- \frac12  \left(o(\alpha,\beta)+\delta_{\alpha \beta} \right) \sum_{\lambda=1}^{\alpha-1} v_\alpha w_\lambda c'_\lambda\otimes c'_\beta
+\frac12 \sum_{\lambda=1}^{\alpha-1} \left(o(\lambda,\beta)-\delta_{\lambda \beta} \right) c'_\lambda \otimes v_\alpha w_\lambda c'_\beta \\
& -\delta_{(\beta<\alpha)} c'_\beta \otimes v_\alpha z -\delta_{(\beta<\alpha)} \sum_{\gamma=1}^{\beta-1} 
 c'_\beta \otimes v_\alpha w_\gamma c'_\gamma 
- \frac12  \left(o(\alpha,\beta)+\delta_{\alpha \beta} \right) v_\alpha z\otimes c'_\beta 
+\frac12 c'_\beta\otimes v_\alpha z \,.
 \end{aligned}
\end{equation*}
If $\alpha > \beta$ we find 
\begin{equation*}
 \begin{aligned}
  \dgal{c'_\alpha,c'_\beta} 
\,\stackrel{\alpha >  \beta}{\,=\,}\,& 
\frac12 \left( \sum_{\lambda=1}^{\beta}- \sum_{\lambda=\beta+1}^{\alpha-1}\right)  
\left(c'_\beta \otimes v_\alpha w_\lambda c'_\lambda - c'_\lambda \otimes v_\alpha w_\lambda c'_\beta \right) + \frac12   \sum_{\lambda=1}^{\alpha-1} v_\alpha w_\lambda c'_\lambda\otimes c'_\beta \\
&
+\frac12 \left(\sum_{\lambda=1}^{\beta-1} - \sum_{\lambda=\beta}^{\alpha-1} \right) c'_\lambda \otimes v_\alpha w_\lambda c'_\beta 
 - \sum_{\lambda=1}^{\beta-1} 
 c'_\beta \otimes v_\alpha w_\lambda c'_\lambda 
+ \frac12   v_\alpha z\otimes c'_\beta 
-\frac12 c'_\beta\otimes v_\alpha z \\
=&-\frac12 \sum_{\lambda=1}^{\alpha-1}  c'_\beta \otimes v_\alpha w_\lambda c'_\lambda 
-\frac12 c'_\beta\otimes v_\alpha z 
+ \frac12  \sum_{\lambda=1}^{\alpha-1} v_\alpha w_\lambda c'_\lambda\otimes c'_\beta 
+ \frac12   v_\alpha z\otimes c'_\beta \,,
 \end{aligned}
\end{equation*}
which gives us $-\frac12 (c'_\beta \otimes c'_\alpha - c'_\alpha \otimes c'_\beta )$. In the other cases, 
\begin{equation*}
 \begin{aligned}
  \dgal{c'_\alpha,c'_\beta} 
\,\stackrel{\alpha \leqslant   \beta}{\,=\,}\,&  \frac12 \sum_{\lambda=1}^{\alpha-1} 
\left(c'_\beta \otimes v_\alpha w_\lambda c'_\lambda - c'_\lambda \otimes v_\alpha w_\lambda c'_\beta \right) 
- \frac12   \sum_{\lambda=1}^{\alpha-1} v_\alpha w_\lambda c'_\lambda\otimes c'_\beta \\
&
+\frac12 \sum_{\lambda=1}^{\alpha-1}  c'_\lambda \otimes v_\alpha w_\lambda c'_\beta 
- \frac12   v_\alpha z\otimes c'_\beta 
+\frac12 c'_\beta\otimes v_\alpha z \,,
 \end{aligned}
\end{equation*}
and this is trivially $+\frac12 (c'_\beta \otimes c'_\alpha - c'_\alpha \otimes c'_\beta )$.  \qed

\subsection{Proof of Lemma \ref{LodCyxac}} \label{Ann:B2}
This proof can be applied without change in the case $m=1$ treated in \cite{CF2}.  Indeed, we use \eqref{CySpin2bis} instead of \eqref{CySpin2} when computing $\{\!\{a'_\gamma c'_\epsilon ,a'_\alpha c'_\beta\}\!\}$ below. In that way,  all the double brackets that we use during this proof are the ones in \cite{CF2} if we set $m=1$. 

\medskip

  We assume that the integers $k,l\geq1$ satisfy the conditions for the elements to be nonzero, otherwise the proof is trivial. The  first equality is an easy computations, or can be obtained as a consequence of \cite[Lemma A.3]{CF}. Next, we compute  from \eqref{cyD} and \eqref{CySpin1} 
\begin{equation} \label{LodCy1}
 \dgal{x,a'_\alpha c'_\beta}=\frac12 e_{0}\otimes xa'_\alpha c'_\beta-\frac12 e_0 x\otimes a'_\alpha c'_\beta 
+\frac12 a'_\alpha c'_\beta x\otimes e_{m-1}+\frac12 a'_\alpha c'_\beta\otimes x e_{m-1} \,.
\end{equation}
Combining this result with \eqref{cyA}, $S_1:=\{\!\{x^k,a'_\alpha c'_\beta x^l\}\!\}$  becomes 
\begin{equation}
 \begin{aligned} \label{LodCy2}
S_1=&
\sum_{\sigma=1}^k x^{\sigma-1} \ast \dgal{x,a'_\alpha c'_\beta} x^l \ast x^{k-\sigma} 
+\sum_{\sigma=1}^k \sum_{\tau=1}^l x^{\sigma-1} \ast a'_\alpha c'_\beta x^{\tau-1} \dgal{x,x} x^{l-\tau} \ast x^{k-\sigma} \\
=& \frac12 \sum_{\sigma=1}^k 
\Big( e_0 x^{k-\sigma} \otimes x^{\sigma} a'_\alpha c'_\beta x^l  - e_0 x^{k-\sigma+1}  \otimes x^{\sigma-1} a'_\alpha c'_\beta  x^l  \\
&\qquad \qquad  + a'_\alpha c'_\beta x^{k-\sigma+1} \otimes x^{\sigma-1}e_{m-1}x^l  + a'_\alpha c'_\beta x^{k-\sigma} \otimes x^{\sigma}e_{m-1}x^l \Big) \\
&+\frac{1}{2} \sum_{\sigma=1}^k \sum_{\tau=1}^l \sum_{s\in I}
\big(a'_\alpha c'_\beta x^{\tau+1} e_s x^{k-\sigma} \otimes x^{\sigma-1} e_{s-1} x^{l-\tau} \\
& \qquad \qquad \qquad \qquad  - a'_\alpha c'_\beta x^{\tau-1}e_s x^{k-\sigma} \otimes x^{\sigma-1}e_{s-1}x^{l-\tau+2} \big) \,.
 \end{aligned}
\end{equation}
If we apply the multiplication $m$, we have in the last two terms that the nonvanishing terms are for $s\in I$ such that $e_{m-1} x^{\tau+1}e_s=x^{\tau+1}$ and $e_{m-1} x^{\tau-1}e_s=x^{\tau-1}$ respectively, and we get that 
only the third and fourth terms do not cancel out. We find that  $\br{x^k,a'_\alpha c'_\beta x^l}= k \,a'_\alpha c'_\beta x^{k}e_{m-1}x^{l}=k \,a'_\alpha c'_\beta x^{k+l}$ since $x^k e_{m-1}=e_{m-1} x^k$ by assumption on $k$. 
Next, 
\begin{equation*}
 \begin{aligned} 
  \dgal{a'_\gamma c'_\epsilon x^k,a'_\alpha c'_\beta x^l}=&
a'_\gamma c'_\epsilon  \ast \dgal{x^k,a'_\alpha c'_\beta x^l}
+ a'_\alpha c'_\beta \dgal{a'_\gamma c'_\epsilon , x^l} \ast x^k
+\dgal{a'_\gamma c'_\epsilon ,a'_\alpha c'_\beta } x^l \ast x^k\,.
 \end{aligned}
\end{equation*}
From \eqref{LodCy1} and \eqref{LodCy2} we can get for the first two terms 
\begin{equation*}
 \begin{aligned}
S_2:=&a'_\gamma c'_\epsilon  \ast \dgal{x^k,a'_\alpha c'_\beta x^l}
- \sum_{\tau=1}^l a'_\alpha c'_\beta x^{\tau-1} \dgal{x,a'_\gamma c'_\epsilon}^\circ x^{l-\tau} \ast x^k \\
=&\quad  \frac12 \sum_{\sigma=1}^k 
\Big( e_0 x^{k-\sigma} \otimes a'_\gamma c'_\epsilon x^{\sigma} a'_\alpha c'_\beta x^l  - e_0 x^{k-\sigma+1}  \otimes a'_\gamma c'_\epsilon x^{\sigma-1} a'_\alpha c'_\beta  x^l  \\
&\qquad \qquad  + a'_\alpha c'_\beta x^{k-\sigma+1} \otimes a'_\gamma c'_\epsilon x^{\sigma-1}e_{m-1}x^l  + a'_\alpha c'_\beta x^{k-\sigma} \otimes a'_\gamma c'_\epsilon x^{\sigma}e_{m-1}x^l \Big) \\
&+\frac{1}{2} \sum_{\sigma=1}^k \sum_{\tau=1}^l \sum_{s\in I}
\big(a'_\alpha c'_\beta x^{\tau+1} e_s x^{k-\sigma} \otimes a'_\gamma c'_\epsilon x^{\sigma-1} e_{s-1} x^{l-\tau} \\
& \qquad \qquad \qquad \qquad  - a'_\alpha c'_\beta x^{\tau-1}e_s x^{k-\sigma} \otimes a'_\gamma c'_\epsilon x^{\sigma-1}e_{s-1}x^{l-\tau+2} \big)
 \\
&+\frac12\sum_{\tau=1}^l \left(-a'_\alpha c'_\beta x^{\tau}a'_\gamma c'_\epsilon x^k \otimes e_0 x^{l-\tau}
+a'_\alpha c'_\beta  x^{\tau-1} a'_\gamma c'_\epsilon x^k \otimes e_0 x^{l-\tau+1}
\right) \\
&+\frac12 \sum_{\tau=1}^l \left(
- a'_\alpha c'_\beta  x^{\tau-1} e_{m-1} x^k \otimes a'_\gamma c'_\epsilon x^{l-\tau+1}
- a'_\alpha c'_\beta  x^{\tau} e_{m-1} x^k \otimes a'_\gamma c'_\epsilon x^{l-t}
\right)\,.
 \end{aligned}
\end{equation*}
Applying the multiplication map and relabelling indices yields 
\begin{equation*}
 \begin{aligned}
S_2=&\quad  \frac12 
\left[ \sum_{\sigma=1}^k - \sum_{\sigma=0}^{k-1} \right]   e_0 x^{k-\sigma}  a'_\gamma c'_\epsilon x^{\sigma} a'_\alpha c'_\beta x^l     +\frac12 \left[ \sum_{\sigma=0}^{k-1} + \sum_{\sigma=1}^{k} \right]  a'_\alpha c'_\beta x^{k-\sigma}  a'_\gamma c'_\epsilon x^{\sigma}e_{m-1}x^l  \\
&+\frac{1}{2} \sum_{\sigma=1}^k \sum_{\tau=1}^l \sum_{s\in I}
\big(a'_\alpha c'_\beta x^{\tau+1} e_s x^{k-\sigma}  a'_\gamma c'_\epsilon x^{\sigma-1} e_{s-1} x^{l-\tau} \\
& \qquad \qquad \qquad \qquad  - a'_\alpha c'_\beta x^{\tau-1}e_s x^{k-\sigma}  a'_\gamma c'_\epsilon x^{\sigma-1}e_{s-1}x^{l-\tau+2} \big)
 \\
&+\frac12\left[-\sum_{\tau=1}^l+\sum_{\tau=0}^{l-1} \right] a'_\alpha c'_\beta x^{\tau}a'_\gamma c'_\epsilon x^k  e_0 x^{l-\tau}
-\frac12 \left[ \sum_{\tau=0}^{l-1} + \sum_{\tau=1}^l \right]  a'_\alpha c'_\beta  x^{\tau} e_{m-1} x^k \otimes a'_\gamma c'_\epsilon x^{l-t}\,.
 \end{aligned}
\end{equation*}
By assumption, $l,k=1$ modulo $m$, so that $e_s x^k=x^k e_{s+1}$ and $e_s x^k=x^k e_{s+1}$ for any $s\in I$. 
Hence $a'_\alpha c'_\beta x^le_0=a'_\alpha c'_\beta x^l$ and $e_{m-1}x^l a'_\alpha c'_\beta=x^l a'_\alpha c'_\beta$, so we can drop the idempotents in the first line modulo commutators. Similarly, this can be done in the last line. 
For the first term in the middle line, $a'_\alpha c'_\beta x^{\tau+1} e_s$ gives $s=\tau$ mod $m$ while $e_{s-1} x^{l-\tau} a'_\alpha c'_\beta$ gives $s-1+(1-\tau)=0$ mod $m$, which is the same condition. Thus we can drop the idempotent corresponding to $s=\tau$ modulo commutators. For the second term, we get $s=\tau-2$ in the same way and we can drop the idempotent.  So we can write 
\begin{equation*}
 \begin{aligned}
S_2=&\quad  \frac12 
\left[ \sum_{\sigma=1}^k - \sum_{\sigma=0}^{k-1} \right]    x^{k-\sigma}  a'_\gamma c'_\epsilon x^{\sigma} a'_\alpha c'_\beta x^l     +\frac12 \left[ \sum_{\sigma=0}^{k-1} + \sum_{\sigma=1}^{k} \right]  a'_\alpha c'_\beta x^{k-\sigma}  a'_\gamma c'_\epsilon x^{l+\sigma}  \\
&+\frac{1}{2} \sum_{\sigma=1}^k \sum_{\tau=1}^l 
\big(a'_\alpha c'_\beta  x^{k-\sigma+\tau+1}  a'_\gamma c'_\epsilon   x^{l-\tau+\sigma-1}  - a'_\alpha c'_\beta x^{k-\sigma+\tau-1}  a'_\gamma c'_\epsilon x^{l-\tau+\sigma+1} \big)
 \\
&+\frac12\left[-\sum_{\tau=1}^l+\sum_{\tau=0}^{l-1} \right] a'_\alpha c'_\beta x^{\tau}a'_\gamma c'_\epsilon    x^{k+l-\tau}
-\frac12 \left[ \sum_{\tau=0}^{l-1} + \sum_{\tau=1}^l \right]  a'_\alpha c'_\beta  x^{k+\tau} a'_\gamma c'_\epsilon x^{l-t}\,.
 \end{aligned}
\end{equation*}
Note that the middle line can be decomposed as 
\begin{equation*}
\frac12 \left[ \sum_{\sigma=0}^{k-1} \sum_{\tau=l}+ \sum_{\sigma=0} \sum_{\tau=1}^{l-1}- \sum_{\sigma=1}^{k-1} \sum_{\tau=0} - \sum_{\sigma=k} \sum_{\tau=0}^{l-1}  \right]
a'_\alpha c'_\beta x^{k-\sigma+\tau}   a'_\gamma c'_\epsilon x^{l-\tau+\sigma}  \,.
\end{equation*}
We can then rewrite $S_2$ as 
\begin{equation*}
 \begin{aligned}
S_2=&
\frac12 a'_\gamma c'_\epsilon x^{k} a'_\alpha c'_\beta x^l - \frac12 x^{k}   a'_\gamma c'_\epsilon a'_\alpha c'_\beta x^l 
+\frac12 a'_\alpha c'_\beta x^{k}   a'_\gamma c'_\epsilon x^{l} + \frac12 a'_\alpha c'_\beta a'_\gamma c'_\epsilon x^{k+l} 
+\sum_{\sigma=1}^{k-1}   a'_\alpha c'_\beta x^{k-\sigma}   a'_\gamma c'_\epsilon x^{l+\sigma} \\
&+\frac{1}{2}  \sum_{\sigma=0}^{k-1}  a'_\alpha c'_\beta x^{k+l-\sigma}   a'_\gamma c'_\epsilon x^{\sigma} 
+\frac{1}{2} \sum_{\tau=1}^{l-1} a'_\alpha c'_\beta x^{k+\tau}   a'_\gamma c'_\epsilon x^{l-\tau} \\
&-\frac{1}{2} \sum_{\sigma=1}^{k-1}  a'_\alpha c'_\beta x^{k-\sigma}   a'_\gamma c'_\epsilon x^{l+\sigma} 
-\frac{1}{2} \sum_{\tau=0}^{l-1} a'_\alpha c'_\beta x^{\tau}   a'_\gamma c'_\epsilon x^{k+l-\tau} \\
&+\frac12 a'_\alpha c'_\beta a'_\gamma c'_\epsilon x^{k+l} - \frac12
a'_\alpha c'_\beta x^{l}a'_\gamma c'_\epsilon x^k   
-\frac12 a'_\alpha c'_\beta  x^{k}   a'_\gamma c'_\epsilon x^{l}-\frac12 a'_\alpha c'_\beta  x^{k+l}   a'_\gamma c'_\epsilon 
-  \sum_{\tau=1}^{l-1}  a'_\alpha c'_\beta  x^{k+\tau}   a'_\gamma c'_\epsilon x^{l-\tau}
 \end{aligned}
\end{equation*}
We can cancel terms together (some of them modulo commutators) to obtain 
\begin{equation*}
 \begin{aligned}
S_2=&
 +\frac12 a'_\alpha c'_\beta a'_\gamma c'_\epsilon x^{k+l} 
-\frac12 a'_\alpha c'_\beta  x^{k+l}   a'_\gamma c'_\epsilon + \frac12 \sum_{\sigma=1}^{k-1}   \left(a'_\alpha c'_\beta x^{k-\sigma}   a'_\gamma c'_\epsilon x^{l+\sigma}+a'_\alpha c'_\beta x^{k+l-\sigma}   a'_\gamma c'_\epsilon x^{\sigma}\right) \\
&
-\frac{1}{2} \sum_{\tau=1}^{l-1} \left( a'_\alpha c'_\beta x^{\tau}   a'_\gamma c'_\epsilon x^{k+l-\tau} +a'_\alpha c'_\beta  x^{k+\tau}   a'_\gamma c'_\epsilon x^{l-\tau}\right) \\
=&\frac12 a'_\alpha c'_\beta a'_\gamma c'_\epsilon x^{k+l} 
-\frac12 a'_\gamma c'_\epsilon  a'_\alpha c'_\beta  x^{k+l}   
+ \frac12 \left[\sum_{v=1}^{k}-\sum_{v=1}^{l} \right]  \left(a'_\alpha c'_\beta x^{v}   a'_\gamma c'_\epsilon x^{k+l-v}+a'_\alpha c'_\beta x^{k+l-v}   a'_\gamma c'_\epsilon x^{v}\right) \,,
 \end{aligned}
\end{equation*}
where we added the terms $v=k,l$ in the sums because they cancel out together. Now, we compute 
\begin{equation*}
 \begin{aligned}
\dgal{a'_\gamma c'_\epsilon ,a'_\alpha c'_\beta}=&
\dgal{a'_\gamma ,a'_\alpha } c'_\beta \ast c'_\epsilon 
+a'_\alpha \dgal{a'_\gamma  ,c'_\beta} \ast c'_\epsilon
+a'_\gamma  \ast \dgal{c'_\epsilon ,a'_\alpha } c'_\beta
+a'_\gamma \ast a'_\alpha \dgal{ c'_\epsilon , c'_\beta} \\
=&
-\frac12 \,o(\gamma,\alpha) \left(a'_\gamma c'_\epsilon\otimes a'_\alpha c'_\beta + a'_\alpha c'_\epsilon \otimes a'_\gamma c'_\beta\right) +\frac12 o(\epsilon,\beta) \left(a'_\alpha c'_\beta \otimes a'_\gamma c'_\epsilon -a'_\alpha c'_\epsilon \otimes a'_\gamma c'_\beta \right) \\
& -\frac12 a'_\alpha c'_\beta a'_\gamma c'_\epsilon \otimes e_0
 +\frac12 \left(o(\gamma,\beta)-\delta_{\gamma \beta} \right) a'_\alpha c'_\epsilon \otimes a'_\gamma c'_\beta
 -\delta_{\gamma \beta} \left(a'_\alpha c'_\epsilon \otimes e_0 z + \sum_{\mu=1}^{\beta-1} 
 a'_\alpha c'_\epsilon \otimes a'_\mu c'_\mu \right) \\
&
+\frac12 e_0 \otimes a'_\gamma c'_\epsilon a'_\alpha c'_\beta
 - \frac12 \left(o(\alpha,\epsilon)-\delta_{\alpha \epsilon} \right)  a'_\alpha c'_\epsilon \otimes a'_\gamma c'_\beta
 +\delta_{\alpha \epsilon} \left(e_0 z\otimes a'_\gamma c'_\beta + \sum_{\lambda=1}^{\epsilon-1} 
  a'_\lambda c'_\lambda   \otimes a'_\gamma c'_\beta \right) \,,
 \end{aligned}
\end{equation*}
which we have to multiply on the right by $x^l$ (for the outer bimodule structure) and $x^k$ (for the inner bimodule structure). After doing so, we apply the multiplication map and denote by $S_3$ the expression obtained in that way, i.e. $S_3=m \circ (\{\!\{a'_\gamma c'_\epsilon ,a'_\alpha c'_\beta\}\!\}x^l \ast x^k)$. 
We finally get 
\begin{equation*}
 \begin{aligned}
&\br{a'_\gamma c'_\epsilon x^k,a'_\alpha c'_\beta x^l}=S_2+S_3 \\
=&+ \frac12 \left[\sum_{v=1}^{k}-\sum_{v=1}^{l} \right] 
 \left(a'_\alpha c'_\beta x^{v}   a'_\gamma c'_\epsilon x^{k+l-v}+a'_\alpha c'_\beta x^{k+l-v}   a'_\gamma c'_\epsilon x^{v}\right) \\
&+\frac12 \,o(\alpha,\gamma) \left(a'_\gamma c'_\epsilon x^k  a'_\alpha c'_\beta x^l + a'_\alpha c'_\epsilon  x^k  a'_\gamma c'_\beta x^l\right) +\frac12 o(\epsilon,\beta) \left(a'_\alpha c'_\beta  x^k  a'_\gamma c'_\epsilon x^l - a'_\alpha c'_\epsilon  x^k  a'_\gamma c'_\beta x^l \right) \\
&  -\frac12 \left(o(\beta,\gamma)+\delta_{\gamma \beta} \right) a'_\alpha c'_\epsilon  x^k  a'_\gamma c'_\beta x^l
 -\delta_{\gamma \beta} \left(a'_\alpha c'_\epsilon  x^k  z x^l + \sum_{\mu=1}^{\beta-1} 
 a'_\alpha c'_\epsilon  x^k  a'_\mu c'_\mu x^l \right) \\
& + \frac12 \left(o(\epsilon,\alpha)+\delta_{\alpha \epsilon} \right)  a'_\alpha c'_\epsilon  x^k  a'_\gamma c'_\beta x^l
 +\delta_{\alpha \epsilon} \left(z x^k  a'_\gamma c'_\beta x^l + \sum_{\lambda=1}^{\epsilon-1} 
  a'_\lambda c'_\lambda    x^k  a'_\gamma c'_\beta x^l \right) \,,
 \end{aligned}
\end{equation*}
modulo commutators. This is our claim. \qed

\subsection{Proof of Lemma \ref{Lem:Cytt}} \label{Ann:B2bis}

Let's restate the setting. For $u\in \{x,y,z,\sum_s e_s+xy\}$, set  $\epsilon(x)=+1$, $\epsilon(y)=-1$, $\epsilon(z)=-1$ or $\epsilon(\sum_s e_s +xy)=+1$.  We also set $\theta(u)=\epsilon(u)$ if $u=x,y,z$, while $\theta(\sum_s e_s+xy)=0$. With these notations, we have $u \in \oplus_s e_s A e_{s+\theta(u)}$, and we can write  that $\dgal{u,u}=\frac12 \epsilon(u)[u^2 F_{\theta(u)} - F_{\theta(u)} u^2]$. Moreover, we can also obtain in all cases that 
\begin{equation} \label{EquvwCy}
  \dgal{u, w_\alpha}= \frac12 e_{0}\otimes uw_\alpha-\frac12 e_0 u\otimes w_\alpha\,,\quad 
\dgal{u, v_\alpha}= \frac12 v_\alpha u\otimes e_0-\frac12 v_\alpha\otimes u e_0\,.
\end{equation}

For the first statement, we compute for any $\alpha,\beta=1,\ldots,d$ 
\begin{equation*} 
  \dgal{u, w_\alpha v_\beta}=
\frac12 (w_\alpha v_\beta u\otimes e_0 -w_\alpha v_\beta\otimes u e_0  +   e_{0}\otimes uw_\alpha v_\beta-   e_0 u\otimes w_\alpha v_\beta )\,,
\end{equation*}
We find  in a similar way to \eqref{LodCy2} in the proof of Lemma \ref{LodCyxac}, 
\begin{equation*}
 \begin{aligned} 
\dgal{u^k,w_\alpha v_\beta u^l}=&
 \frac12 \sum_{\sigma=1}^k 
\Big( e_0 u^{k-\sigma} \otimes u^{\sigma} w_\alpha v_\beta u^l  - e_0 u^{k-\sigma+1}  \otimes u^{\sigma-1} w_\alpha v_\beta u^l  \\
&\qquad \qquad  + w_\alpha v_\beta u^{k-\sigma+1} \otimes u^{\sigma-1}e_{0}u^l  - w_\alpha v_\beta u^{k-\sigma} \otimes u^{\sigma}e_{0}u^l \Big) \\
&+\frac{1}{2} \sum_{\sigma=1}^k \sum_{\tau=1}^l \sum_{s\in I}
\big(w_\alpha v_\beta u^{\tau+1} e_s u^{k-\sigma} \otimes u^{\sigma-1} e_{s-\theta(u)} u^{l-\tau} \\
& \qquad \qquad \qquad \qquad  - w_\alpha v_\beta u^{\tau-1}e_s u^{k-\sigma} \otimes u^{\sigma-1}e_{s-\theta(u)}u^{l-\tau+2} \big) \,.
 \end{aligned}
\end{equation*}
We now apply the multiplication map, and we clearly see that the terms in the first two lines cancel out. In the last two lines, we obtain factors  
$e_s u^{k-1} e_{s-\theta(u)}=u^{k-1}e_{s+(k-1)\theta(u)}e_{s-\theta(u)}$ because $u \in \oplus_s e_s A e_{s+\theta(u)}$. Thus, these two lines clearly disappear if $k$ is not divisible by $m$. Assuming now that $k=0$ mod $m$, we also remark that we have the factor $w_\alpha v_\beta u^{\tau+1} e_s$ in the third line, and since $v_\beta=v_\beta e_0$ this implies that the only $s$ that gives a nonzero term is such that $s-(\tau+1)\theta(u)=0$ mod $m$. The same argument in the last line allows to remove the idempotents and the sum over $s\in I$. Thus, all the terms in those sums are just $w_\alpha v_\beta u^{\tau+k+l}$, and they cancel out together. 

\medskip

For the second claim, we  show more generally that for any fixed $\alpha=1,\ldots,d$, the elements $(u^k,w_\alpha v_\alpha u^l)$ form a commutative Lie subalgebra in $A/[A,A]$. 
It is just an application of  \cite[Lemma A.3]{CF} to show $\br{u^k,u^l}=0$, and we have from the previous part that 
$\br{u^k,w_\alpha v_\alpha u^l}=0$. Thus, it remains to prove that $\br{w_\alpha v_\alpha u^k,w_\alpha v_\alpha u^l}=0$ in $A/[A,A]$. On one hand, we get again by adapting the argument in the proof of Lemma \ref{LodCyxac} 
\begin{equation*}
  \begin{aligned}
 &m \circ \left( w_\alpha v_\alpha \ast \dgal{u^k,w_\alpha v_\alpha u^l}+w_\alpha v_\alpha\dgal{w_\alpha v_\alpha ,u^l}\ast u^k \right)\\
=&\quad  \frac12 
\left[ \sum_{\sigma=1}^k - \sum_{\sigma=0}^{k-1} \right]    u^{k-\sigma}  w_\alpha v_\alpha  u^{\sigma}w_\alpha v_\alpha  u^l     +\frac12 \left[ \sum_{\sigma=0}^{k-1} - \sum_{\sigma=1}^{k} \right]  w_\alpha v_\alpha  u^{k-\sigma}  w_\alpha v_\alpha  u^{l+\sigma}  \\
&+ \frac12 \left[ \sum_{\sigma=0}^{k-1} \sum_{\tau=l}+ \sum_{\sigma=0} \sum_{\tau=1}^{l-1}- \sum_{\sigma=1}^{k-1} \sum_{\tau=0} - \sum_{\sigma=k} \sum_{\tau=0}^{l-1}  \right]
w_\alpha v_\alpha x^{k-\sigma + \tau}   w_\alpha v_\alpha x^{l-\tau + \sigma}
 \\
&+\frac12\left[-\sum_{\tau=1}^l+\sum_{\tau=0}^{l-1} \right] w_\alpha v_\alpha  u^{\tau}w_\alpha v_\alpha     u^{k+l-\tau}
+\frac12 \left[- \sum_{\tau=0}^{l-1} + \sum_{\tau=1}^l \right] w_\alpha v_\alpha   u^{k+\tau} w_\alpha v_\alpha  u^{l-t}\,,
  \end{aligned}
\end{equation*}
because we can get rid of the idempotents modulo commutators, after careful analysis. After simplification, all terms vanish modulo commutators. 

On the other hand,  we compute 
\begin{equation*}
  \dgal{w_\alpha v_\alpha,w_\alpha v_\alpha}= e_0 \otimes w_\alpha v_\alpha - w_\alpha v_\alpha \otimes e_0 
+\frac12 e_0 \otimes (w_\alpha v_\alpha)^2 - \frac12 (w_\alpha v_\alpha)^2 \otimes e_0\,.
 \end{equation*}
Hence $m \circ (\dgal{w_\alpha v_\alpha,w_\alpha v_\alpha}x^l \ast x^k)=0$ modulo commutators and we can conclude. \qed

\subsection{Proof of Proposition \ref{PropInvPhiCy}} \label{Ann:B3}
This proof can be applied without change in the case $m=1$ treated in \cite{CF2}, after setting $e_s=e_0$ for each $s\in I$, and $F_b=e_0 \otimes e_0$ for all $b \in \Z$. 

\medskip

We begin with the first identity,  and let $U_\alpha=u(1+\eta \phi)$ instead of $U_{+,\alpha}$ to ease notations. 
Writing $\dgal{U_{\alpha},U_{\eta}}=a'\otimes a''$, we get  
 \begin{equation} \label{EqUU}
   \frac{1}{KL}\br{U_{\alpha}^K,U_{\eta}^L}=U_{\eta}^{L-1} a' U_{\alpha}^{K-1} a'' \quad \text{ mod }[A,A]\,, 
 \end{equation}
So we have to compute 
\begin{equation} \label{EqUUCy}
 \dgal{u+\alpha u\phi,u+\eta u\phi}= \dgal{u,u}+\alpha \dgal{u\phi,u}+\eta \dgal{u,u\phi}
+\alpha \eta \dgal{u\phi,u\phi}\,. 
\end{equation}
With the notations of the proof of Lemma \ref{Lem:Cytt}, we find 
\begin{equation*}
  \dgal{u,u}=\frac12 \epsilon(u)[u^2 F_{\theta(u)} - F_{\theta(u)} u^2]\,, \quad 
\dgal{\phi,\gamma}=\frac{1}{2}\phi \ast (\gamma F_0-F_0 \gamma)+ \frac12 (\gamma F_0 - F_0 \gamma)\ast \phi\,,\,,
\end{equation*}
where  $\gamma$ is any word in the letters $\{e_s,x_s,y_s\}$ (with possible inverses). The second equation is obtained by combining \eqref{Phim} and Lemma \ref{Lem1} applied to the subquiver based at $I$, the set of all vertices in the cycle. We see that we can write   
\begin{equation*}
\begin{aligned}
    \dgal{\phi,u}=\frac{1}{2} (u F_0 \phi-F_0 \phi u)+ \frac12 (u\phi F_0 -\phi F_0 u)\,,  \quad 
\dgal{\phi,\phi}=\frac{1}{2}(\phi^2 F_0-F_0 \phi^2)\,,
\end{aligned}
\end{equation*}
because $\phi\in \oplus_s e_s A e_s$, so $\phi$ commutes with any $e_s$. 
As $u \in \oplus_s e_s A e_{s+\theta(u)}$ we have $u e_s = e_{s-\theta(u)}u$,  so that 
\begin{equation*}
  \dgal{u,\phi}=\frac{1}{2}\phi (u F_{\theta(u)}-F_{\theta(u)} u)
+ \frac12 (uF_{\theta(u)} - F_{\theta(u)} u)\phi\,.
\end{equation*}

From these basic results, we directly get the first term in \eqref{EqUUCy}. For the second term, we compute 
\begin{equation*}
\begin{aligned}
  \dgal{u\phi, u}=&u\ast \dgal{\phi,u}+\dgal{u,u}\ast \phi \\
=&\frac12  u \ast (u F_0 \phi-F_0 \phi u)+ \frac12 u\ast (u\phi F_0  - \phi F_0 u) 
+\frac12 \epsilon(u) (u^2 \phi F_{\theta(u)}- \phi F_{\theta(u)} u^2)  \\
=&\frac12   (u F_{\theta(u)} u \phi-F_{\theta(u)} u \phi u + u\phi F_{\theta(u)}u  - \phi F_{\theta(u)} u^2) 
+\frac12 \epsilon(u) (u^2 \phi F_{\theta(u)}- \phi F_{\theta(u)} u^2)
\end{aligned}
\end{equation*}
using that $u \ast F_0 =F_{\theta(u)}u$ since $u e_s= e_{s-\theta(u)}u$. The term 
$\dgal{u,u\phi}=-\dgal{u\phi, u}^\circ$ follows from the following result. 
\begin{lem}
Fix some $r \in \N$ and let  $a \in \oplus_s e_s A e_s$, $b_0,b_1 \in \oplus_s e_s A e_{s+r}$ and $c\in \oplus_s e_s A e_{s+2r}$. Then $(b_0 F_r b_1)^\circ = b_1 F_r b_0$ and $(c F_r a)^\circ = a F_r c$. 
\end{lem}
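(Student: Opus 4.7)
The plan is to unfold both sides using the explicit formula $F_r = \sum_{s\in I} e_{s+r}\otimes e_s$ and the outer bimodule structure on $A\otimes A$ (in which scalars on the left act on the first tensor factor and scalars on the right act on the second). Under this convention, for any $b_0,b_1\in A$ we have
\begin{equation*}
b_0 F_r b_1 \;=\; \sum_{s\in I}\bigl(b_0 e_{s+r}\bigr)\otimes\bigl(e_s b_1\bigr).
\end{equation*}
The grading assumptions will then collapse the idempotents into a symmetric form.

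First I would handle the identity $(b_0 F_r b_1)^\circ = b_1 F_r b_0$. Since $b_0,b_1\in\bigoplus_s e_s A e_{s+r}$, we have $b_0 e_{s+r}=e_s b_0 e_{s+r}$ and $e_s b_1 = e_s b_1 e_{s+r}$, so
\begin{equation*}
b_0 F_r b_1 = \sum_{s\in I}\bigl(e_s b_0 e_{s+r}\bigr)\otimes\bigl(e_s b_1 e_{s+r}\bigr).
\end{equation*}
Applying $\tau_{(12)}$ simply exchanges the roles of $b_0$ and $b_1$ in each summand, and the right-hand side is visibly $b_1 F_r b_0$ computed in the same way.

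For the second identity $(cF_r a)^\circ = aF_r c$ I would argue similarly. Using $c\in\bigoplus_s e_s A e_{s+2r}$ and $a\in\bigoplus_s e_s A e_s$, one obtains
\begin{equation*}
cF_r a = \sum_{s\in I}\bigl(e_{s-r}\,c\,e_{s+r}\bigr)\otimes\bigl(e_s\,a\,e_s\bigr),
\qquad
aF_r c = \sum_{s\in I}\bigl(e_{s+r}\,a\,e_{s+r}\bigr)\otimes\bigl(e_s\,c\,e_{s+2r}\bigr).
\end{equation*}
The claim then follows by reindexing the second sum with $s\mapsto s-r$ (legitimate since $I=\Z/m\Z$) and comparing with $\tau_{(12)}$ applied to the first sum.

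There is essentially no obstacle here; the statement is a bookkeeping identity about the interaction between the special tensor $F_r$ and the grading of $A$ by the idempotents $(e_s)_{s\in I}$. The only thing to be careful about is to use the outer bimodule structure consistently and to keep track of how the grading of each factor forces the insertion of the correct idempotents, which is exactly the point of the hypotheses $a\in\bigoplus_s e_sAe_s$, $b_0,b_1\in\bigoplus_s e_sAe_{s+r}$ and $c\in\bigoplus_s e_sAe_{s+2r}$.
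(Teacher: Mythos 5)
Your argument is correct and follows essentially the same route as the paper: expand $b_0F_rb_1$ (resp. $cF_ra$) using $F_r=\sum_s e_{s+r}\otimes e_s$ and the outer bimodule structure, use the grading hypotheses to insert the idempotents on both sides of each factor, and then observe that $\tau_{(12)}$ matches the resulting expression with the one for $b_1F_rb_0$ (resp., after the shift $s\mapsto s-r$ in $\Z/m\Z$, with $aF_rc$). The paper carries out only the first identity explicitly and is terser, but the content is the same.
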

\begin{proof}
  We compute $(b_0 F_r b_1)^\circ = \sum_s e_{s}b_1\otimes b_0 e_{s+r}
=\sum_s b_1 e_{s+r}\otimes e_{s}b_0 =b_1 F_{r} b_0$. The second equality follows similarly. 
\end{proof}

Taking $r=\theta(u)$, $b_0=u$ and $b_1=u \theta$ gives $(u F_{\theta(u)} u \phi)^\circ=u \phi F_{\theta(u)} u$. Using the lemma on the other terms of $\dgal{u\phi, u}$ yields 
\begin{equation*}
  \dgal{u,u\phi}=-\frac12   (u \phi F_{\theta(u)} u - u \phi u F_{\theta(u)}  + u F_{\theta(u)}u \phi  - u^2 F_{\theta(u)} \phi) 
+\frac12 \epsilon(u) (u^2  F_{\theta(u)} \phi-  F_{\theta(u)} u^2 \phi)\,.
\end{equation*}
Computing that 
\begin{equation*}
\begin{aligned}
   \dgal{u\phi,\phi}=&\dgal{u,\phi}\ast \phi + u \ast \dgal{\phi,\phi} \\
=& \frac{1}{2}\phi (u F_{\theta(u)}-F_{\theta(u)} u) \ast \phi
+ \frac12 (uF_{\theta(u)} - F_{\theta(u)} u)\phi \ast \phi 
+ \frac12 u \ast  (\phi^2 F_0-F_0 \phi^2) \\
=& \frac{1}{2}(\phi u \phi F_{\theta(u)}+ u\phi F_{\theta(u)} \phi 
- \phi F_{\theta(u)} u \phi-F_{\theta(u)}u \phi^2)\,,
\end{aligned}
\end{equation*}
we find for the fourth term $\dgal{u\phi,u\phi}=u\dgal{u\phi,\phi}+\dgal{u\phi,u}\phi$ that 
\begin{equation*}
  \dgal{u\phi,u\phi}
=\frac{1}{2}(u\phi u \phi F_{\theta(u)}+ u^2\phi F_{\theta(u)} \phi 
 - F_{\theta(u)}u \phi u \phi- \phi F_{\theta(u)}u^2\phi ) 
+\frac12 \epsilon(u) (u^2 \phi F_{\theta(u)} \phi- \phi F_{\theta(u)} u^2\phi) \,.
\end{equation*}
Remarking that  $U_\alpha-u=\alpha u\phi$ and $U_\eta-u=\eta u\phi$, we find 
\begin{equation*}
 \begin{aligned}
\alpha\eta  \dgal{u\phi,u\phi}=&
\frac{1}{2}\eta u\phi (U_\alpha-u) F_{\theta(u)}+ \frac12\eta u(U_\alpha-u) F_{\theta(u)} \phi 
 -\frac12 \alpha F_{\theta(u)}u \phi (U_\eta-u) \\
&-\frac12 \alpha \phi F_{\theta(u)}u (U_\eta-u)  
+\frac12 \epsilon(u) \eta u(U_\alpha-u) F_{\theta(u)} \phi
- \frac12 \epsilon(u)\alpha  \phi F_{\theta(u)} u (U_\eta-u)\,.
 \end{aligned}
\end{equation*}
Now, we sum all the terms appearing in \eqref{EqUUCy}, which yields 
\begin{equation*}
 \begin{aligned}
  \dgal{U_\alpha,U_\eta}=&\,
\frac12 \epsilon(u) (u^2 F_{\theta(u)}- F_{\theta(u)} u^2) 
+\frac12 \alpha  \epsilon(u) u^2 \phi F_{\theta(u)} 
-\frac12 \eta \epsilon(u)   F_{\theta(u)} u^2 \phi 
\\
&+\frac12 \alpha  (u F_{\theta(u)}u\phi 
+ u\phi F_{\theta(u)} u) 
-\frac12 \eta (u\phi F_{\theta(u)} u 
+ u F_{\theta(u)} u\phi) 
\\
&+\frac{1}{2}\eta u\phi U_\alpha F_{\theta(u)}+ \frac12\eta uU_\alpha F_{\theta(u)} \phi 
 -\frac12 \alpha F_{\theta(u)}u \phi U_\eta \\
&-\frac12 \alpha \phi F_{\theta(u)}u U_\eta 
+\frac12 \epsilon(u) \eta u U_\alpha F_{\theta(u)} \phi
- \frac12 \epsilon(u)\alpha  \phi F_{\theta(u)} u U_\eta\,.
 \end{aligned}
\end{equation*}
As $u$ is assumed to be invertible, we can repeat the substitution under the form $u^{-1}(U_\alpha-u)=\alpha \phi$.  We find in this way  
\begin{equation*}
 \begin{aligned}
  \dgal{U_\alpha,U_\eta}=&\,
\frac12 \epsilon(u) (u^2 F_{\theta(u)}- F_{\theta(u)} u^2) 
+\frac12   \epsilon(u) u (U_\alpha-u) F_{\theta(u)} 
-\frac12  \epsilon(u)   F_{\theta(u)} u (U_\eta-u)
\\
&+\frac12 (u F_{\theta(u)} (U_\alpha-u)
+ (U_\alpha-u) F_{\theta(u)} u) 
-\frac12  ((U_\eta-u) F_{\theta(u)} u 
+ u F_{\theta(u)} (U_\eta-u)) 
\\
&+\frac{1}{2} (U_\eta-u) U_\alpha F_{\theta(u)}+ \frac12 uU_\alpha F_{\theta(u)} u^{-1}(U_\eta-u)
 -\frac12 F_{\theta(u)} (U_\alpha-u) U_\eta \\
&-\frac12 u^{-1}(U_\alpha-u) F_{\theta(u)}u U_\eta 
+\frac12 \epsilon(u)  u U_\alpha F_{\theta(u)} u^{-1}(U_\eta-u)
- \frac12 \epsilon(u)  u^{-1}(U_\alpha-u) F_{\theta(u)} u U_\eta
 \end{aligned}
\end{equation*}
This may be reduced to the form 
\begin{equation*}
 \begin{aligned}
  \dgal{U_\alpha,U_\eta}=&\,
+\frac12 (1+\epsilon(u)) \left[   u U_\alpha F_{\theta(u)} u^{-1}U_\eta
-  u^{-1}U_\alpha F_{\theta(u)} u U_\eta \right]
\\
&+\frac12 (u F_{\theta(u)} U_\alpha
+ U_\alpha F_{\theta(u)} u) 
-\frac12  (U_\eta F_{\theta(u)} u 
+ u F_{\theta(u)} U_\eta) 
\\
&-u U_\alpha F_{\theta(u)}+ F_{\theta(u)} u U_\eta
+\frac{1}{2} (U_\eta U_\alpha F_{\theta(u)}-F_{\theta(u)} U_\alpha U_\eta )
 \end{aligned}
\end{equation*}
The latter expression can be put back in \eqref{EqUU}, and we get 
\begin{equation} \label{EqUKLend}
 \begin{aligned}
  \br{U_\alpha^K,U_\eta^L}=&\,
\frac12 (1+\epsilon(u)) \left[   U_{\eta}^L u U_\alpha^K  u^{-1}
- U_\eta^L  u^{-1}U_\alpha^K u \right]
\\
&+\frac12 (-U_\eta^{L-1} u U_\alpha^{K} + U_\eta^{L-1} U_\alpha^{K}u
+ U_\eta^{L} U_\alpha^{K-1} u - U_\eta^{L} u U_\alpha^{K-1} )
 \end{aligned}
\end{equation}
all mod $[A,A]$. 
Indeed, since we have the decomposition $U_\alpha\in \oplus_s e_s A e_{s+\theta(u)}$ we can write when we insert the first term of $\dgal{U_\alpha,U_\eta}$ in  \eqref{EqUU} that 
\begin{equation*} 
\begin{aligned}  
\sum_s U_\eta^{L-1} u U_\alpha e_{s+\theta(u)} U_{\alpha}^{K-1} e_s u^{-1}U_\eta
=& U_\eta^{L-1} u U_\alpha \left(\sum_s e_{s+\theta(u)} e_{s-(K-1)\theta(u)}\right) U_{\alpha}^{K-1}  u^{-1}U_\eta \\
=& U_\eta^{L-1} u U_\alpha  U_{\alpha}^{K-1}  u^{-1}U_\eta
\end{aligned}
\end{equation*}
for $(K-1)\theta(u) {\equiv}-\theta(u)$ mod $m$, that is $K$ is divisible by $m$ if $\theta(u)=\pm1$ (for   $u=x,y,z$), while $K \geq 1$ for $\theta(u)=0$ (for $u=\sum_s e_s+xy$). For every other element, we can proceed in the same way and establish \eqref{EqUKLend}. By assumption, the first line in  \eqref{EqUKLend} vanishes for $\epsilon(u)=-1$. The second line is trivially zero if $\alpha=\eta$. Otherwise we use  $u=\frac{1}{\alpha-\eta} (\alpha U_\eta - \eta U_\alpha)$ and we also get that the second line vanishes. 

\medskip

The same proof works when $\epsilon(z)=+1$ to show that  $\br{U_{-,\alpha}^K,U_{-,\eta}^L}=0$ modulo commutators for $U_{-,\alpha}=u(1+\alpha \phi^{-1})$. We only need to notice that  $\dgal{\phi^{-1},a}=-\phi^{-1} \ast \dgal{\phi,a} \ast \phi^{-1}$, so we just need to replace in the expression $\dgal{\phi,a}$ the factors $\phi$ by $\phi^{-1}$ 
and multiply  by an overall factor $-1$. Thus, reproducing the proof in the first case with some sign changes, we get 
\begin{equation*}
 \frac{1}{KL}\br{U_{-,\alpha}^K,U_{-,\eta}^L}=
\frac12 (-1+\epsilon(u)) \left(U_{-,\eta}^{L} u U_{-,\alpha}^{K} u^{-1} 
- U_{-,\eta}^{L} u^{-1} U_{-,\alpha}^{K}u \right)\,,
\end{equation*}
modulo commutators. This yields the desired result for $\epsilon(u)=+1$.  \qed

\section{Poisson isomorphism between MQVs}\label{Ann:iso}

Before proving Proposition \ref{IsoCMPoiss}, let's remark that Lemma \ref{LodCyxac} together with \eqref{relInv} give  
\begin{subequations}
 \begin{align}
 & \brap{\tr X^k,\tr X^l}=0\,, \quad 
\brap{\tr X^k,\tr(\As^{(m)}  E_{\alpha \beta} \Cs^{(m)}  X^l)}=k\, \tr(\As^{(m)}  E_{\alpha \beta} \Cs^{(m)}  X^{k+l})\,, \\
&\brap{\tr(\As^{(m)}  E_{\gamma \epsilon} \Cs^{(m)}  X^k),\tr(\As^{(m)}  E_{\alpha \beta} \Cs^{(m)}  X^l)} \nonumber \\
=&\frac12 \left(\sum_{v=1}^k-\sum_{v=1}^l \right) 
\left(\tr(\As^{(m)}  E_{\alpha \beta} \Cs^{(m)}  X^v \As^{(m)}  E_{\gamma \epsilon} \Cs^{(m)}  X^{k+l-v})
+\tr(\As^{(m)}  E_{\alpha \beta} \Cs^{(m)}  X^{k+l-v} \As^{(m)}  E_{\gamma \epsilon} \Cs^{(m)}  X^{v}) \right) \nonumber \\
&+\frac12  o(\alpha,\gamma) \left(\tr(\As^{(m)}  E_{\gamma \epsilon} \Cs^{(m)}  X^k \As^{(m)}  E_{\alpha \beta} \Cs^{(m)}  X^l) 
+\tr(\As^{(m)}  E_{\alpha \epsilon} \Cs^{(m)}  X^k \As^{(m)}  E_{\gamma \beta} \Cs^{(m)}  X^l)\right) \nonumber \\
&+\frac12  o(\epsilon,\beta) \left(\tr(\As^{(m)}  E_{\alpha \beta} \Cs^{(m)}  X^k \As^{(m)}  E_{\gamma \epsilon} \Cs^{(m)}  X^l) 
-\tr(\As^{(m)}  E_{\alpha \epsilon} \Cs^{(m)}  X^k \As^{(m)}  E_{\gamma \beta} \Cs^{(m)}  X^l)\right) \nonumber \\
&+\frac12 [o(\epsilon,\alpha)+\delta_{\alpha \epsilon}]\,\tr(\As^{(m)}  E_{\alpha \epsilon} \Cs^{(m)}  X^k \As^{(m)}  E_{\gamma \beta} \Cs^{(m)}  X^l) \nonumber \\
&-\frac12 [o(\beta,\gamma)+\delta_{\beta \gamma}]\,\tr(\As^{(m)}  E_{\alpha \epsilon} \Cs^{(m)}  X^k \As^{(m)}  E_{\gamma \beta} \Cs^{(m)}  X^l) \nonumber \\
&+ \delta_{\alpha \epsilon}\tr\left( \left[t^{-1}Z+\sum_{\lambda=1}^{\epsilon-1} \As^{(m)}  E_{\lambda \lambda} \Cs^{(m)}  \right]X^k \As^{(m)}  E_{\gamma \beta} \Cs^{(m)}  X^l\right) \nonumber \\
&- \delta_{\beta \gamma}\,\, \tr\left( \As^{(m)}  E_{\alpha \epsilon} \Cs^{(m)}  X^k
\left[t^{-1} Z + \sum_{\mu=1}^{\beta-1} \As^{(m)}  E_{\mu \mu} \Cs^{(m)}   \right]X^l\right)\,.
 \end{align}
\end{subequations}
Note the appearance of two constants $t^{-1}$ in the last two terms. 
Indeed, we used that $(\As^{(m)} )_{i\alpha}$ is the $i$-th component of the covector representing $a'_{\alpha}$, while $(\Cs^{(m)} )_{\beta j}$ is the $j$-th component of the vector representing $t^{-1} c'_{\alpha}$, while $X$ and $Z$ respectively represent $x,z$. 
Furthermore, for the last equality, it is an easy exercise to see that both matrices $Z$ in the right-hand side can be replaced by $Z_{m-1}$ after using that $k,l=1$ modulo $m$ in that expression. The second set of Poisson brackets that we need are given by \eqref{brtad1}--\eqref{brtad2} for the coordinates defined in \eqref{fgTadp}, and we write in our case 
for $k_0,l_0\geq 1$ 
\begin{subequations}
 \begin{align}
  &\brap{\tr(A^{k_0}),\tr(A^{l_0})}=\,0\,, \quad 
\brap{\tr(A^{k_0}),\tr(\As E_{\alpha \beta} \Cs A^{l_0})}=\,{k_0}\, \tr(\As E_{\alpha \beta} \Cs A^{{k_0}+{l_0}})\,, \\
&\brap{\tr(\As E_{\gamma\epsilon} \Cs A^{k_0}),\tr(\As E_{\alpha \beta} \Cs A^{l_0})} \nonumber  \\
=&
\frac12 \left(\sum_{r=1}^{k_0}-\sum_{r=1}^{l_0} \right) 
\left(\tr(\As E_{\alpha \beta} \Cs A^r \As E_{\gamma\epsilon} \Cs A^{{k_0}+{l_0}-r})
+\tr(\As E_{\alpha \beta} \Cs A^{{k_0}+{l_0}-r} \As E_{\gamma\epsilon} \Cs A^r)\right) \nonumber \\
&+\frac12 o(\alpha,\gamma) \left(\tr(\As E_{\gamma\epsilon} \Cs A^{k_0} \As E_{\alpha\beta} \Cs A^{l_0})
+ \tr(\As E_{\alpha\epsilon} \Cs A^{k_0} \As E_{\gamma\beta} \Cs A^{l_0}) \right) \nonumber \\
&+\frac12 o(\epsilon,\beta) 
\left( \tr( \As E_{\alpha\beta} \Cs A^{k_0} \As E_{\gamma\epsilon} \Cs A^{l_0}) 
- \tr(\As E_{\alpha\epsilon} \Cs A^{k_0} \As E_{\gamma\beta} \Cs A^{l_0}) \right) \nonumber \\
&+\frac12 [o(\epsilon,\alpha)+\delta_{\alpha \epsilon}]\,
\tr(\As E_{\alpha\epsilon} \Cs A^{k_0} \As E_{\gamma\beta} \Cs A^{l_0}) 
-\frac12 [o(\beta,\gamma)+\delta_{\beta \gamma}]\,
\tr(\As E_{\alpha\epsilon} \Cs A^{k_0} \As E_{\gamma\beta} \Cs A^{l_0}) \nonumber \\
&+\delta_{\alpha \epsilon} 
\tr\left( \left[B  + \sum_{\lambda=1}^{\epsilon-1} \As E_{\lambda \lambda} \Cs  \right] A^{k_0}
\As E_{\gamma \beta} \Cs A^{l_0}\right)  \nonumber \\
&-\delta_{\beta \gamma} \tr\left( \left[B + \sum_{\mu=1}^{\beta-1}\As E_{\mu \mu} \Cs \right] A^{l_0}
\As E_{\alpha\epsilon} \Cs A^{k_0} \right)\,.
 \end{align}
\end{subequations}

\begin{proof}  \emph{[Proposition \ref{IsoCMPoiss}].}
It suffices to show that the map $\psi:\,\Cntd^\circ \to \Cnmo$ is a Poisson map with respect to a basis of functions on  $\Cnmo$. It is not hard to see that we can pick  the functions 
$F_k:= \tr(X^k)$ and $G_{l}^{\gamma \epsilon}:=\tr(\As^{(m)}  E_{\gamma\epsilon} \Cs^{(m)}  X^l)$, which are nonzero for $k=k_0m $ and $l=l_0 m+1$ with $k_0,l_0 \geq 1$. We always assume such a choice for the indices from now on  (which thus depends on the function $F_k$ or $G_l^{\gamma \epsilon}$). Using the notations from \eqref{fgTadp}, we can see that 
\begin{equation}
 \psi^\ast F_k=m \tr(A^{k_0})=m f_{k_0}\,, \quad 
\psi^\ast G_l^{\gamma \epsilon}=\tr(A^{-1} \As  E_{\gamma\epsilon} \Cs A^{l_0+1}) = g_{l_0}^{\gamma \epsilon}\,.
\end{equation}
Indeed, we have $\Cs^{(m)}  X^l=\Cs^{(m)}  X_{m-1} (X_0\ldots X_{m-1})^{l_0}$ which is $\Cs A^{l_0+1}$ under $\psi^\ast$.
Hence, we have to show that  the following equalities hold (writing $\brap{-,-}$ for both Poisson brackets)
\begin{equation*}
 \psi^\ast\brap{F_k,F_l}=m^2 \brap{f_{k_0} , f_{l_0}}\,, \quad 
\psi^\ast\brap{F_k,G_l^{\alpha \beta}}=m\brap{f_{k_0} , g_{l_0}^{\alpha \beta}}\,, \quad 
\psi^\ast\brap{G_{k}^{\gamma \epsilon},G_l^{\alpha \beta}}=\brap{g_{k_0}^{\gamma \epsilon}, g_{l_0}^{\alpha \beta}}\,.
\end{equation*}
The first equality is obvious as both sides vanish. For the second one, 
\begin{equation}
 \psi^\ast\brap{F_k,G_l^{\alpha \beta}}= k\,\, \psi^\ast G_{k+l}^{\alpha \beta}
=(k_0m)  g_{k_0+l_0}^{\alpha \beta}=m \brap{f_{k_0} , g_{l_0}^{\alpha \beta}}\,,
\end{equation}
since $k=k_0m$ and $l=l_0m+1$, so that $k+l=(k_0+l_0)m+1$. For the last equality, we have written at the beginning of this appendix that the left-hand side is  
\begin{equation*}
 \begin{aligned}
 & \frac12 \left(\sum_{v=1}^k-\sum_{v=1}^l \right) 
\left(\psi^\ast\tr(\As^{(m)}  E_{\alpha \beta} \Cs^{(m)}  X^v \As^{(m)}  E_{\gamma \epsilon} \Cs^{(m)}  X^{k+l-v})
+\psi^\ast\tr(\As^{(m)}  E_{\alpha \beta} \Cs^{(m)}  X^{k+l-v} \As^{(m)}  E_{\gamma \epsilon} \Cs^{(m)}  X^{v}) \right) \nonumber \\
&+\frac12  o(\alpha,\gamma) \left(\psi^\ast\tr(\As^{(m)}  E_{\gamma \epsilon} \Cs^{(m)}  X^k \As^{(m)}  E_{\alpha \beta} \Cs^{(m)}  X^l) 
+\psi^\ast\tr(\As^{(m)}  E_{\alpha \epsilon} \Cs^{(m)}  X^k \As^{(m)}  E_{\gamma \beta} \Cs^{(m)}  X^l)\right) \nonumber \\
&+\frac12  o(\epsilon,\beta) \left(\psi^\ast\tr(\As^{(m)}  E_{\alpha \beta} \Cs^{(m)}  X^k \As^{(m)}  E_{\gamma \epsilon} \Cs^{(m)}  X^l) 
-\psi^\ast\tr(\As^{(m)}  E_{\alpha \epsilon} \Cs^{(m)}  X^k \As^{(m)}  E_{\gamma \beta} \Cs^{(m)}  X^l)\right) \nonumber \\
&+\frac12 [o(\epsilon,\alpha)+\delta_{\alpha \epsilon}]\,\psi^\ast\tr(\As^{(m)}  E_{\alpha \epsilon} \Cs^{(m)}  X^k \As^{(m)}  E_{\gamma \beta} \Cs^{(m)}  X^l)  \\
&-\frac12 [o(\beta,\gamma)+\delta_{\beta \gamma}]\,\psi^\ast\tr(\As^{(m)}  E_{\alpha \epsilon} \Cs^{(m)}  X^k \As^{(m)}  E_{\gamma \beta} \Cs^{(m)}  X^l) \nonumber \\
&+ \delta_{\alpha \epsilon}\psi^\ast\tr\left( \left[t^{-1}Z_{m-1}+\sum_{\lambda=1}^{\epsilon-1} \As^{(m)}  E_{\lambda \lambda} \Cs^{(m)}  \right]X^k \As^{(m)}  E_{\gamma \beta} \Cs^{(m)}  X^l\right) \nonumber \\
&- \delta_{\beta \gamma}\,\, \psi^\ast\tr\left( \As^{(m)}  E_{\alpha \epsilon} \Cs^{(m)}  X^k
\left[t^{-1} Z_{m-1} + \sum_{\mu=1}^{\beta-1} \As^{(m)}  E_{\mu \mu} \Cs^{(m)}   \right]X^l\right)
 \end{aligned}
\end{equation*}
In the first sum, we need $v$ to be congruent to $1$ modulo $m$ to have nonzero terms, which means that we can sum over $v=v_0m+1$ with $v_0=0,\ldots,k_0$ or $v_0=0,\ldots,l_0$. In that case, $\Cs^{(m)}  X^v =\Cs^{(m)}  X_{m-1}(X_0\ldots X_{m-1})^{v_0}$, and we can write the same for $v=k,l$. Composing with $\psi$, we write this last expression as 
\begin{equation}
 \begin{aligned}
 & \frac12 \left(\sum_{v_0=0}^{k_0}-\sum_{v_0=0}^{l_0} \right) 
\left(\tr(\As E_{\alpha \beta} \Cs A^{v_0} \As E_{\gamma \epsilon} \Cs A^{k_0+l_0-v_0})
+\tr(\As E_{\alpha \beta} \Cs A^{k_0+l_0-v_0} \As E_{\gamma \epsilon} \Cs A^{v_0}) \right) \nonumber \\
&+\frac12  o(\alpha,\gamma) \left(\tr(\As E_{\gamma \epsilon} \Cs A^{k_0} \As E_{\alpha \beta} \Cs A^{l_0}) 
+\tr(\As E_{\alpha \epsilon} \Cs A^{k_0} \As E_{\gamma \beta} \Cs A^{l_0})\right) \nonumber \\
&+\frac12  o(\epsilon,\beta) \left(\tr(\As E_{\alpha \beta} \Cs A^{k_0} \As E_{\gamma \epsilon} \Cs A^{l_0}) 
-\tr(\As E_{\alpha \epsilon} \Cs A^{k_0} \As E_{\gamma \beta} \Cs A^{l_0})\right) \nonumber \\
&+\frac12 [o(\epsilon,\alpha)+\delta_{\alpha \epsilon}]\,\tr(\As E_{\alpha \epsilon} \Cs A^{k_0} \As E_{\gamma \beta} \Cs A^{l_0})  
-\frac12 [o(\beta,\gamma)+\delta_{\beta \gamma}]\,\tr(\As E_{\alpha \epsilon} \Cs A^{k_0} \As E_{\gamma \beta} \Cs A^{l_0}) \nonumber \\
&+ \delta_{\alpha \epsilon}\tr\left( \left[A^{-1}B+\sum_{\lambda=1}^{\epsilon-1} A^{-1}\As E_{\lambda \lambda} \Cs \right]A^{k_0} \As E_{\gamma \beta} \Cs A^{l_0+1}\right) \nonumber \\
&- \delta_{\beta \gamma}\tr\left( \As E_{\alpha \epsilon} \Cs A^{k_0+1}
\left[A^{-1}B + \sum_{\mu=1}^{\beta-1} A^{-1}\As E_{\mu \mu} \Cs  \right]A^{l_0}\right)\,,
 \end{aligned}
\end{equation}
which is precisely $\brap{g_{k_0}^{\gamma \epsilon}, g_{l_0}^{\alpha \beta}}$. 
\end{proof}


\section{Calculations for the dynamics} \label{Ann:Dyn}

Our method goes as follow : the Hamiltonians come from functions of the form $\tr( \mathcal{X}(u_\eta)^K)$, for some $u_\eta\in A$. Then, defining the derivation $d/dt_K:=\brap{\tr(\mathcal{X}(u_\eta)^K),-}$, the evolution of a matrix $\mathcal{X}(c)$ representing an element $c\in A$  is governed by the ODE 
\begin{equation} \label{ODEdyn}
  \frac{d \mathcal{X}(c)}{d t_K}=  \mathcal{X}(\br{u_\eta^K,c})\,,
\quad \mathcal{X}(c)|_{t=0}:=C_0\,,
\end{equation}
using \ref{relBrDyn}, for some initial condition $C_0$. Thus, we are interested in computing the left Loday bracket $\br{u_\eta^K,c}=m \circ \dgal{u_\eta^K,c}$, which can be found by 
\begin{equation} \label{EqDyn}
  \br{u_\eta^K,c}=K\, \dgal{u_\eta,c}' u_\eta^{K-1} \dgal{u_\eta,c}''\,,
\end{equation}
after using the derivation property in the first variable then multiplying. Hence we need to compute $\dgal{u_\eta,c}$, then substitute the result back into \eqref{EqDyn}. Note that from the discussion at the end of \ref{ss:doubleqP}, we get for the ideal $J=(\Phi-q)$ that $\br{u_\eta^K,J}\subset J$, hence \eqref{ODEdyn} defines flows in $\Rep\left(\Lambda^{\qq},\aalpha\right)$ that we can project in $\Cnm$. The data of \eqref{EqDyn}  for a set of generators in $A$  can be seen as an analogue in the quasi-Poisson case to  an \emph{Hamiltonian ODE} on $A$ as defined in \cite[Section 2.4]{DSKV} for a double Poisson algebra.

\medskip

First, we look at the family $(G_k^m)_k$, which are the symmetric functions of the matrix representing the element  $u_\eta:=z(1+\eta \phi)$. We need the double brackets 
\begin{equation*}
 \begin{aligned}
   \dgal{z,z}=&-\frac12 \left(z^2 F_{-1}-F_{-1}z^2 \right) \,, \quad 
\dgal{z,x}=-\frac12 \left(xz F_{-1} +F_{-1} zx - z F_{-1} x + x F_{-1} z \right) \,, \\
\dgal{z, w_\beta}=& \frac12\,e_0\otimes zw_\beta
-\frac12\,e_0 z \otimes w_\beta\,,\quad 
\dgal{z, v_\beta}= \frac12\, v_\beta z\otimes e_0
-\frac12 \, v_\beta\otimes z e_0\,,
 \end{aligned}
\end{equation*}
obtained from \eqref{cyA}--\eqref{cyBz} and \eqref{EquvwCy}, together with 
 \begin{subequations}
       \begin{align}
\dgal{\phi,a}=&\frac{1}{2}\phi \ast (a F_0-F_0 a)+ \frac12 (aF_0 - F_0 a)\ast \phi\,, \quad \text{ for }a\in \{x,z\}\,, \\
\dgal{\phi,v_\beta}=&\frac{1}{2} (v_\beta \phi \otimes e_0 - v_\beta \otimes \phi e_0) \,, \quad 
\dgal{\phi,w_\beta}=\frac{1}{2} (e_0  \otimes \phi w_\beta - e_0 \phi \otimes w_\beta)\,. \label{EqPhivw}
\end{align}
  \end{subequations}
Note that the equations involving $v_\beta$ or $w_\beta$ need to be computed and do not follow from Lemma \ref{Lem1}, because such arrows appear from the framing which is not in the initial cyclic quiver for which $\phi$ is a moment map. We do it for the double bracket containing $v_\beta$ in \eqref{EqPhivw}, and the second case is left as an exercise.  Write $\phi=\phi_+ \phi_{-}^{-1}$ for $\phi_+=\sum_s e_s+xy,$ and $\phi_{-}=\sum_s e_s+yx$, and remark that  \eqref{EquvwCy} is satisfied for both $u=\phi_+,\phi_{-}$. Therefore 
\begin{equation*}
  \begin{aligned}
   \dgal{\phi, v_\beta}=& \dgal{\phi_+, v_\beta} \ast \phi_{-}^{-1} - \phi_+ \phi_{-}^{-1} \ast \dgal{\phi, v_\beta} \ast \phi_{-}^{-1} 
=\frac12  (v_\beta \phi_+ \phi_{-}^{-1} \otimes e_0-v_\beta\otimes \phi_+ \phi_{-}^{-1} e_0)\,,
  \end{aligned}
\end{equation*}
as desired.  
Now, we can begin the computations. First, 
\begin{equation*}
  \begin{aligned}
    \dgal{u_\eta,x}=&\dgal{z,x}\ast (1+\eta \phi)  +\eta  z \ast \dgal{ \phi,x} \\
=&- \frac12 (xz F_{-1} +F_{-1} zx - z F_{-1} x + x F_{-1} z ) \ast (1+\eta \phi) \\
&+\frac12 \eta (z\phi \ast (x F_0-F_0 x)+ z\ast  (xF_0 - F_0 x)\ast \phi )\\
=&- \frac12 (xz(1+\eta \phi)  F_{-1} +(1+\eta \phi) F_{-1} zx - z(1+\eta \phi) F_{-1} x + x(1+\eta \phi) F_{-1} z ) \\
&+\frac12 \eta ((x F_{-1} z \phi -F_{-1} z \phi  x)+   (x\phi  F_{-1} z -\phi   F_{-1} z x))\\
  \end{aligned}
\end{equation*}
where we used that $\phi\in \oplus_s e_s A e_s$, while $z\in \oplus_s e_{s} A e_{s-1}$. By definition, $u_\eta=z(1+\eta \phi)$, so that $u_\eta-z=\eta z\phi$. We can also use both expressions after multiplication from the left by $z^{-1}$. Thus 
\begin{equation*}
  \begin{aligned}
    \dgal{u_\eta,x}
=& - \frac12 (x u_\eta  F_{-1} +z^{-1}u_\eta F_{-1} zx - u_\eta F_{-1} x + xz^{-1}u_\eta F_{-1} z ) \\
&+\frac12  ((x F_{-1} (u_\eta-z)-F_{-1} (u_\eta-z) x)+   (xz^{-1}(u_\eta-z)  F_{-1} z -z^{-1}(u_\eta-z)   F_{-1} z x)) \\
=& -z^{-1}u_\eta F_{-1} zx - x F_{-1} z + F_{-1} zx +
\frac12 (x F_{-1} u_\eta-x u_\eta F_{-1} -F_{-1} u_\eta x+u_\eta F_{-1} x) \\
  \end{aligned}
\end{equation*}
Remarking that $u_\eta^{K-1}\in \oplus_s e_{s-1} A e_{s}$ for $K\in m \N$, we find 
\begin{equation*}
        \frac1K \br{u_\eta^K,x}= -z^{-1}u_\eta u_\eta^{K-1} zx  -x  u_\eta^{K-1} z + u_\eta^{K-1} zx 
= -\eta \phi u_\eta^{K-1} zx  -x  u_\eta^{K-1} z\,,
\end{equation*}
while this expression vanishes for $K \notin m \N$ since then $e_{s-1}u_\eta^{K-1}e_s=0$. This is similar in the other cases, hence we restrict to the case $K\in m \N$ from now on. 
Doing the same computations with $z$, we can find 
\begin{equation*}
  \begin{aligned}
    \dgal{u_\eta,z}=&\dgal{z,z}\ast (1+\eta \phi)  +\eta  z \ast \dgal{ \phi,z} \\
=&- \frac12 (z u_\eta  F_{-1} -z^{-1}u_\eta F_{-1} z^2) +\frac12  (z F_{-1} (u_\eta-z) -F_{-1} (u_\eta-z)  z) \\
&+\frac12    ( (u_\eta-z)  F_{-1} z -(z^{-1}u_\eta-1)   F_{-1} z^2)\\
=&F_{-1} z^2 - z F_{-1} z + \frac12 (zF_{-1}u_\eta - zu_\eta F_{-1} +u_\eta F_{-1}z -F_{-1}u_\eta z)\,.
  \end{aligned}
\end{equation*}
Hence $\frac1K \br{u_\eta^K,z}= -z u_\eta^{K-1} z  + u_\eta^{K-1} z^2 $. Next, we get 
\begin{equation*}
  \begin{aligned}
    \dgal{u_\eta,v_\beta}=&\dgal{z,v_\beta}\ast (1+\eta \phi)  +\eta  z \ast \dgal{ \phi,v_\beta} \\
=& \frac12 \left( 
 v_\beta u_\eta \otimes e_0- v_\beta z^{-1}u_\eta \otimes z e_0 + v_\beta (z^{-1}u_\eta - 1) \otimes z e_0 - v_\beta \otimes (u_\eta-z) e_0
 \right) \\
=& \frac12 \left(v_\beta u_\eta \otimes e_0 - v_\beta \otimes u_\eta e_0 \right)\,,
  \end{aligned}
\end{equation*}
which gives $\br{u_\eta^K,v_\beta}=0$. Similarly,  $\br{u_\eta^K,w_\beta}=0$. Gathering the expressions, we have proved 
\begin{lem}  \label{Lem:Dyz}
Write $u_\eta=z(1+\eta \phi)$ with $\phi=(\sum_s e_s+xy)(\sum_s e_s+yx)^{-1}$. 
  The left Loday bracket $\br{-,-}: A \times A \to A$ satisfies  for any $K\in m \N$ 
\begin{equation*}
\begin{aligned}
    \frac1K \br{u_\eta^K,x}=& -\eta \phi u_\eta^{K-1} zx  -x  u_\eta^{K-1} z\,, \quad 
\frac1K \br{u_\eta^K,z}= -z u_\eta^{K-1} z  + u_\eta^{K-1} z^2 \,, \\
\frac1K\br{u_\eta^K,v_\beta}=&0\,, \qquad \frac1K\br{u_\eta^K,w_\beta}=0\,.
\end{aligned}
\end{equation*}
\end{lem}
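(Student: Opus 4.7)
The plan is to apply the identity \eqref{EqDyn}, which reduces each left Loday bracket $\br{u_\eta^K, c}$ to the single computation $K\,\dgal{u_\eta,c}' u_\eta^{K-1}\dgal{u_\eta,c}''$. So the whole task splits into: (i) computing the four double brackets $\dgal{u_\eta, c}$ for $c\in\{x,z,v_\beta,w_\beta\}$, and (ii) performing the multiplication and simplifying using $u_\eta-z=\eta z\phi$.

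For (i), since $u_\eta = z(1+\eta\phi)$ with $\phi\in\oplus_s e_sAe_s$, the Leibniz rule gives $\dgal{u_\eta, c}=\dgal{z,c}\ast(1+\eta\phi)+\eta\, z\ast\dgal{\phi,c}$. The brackets $\dgal{z,c}$ are all recorded in \ref{ss:qHcyclic}, so what remains is to exhibit $\dgal{\phi,c}$. For $c\in\{x,z\}$, these are immediate from Lemma \ref{Lem1}, which identifies $\phi=\phi_0$ as the moment map for the subquiver supported at $I$ and hence forces the standard multiplicative moment map identity $\dgal{\phi,a}=\tfrac12\phi\ast(aF_0-F_0a)+\tfrac12(aF_0-F_0a)\ast\phi$. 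For $c\in\{v_\beta,w_\beta\}$, Lemma \ref{Lem1} does not apply directly, so I would factor $\phi=\phi_+\phi_-^{-1}$ with $\phi_\pm=\sum_s e_s+xy,\sum_s e_s+yx$, verify the identity \eqref{EquvwCy} for both $\phi_+$ and $\phi_-$ (a short check using the double brackets of $v_\beta,w_\beta$ with $x,y$), then combine via the Leibniz rule and $\dgal{\phi_-^{-1},c}=-\phi_-^{-1}\ast\dgal{\phi_-,c}\ast\phi_-^{-1}$.

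For (ii), after substituting, the expressions for $\dgal{u_\eta,x}$ and $\dgal{u_\eta,z}$ each consist of a few terms carrying the idempotent $F_{-1}$, and one rewrites them systematically by replacing $z(1+\eta\phi)$ by $u_\eta$ wherever possible, using $z^{-1}u_\eta=1+\eta\phi$ and $u_\eta-z=\eta z\phi$. The identity \eqref{EqDyn} then produces powers $F_{-1}u_\eta^{K-1}F_{-1}$; the hypothesis $K\in m\N$ means $u_\eta^{K-1}\in\oplus_s e_{s-1}Ae_s$, so that these sandwiched idempotents collapse to the identity and all the $F_{-1}$-decorations disappear. Many terms cancel in pairs after this collapse, yielding the stated closed expressions for $\br{u_\eta^K,x}$ and $\br{u_\eta^K,z}$. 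For $c=v_\beta,w_\beta$, the two contributions from $\dgal{z,c}\ast(1+\eta\phi)$ and $\eta\, z\ast\dgal{\phi,c}$ telescope completely, giving $\dgal{u_\eta,c}$ of the ``symmetric'' shape $\tfrac12(v_\beta u_\eta\otimes e_0-v_\beta\otimes u_\eta e_0)$ (and analogously for $w_\beta$), which upon applying \eqref{EqDyn} and multiplying produces $\tfrac12(v_\beta u_\eta^K e_0- v_\beta u_\eta^K e_0)=0$.

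The main obstacle is bookkeeping with the idempotents $F_{-1}$ and $F_0$: one must carefully track which $e_s$ bounds which factor to know whether terms actually cancel or merely look like they do, and to justify at the end that the ``sandwich'' $F_{-1}u_\eta^{K-1}F_{-1}$ reduces to $u_\eta^{K-1}$ precisely when $K\equiv 0\pmod m$. Once this bookkeeping is done for $c=x$, the other three cases follow the same template with only minor sign changes.
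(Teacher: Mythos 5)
Your plan reproduces the paper's proof strategy exactly: reduce via \eqref{EqDyn} to computing $\dgal{u_\eta,c}$, decompose by the Leibniz rule in the first argument, obtain $\dgal{\phi,x}$ and $\dgal{\phi,z}$ from the moment-map identity via Lemma~\ref{Lem1}, obtain $\dgal{\phi,v_\beta}$, $\dgal{\phi,w_\beta}$ by factoring $\phi=\phi_+\phi_-^{-1}$, and then simplify using $u_\eta-z=\eta z\phi$ together with the fact that $u_\eta^{K-1}\in\oplus_s e_{s-1}Ae_s$ for $K\in m\N$ makes the $F_{-1}$-insertions disappear. The only inaccuracy is cosmetic: the sum $\sum_s e_{s-1}u_\eta^{K-1}e_s$ recovers $u_\eta^{K-1}$ because of its idempotent grading, not because the idempotents ``collapse to the identity'', but the underlying mechanism you invoke is the right one.
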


\medskip

We now adapt the discussion to the family $(H_k^m)_k$, which are the symmetric functions of the matrix representing the element $\bar{u}_\eta=y(1+\eta \phi)$. Note that, by doing so, we assume the invertibility of the element $x$ (which was used to make sense of $z^{-1}$ when deriving the double brackets $\dgal{u_\eta,-}$), but it can be proved by only inverting $y$ instead. Hence, passing from $u_\eta$ to $\bar{u}_\eta$, 
it is not hard to see that the double brackets involving $x$ only differ by an additional term  $-F_{-1}$ when replacing $\dgal{z,x}$ by $\dgal{y,x}$, so that 
\begin{equation}
\begin{aligned}
    \frac1K \br{\bar{u}_\eta^K,x}=&-(1+\eta \phi) \bar{u}_\eta^{K-1} -y^{-1}\bar{u}_\eta \bar{u}_\eta^{K-1} yx  -x \bar{u}_\eta^{K-1} y + \bar{u}_\eta^{K-1} yx \\
=&- \bar{u}_\eta^{K-1}    -x  \bar{u}_\eta^{K-1} y -\eta \phi \bar{u}_\eta^{K-1} (1+yx)\,.
\end{aligned}
\end{equation}
The double bracket with $y$ instead of $z$ does not change : $\frac1K \br{\bar{u}_\eta^K,y}= -y \bar{u}_\eta^{K-1} y  +  \bar{u}_\eta^{K-1} y^2$. 
The same holds for the couple $(y,w_\beta)$ replacing $(z,w_\beta)$, or doing it with $v_\beta$, so that 
$\br{\bar{u}_\eta^K,w_\beta}=0$ and $\br{\bar{u}_\eta^K,v_\beta}=0$. Therefore 
\begin{lem} \label{Lem:Dyy}
Write $\bar{u}_\eta=y(1+\eta \phi)$ with $\phi=(\sum_s e_s+xy)(\sum_s e_s+yx)^{-1}$. 
  The left Loday bracket $\br{-,-}: A \times A \to A$ satisfies for any $K\in m \N$ 
\begin{equation*}
\begin{aligned}
    \frac1K \br{\bar{u}_\eta^K,x}=&- \bar{u}_\eta^{K-1}   -x  u_\eta^{K-1} y -\eta \phi u_\eta^{K-1} (1+yx)\,, \quad 
\frac1K \br{\bar{u}_\eta^K,y}= -y u_\eta^{K-1} y  + u_\eta^{K-1} y^2 \,, \\
\frac1K\br{\bar{u}_\eta^K,v_\beta}=&0\,, \qquad \frac1K\br{\bar u_\eta^K,w_\beta}=0\,.
\end{aligned}
\end{equation*}
\end{lem}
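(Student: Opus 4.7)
The plan is to adapt the proof of Lemma~\ref{Lem:Dyz} via the formal substitution $z \mapsto y$, $u_\eta \mapsto \bar u_\eta$. Three of the four identities follow essentially for free: the double brackets $\dgal{y,y}$, $\dgal{y,v_\beta}$ and $\dgal{y,w_\beta}$ have the same formal expressions as their $z$-analogues (compare \eqref{cyA} and \eqref{cyE}, as well as the general formula \eqref{EquvwCy}), and $\dgal{\phi,c}$ is independent of whether $c$ involves $y$ or $z$. Hence the computations of $\br{\bar u_\eta^K, y}$, $\br{\bar u_\eta^K, v_\beta}$ and $\br{\bar u_\eta^K, w_\beta}$ carry over verbatim from the proof of Lemma~\ref{Lem:Dyz} under this substitution, yielding the last three identities in the statement.

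The only substantive new input is required for $\br{\bar u_\eta^K, x}$, because $\dgal{x,y}$ contains an extra constant term $F_1$ compared to $\dgal{x,z}$ (compare \eqref{cyB} and \eqref{cyBz}). Cyclic antisymmetry yields $\dgal{y,x} = \dgal{z,x}\big|_{z\to y} - F_{-1}$. I would therefore compute $\dgal{\bar u_\eta,x} = \dgal{y,x}\ast(1+\eta\phi) + \eta\, y \ast \dgal{\phi,x}$ by adapting the calculation in the proof of Lemma~\ref{Lem:Dyz}, using the identity $\eta y\phi = \bar u_\eta - y$ (the $y$-analogue of $\eta z\phi = u_\eta - z$) to rewrite all intermediate terms involving $y\phi$ without needing to invert $y$. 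The resulting expression for $\dgal{\bar u_\eta,x}$ is structurally identical to that for $\dgal{u_\eta,x}$ in the proof of Lemma~\ref{Lem:Dyz} under $u_\eta\mapsto\bar u_\eta$, augmented by the single extra summand $-F_{-1}\ast(1+\eta\phi) = -(1+\eta\phi)F_{-1}$ arising from the extra $-F_{-1}$ summand in $\dgal{y,x}$.

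The final step is to apply \eqref{EqDyn} with $K\in m\mathbb N$. The grading $\bar u_\eta^{K-1} \in \oplus_s e_s A e_{s+1}$ (which holds because $\bar u_\eta\in \oplus_s e_{s+1} A e_s$) guarantees $\sum_s e_s\bar u_\eta^{K-1}e_{s+1} = \bar u_\eta^{K-1}$, so the sandwiched idempotents can be absorbed cleanly. By the same mechanism as in the $z$-case, the eight $\tfrac12$-terms present in the intermediate expression for $\dgal{\bar u_\eta,x}$ cancel pairwise as actual identities in $A$, and the remaining contributions sum to $-\eta\phi \bar u_\eta^{K-1} yx - x \bar u_\eta^{K-1} y$. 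The extra summand $-(1+\eta\phi)F_{-1}$ then contributes $-(1+\eta\phi)\bar u_\eta^{K-1} = -\bar u_\eta^{K-1} - \eta\phi\bar u_\eta^{K-1}$, and factoring $1 + yx$ in the resulting $\eta\phi$-terms produces the announced formula. I expect the main technical obstacle to be the careful bookkeeping of idempotents that verifies all the $\tfrac12$-cancellations hold in $A$ rather than merely modulo commutators; the grading condition $K\in m\mathbb N$ is essential at each of these steps, since otherwise $\bar u_\eta^{K-1}$ would not sit in a single homogeneous piece and the idempotent absorptions would fail.
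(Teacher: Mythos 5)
Your proof is correct and follows essentially the same route as the paper's: the paper also reduces the last three identities to the $z$-case by formal substitution, and handles $\br{\bar u_\eta^K,x}$ by observing that the only difference between $\dgal{y,x}$ and $\dgal{z,x}\big|_{z\to y}$ is the extra term $-F_{-1}$, whose Loday contribution is $-(1+\eta\phi)\bar u_\eta^{K-1}$. Your bookkeeping — using $\eta y\phi=\bar u_\eta-y$ to avoid inverting $y$ outright, and verifying the grading $\bar u_\eta^{K-1}\in\oplus_s e_s A e_{s+1}$ so the idempotents in $F_{-1}$ can be absorbed — is a slightly more careful rendition of what the paper states in one line ("it is not hard to see that the double brackets involving $x$ only differ by an additional term $-F_{-1}$"), but the substance is identical, and your final algebraic reassembly $-\bar u_\eta^{K-1}-x\bar u_\eta^{K-1}y-\eta\phi\bar u_\eta^{K-1}(1+yx)$ matches the paper's.
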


\medskip

For the family $(F_k^m)_k$ of symmetric functions of the matrix representing the element $\tilde{u}_\eta=(\sum_s e_s + xy)(1+\eta \phi^{-1})$, we also do the computations assuming that $x$ is invertible so that $\sum_s e_s + xy=xz$ and $\tilde{u}_\eta=xz(1+\eta \phi)$. As a first intermediate result, note that 
\begin{equation*}
  \dgal{xz,x}=\frac12 x^2 z F_0 - \frac12 F_0 xzx - \frac12 xz F_0 x - \frac12 x F_0 xz\,.
\end{equation*}
This directly yields 
\begin{equation*}
  \begin{aligned}
    \dgal{\tilde{u}_\eta,x}=&\dgal{xz,x}\ast (1+\eta \phi^{-1}) -\eta xz \phi^{-1} \ast\dgal{\phi,x}\ast \phi^{-1} \\
=& \frac12 (x \tilde{u}_\eta F_0 - (xz)^{-1}\tilde{u}_\eta F_0 xzx - \tilde{u}_\eta F_0 x - x (xz)^{-1}\tilde{u}_\eta F_0 xz) \\
&+\frac12 (-x [(xz)^{-1}\tilde{u}_\eta-1]F_0 xz + [(xz)^{-1}\tilde{u}_\eta-1]F_0 xzx - x F_0 [\tilde{u}_\eta-xz] + F_0 [\tilde{u}_\eta-xz]x)\,.
  \end{aligned}
\end{equation*}
After simplification, we obtain for any $K \in \N$
\begin{equation*}
\begin{aligned}
 \frac1K \br{\tilde u_\eta^K,x}=& 
-x(xz)^{-1} \tilde u_\eta^K xz - \tilde u_\eta^{K-1} xzx + x \tilde u_\eta^{K-1} xz
= - \tilde u_\eta^{K-1} xzx - \eta\, x \phi^{-1}  \tilde u_\eta^{K-1} xz\,.
\end{aligned}
\end{equation*}
Second, we compute $\br{\tilde u_\eta^K,xz}$ using
\begin{equation*}
  \begin{aligned}
    \dgal{\tilde u_\eta,xz}=& \dgal{xz,xz}\ast (1+\eta \phi^{-1}) - \eta xz\phi^{-1} \ast\dgal{\phi,xz} \ast \phi^{-1} \\
=&\frac12 \left( xz \tilde u_\eta F_0 - (xz)^{-1} \tilde u_\eta F_0 (xz)^2 \right) \\
&+\frac12 \left(- [\tilde{u}_\eta-xz]F_0 xz + [(xz)^{-1}\tilde{u}_\eta-1] F_0 (xz)^2 
- xz F_0 [\tilde{u}_\eta-xz] + F_0 [\tilde{u}_\eta-xz]xz \right)\,.
  \end{aligned}
\end{equation*}
After cancellations, this is just 
\begin{equation*}
          \frac1K \br{\tilde u_\eta^K,xz}=  xz \tilde u_\eta^{K-1} xz - \tilde u_\eta^{K-1} (xz)^2 \,.
\end{equation*}
Finally,  we get $\dgal{\tilde u_\eta,v_\beta}=\frac12 v_\beta \tilde u_\eta \otimes e_0 - \frac12 v_\beta \otimes \tilde u_\eta e_0$, so that  $\br{\tilde u_\eta^K,v_\beta}=0$ as before. The same is true for $w_\beta$. 
\begin{lem} \label{Lem:Dy1xy}
Write $\tilde{u}_\eta=u(1+\eta \phi)$ with $\phi=(\sum_s e_s+xy)(\sum_s e_s+yx)^{-1}$ and $u=\sum_s e_s+xy$. 
  The left Loday bracket $\br{-,-}: A \times A \to A$ satisfies  for any $K\in \N$ 
\begin{equation*}
\begin{aligned}
    \frac1K \br{\tilde{u}_\eta^K,x}=&- \tilde u_\eta^{K-1} ux - \eta\, x \phi^{-1}  \tilde u_\eta^{K-1} u \,, \quad
\frac1K \br{\tilde{u}_\eta^K,u}= -\tilde u_\eta^{K-1} u^2  +u  \tilde u_\eta^{K-1} u  \,, \\
\frac1K\br{\tilde{u}_\eta^K,v_\beta}=&0\,, \qquad \frac1K\br{\tilde u_\eta^K,w_\beta}=0\,.
\end{aligned}
\end{equation*}
\end{lem}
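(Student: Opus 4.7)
The plan is to proceed exactly as in the two preceding lemmas: I will compute the double bracket $\dgal{\tilde u_\eta, c}$ for $c$ ranging over the generators $x$, $u$, $v_\beta$, $w_\beta$, and then use the identity
\[
\br{\tilde u_\eta^K, c} \,=\, K\,\dgal{\tilde u_\eta, c}'\,\tilde u_\eta^{K-1}\,\dgal{\tilde u_\eta, c}''
\]
(a consequence of the derivation property in the first argument followed by multiplication) to get the left Loday bracket. To make the factor $\phi^{-1}$ tractable, I will localize at $x$ so that $u=\sum_s e_s+xy = xz$ and write $\tilde u_\eta = u(1+\eta\phi^{-1})$. The known ingredients are the double brackets $\dgal{x,x}$, $\dgal{x,z}$, $\dgal{z,z}$ from \eqref{cyA}--\eqref{cyBz} (which determine $\dgal{xz,x}$ and $\dgal{xz,xz}$ via the Leibniz rule), together with $\dgal{\phi,x}$ and $\dgal{\phi,u}$ coming from Lemma \ref{Lem1} applied to the subquiver on $I$, and the auxiliary formulae $\dgal{\phi,v_\beta},\dgal{\phi,w_\beta}$ derived via $\phi=\phi_+\phi_-^{-1}$ as done earlier for the family $(G_k^m)_k$. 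Throughout I will use $\dgal{\phi^{-1},a}=-\phi^{-1}\ast\dgal{\phi,a}\ast\phi^{-1}$.

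For the cases $c=x$ and $c=u$ I will compute $\dgal{\tilde u_\eta,c}=\dgal{u,c}\ast(1+\eta\phi^{-1})-\eta\,u\phi^{-1}\ast\dgal{\phi,c}\ast\phi^{-1}$. The crucial simplifying substitution is
\[
u^{-1}\tilde u_\eta - 1 \,=\, \eta\phi^{-1},
\]
applied symmetrically on both sides of each term; this replaces every occurrence of $\phi^{-1}$ (which does not appear in the target expression) by either $\tilde u_\eta$ or a generator, leaving an expression purely in $\tilde u_\eta, u, x$, and idempotents via $F_0=\sum_s e_s\otimes e_s$. After inserting the resulting $\dgal{\tilde u_\eta,c}$ into the Loday formula and applying the multiplication map, most of the idempotent contributions collapse (since $\tilde u_\eta\in\oplus_s e_sAe_s$, the matrix $\tilde u_\eta^{K-1}$ commutes with each $e_s$, unlike in Lemmas \ref{Lem:Dyz}--\ref{Lem:Dyy} where the block structure forced $K\in m\N$). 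The terms corresponding to $\dgal{u,c}$ reproduce the form one would get with $\eta=0$ — namely $uc_k u$-type words — while the residual $\eta\phi^{-1}$ piece yields the single deformation term $\eta\,x\phi^{-1}\tilde u_\eta^{K-1}u$ in the first equation and cancels out entirely in the second.

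For $c\in\{v_\beta,w_\beta\}$ the computation is much shorter: from \eqref{cyD}, \eqref{cyE} and \eqref{EqPhivw} one sees that $\dgal{u,v_\beta}=\tfrac12(v_\beta u\otimes e_0 - v_\beta\otimes u e_0)$ and $\dgal{\phi,v_\beta}=\tfrac12(v_\beta\phi\otimes e_0 - v_\beta\otimes\phi e_0)$, both of which are of the form $\tfrac12(X\otimes e_0 - \text{something}\otimes X e_0)$. Combining them in $\dgal{\tilde u_\eta,v_\beta}$ and eliminating $\phi^{-1}$ via the same substitution yields $\tfrac12(v_\beta\tilde u_\eta\otimes e_0 - v_\beta\otimes\tilde u_\eta e_0)$, whose multiplication gives zero after insertion in the Loday formula because the two resulting terms are equal modulo the $\tilde u_\eta^{K-1}$ sandwich (both simplify to $v_\beta \tilde u_\eta^K$ times idempotents that cancel); the argument for $w_\beta$ is symmetric. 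The main obstacle is purely bookkeeping: tracking how the factors $(xz)^{-1}$, $\phi^{-1}$ and the various $F_0$ idempotents cancel so that the result is expressed cleanly in $\tilde u_\eta$ without reintroducing inverses. Once this substitution is carried out consistently, the four stated identities follow directly.
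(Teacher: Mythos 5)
Your proposal is correct and follows essentially the same route as the paper's own proof: localize at $x$ so $u=xz$, expand $\dgal{\tilde u_\eta,c}=\dgal{xz,c}\ast(1+\eta\phi^{-1})-\eta\,xz\phi^{-1}\ast\dgal{\phi,c}\ast\phi^{-1}$, substitute $u^{-1}\tilde u_\eta-1=\eta\phi^{-1}$ to eliminate the explicit $\phi^{-1}$'s, and then insert into $\br{\tilde u_\eta^K,c}=K\,\dgal{\tilde u_\eta,c}'\tilde u_\eta^{K-1}\dgal{\tilde u_\eta,c}''$. You also correctly identify the one feature that distinguishes this case from Lemmas~\ref{Lem:Dyz}--\ref{Lem:Dyy}, namely that $\tilde u_\eta\in\oplus_s e_sAe_s$ so no divisibility condition on $K$ is needed, and your cancellation argument for $v_\beta,w_\beta$ matches the paper's $\dgal{\tilde u_\eta,v_\beta}=\tfrac12(v_\beta\tilde u_\eta\otimes e_0-v_\beta\otimes\tilde u_\eta e_0)$.
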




\begin{thebibliography}{1}


\bibitem[AB]{AB} Atiyah, M.F., Bott, R., {\it The Yang-Mills equations over Riemann surfaces}. Philos. Trans.
Roy. Soc. London Ser. A 308 (1505), 523--615 (1983).

\bibitem[AKSM]{QuasiP} Alekseev, A., Kosmann-Schwarzbach, Y., Meinrenken, E.: {\it Quasi-{P}oisson manifolds}. Canad. J. Math. 54(1), 3--29 (2002);
\href{http://xxx.lanl.gov/abs/math/0006168}{arXiv:math/0006168 [math.DG]}.

\bibitem[AMM]{AMM}  Alekseev, A., Malkin, A., Meinrenken, E.: {\it Lie group valued moment maps}. J. Differential Geom. 48, 445-495 (1998);
\href{http://xxx.lanl.gov/abs/dg-ga/9707021}{arXiv:dg-ga/9707021}.

\bibitem[AF]{AF} Arutyunov, G.E., Frolov, S.A.: {\it On the {H}amiltonian structure of the spin {R}uijsenaars-{S}chneider model}. J. Phys. A 31(18), 4203--4216 (1998);
\href{http://xxx.lanl.gov/abs/hep-th/9703119}{arXiv:hep-th/9703119}.

\bibitem[BBT]{BBT} Babelon, O., Bernard, D., Talon, M.: {\it Introduction to classical integrable systems}. Cambridge Monographs on Mathematical Physics. Cambridge University Press, Cambridge, 2003. xii+602.


\bibitem[BCE]{BCE} Berest, Yu., Chalykh, O., Eshmatov, F.: {\it Recollement of deformed preprojective algebras and the {C}alogero-{M}oser correspondence}. Mosc. Math. J. 8(1), 21--37 (2008);
 \href{http://xxx.lanl.gov/abs/0706.3006}{arXiv:0706.3006 [math.RT]}.


  \bibitem[BEF]{BEF} Braverman, A., Etingof, P., Finkelberg, M.: {\it Cyclotomic double affine Hecke algebras (with an appendix by H. Nakajima and D. Yamakawa)}. Preprint; \href{http://xxx.lanl.gov/abs/1611.10216}{arXiv:1611.10216 [math.RT]}.


 \bibitem[BP]{BP} Bielawski, R., Pidstrygach, V.: {\it On the symplectic structure of instanton moduli spaces.} Adv. Math. {226}~(3), 2796--2824 (2011); 
 \href{http://xxx.lanl.gov/abs/0812.4918}{arXiv:0812.4918 [math.SG]}.


\bibitem[CF1]{CF}  Chalykh, O., Fairon, M.: {\it Multiplicative quiver varieties and generalised Ruijsenaars-Schneider models}. J. Geom. Phys. 121, 413--437 (2017);  \href{https://arxiv.org/abs/1704.05814}{arXiv:1704.05814  [math.QA]}.

 \bibitem[CF2]{CF2}  Chalykh, O., Fairon, M.: {\it On the Hamiltonian formulation of the trigonometric spin Ruijsenaars-Schneider system}. 
 Preprint;  \href{https://arxiv.org/abs/arXiv:1811.08727}{arXiv: arXiv:1811.08727 [math-ph]}.


\bibitem[CS]{CS}  Chalykh, O., Silantyev, A.: {\it KP hierarchy for the cyclic quiver}. J. Math. Phys. 58, 071702 (2017);  \href{https://arxiv.org/abs/1512.08551}{arXiv:1512.08551  [math.QA]}.

\bibitem[CB1]{CB} Crawley-Boevey, W.: {\it Preprojective algebras, differential operators and a {C}onze embedding for deformations of {K}leinian singularities}. Comment. Math. Helv. 74(4), 548--574 (1999).

\bibitem[CB2]{CB2} Crawley-Boevey, W.: {\it Poisson structures on moduli spaces of representations}. J. Algebra 325, 205--215  (2011).

\bibitem[CBEG]{CBEG} Crawley-Boevey, W., Etingof, P., Ginzburg, V.: {\it Noncommutative geometry and quiver algebras}.  Adv. Math. 209 (1), 274--336 (2007);
\href{https://arxiv.org/abs/math/0502301v4}{arXiv:math/0502301 [math.AG]}.

\bibitem[CBS]{CBShaw}  Crawley-Boevey, W., Shaw, P.: {\it Multiplicative preprojective algebras, middle convolution and the Deligne-Simpson problem}. Adv. Math. 201 (1), 180--208 (2006); 
\href{https://arxiv.org/abs/math/0404186}{arXiv:math/0404186 [math.RA]}.

\bibitem[CQ95]{CQ95}  Cuntz, J., Quillen, D.: {\it Algebra extensions and nonsingularity}. J. Amer. Math. Soc. 8(2), 251--289 (1995).

\bibitem[DSKV]{DSKV}  De Sole, A., Kac, V.G., Valeri, D. {\it Double {P}oisson vertex algebras and non-commutative {H}amiltonian equations}.  Adv. Math. 281, 1025--1099 (2015);
\href{https://arxiv.org/abs/1410.3325}{arXiv:1410.3325 [math-ph]}.

 \bibitem[Fe]{Fe} Feh{\'e}r, L.: {\it Poisson-Lie analogues of spin Sutherland models}.  Preprint; 
 \href{https://arxiv.org/abs/1809.01529}{arXiv:1809.01529 [math-ph]}. 

 \bibitem[FA]{FA} Feh{\'e}r, L., Ayadi, V.: {\it Trigonometric {S}utherland systems and their {R}uijsenaars duals from symplectic reduction}.  J. Math. Phys. 51, no. 10, 103511, 30 pp (2010); 
 \href{https://arxiv.org/abs/1005.4531}{arXiv: 1005.4531 [math-ph]}. 

 \bibitem[FG]{FG} Feh{\'e}r, L., G\"{o}rbe, T.F.: {\it The full phase space of a model in the Calogero-Ruijsenaars family}. J. Geom. Phys. 115, 139-149 (2017);
 \href{https://arxiv.org/abs/1603.02877}{arXiv: 1603.02877 [math-ph]}.

 \bibitem[FK1]{FK09a} Feh{\'e}r, L., Klim{\v{c}}{\'{\i}}k, C.: {\it On the duality between the hyperbolic Sutherland and the rational Ruijsenaars-Schneider models}. J. Phys. A: Math. Theor. 42, 185-202 (2009);
 \href{https://arxiv.org/abs/0901.1983}{arXiv:0901.1983 [math-ph]}.

 \bibitem[FK2]{FK09b} Feh{\'e}r, L., Klim{\v{c}}{\'{\i}}k, C.: {\it Poisson-Lie generalization of the Kazhdan-Kostant-Sternberg reduction}. Lett. Math. Phys. 87, 125-138 (2009);
 \href{https://arxiv.org/abs/0809.1509}{arXiv:0809.1509 [math-ph]}.

\bibitem[FK3]{FK12} Feh{\'e}r, L., Klim{\v{c}}{\'{\i}}k, C.: {\it Self-duality of the compactified {R}uijsenaars-{S}chneider system from quasi-{H}amiltonian reduction}.  Nuclear Phys. B 860(3), 464--515 (2012);
 \href{https://arxiv.org/abs/1101.1759}{arXiv:1101.1759 [math-ph]}.

 \bibitem[FK4]{FKlu} Feh{\'e}r, L., Kluck, T.J.: {\it New compact forms of the trigonometric  {R}uijsenaars-{S}chneider system}.  Nuclear Phys. B 882 , 97--127 (2014);
 \href{https://arxiv.org/abs/1312.0400}{arXiv:1312.0400 [math-ph]}.

 \bibitem[FM]{FM} Feh{\'e}r, L., Marshall, I.: {\it The action-angle dual of an integrable Hamiltonian system of Ruijsenaars--Schneider--van Diejen type.}  J. Phys. A 50(31), 314004, 20 pp (2017); 
 \href{https://arxiv.org/abs/1702.06514}{arXiv:1702.06514 [math-ph]}.

\bibitem[FGNR]{FGNR} Fock, V., Gorsky, A., Nekrasov, N., Rubtsov, V.: {\it Duality in integrable systems and gauge theories}.  J. High Energy Phys. 2000, no. 7, Paper 28, 40 pp (2000); 
 \href{https://arxiv.org/abs/hep-th/9906235}{arXiv:hep-th/9906235}.

 \bibitem[FR]{FockRosly} Fock, V., Rosly, A.: {\it Poisson structure on moduli of flat connections on Riemann surfaces and the {$r$}-matrix}. In: Moscow {S}eminar in {M}athematical {P}hysics, volume 191 of Amer. Math. Soc. Transl. Ser. 2, 67--86. Amer. Math. Soc., Providence, RI  (1999);
 \href{https://arxiv.org/abs/math/9802054}{arXiv:math/9802054 [math.QA]}.

 \bibitem[GN]{GN} Gorsky, A., Nekrasov, N.: {\it Relativistic Calogero-Moser model as gauged WZW theory}. Nucl. Phys. B 436, 582--608 (1995);
 \href{https://arxiv.org/abs/hep-th/9401017}{arXiv:hep-th/9401017}.

 \bibitem[GR]{GR} Gorsky, A., Rubtsov, V.: {\it Dualities in integrable systems: geometrical aspects}.  Integrable structures of exactly solvable two-dimensional models of quantum field theory (Kiev, 2000), 
NATO Sci. Ser. II Math. Phys. Chem., 35, Kluwer Acad. Publ., Dordrecht, 173--198 (2001); 
 \href{https://arxiv.org/abs/hep-th/0103004}{arXiv:hep-th/0103004}.


\bibitem[G]{G} Ginzburg, V.: {\it Non-commutative symplectic geometry, quiver varieties, and operads}.  Math. Res. Lett. 8 (3), 377--400 (2001);
 \href{https://arxiv.org/abs/math/0005165}{arXiv:math/0005165 [math.QA]}.

\bibitem[GH]{GH} Gibbons, J., Hermsen, T.: {\it A generalisation of the Calogero-Moser system}.   Phys. D 11 (3), 337--348. (1984).


\bibitem[Kr]{Kr} Krichever, I.M.: {\it Elliptic Solutions to Difference Nonlinear Equations and Nested Bethe Ansatz Equations.} In: Calogero--Moser--Sutherland Models (ed. by J.F. van Diejen, L. Vinet), CRM Series in Mathematical Physics, 249--271. Springer New York (2000); \href{https://arxiv.org/abs/solv-int/9804016}{arXiv:solv-int/9804016}.


 \bibitem[KBBT]{KBBT} Krichever, I.M., Babelon, O., Billey, E., Talon, M.: {\it Spin generalization of the {C}alogero-{M}oser system and the matrix {KP} equation}. In  Topics in topology and mathematical physics, 83--119, Amer. Math. Soc. Transl. Ser. 2, 170, Amer. Math. Soc., Providence, RI  (1995); \href{https://arxiv.org/abs/hep-th/9411160}{arXiv:hep-th/9411160}.


\bibitem[KrZ]{KrZ} Krichever, I.M., Zabrodin, A.: {\it Spin generalization of the Ruijsenaars-Schneider model, the nonabelian two-dimensionalized Toda lattice, and representations of the Sklyanin algebra}. Uspekhi Mat. Nauk 50(6(306)), 3--56 (1995);  \href{https://arxiv.org/abs/hep-th/9505039}{arXiv:hep-th/9505039}.


\bibitem[Ne]{N} Nekrasov, N.: {\it Infinite-dimensional algebras, many-body systems and gauge theories}. In: Moscow Seminar in Mathematical Physics, AMS Transl. Ser. 2, Vol. 191, 263--299 (1999).

\bibitem[Na]{Nak94} Nakajima, H.: {\it Instantons on ALE spaces, quiver varieties, and Kac--Moody algebras}. Duke Math. J. 76(2), 365--416, (1994).  

 \bibitem[O]{Oblomkov} Oblomkov, A.: {\it Double affine {H}ecke algebras and {C}alogero-{M}oser spaces}. Represent. Theory 8, 243--266 (2004);
 \href{https://arxiv.org/abs/math/0303190}{arXiv:math/0303190 [math.RT]}.


\bibitem[RaS]{RaS} Ragnisco, O., Suris, Yu.B.: {\it Integrable discretizations of the spin {R}uijsenaars-{S}chneider models}. J. Math. Phys. 38(9), 4680--4691 (1997);
 \href{https://arxiv.org/abs/solv-int/9605001}{arXiv:solv-int/9605001}.


\bibitem[Re]{Re} Reshetikhin, N.: {\it Degenerate integrability of the spin {C}alogero-{M}oser systems and the duality with the spin {R}uijsenaars systems}. Lett. Math. Phys. 63(1), 55--71 (2003);
 \href{https://arxiv.org/abs/math/0202245}{arXiv:math/0202245 [math.QA]}.

\bibitem[RS]{RS86} Ruijsenaars, S.N.M., Schneider, H.: {\it A new class of integrable systems and its relation to solitons}. Ann. Physics 170(2), 370--405 (1986).

\bibitem[R]{R88} Ruijsenaars, S.N.M.: {\it Action-angle maps and scattering theory for some finite-dimensional integrable systems I. The pure soliton case}. Commun. Math. Phys. 115, 127-165 (1988).





\bibitem[So]{S09} Soloviev, F.L.: {\it On the {H}amiltonian form of the equations of the elliptic spin  {R}uijsenaars-{S}chneider  model}. Uspekhi Mat. Nauk 64 (6(390)), 179--180 (2009); 
\href{https://arxiv.org/abs/0808.3875}{arXiv:0808.3875 [math-ph]}.

\bibitem[T1]{T15} Tacchella, A.: {\it On a family of quivers related to the Gibbons-Hermsen system}.  J. Geom. Phys. 93, 11--32 (2015); 
\href{https://arxiv.org/abs/1311.4403}{arXiv:1311.4403 [math.SG]}.

\bibitem[T2]{T17} Tacchella, A.: {\it An introduction to associative geometry with applications to integrable systems}.  J. Geom. Phys. 118, 202--233 (2017); 
\href{https://arxiv.org/abs/1611.00644}{arXiv:1611.00644 [math-ph]}.


\bibitem[VdB1]{VdB1} Van den Bergh, M.: {\it Double Poisson algebras}. Trans. Amer. Math. Soc., 360 no. 11, 5711--5769 (2008);
\href{https://arxiv.org/abs/math/0410528}{arXiv:math/0410528 [math.QA]}.

\bibitem[VdB2]{VdB2} Van den Bergh, M.: {\it Non-commutative quasi-{H}amiltonian spaces}. In: Poisson geometry in mathematics and physics, volume 450 of Contemp. Math., 273--299. Amer. Math. Soc., Providence, RI (2008);
\href{https://arxiv.org/abs/math/0703293}{arXiv:math/0703293 [math.QA]}.

 \bibitem[W]{W} Wilson, G.: {\it Collisions of {C}alogero-{M}oser particles and an adelic {G}rassmannian (With an appendix by I. G. Macdonald)}.  Invent. Math. 133(1), 1--41 (1998).

\bibitem[Y]{Y} Yamakawa, D.: {\it Geometry of multiplicative preprojective algebra}. Int. Math. Res. Pap. 2008, Art. ID rpn008, 77pp;
\href{https://arxiv.org/abs/0710.2649}{arXiv:0710.2649 [math.SG]}.



  \end{thebibliography}
\end{document}